\begin{document}



\newcommand{\A}{\mathcal{A}} \newcommand{\B}{\mathcal{B}}
\newcommand{\C}{\mathcal{C}} \newcommand{\D}{\mathcal{D}}
\newcommand{\E}{\mathcal{E}} \newcommand{\F}{\mathcal{F}}
\newcommand{\G}{\mathcal{G}} \renewcommand{\H}{\mathcal{H}}
\newcommand{\I}{\mathcal{I}} \newcommand{\J}{\mathcal{J}}
\newcommand{\K}{\mathcal{K}} \renewcommand{\L}{\mathcal{L}}
\newcommand{\M}{\mathcal{M}} \newcommand{\N}{\mathcal{N}}
\renewcommand{\O}{\mathcal{O}} \renewcommand{\P}{\mathcal{P}}
\newcommand{\Q}{\mathcal{Q}} \newcommand{\R}{\mathcal{R}}
\renewcommand{\S}{\mathcal{S}} \newcommand{\T}{\mathcal{T}}
\newcommand{\U}{\mathcal{U}} \newcommand{\V}{\mathcal{V}}
\newcommand{\W}{\mathcal{W}} \newcommand{\X}{\mathcal{X}}
\newcommand{\Y}{\mathcal{Y}} \newcommand{\Z}{\mathcal{Z}}


\newcommand{\setone}[2][1]{\set{#1\cld #2}}
\newcommand{\eset}{\emptyset}
\newcommand{\ol}[1]{\overline{#1}}                
\newcommand{\ul}[1]{\underline{#1}}               
\newcommand{\uls}[1]{\underline{\raisebox{0pt}[0pt][0.45ex]{}#1}}

\newcommand{\ra}{\rightarrow}
\newcommand{\Ra}{\Rightarrow}
\newcommand{\la}{\leftarrow}
\newcommand{\La}{\Leftarrow}
\newcommand{\lra}{\leftrightarrow}
\newcommand{\Lra}{\Leftrightarrow}
\newcommand{\lora}{\longrightarrow}
\newcommand{\Lora}{\Longrightarrow}
\newcommand{\lola}{\longleftarrow}
\newcommand{\Lola}{\Longleftarrow}
\newcommand{\lolra}{\longleftrightarrow}
\newcommand{\Lolra}{\Longleftrightarrow}
\newcommand{\ua}{\uparrow}
\newcommand{\Ua}{\Uparrow}
\newcommand{\da}{\downarrow}
\newcommand{\Da}{\Downarrow}
\newcommand{\uda}{\updownarrow}
\newcommand{\Uda}{\Updownarrow}


\newcommand{\incl}{\subseteq}
\newcommand{\imp}{\rightarrow}
\newcommand{\deq}{\doteq}
\newcommand{\dleq}{\dot{\leq}}                   


\newcommand{\per}{\mbox{\bf .}}                  

\newcommand{\cld}{,\ldots,}                      
\newcommand{\ld}[1]{#1 \ldots #1}                 
\newcommand{\cd}[1]{#1 \cdots #1}                 
\newcommand{\lds}[1]{\, #1 \; \ldots \; #1 \,}    
\newcommand{\cds}[1]{\, #1 \; \cdots \; #1 \,}    

\newcommand{\dd}[2]{#1_1,\ldots,#1_{#2}}             
\newcommand{\ddd}[3]{#1_{#2_1},\ldots,#1_{#2_{#3}}}  
\newcommand{\dddd}[3]{#1_{11}\cld #1_{1#3_{1}}\cld #1_{#21}\cld #1_{#2#3_{#2}}}

\newcommand{\ldop}[3]{#1_1 \ld{#3} #1_{#2}}   
\newcommand{\cdop}[3]{#1_1 \cd{#3} #1_{#2}}   
\newcommand{\ldsop}[3]{#1_1 \lds{#3} #1_{#2}} 
\newcommand{\cdsop}[3]{#1_1 \cds{#3} #1_{#2}} 


\newcommand{\quotes}[1]{{\lq\lq #1\rq\rq}}
\newcommand{\set}[1]{\{#1\}}                      
\newcommand{\Set}[1]{\left\{#1\right\}}
\newcommand{\bigset}[1]{\Bigl\{#1\Bigr\}}
\newcommand{\bigmid}{\Big|}
\newcommand{\card}[1]{|{#1}|}                     
\newcommand{\Card}[1]{\left| #1\right|}
\newcommand{\cards}[1]{\sharp #1}
\newcommand{\sub}[1]{[#1]}
\newcommand{\tup}[1]{\langle #1\rangle}            
\newcommand{\Tup}[1]{\left\langle #1\right\rangle}


\def\qed{\hfill{\qedboxempty}      
  \ifdim\lastskip<\medskipamount \removelastskip\penalty55\medskip\fi}

\def\qedboxempty{\vbox{\hrule\hbox{\vrule\kern3pt
                 \vbox{\kern3pt\kern3pt}\kern3pt\vrule}\hrule}}

\def\qedfull{\hfill{\qedboxfull}   
  \ifdim\lastskip<\medskipamount \removelastskip\penalty55\medskip\fi}

\def\qedboxfull{\vrule height 4pt width 4pt depth 0pt}

\newcommand{\markfull}{\qedboxfull}
\newcommand{\markempty}{\qed}





\newcommand{\assign}{:=}
\newcommand{\cert}[3]{\mathit{cert}(#1,#2,#3)}

\newcommand{\varpos}[2]{[#1]_{#2}}

\newcommand{\rt}[1]{\mathit{root}(#1)}
\newcommand{\diam}[2]{\mathit{diam}_{#1}(#2)}

\newcommand{\blank}{\sqcup}

\newcommand{\child}{\mathit{child}}
\newcommand{\pchild}{\mathit{parentorchild}}
\newcommand{\descend}{\mathit{descendant}}
\newcommand{\nroot}{\mathit{root}}
\newcommand{\nct}{\mathit{CT}}
\newcommand{\suc}{\mathit{succ}}
\newcommand{\same}{\mathit{same}}
\newcommand{\samecell}{\mathit{sameCell}}
\newcommand{\sameleaf}{\mathit{sameLeaf}}
\newcommand{\nskel}{s}
\newcommand{\ncell}{c}

\newcommand{\mi}[1]{\mathit{#1}}

\newcommand{\con}{\mathit{conf}}
\newcommand{\beg}{\mathit{begin}}

\newcommand{\enc}[1]{\mathit{enc}(#1)}

\newcommand{\troot}[1]{\mathit{root}(#1)}
\newcommand{\tnodes}[1]{\mathit{nodes}(#1)}
\newcommand{\critical}[1]{\mathit{critical}(#1)}

\newcommand{\fr}[1]{\mathit{frontier}(#1)}
\newcommand{\nonfr}[1]{\mathit{nonfrontier}(#1)}

\newcommand{\tw}[1]{\mathit{tw}(#1)}

\newcommand{\norm}[1]{\mathsf{N}(#1)}

\newcommand{\ovtabentrya}[5]{
  \multirow{3}{*}{#1}&
  \footnotesize {#2} &
  \footnotesize {#3} &
  \footnotesize {#4} &
  \footnotesize {#5}
}

\newcommand{\ovtabentryb}[4]{
  &
  {#1}&
  {#2}&
  {#3}&
  {#4}
}

\newcommand{\ovtabentryc}[4]{
  &
  \footnotesize {#1} &
  \footnotesize {#2} &
  \footnotesize {#3} &
  \footnotesize {#4}
}

\newcommand{\cmpitem}{\noindent\ensuremath{-}\xspace}

\def\naf{\ensuremath{\raise.17ex\hbox{\ensuremath{\scriptstyle\mathtt{\sim}}}}\xspace}
\def\nafd{\ensuremath{\raise.17ex\hbox{\ensuremath{\scriptstyle\mathtt{\sim}.}}}\xspace}

\newcommand{\bas}[3]{\mathsf{B}(#1,#2,#3)}
\newcommand{\base}[2]{\mathsf{B}(#1,#2)}
\newcommand{\skolem}[2]{\mathsf{S}(#1,#2)}
\newcommand{\unify}[2]{\mathsf{U}(#1,#2)}
\newcommand{\match}[2]{\mathit{heads}(#1,#2)}
\newcommand{\cunify}[3]{\mathsf{C}_{#1}(#2,#3)}
\newcommand{\cunifyy}[2]{\mathsf{C}_{#1}(#2)}
\newcommand{\lin}[2]{\mathsf{b}(#1,#2)}
\newcommand{\mgu}[2]{\mathsf{MGU}(#1,#2)}
\newcommand{\f}[1]{\mathsf{F}(#1)}
\newcommand{\ff}[1]{\textsf{ff}(#1)}

\newcommand{\nav}[1]{\mathit{nav}(#1)}

\newcommand{\DB}{\mathit{DB}} \newcommand{\wrt}[0]{w.r.t.\ }

\newcommand{\ifdirection}{``$\Leftarrow$'' {}}
\newcommand{\onlyifdirection}{``$\Rightarrow$'' {}}

\renewcommand{\emptyset}{\varnothing}

\newcommand{\qans}[3]{\mathsf{QAns}(#1,#2,#3)} 

\newcommand{\relevent}[2]{\substack{#1,#2 \\ \twoheadrightarrow}} 


\newcommand{\rel}[1]{\mathsf{#1}}
\newcommand{\attr}[1]{\mathit{#1}}
\newcommand{\const}[1]{\mathit{#1}}
\newcommand{\vett}[1]{\vec{#1}}

\newcommand{\ext}[2]{#1^{#2}}


\newcommand{\pred}[1]{\mathit{pred}(#1)}

\newcommand{\atom}[1]{\underline{#1}}
\newcommand{\tuple}[1]{\mathbf{#1}}

\newcommand{\parent}[1]{\mathit{parent}(#1)}
\newcommand{\dept}[1]{\mathit{depth}(#1)}

\newcommand{\mar}[1]{\hat{#1}}


\newcommand{\dom}{\mathbf{C}}
\newcommand{\freshdom}{\mathbf{N}}
\newcommand{\variables}{\mathbf{V}}
\newcommand{\adom}[1]{\mathit{dom}(#1)}
\newcommand{\var}[1]{\mathit{var}(#1)}
\newcommand{\cons}[1]{\mathit{const}(#1)}
\newcommand{\term}[1]{\mathit{terms}(#1)}
\newcommand{\vari}[2]{\mathit{var}_{#1}(#2)}

\newcommand{\aff}[1]{\mathit{affected}(#1)}
\newcommand{\nonaff}[1]{\mathit{nonaffected}(#1)}
\newcommand{\sch}[1]{\mathit{sch}(#1)}
\newcommand{\edb}[1]{\mathit{edb}(#1)}


\newcommand{\entities}{\mathit{Ent}}
\newcommand{\relationships}{\mathit{Rel}}
\newcommand{\attributes}{\mathit{Att}}
\newcommand{\symbols}{\mathit{Sym}}


\newcommand{\dep}{\Sigma}
\newcommand{\tdep}{\Sigma_T}
\newcommand{\edep}{\Sigma_E}
\newcommand{\fdep}{\Sigma_F}
\newcommand{\kdep}{\Sigma_K}
\newcommand{\ndep}{\Sigma_\bot}

\newcommand{\key}[1]{\mathit{key}(#1)}
\newcommand{\isa}[1]{\mathit{ISA}}
\newcommand{\no}{\textrm{not}}


\newcommand{\Var}[1]{\mathit{Var}(#1)}
\newcommand{\head}[1]{\mathit{head}(#1)}
\newcommand{\heads}[1]{\mathsf{heads}(#1)}
\newcommand{\body}[1]{\mathit{body}(#1)}
\newcommand{\pbody}[1]{\mathit{B}^{+}(#1)}
\newcommand{\nbody}[1]{\mathit{B}^{-}(#1)}
\newcommand{\arity}[1]{\mathit{ar}(#1)}
\newcommand{\conj}[1]{\mathit{conj}(#1)}

\newcommand{\dvar}[1]{\mathit{dvar}(#1)}
\newcommand{\rset}[2]{\mathit{rset}(#1,#2)}

\newcommand{\cover}[1]{\mathit{cover}(#1)}

\newcommand{\ans}[3]{\mathit{ans}(#1,#2,#3)}
\newcommand{\pos}[2]{\mathit{pos}(#1,#2)}


\newcommand{\ins}[1]{\mathbf{#1}}
\newcommand{\insA}{\ins{A}}
\newcommand{\insB}{\ins{B}}
\newcommand{\insC}{\ins{C}}
\newcommand{\insD}{\ins{D}}
\newcommand{\insX}{\ins{X}}
\newcommand{\insY}{\ins{Y}}
\newcommand{\insZ}{\ins{Z}}
\newcommand{\insK}{\ins{K}}
\newcommand{\insU}{\ins{U}}
\newcommand{\insT}{\ins{T}}
\newcommand{\insO}{\ins{O}}
\newcommand{\insW}{\ins{W}}
\newcommand{\insN}{\ins{N}}
\newcommand{\insV}{\ins{V}}
\newcommand{\insS}{\ins{S}}


\newcommand{\chase}[2]{\mathit{chase}(#1,#2)}
\newcommand{\ochase}[2]{\mathit{Ochase}(#1,#2)}
\newcommand{\rchase}[2]{\mathit{Rchase}(#1,#2)}
\newcommand{\instant}{\mathit{inst}}

\newcommand{\pchase}[3]{\mathit{chase}^{#1}(#2,#3)}
\newcommand{\apchase}[3]{\mathit{chase}^{[#1]}(#2,#3)}
\newcommand{\aprchase}[3]{\mathit{Rchase}^{[#1]}(#2,#3)}
\newcommand{\level}[1]{\mathit{level}(#1)}
\newcommand{\freeze}[1]{\mathit{fr}(#1)}
\newcommand{\mods}[2]{\mathit{mods}(#1,#2)}
\newcommand{\subs}[2]{\gamma_{#1,#2}}


\renewcommand{\paragraph}[1]{\textbf{#1}}
\newenvironment{proofsk}{\textsc{Proof (sketch).}}{\hfill$\square$\newline}
\newenvironment{pf}{\textsc{Proof.}}{\hfill$\square$\newline}
\newenvironment{proofrsk}{\textsc{Proof (rough sketch).}}{$\square$\newline}
\newenvironment{proofidea}{\textsc{Proof idea.}}{$\square$\newline}
\newenvironment{proofcustom}[1]{\textsc{Proof #1.}}{\hfill$\square$\newline}


\newcommand{\dg}[2]{\mathit{DG}(#1,#2)}
\newcommand{\rank}[1]{\mathit{rank}(#1)}

\newcommand{\EXP}{{\scshape ExpTime}}
\newcommand{\PTIME}{\textsc{P}}
\newcommand{\NP}{{\scshape NP}}
\newcommand{\co}{{co}}
\newcommand{\LOGSPACE}{\textsc{LogSpace}}
\newcommand{\AC}[1]{\textsc{$\mbox{AC}_{#1}$}}
\newcommand{\NC}[1]{\textsc{$\mbox{NC}_{#1}$}}
\newcommand{\LOGCFL}{\textsc{LogCFL}}
\newcommand{\SIGMA}[2]{$\Sigma_{\textrm{#2}}^{\textrm{#1}}$}
\newcommand{\PI}[2]{$\Pi_{\textrm{#2}}^{\textrm{#1}}$}
\newcommand{\PSPACE}{\textsc{PSpace}}
\newcommand{\LINSPACE}{\textsc{LinSpace}}
\newcommand{\EXPSPACE}{\textsc{ExpSpace}}
\newcommand{\NEXP}{\textsc{NExpTime}}

\newcommand{\datalogpm}{Datalog$^\pm$}

\newcommand{\dllite}[2]{\textit{DL-Lite$^{\cal #1}_{#2}$}}

\newcommand{\dinclusion}{DIDs}
\newcommand{\certain}{certain satisfaction}
\newcommand{\twoexptime}{2\EXP}
\newcommand{\exptime}{\EXP}
\newcommand{\exspace}{\EXPSPACE}
\newcommand{\Or}{\mathrm{Or}}
\newcommand{\False}{\mathrm{False}}
\newcommand{\True}{\mathrm{True}}
\newcommand{\mis}{monadic disjunctive inclusion dependencies}
\newcommand{\dfd}{disjunctive full dependencies}

\newcommand{\Skolem}[1]{{\mathcal{S}(#1)}}

\newcommand{\TODO}[1]{\textbf{[TODO:} #1\textbf{]}}
\newcommand{\malvi}[2]{{\color{red}\sout{#1} #2}}


\newcommand{\SA}{{\sf SemAc}}
\newcommand{\FSA}{{\sf FinSemAc}}
\newcommand{\ESA}{{\sf SemAc}$^{=}$}
\newcommand{\FESA}{{\sf FinSemAc}$^{=}$}
\newcommand{\saeval}{{\sf SemAcEval}}
\newcommand{\eval}{{\sf Eval}}
\newcommand{\cont}{{\sf Cont}}
\newcommand{\coeval}{{\sf coEval}}
\newcommand{\cocont}{{\sf coCont}}
\newcommand{\RCont}{{\sf RestCont}}
\newcommand{\ABCont}{{\sf AcBoolCont}}
\newcommand{\ac}{\mathbb{AE}}
\newcommand{\full}{\mathbb{F}}
\newcommand{\guarded}{\mathbb{G}}
\newcommand{\linear}{\mathbb{L}}
\newcommand{\id}{\mathbb{ID}}
\newcommand{\nr}{\mathbb{NR}}
\newcommand{\sticky}{\mathbb{S}}
\newcommand{\class}[1]{\mathbb{#1}}
\newcommand{\rew}{{\sf UCQRew}} 
\newcommand{\dist}{{\sf Dist}}
\renewcommand{\tup}[1]{\ensuremath{{(#1)}}}
\renewcommand{\set}[1]{\ensuremath{\{#1\}}}
\newcommand{\ve}[1]{\ensuremath{\bar{#1}}}
\newcommand{\limpl}{\ensuremath{\rightarrow}}
\renewcommand{\land}{\ensuremath{\wedge}}
\newcommand{\liff}{\ensuremath{\leftrightarrow}}
\renewcommand{\lor}{\ensuremath{\vee}}
\newcommand{\ca}[1]{\ensuremath{\mathcal{#1}}}
\newcommand{\sche}[1]{\ensuremath{\mathbf{#1}}}
\newcommand{\mbb}[1]{\ensuremath{\mathbb{#1}}}
\newcommand{\names}[1]{\ensuremath{\mathrm{names}(#1)}}
\newcommand{\fk}[1]{\ensuremath{\mathfrak{#1}}}
\newcommand{\defequ}{\ensuremath{\Longleftrightarrow_{\mathit{df}}}}
\newcommand{\GTODO}[1]{\textcolor{red}{\textbf{[TODO: #1]}}}
\newcommand{\dec}[1]{\ensuremath{\llbracket #1 \rrbracket}}
\newcommand{\cqasinst}[1]{\ensuremath{[#1]}}
\newcommand{\nd}[1]{\ensuremath{\mathrm{nd}(#1)}}
\newcommand{\ndia}[1]{\ensuremath{\langle #1 \rangle}}
\newcommand{\nbox}[1]{\ensuremath{[#1]}}
\newcommand{\free}[1]{\ensuremath{\mathrm{free}(#1)}}
\newcommand{\ptrue}{\ensuremath{\mathsf{true}}}
\newcommand{\pfalse}{\ensuremath{\mathsf{false}}}

\newcommand\diamonddot{\ensuremath{\mathord{\tmp}}}
\newcommand\tmp{{%
    \setbox0\hbox{$\Diamond$}%
    \rlap{\hbox to \wd0{\hss\kern.012em\raisebox{.07\height}{\scalebox{1.2}{$\cdot$}}\hss}}\box0
}}

%

\newcommand{\OMIT}[1]{}

\newtheorem{theorem}{Theorem}
\newtheorem{corollary}[theorem]{Corollary}
\newtheorem{proposition}[theorem]{Proposition}
\newtheorem{lemma}[theorem]{Lemma}
\newtheorem{claim}[theorem]{Claim}
\newtheorem{fact}[theorem]{Fact}
\newtheorem{apptheorem}{Theorem}[section]
\newtheorem{appcorollary}[apptheorem]{Corollary}
\newtheorem{appproposition}[apptheorem]{Proposition}
\newtheorem{applemma}[apptheorem]{Lemma}
\newtheorem{appclaim}[apptheorem]{Claim}
\newtheorem{appfact}[apptheorem]{Fact}

\newdef{definition}{Definition}
\newdef{example}{Example}
\newdef{appdefinition}{Definition}
\newdef{appexample}{Example}

\title{Containment for Rule-Based Ontology-Mediated Queries}

\numberofauthors{3}

\author{
\alignauthor
Pablo Barcel\'{o}\\
       \affaddr{{Center for Semantic Web Research \&}}\\
       \affaddr{{DCC, University of Chile}}\\
       \email{{pbarcelo@dcc.uchile.cl}}
\alignauthor Gerald Berger\\
       \affaddr{{Institute of Information Systems}}\\
       \affaddr{{TU Wien}}\\
       \email{{gberger@dbai.tuwien.ac.at}}
\alignauthor Andreas Pieris\\
       \affaddr{{School of Informatics}}\\
       \affaddr{{University of Edinburgh}}\\
       \affaddr{~}\\
       \email{{apieris@inf.ed.ac.uk}}
}


\maketitle

\sloppy


\begin{abstract}
Many efforts have been dedicated to identifying restrictions on ontologies expressed as tuple-generating dependencies (tgds), a.k.a.~existential rules, that lead to the decidability for the problem of answering ontology-mediated queries (OMQs). This has given rise to three families of formalisms: guarded, non-recursive, and sticky sets of tgds. In this work, we study the containment problem for OMQs expressed in such formalisms, which is a key ingredient for solving static analysis tasks associated with them. Our main contribution is the development of specially tailored techniques for OMQ containment under the classes of tgds stated above. This enables us to obtain sharp complexity bounds for the problems at hand, which in turn
allow us to delimitate its practical applicability.
We also apply our techniques to pinpoint the complexity of problems associated with two emerging applications of OMQ containment:
 distribution over components and UCQ rewritability of OMQs.
\end{abstract}




\section{Introduction}\label{sec:introduction}

\noindent
{\bf Motivation and goals.}
The novel application of knowledge representation tools for handling incomplete and heterogeneous data is giving rise to a new field, recently coined as {\em knowledge-enriched data management} \cite{dagstuhl}. A crucial problem in this field is {\em ontology-based data access} (OBDA) \cite{PCGLR08}, which refers to the utilization of ontologies (i.e., sets of logical sentences) for providing a unified conceptual view of various data sources. Users can then pose their queries solely in the schema provided by the ontology, abstracting away from the specifics of the individual sources.
In OBDA, one interprets the ontology $\Sigma$ and
the user query $q$, which is typically a {\em union of conjunctive queries} (UCQ), or, equivalently, the expressions defined by the select-project-join-union operators of relational algebra, as two components of one composite query $Q = (\insS,\Sigma,q)$, known as {\em ontology-mediated query} (OMQ); $\mathbf{S}$ is called the {\em data schema}, indicating that $Q$ will be posed on databases over $\insS$~\cite{BCLW13}. Therefore, OBDA is often realized as the problem of answering OMQs.

Following recent work \cite{CaGK13,CaGP10a,CaGP12,GoOP14}, we focus on the case where the ontology is defined by a set of {\em tuple-generating dependencies} (tgds), a.k.a.~{\em existential rules} or {\em Datalog$^{\pm}$ rules}. Handling such OMQs implies new challenges for classical database tasks. Interestingly, some of these challenges are by now well-studied; most notably (a) {\em query evaluation}~\cite{BLMS11,CaGK13,CaGL12,CaGP12}: given an OMQ $Q = (\insS,\Sigma,q)$, a database $D$ over $\insS$, and a tuple of constants $\bar c$, does $\bar c$ belong to the evaluation of $q$ over every extension of $D$ that satisfies $\Sigma$, or, equivalently, is $\bar c$ a {\em certain answer} for $Q$ over $D$? and (b) {\em relative expressiveness}~\cite{BCLW13,GoPS16,GoRS14}: how does the expressiveness of OMQs compare to the one of other query languages?
Surprisingly, despite its prominence, no work to date has carried out an in-depth investigation of {\em containment} for OMQs based on tgds and UCQs.

Query containment is a fundamental static analysis task that amounts to
check if the evaluation of a query is always contained
in the evaluation of another query. Several database tasks crucially depend on the ability to check query containment; these include, e.g., query optimization, view-based query answering, querying incomplete databases, integrity checking, and implication of dependencies: cf. \cite{BDHS96,CKPS95,FV84,FFLS99,FLM99,IL84}.
%
%
%
A particularly important instance of the containment problem
is the one defined by the class of CQs. It follows from the seminal work of
Chandra and Merlin~\cite{ChMe77} that CQ containment is polynomially equivalent to CQ evaluation, and thus \NP-complete. The \NP~upper bound is not affected if we consider UCQs~\cite{SY80}. This is seen as a positive result for practical applications that rely on UCQ containment, as the input (the two UCQs) is small. In addition, it shows a stark difference with more expressive relational query languages, e.g., relational algebra (or, equivalently, first-order logic), for which containment is undecidable.


The main goal of this work is to understand up to which extend the good computational properties of UCQ containment discussed above can be leveraged to the containment problem for OMQs based on tgds and UCQs (simply called OMQs from now on). In particular, we want to understand which classes
of tgds guarantee the decidability of the problem, and, whenever this is the case, how can we obtain complexity bounds that are reasonable for practical purposes. We also want to understand what is the exact relationship between OMQ containment and evaluation for such classes.
Let us stress that, apart from the traditional applications of containment mentioned above, it has been recently shown that OMQ containment has applications on other important static analysis tasks for OMQs, namely, distribution over components~\cite{BP16}, and UCQ rewritability~\cite{BLW16}.

\medskip
\noindent
{\bf The context.}
As one might expect, when considered in its full generality, i.e., without any restrictions on the set of tgds, the OMQ containment problem is undecidable. To understand, on the other hand, which restrictions lead to decidability, we recall the two main reasons that render the general containment problem undecidable. These are:

\medskip

\noindent
{\em \smash{Undecidability of query evaluation:}} OMQ evaluation is, in general, undecidable~\cite{BeVa81}, and it can be reduced to OMQ containment. More precisely, OMQ containment is undecidable whenever query evaluation for at least one of the involved languages (i.e., the language of the left-hand or the right-hand side query) is undecidable.

\medskip

\noindent
{\em \smash{Undecidability of containment for Datalog:}} decidability of query evaluation does not ensure decidability of query containment. A prime example is Datalog, or, equivalently, the OMQ language based on {\em full} tgds. Datalog containment is undecidable~\cite{Shmu93}, and thus, OMQ containment is undecidable if the involved languages extend Datalog.

\begin{center}
\begin{table*}[t]
\centering
  \begin{tabular}{c||c|c}
    & \textbf{Arbitrary Arity} & \textbf{Bounded Arity}\\
    \hline\hline
    \rule{0pt}{4ex}
    \textbf{Linear} & \begin{tabular}{@{}c@{}} \PSPACE-c \\ \scriptsize{\PSPACE-c} \end{tabular} & \begin{tabular}{@{}c@{}} $\Pi_{2}^P$-c \\ \scriptsize{\NP-c} \end{tabular} \\

    \hline
    \rule{0pt}{4ex}
    \textbf{Sticky} & \begin{tabular}{@{}c@{}} {\rm co}\NEXP-c \\ \scriptsize{\EXP-c} \end{tabular} & \begin{tabular}{@{}c@{}} $\Pi_{2}^P$-c \\ \scriptsize{\NP-c} \end{tabular} \\
    \hline
    \rule{0pt}{4ex}
    \textbf{Non-recursive} & \begin{tabular}{@{}c@{}} in \EXPSPACE~and \text{\rm P}$^{\textsc{NEXP}}$-hard \\ \scriptsize{\NEXP-c} \end{tabular} & \begin{tabular}{@{}c@{}} in \EXPSPACE~and \text{\rm P}$^{\textsc{NEXP}}$-hard \\ \scriptsize{\NEXP-c} \end{tabular} \\

    \hline
    \rule{0pt}{4ex}
    \textbf{Guarded} & \begin{tabular}{@{}c@{}} 2\EXP-c \\ \scriptsize{2\EXP-c} \end{tabular} & \begin{tabular}{@{}c@{}} 2\EXP-c \\ \scriptsize{\EXP-c} \end{tabular} \\
  \end{tabular}
  \caption{Complexity of OMQ containment -- in small fonts, we recall the complexity of OMQ evaluation.}
  \label{tab:containment-complexity}
  \vspace{-2mm}
\end{table*}
\end{center}

\vspace{-7mm}

In view of the above observations, we focus on languages that (a) have a decidable query evaluation, and (b) do not extend Datalog. The main classes of tgds, which give rise to OMQ languages with the desirable properties, can be classified into three main families depending on the underlying syntactic restrictions: (i) {\em guarded} tgds~\cite{CaGK13}, which contain inclusion dependencies and linear tgds, (ii) {\em non-recursive} sets of tgds~\cite{FKMP05}, and (iii) {\em sticky} sets of tgds \cite{CaGP12}.

While the decidability of containment for the above OMQ languages can be established via translations into query languages with a decidable containment problem, such translations do not lead to optimal complexity upper bounds (details are given below). Therefore, the main goal of our paper is to develop specially tailored decision procedures for the containment problem under the OMQ languages in question, and ideally obtain precise complexity bounds. Our second goal is to exploit such techniques in the study of distribution over components and UCQ rewritability of OMQs.

\medskip
\noindent
{\bf Our contributions.} The complexity of OMQ containment for the languages in question is given in Table~\ref{tab:containment-complexity}. Using small fonts, we recall the complexity of OMQ evaluation in order to stress that containment is, in general, harder than evaluation. We divide our contributions as follows:

\medskip
\noindent
{\em \underline{\smash{Linear, non-recursive and sticky sets of tgds.}}}
The OMQ languages based on linear, non-recursive, and sticky sets of tgds share a useful property: they are {\em UCQ rewritable} (implicit in~\cite{GoOP14}), that is, an OMQ can be rewritten into a UCQ. This property immediately yields decidability for their associated containment problems, since UCQ containment is decidable~\cite{SY80}. However, the obtained complexity bounds are not optimal, since the UCQ rewritings are unavoidably very large \cite{GoOP14}. To obtain more precise bounds, we reduce containment to query evaluation, an idea that is often applied in query containment; see, e.g., \cite{ChMe77,CV97,SY80}.

Consider a UCQ rewritable OMQ language $\class{O}$. If $Q_1$ and $Q_2$ belong to $\class{O}$, both with data schema $\insS$, then we can establish a {\em small witness property}, which states that non-containment of $Q_1$ in $Q_2$ can be witnessed via a database over $\insS$ whose size is bounded by an integer $k \geq 0$, the maximal size of a disjunct in a UCQ rewriting of $Q_1$. For linear tgds, such an integer $k$ is polynomial, but for non-recursive and sticky sets of tgds it is exponential (implicit in~\cite{GoOP14}).
The above small witness property allows us to devise a simple non-deterministic algorithm, which makes use of query evaluation as a subroutine for checking non-containment of $Q_1$ in $Q_2$: guess a database $D$ over $\insS$ of size at most $k$, and then check if there is a certain answer for $Q_1$ over $D$ that is not a certain answer for $Q_2$ over $D$.
This algorithm allows us to obtain optimal upper bounds for OMQs based on linear and sticky sets of tgds; however, the exact complexity of OMQs based on non-recursive sets of tgds remains open:
\begin{itemize}
\item For OMQs based on linear tgds, the problem is in \PSPACE, and in $\Pi_2^P$ if the arity is fixed. The \PSPACE-hardness is shown by reduction from query evaluation~\cite{JoKl84}, while the $\Pi_{2}^{P}$-hardness is inherited from~\cite{BiLW12}.

\item For OMQs based on sticky sets of tgds, the problem is in \text{\rm co}\NEXP, and in $\Pi_2^P$ if the arity of the schema is fixed. The \text{\rm co}\NEXP-hardness is shown by exploiting the standard tiling problem for the exponential grid, while the $\Pi_{2}^{P}$-hardness is inherited from~\cite{BiLW12}.

\item Finally, for OMQs based on non-recursive sets of tgds, containment is in \EXPSPACE~and hard for \text{\rm P}$^{\textsc{NEXP}}$, even for fixed arity. The lower bound is shown by exploiting a recently introduced tiling problem~\cite{EiLP16}.
\end{itemize}

We conclude that in all these cases OMQ containment is harder than evaluation, with one exception: the OMQs based on linear tgds over schemas of unbounded arity.

\medskip
\noindent
{\em \underline{\smash{Guarded tgds.}}} The OMQ language based on guarded tgds is not UCQ rewritable, which forces us to develop different tools to study its containment problem. Let us remark that guarded OMQs can be rewritten as guarded Datalog queries (by exploiting the translations devised in~\cite{BaBC13,GoRS14}), for which containment is decidable in \twoexptime~\cite{BKR15}. But, again, the known rewritings are very large \cite{GoRS14}, and hence the reduction of containment for guarded OMQs to containment for guarded Datalog does not yield optimal upper bounds.

To obtain optimal bounds for the problem in question, we exploit {\em two-way alternating parity automata on trees} (2WAPA)~\cite{CGKV88}. We first show that if $Q_1$ and $Q_2$ are guarded OMQs such that $Q_1$ is not contained in $Q_2$, then this is witnessed over a class of ``tree-like'' databases that can be represented as the set of trees accepted by a 2WAPA $\fk{A}$. We then build a 2WAPA $\fk{B}$ with exponentially many states that recognizes those trees accepted by $\fk{A}$ that represent witnesses to non-containment of $Q_1$ in $Q_2$. Hence, $Q_1$ is contained in $Q_2$ iff $\fk{B}$ accepts no tree. Since the emptiness problem for 2WAPA is feasible in exponential time in the number of states~\cite{CGKV88}, we obtain that containment for guarded OMQs is in 2\EXP. A matching lower bound, even for fixed arity schemas, follows from~\cite{BLW16}.

Similar ideas based on 2WAPA have been recently used to show that containment for OMQs based on expressive {\em description logics} (DLs)
is in \twoexptime~\cite{BLW16}. In the DL context, schemas consist only of unary and binary relations. Our automata construction, however, is different from the one in~\cite{BLW16} for two reasons: (a) we need to deal with higher arity relations, and (b) even for unary and binary relations, our OMQ language allows to express properties that are not expressible by the DL-based OMQ languages studied in~\cite{BLW16}.

\medskip
\noindent
{\em \underline{\smash{Combining languages.}}}
The above complexity results refer to the containment problem relative to a certain OMQ language $\class{O}$, i.e., both queries fall in $\class{O}$.
However, it is natural to consider the version of the problem where the involved OMQs fall in different languages.
Unsurprisingly, if the left-hand side query is expressed in a UCQ rewritable OMQ language (based on linear, non-recursive or sticky sets of tgds), we can use the algorithm that relies on the small witness property discussed above, which provides optimal upper bounds for almost all the considered cases (the only exception is the containment of sticky in non-recursive OMQs over schemas of unbounded arity).
Things are more interesting if the ontology of the left-hand side query is expressed using guarded tgds, while the ontology of the right-hand side query is not guarded. By exploiting automata techniques, we show that containment of guarded in non-recursive OMQs is in 3\EXP, while containment of guarded in sticky OMQs is in 2\EXP. We establish matching lower bounds, even over schemas of fixed arity, by refining techniques from~\cite{CV97}.

\medskip
\noindent
{\em \underline{\smash{Applications.}}}
Our techniques and results on containment for guarded OMQs can be applied to other important static analysis tasks, in particular, distribution over components and UCQ rewritability.

The notion of distribution over components has been introduced in~\cite{AKNZ14}, in the context of declarative networking, and it states that the answer to an OMQ $Q$ can be computed by parallelizing it over the (maximally connected) components of the database. If this is the case, then $Q$ can always be evaluated in a distributed and coordination-free manner.
The problem of deciding distribution over components for OMQs has been recently studied in~\cite{BP16}. However, the exact complexity of the problem for guarded OMQs has been left open. By exploiting our results on containment, we can show that it is 2\EXP-complete.

It is well-known that the OMQ language based on guarded tgds is not UCQ rewritable. In view of this fact, it is important to study when a given guarded OMQ $Q$ can be rewritten as a UCQ. This has been studied for OMQs based on central Horn DLs~\cite{BLW16,BiLW13}. Interestingly, our automata-based techniques for guarded OMQ containment can be adapted to decide in 2\EXP~whether an OMQ based on guarded tgds over unary and binary relations is UCQ rewritable; a matching lower bound is inherited from~\cite{BLW16}. Our result generalizes the result that deciding UCQ rewritability for OMQs based on $\ca{ELHI}$, one of the most expressive members of the $\ca{EL}$-family of DLs, is 2\EXP-complete~\cite{BLW16}.

\medskip
\noindent
{\bf Discussion on Applicability.}
As shown in Table \ref{tab:containment-complexity}, the containment problem for OMQs based on linear sets of tgds is \PSPACE-complete, and thus
can be solved in single-exponential time. This is not a big practical drawback since the containment problem corresponds to a static analysis task. In fact, the runtime is single exponential only in the size of the UCQs and the maximum arity of the underlying schema, which are typically very small.
For such tasks, a single-exponential time procedure is considered to be acceptable, and it is actually the norm in many cases including database and verification problems; see, e.g., \cite{AbHV95,MZ09,RV01}.

For OMQs based on sticky, non-recursive and guarded sets of tgds, the containment problem becomes {\rm co}\NEXP-complete, \text{\rm P}$^{\textsc{NEXP}}$-hard and 2\EXP-complete,
respectively. This means that we require double-exponential time to solve the problem, which is practically not acceptable.
Nevertheless, for sticky sets of tgds, the runtime is double-exponential only in the maximum arity of the schema,
%
while for guarded sets of tgds is double-exponential only in the size of the UCQs and the maximum arity of the schema.
This is good news since, as said above, the size of the UCQs and the arity are typically small, and usually UCQs in OMQs are much smaller than the ontologies.

For non-recursive sets of tgds, on the other hand, the runtime is double-exponential, not only in the maximum arity, but also in the number of predicates occurring in the ontology. It is unrealistic to assume that the number of predicates occurring in real-life ontologies is small. This fact, together with the fact that the precise complexity of OMQ containment for non-recursive sets of tgds is still open, suggests that a more careful complexity analysis is needed. This is left as an interesting open problem for future work.


\medskip
\noindent
{\bf Organization.} Preliminaries are  in Section~\ref{sec:preliminaries}. In Section~\ref{sec:containment-basics} we introduce the OMQ containment problem.
Containment for UCQ rewritable OMQs is studied in Section~\ref{sec:ucq-rewritability}, and for guarded OMQs in Section~\ref{sec:guardedness}. In Section~\ref{sec:different-languages} we consider the case where the involved queries fall in different languages. In Section~\ref{sec:applications} we discuss the applications of our results on guarded OMQ containment and we conclude in Section~\ref{sec:conclusions}.
Proofs and additional details can be found in the appendix.


\section{Preliminaries}\label{sec:preliminaries}

\noindent
\paragraph{Databases and conjunctive queries.}
Let $\insC$, $\insN$, and $\insV$ be disjoint countably infinite sets of {\em constants}, {\em (labeled) nulls} and (regular) {\em variables} (used in queries and dependencies), respectively. A {\em schema} $\insS$ is a finite set of relation symbols (or predicates) with associated arity. We write $R/n$ to denote that $R$ has arity $n$. A {\em term} is a either a constant, null or variable. An {\em atom} over $\insS$ is an expression of the form $R(\bar v)$, where $R \in \insS$ is of arity $n > 0$ and $\bar v$ is an $n$-tuple of terms. A {\em fact} is an atom whose arguments consist only of constants. An {\em instance} over $\insS$ is a (possibly infinite) set of atoms over $\insS$ that contain constants and nulls, while a {\em database} over $\insS$ is a finite set of facts over $\insS$. We may call an instance and a database over $\insS$ an $\insS$-instance and $\insS$-database, respectively. The {\em active domain} of an instance $I$, denoted $\adom{I}$, is the set of all terms occurring in $I$.

A {\em conjunctive query} (CQ) over $\insS$ is a formula of the form:
\begin{equation}
\label{eq:cq}
q(\bar x) \ :=
\
\exists \bar y \big(R_1(\bar v_1) \wedge \dots \wedge R_m(\bar
v_m)\big),
\end{equation}
where each $R_i(\bar v_i)$ ($1 \leq i \leq m$) is an atom without nulls
over $\insS$, each variable mentioned in the $\bar v_i$'s
appears either in $\bar x$ or $\bar y$, and $\bar x$ are the free variables of $q$. If $\bar x$ is empty, then $q$ is a \emph{Boolean CQ}.
As usual, the evaluation of CQs is defined in terms of homomorphisms. Let $I$ be an instance and $q(\bar x)$ a CQ of the form~\eqref{eq:cq}. A {\em homomorphism} from $q$ to $I$ is a mapping $h$, which is the identity on $\insC$, from the variables that appear in $q$ to the set of constants and nulls $\insC \cup \insN$ such that $R_i(h(\bar v_i)) \in I$, for each $1 \leq i \leq m$. The {\em evaluation of $q(\bar x)$ over $I$}, denoted $q(I)$, is
the set of all tuples $h(\bar x)$ of constants such that $h$ is a homomorphism from $q$ to $I$. We denote by $\class{CQ}$ the class of conjunctive queries.

A {\em union of conjunctive queries} (UCQ) over $\insS$ is a formula of the form $q({\bar x}) := q_1({\bar x}) \vee \cdots \vee q_n({\bar x})$, where each $q_i({\bar x})$ is a CQ of the form~\eqref{eq:cq}. The {\em evaluation of $q(\bar x)$ over $I$}, denoted $q(I)$, is the set of tuples $\bigcup_{1 \leq i \leq n} q_i(I)$. We denote by $\class{UCQ}$ the class of union of conjunctive queries.

\medskip

\noindent
\paragraph{Tgds and the chase procedure.} A {\em tuple-generating dependency} (tgd) is a first-order sentence of the form:
\begin{equation}
\label{eq:tgd}
\forall \bar x \forall \bar y \big(\phi(\bar x,\bar y)
\rightarrow \exists \bar z\, \psi(\bar x,\bar z)\big),
\end{equation}
where $\phi$ and $\psi$ are conjunctions of atoms without nulls. For brevity, we write this tgd as $\phi(\bar x,\bar y) \rightarrow \exists \bar z\, \psi(\bar x,\bar z)$ and use comma instead of $\wedge$ for conjoining atoms. Notice that $\phi$ can be empty, in which case the tgd is called {\em fact tgd} and is written as $\top \ra \exists \bar z \, \psi(\bar x,\bar z)$.
We assume that each variable in $\bar x$ is mentioned in some atom of $\psi$. We call $\phi$ and $\psi$ the {\em body} and {\em head} of the tgd, respectively.
The tgd in \eqref{eq:tgd} is logically equivalent to the expression
$\forall \bar x (q_\phi(\bar x) \rightarrow q_\psi(\bar x))$, where $q_\phi(\bar x)$ and $q_\psi(\bar x)$ are the CQs $\exists \bar y\, \phi(\bar x,\bar y)$ and $\exists \bar z\, \psi(\bar x,\bar z)$, respectively. Thus, an instance $I$ over $\insS$ \emph{satisfies} this tgd iff $q_\phi(I) \subseteq q_\psi(I)$. We say that an instance $I$ satisfies a set $\Sigma$ of tgds, denoted $I \models \Sigma$, if $I$ satisfies every tgd in $\Sigma$. We denote by $\class{TGD}$ the class of (finite) sets of tgds.

The {\em chase} is a useful algorithmic tool when reasoning with tgds \cite{CaGK13,FKMP05,JoKl84,MaMS79}. We start by defining a single chase step.
Let $I$ be an instance over a schema $\insS$ and $\tau = \phi(\bar x,\bar y)
\rightarrow \exists \bar z \, \psi(\bar x,\bar z)$ a tgd over $\insS$.
We say that $\tau$ is \emph{applicable} w.r.t.~$I$ if there exists a tuple $(\bar a,\bar b)$ of terms in $I$ such that $\phi(\bar a,\bar b)$ holds in $I$. In this case, {\em the result of applying $\tau$ over $I$ with $(\bar a,\bar b)$} is the instance $J$ that extends $I$ with every atom in $\psi(\bar a,\bar \bot)$, where $\bar \bot$ is the tuple obtained by simultaneously replacing each variable $z \in \bar z$ with a fresh distinct null not occurring in $I$. For such a single chase step we write $I \xrightarrow{\tau,(\bar a,\bar b)} J$.

Let us assume now that $I$ is an instance and
$\Sigma$ a finite set of tgds. A {\em chase sequence for $I$ under $\Sigma$}
is a sequence:
\[
I_0\ \xrightarrow{\tau_0,\bar c_0}\ I_1 \xrightarrow{\tau_1,\bar c_1}\ I_2\ \cdots
\]
of chase steps such that: (1) $I_0 = I$; (2) for each $i \geq 0$, $\tau_i$ is a tgd in $\Sigma$; and (3) $\bigcup_{i \geq 0} I_i \models \Sigma$.
We call $\bigcup_{i \geq 0} I_i$ the {\em result} of this
chase sequence, which always exists.
Although the result of a chase sequence is not necessarily unique (up to isomorphism), each such result is equally useful for our purposes, since it can be homomorphically embedded into every other result. Thus, from now on, we denote by $\chase{I}{\Sigma}$ the result of an arbitrary chase
sequence for $I$ under $\Sigma$.

\medskip
\noindent
\paragraph{Ontology-mediated queries.}
An {\em ontology-mediated query} (OMQ) is a triple $(\insS,\dep,q)$, where $\insS$ is a schema, $\dep$ is a set of tgds (the ontology), and $q$ is a (U)CQ over $\insS \cup \sch{\dep}$ (and possibly other predicates), with $\sch{\dep}$ the set of predicates occurring in $\dep$.\footnote{OMQs can be defined for arbitrary first-order theories, not only tgds, and first-order queries, not only UCQs~\cite{BCLW13}.} We call $\insS$ the {\em data schema}. Notice that the set of tgds can introduce predicates not in $\insS$, which allows us to enrich the schema of the UCQ $q$. Moreover, the tgds can modify the content of a predicate $R \in \insS$, or, in other words, $R$ can appear in the head of a tgd of $\dep$. We have explicitly included $\insS$ in the specification of the OMQ to emphasize that it will be evaluated over $\insS$-databases, even though $\dep$ and $q$ might use additional relational symbols.

The semantics of an OMQ is given in terms of certain answers. The {\em certain answers} to a
UCQ $q({\bar x})$ w.r.t.~a database $D$ and a set $\dep$ of tgds is the set of tuples:
\[
\cert{q}{D}{\dep}\,\,\,\,\, =\,\,\,\,\, \ \bigcap_{\mathclap{I \supseteq D, I \models \dep}}\ \{{\bar c} \in \adom{I}^{|{\bar x}|} \mid {\bar c} \in q(I)\}.
\]
Consider an OMQ $Q = (\insS,\dep,q)$. The {\em evaluation} of $Q$ over an $\insS$-database $D$, denoted $Q(D)$, is defined as $\cert{q}{D}{\dep}$. It is well-known that $\cert{q}{D}{\dep} = q(\chase{D}{\dep})$; see, e.g.,~\cite{CaGK13}. Thus, $Q(D) = q(\chase{D}{\dep})$.

\medskip
\noindent
\paragraph{Ontology-mediated query languages.}
We write $(\class{C},\class{Q})$ for the OMQ language that consists of all OMQs of the form $(\insS,\dep,q)$, where $\dep$ falls in the class $\class{C}$ of tgds, i.e., $\class{C} \subseteq \class{TGD}$ (concrete classes of tgds are discussed below), and the query $q$ falls in $\class{Q} \in \{\class{CQ},\class{UCQ}\}$.
A problem that is quite important for our work is {\em OMQ evaluation}, defined as follows:

\begin{center}
\fbox{\begin{tabular}{ll}
{\small PROBLEM} : & $\eval(\class{C},\class{Q})$
\\
{\small INPUT} : & An OMQ $Q = (\insS,\dep,q({\bar x})) \in (\class{C},\class{Q})$,\\
& an $\insS$-database $D$, and ${\bar c} \in \adom{D}^{|{\bar x}|}$.
\\
{\small QUESTION} : &  Does ${\bar c} \in Q(D)$?
\end{tabular}}
\end{center}
It is well-known that $\eval(\class{TGD},\class{CQ})$ is undecidable; implicit in~\cite{BeVa81}. This has led to a flurry of activity for identifying syntactic restrictions on sets of tgds that make the latter problem decidable. Such a restriction defines a subclass $\class{C}$ of tgds. The known decidable classes of tgds are classified into three main decidability paradigms, which, in turn, give rise to decidable OMQ languages:

\medskip

\noindent
\underline{{\em Guardedness:}} A tgd is {\em guarded} if its
body contains an atom, called {\em guard}, that contains all the body-variables. Although the chase under guarded tgds does not necessarily terminate, the problem of deciding whether a tuple of constants is a certain answer to a UCQ w.r.t.~a database and a set of guarded tgds is decidable. This follows from the fact that the result of the chase has {\em bounded treewidth} (see, e.g., \cite{CaGK13}). Let $\guarded$ be the class of (finite) sets of guarded tgds. Then:

\begin{proposition} \label{prop:eval-guarded}{\em \cite{CaGK13}}
$\eval(\class{G},\class{CQ})$ and $\eval(\class{G},\class{UCQ})$ are 2\EXP-complete, and \EXP-complete for fixed arity.
\end{proposition}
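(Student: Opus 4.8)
The plan is to prove the upper bound (membership in $2\EXP$, and in $\EXP$ for fixed arity) and the matching lower bound separately, and to handle $\class{CQ}$ and $\class{UCQ}$ together: for the upper bound a UCQ is dealt with by additionally guessing which disjunct is matched, so $\eval(\class{G},\class{UCQ})$ reduces to the CQ case with only polynomial overhead; for the lower bound it suffices to produce hard instances of $\eval(\class{G},\class{CQ})$, which are a fortiori instances of $\eval(\class{G},\class{UCQ})$.

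\textbf{Upper bound.} The structural cornerstone is that, for a guarded set $\dep$ of tgds and a database $D$, the instance $\chase{D}{\dep}$ has treewidth bounded by a quantity that depends only on $\dep$ — essentially the maximal number of terms occurring in a single tgd of $\dep$. I would make this precise via the \emph{guarded chase forest}: every chase step is fired through a guard atom and produces a bounded ``bag'' consisting of that guard atom together with the newly generated atoms, and these bags, linked according to the ``parent step'' relation, form a tree decomposition of $\chase{D}{\dep}$ of bounded width. Next I would observe that deciding $\bar c \in q(\chase{D}{\dep})$ amounts to searching for a homomorphism from $q$ into this tree-decomposed instance, which an alternating procedure (equivalently, a two-way alternating tree automaton) can carry out by navigating the decomposition: it guesses, bag by bag, the images of the (connected fragments of the) query $q$, splitting into universal branches to continue the search along the children of the current bag. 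To bound the resources, note that a bag is determined up to isomorphism by its \emph{type} — a set of atoms over a bounded number of abstract elements — and the number of types is double-exponential in the size of $\dep$ in general, but single-exponential when the arity is fixed. The alternating procedure can be run in polynomial space in the number of types while carrying a partial-match description of polynomial size, so, since alternating polynomial space equals $\EXP$ and alternating exponential space equals $2\EXP$, we obtain the claimed upper bounds.

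\textbf{Lower bounds.} For the $2\EXP$ lower bound I would reduce from the acceptance problem for alternating Turing machines running in exponential space, which captures $2\EXP$. With predicates of arity $n$ and a two-element domain $\{0,1\}$, a tuple encodes an $n$-bit address, hence a position on a tape of length $2^n$; a bounded number of further predicates record, per configuration, the tape contents, head position, and control state. A guarded set of tgds then (i) builds the successor relation on addresses, (ii) copies unchanged cells from a configuration to its successor, and (iii) implements the transition relation, with the machine's existential/universal branching reflected by the branching of the chase and detected by a Boolean CQ that ``witnesses'' an accepting subtree of the computation. Guardedness is ensured by always keeping, in the body of every rule, one atom holding an entire configuration (or the bounded neighbourhood of the cell the rule needs). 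For fixed arity the same template is run against alternating polynomial-space machines, which capture $\EXP$: a polynomially long configuration is now spread over a path of the chase, and the chase as a whole may reach exponential depth — exactly the budget $\EXP$ provides.

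\textbf{Main obstacle.} I expect the delicate part to be the upper bound — turning ``the guarded chase has bounded treewidth'' into a decision procedure of precisely the stated complexity. One must argue that it suffices to explore a finitely branching, well-founded tree of \emph{types} rather than the (generally infinite) chase itself, i.e., that a homomorphism from $q$ into $\chase{D}{\dep}$ exists iff there is a type-respecting match in this tree, and then keep the bookkeeping of partial query matches within the stated space budget. Pinning down the number of relevant types — double-exponential versus single-exponential — is exactly where the dependence on the arity enters, and hence where the gap between the arbitrary-arity and the bounded-arity case comes from.
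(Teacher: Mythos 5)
This statement is not proved in the paper at all: it is imported verbatim from Cal\`{\i}, Gottlob and Kifer \cite{CaGK13}, so there is no in-paper argument to compare against, only the known proof in the literature. Your upper-bound outline is essentially that known argument (and also the technique this paper itself reuses in Section~5): the guarded chase forest yields a decomposition of $\chase{D}{\dep}$ whose non-root bags have size bounded by $\dep$, and an alternating procedure searches for a homomorphism bag by bag, with the arity-dependence entering through the size of a bag ``type''. Two points need repair, though. First, your resource accounting is stated wrongly: you say the procedure runs ``in polynomial space in the number of types'', which would be double-exponential space in general; what you need (and presumably mean) is space polynomial in the size of a \emph{single} type, i.e.\ exponential in $||\dep||$ for unbounded arity and polynomial for fixed arity, which is exactly what makes AEXPSPACE $=$ 2\EXP\ and APSPACE $=$ \EXP\ applicable. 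Second, you should say how the query interacts with the non-tree-shaped part: $D$ itself is not of bounded width, and disconnected pieces of $q$ may map partly into $D$ and partly into different subtrees of the forest; the standard fix (the squid decompositions of \cite{CaGK13}, also used in Section~5 here) is to treat $D$ as a single root bag of polynomial size and to split $q$ into a core mapped there plus strictly acyclic ``tentacles'' handled by the alternating search.

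The lower-bound sketch has a genuine gap as written. For the 2\EXP\ case you reduce from alternating \EXPSPACE\ machines, which is the right source problem, but the sentence ``keeping, in the body of every rule, one atom holding an entire configuration'' cannot be taken literally: a configuration has $2^n$ cells while every atom has only polynomially many positions, so no atom can carry a configuration. In the actual reductions a configuration is an element of the chase, cells are separate atoms indexed by $n$-bit addresses carried in the atom's arguments, and the technically delicate step is precisely the one you skip: copying the unchanged cells from a configuration to its successor requires a rule whose body relates two configuration elements \emph{and} an address \emph{and} decides whether that address differs from the head position, all under a single guard; this forces auxiliary predicates of arity $O(n)$ that bundle the two configuration identifiers with the address bits (or an address-tree gadget below each configuration element). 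Similarly, with alternation a Boolean CQ cannot ``detect an accepting subtree''; acceptance is propagated back up the computation tree by further tgds and the query only asks for an accept marker at the initial configuration. The fixed-arity \EXP-hardness via alternating \PSPACE\ machines is fine in spirit, but there too the configuration must be spread over polynomially many atoms linked through guards, not read off a single atom. So the architecture matches \cite{CaGK13}, but the encoding details you wave at are exactly where guardedness constrains the construction, and they need to be spelled out for the proof to stand.
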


An important subclass of guarded tgds is the class of {\em linear} tgds whose body consists of a single atom. We write $\linear$ for the class of (finite) sets of linear tgds.

\begin{proposition} \label{prop:eval-linear}{\em \cite{CaGL12,JoKl84}}
$\eval(\class{L},\class{CQ})$ and $\eval(\class{L},\class{UCQ})$ are \PSPACE-complete, and \NP-complete for fixed arity.
\end{proposition}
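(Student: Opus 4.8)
The plan is to prove the upper bounds for $\eval(\class{L},\class{UCQ})$ and the lower bounds for $\eval(\class{L},\class{CQ})$; since every CQ is a UCQ, this yields all four statements. For the upper bounds I would first reduce to Boolean CQ evaluation: given $Q = (\insS,\dep,q(\bar x))$ with $\dep$ a finite set of linear tgds, a database $D$, and $\bar c \in \adom{D}^{|\bar x|}$, we have $\bar c \in Q(D)$ iff $\bar c \in q(\chase{D}{\dep})$, and writing $q = q_1 \vee \dots \vee q_n$ this holds iff for some $i$ the Boolean CQ $p$ obtained from $q_i$ by substituting $\bar c$ for $\bar x$ maps homomorphically into $\chase{D}{\dep}$. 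Guessing $i$ is harmless for both \PSPACE{} and \NP, so it suffices to decide whether a fixed Boolean CQ $p$ with $k$ atoms has a homomorphism into $\chase{D}{\dep}$.

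The key is the \emph{forest shape} of the chase under linear tgds: since every linear tgd has a single body atom, the atoms of $\chase{D}{\dep}$ arrange into a forest $F$ whose roots are the atoms of $D$ and in which every non-root atom $A$ has as parent the unique atom to which the rule producing $A$ was applied; moreover a null introduced at a chase step occurs only in the atoms created at that step and in their $F$-descendants. Starting from a homomorphism $h : p \to \chase{D}{\dep}$, I would take the finite sub-forest $T \subseteq F$ spanned by the $\le k$ image atoms $h(p)$ together with their $F$-ancestors up to the roots; then $p$ maps into $T$, and $T$ consists of $O(k)$ special nodes (roots, images, branching points) joined by unbranched chains. Assigning to each chase atom its \emph{type} — the atom up to renaming of its nulls, recording which positions are equal and which carry a given constant of $\adom{D}$ — I would establish a pumping lemma: along a chain, a repetition of types can be spliced out and the sub-derivation below reattached higher up (relabelling nulls via the type-isomorphism), producing a smaller $T'$ that is still derivable from $D$ by a chase sequence and into which $p$ still maps; soundness rests on the sub-chase triggered below an atom depending only on its type, and on the null-occurrence property above to keep $h$ consistent. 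Iterating, one may assume every chain in $T$ has length at most the number $t$ of atom types, so that $|T|$ is polynomial in $k$, $|\dep|$ and $t$, where $t \le |\sch{\dep}| \cdot (|\adom{D}| + w)^{w}$ with $w$ the maximum arity. For \emph{fixed arity}, $t$ and hence $|T|$ are polynomial, so one simply guesses such a tree-like witness $T$ together with a homomorphism $p \to T$ and verifies that $T$ is chase-derivable from $D$: an \NP{} procedure, which with the guess of $i$ puts the problem in \NP. For \emph{unbounded arity}, $t$ and $|T|$ are exponential, so instead I would \emph{navigate} $T$ without materialising it: a depth-first traversal of its $O(k)$-node skeleton carrying, while descending a chain, only the current atom, a step counter bounded by $t$ (hence $O(w\log(|\adom{D}|+w)+\log|\sch{\dep}|)$ bits), and the partial homomorphism built so far (polynomial size, recording which variables of $p$ are bound to which constants and to which currently active nulls); the recursion stack has $O(k)$ frames of polynomial size, so the check runs in polynomial space.

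For the lower bounds, \NP-hardness already holds for $\eval(\class{L},\class{CQ})$ with $\dep = \emptyset$ and arity two, since this is plain Boolean CQ evaluation, \NP-hard by Chandra and Merlin~\cite{ChMe77} (e.g.\ via $3$-colourability). For \PSPACE-hardness with unbounded arity I would reduce from acceptance by a fixed deterministic Turing machine running in space polynomial in the input length $n$: encode a configuration (tape contents, head position, state) as a single atom of arity polynomial in $n$; use a bounded set of linear tgds to derive, from the atom encoding a configuration, the atom encoding its unique successor (determinism makes this a functional, hence linear-tgd-definable, step); place the atom encoding the initial configuration in $D$; and let the Boolean CQ ask for an atom encoding an accepting configuration. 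The machine accepts iff the query is true over $D$, with all arities polynomial in $n$. (Alternatively, \PSPACE-hardness is inherited from the hardness of CQ evaluation under inclusion dependencies due to Johnson and Klug~\cite{JoKl84}.)

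I expect the delicate step to be the pumping lemma: one must verify that splicing out a type-repeating chain segment yields a forest that is genuinely the result of \emph{some} chase sequence for $D$ under $\dep$, and that the homomorphism survives the relabelling — the subtle point being the bookkeeping of nulls that are shared across several atoms produced in one step, or that move between argument positions along a chain, so that no identification required by $p$ is broken. Once the right invariant is isolated, the fixed-arity guessing argument and the unbounded-arity \PSPACE{} navigation are routine.
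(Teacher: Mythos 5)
This proposition is not proved in the paper at all: it is imported verbatim from \cite{JoKl84} and \cite{CaGL12}, so the relevant comparison is with the classical arguments behind those citations. Your route is precisely that classical one: single-body tgds make the chase a forest in which nulls only propagate downwards, the image of a $k$-atom CQ plus its ancestors spans a subforest with $O(k)$ special nodes joined by chains, chains are shortened by pumping on atom types, a polynomially bounded witness gives the \NP~upper bound for fixed arity, a counter-based depth-first navigation gives \PSPACE~in general, \NP-hardness is Chandra--Merlin, and \PSPACE-hardness is the configuration-atom simulation of a polynomial-space machine (or simply Johnson--Klug). So the overall plan is sound and standard.

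The one genuine gap is the pumping lemma, and it is exactly the point you flag and then defer (``once the right invariant is isolated''). As stated --- type $=$ predicate $+$ equality pattern $+$ constants of $\adom{D}$ --- the lemma is not sound: two chain atoms $A,B$ of equal type may carry a null $\nu$ that is needed both above $A$ and below $B$ at \emph{different} argument positions; splicing out the segment and relabelling $B$'s subtree positionwise onto $A$ then replaces the occurrences of $\nu$ below by a different term of $A$, breaking precisely the identifications your homomorphism $h$ requires, and the spliced witness need no longer satisfy $p$, so the induction does not go through as written. The missing invariant is supplied by the observation that the set of atoms containing a given null is connected along the chase forest (every non-fresh term of a child already occurs in its parent): hence any null shared by query atoms mapped above and below an unbranched chain occurs in \emph{every} atom of that chain, so there are at most $\arity{\insS \cup \sch{\dep}}$ such ``pass-through'' nulls. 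Refine the type by additionally recording at which positions each of these distinguished nulls sits (and, a minor omission in your sketch, which positions carry constants occurring in $\dep$, not only in $\adom{D}$). The number of refined types is still at most $|\sch{\dep} \cup \insS| \cdot (|\adom{D}|+w)^{O(w)}$ with $w$ the maximum arity, so your fixed-arity \NP~bound and the polynomial-size counter in the \PSPACE~traversal are unaffected, while splicing at a repetition of the \emph{refined} type now fixes the pass-through nulls pointwise and therefore preserves both stepwise chase-derivability and $h$. With that repair the remainder of your argument (including both lower bounds) is correct.
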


\noindent \underline{{\em \smash{Non-recursiveness:}}} A set
$\dep$ of tgds is {\em non-recursive} (a.k.a. {\em acyclic}~\cite{FKMP05,LMPS15}), if its predicate graph, the directed graph that encodes how the predicates of $\sch{\dep}$ depend on each other, is acyclic. Non-recursiveness ensures the termination of the chase, and thus decidability of OMQ evaluation. Let $\nr$ be the class of non-recursive (finite) sets of tgds. Then:

\begin{proposition} \label{prop:eval-nr}{\em \cite{LMPS15}}
$\eval(\nr,\class{CQ})$ and $\eval(\nr,\class{UCQ})$ are \NEXP-complete, even for fixed arity.
\end{proposition}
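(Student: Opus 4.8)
\smallskip
\noindent\textbf{Proof plan.} \emph{Membership in \NEXP.} The plan is to exploit a ``small-derivation'' property of non-recursive sets of tgds. Since the predicate graph of $\dep$ is acyclic, the head predicates along any chain of tgd applications that jointly produce a single atom of $\chase{D}{\dep}$ follow a simple path of that graph; hence every atom $\alpha \in \chase{D}{\dep}$ admits a \emph{derivation tree} of depth at most $|\sch{\dep}|$: its root is $\alpha$, each internal node is obtained from its children by one tgd application (the children instantiating the body), and its leaves lie in $D$. The branching of such a tree is bounded by the largest body of a tgd in $\dep$, so it has at most single-exponentially many nodes and, using fresh identifiers for the nulls created by the individual applications, can be written down in single-exponential space. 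Combining this with $Q(D) = q(\chase{D}{\dep})$, one obtains: for $Q = (\insS,\dep,q)$ with $q = q_1 \vee \dots \vee q_n$, $\bar c \in Q(D)$ iff there exist an index $i$, a homomorphism $h$ from $q_i$ to $\chase{D}{\dep}$ with $h(\bar x) = \bar c$, and a family of derivation trees---one per atom of $h(q_i)$---whose union is globally consistent (each applied tgd is in $\dep$ and instantiated correctly, leaves are in $D$, and each null identifier is created by a single application). I would therefore guess $i$, the homomorphism $h$, and this forest of derivation trees (all of single-exponential size) and verify consistency in exponential time, placing $\eval(\nr,\class{UCQ})$---hence also $\eval(\nr,\class{CQ})$---in \NEXP; note that the argument does not depend on the arity.

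\noindent\emph{\NEXP-hardness, even for fixed arity.} I would reduce from a \NEXP-complete problem, conveniently the tiling problem for the $2^n \times 2^n$ grid (equivalently, acceptance of a nondeterministic Turing machine within $2^n$ steps). Given such an instance, the goal is to construct in polynomial time a fixed-arity schema $\insS$, a non-recursive set $\dep$ of tgds, an $\insS$-database $D$, and a Boolean CQ $q$, so that the (unique) answer of $(\insS,\dep,q)$ on $D$ is ``yes'' iff the instance is positive. Here $\dep$ would consist of $O(n)$ successive layers of tgds---so its predicate graph has depth $\Theta(n)$ and $\dep$ is non-recursive---arranged to force the chase to build, on top of $D$, an exponentially large structure materialising the $2^{2n}$ grid cells, each carrying its $2n$-bit address (encoded via relations of constant arity) and the adjacency information needed to state the horizontal and vertical tiling constraints; a final layer, whose tgd bodies perform only local checks plus the required address arithmetic, would derive a distinguished atom $\mathsf{Goal}$, and $q$ would ask whether $\mathsf{Goal}$ is derived. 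The crux is that, although $\chase{D}{\dep}$ is determined, whether $\mathsf{Goal}$ lies in it is---by the derivation-forest characterisation above---equivalent to the existence of a globally consistent derivation forest of single-exponential size, and the construction should make such forests correspond precisely to valid tilings; since there are doubly-exponentially many candidate forests, this is a genuine \NEXP question. Only relations of constant arity are used, so the lower bound holds already for fixed arity, and since $q$ is a CQ it transfers to $\eval(\nr,\class{UCQ})$ as well.

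\noindent\emph{Main obstacle.} The hard part is the lower bound: engineering a \emph{bounded-depth} (hence non-recursive) set of tgds over constant-arity relations that nevertheless forces the chase to lay out an exponentially large, correctly addressed grid, and arranging that purely local tgd bodies---together with the address-comparison gadgets (binary successor on $n$-bit strings, equality of bit positions, and the like)---capture the global tiling constraints, so that $\mathsf{Goal}$ is derived exactly when a valid tiling exists and no spurious derivations are possible. Getting this addressing machinery right, and ruling out spurious derivations of $\mathsf{Goal}$, is the main technical effort; the \NEXP upper bound, by contrast, is a routine application of the derivation-forest argument.
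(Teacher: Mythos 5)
First, a remark on scope: the paper does not prove Proposition~\ref{prop:eval-nr} at all --- it is imported from \cite{LMPS15} --- so there is no in-paper proof to compare against. Judged on its own terms, your \NEXP\ membership argument is sound and standard: acyclicity of the predicate graph bounds the depth of any derivation by $|\sch{\dep}|$, the branching by the maximal body size, hence each witnessing derivation tree (and the forest of $|q|$ of them, together with the homomorphism) has single-exponential size, can be guessed, and can be checked for global consistency in exponential time; soundness follows because any consistent forest maps homomorphically into $\chase{D}{\dep}$.

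The gap is in the lower bound, and it is not merely an unfinished routine step: the route you sketch would not work as stated. Under certain-answer semantics the chase is deterministic, so a reduction cannot first ``materialise the $2^n\times 2^n$ grid with addresses'' and then have ``a final layer of local checks'' derive the goal atom --- at that point no tiling has been chosen anywhere, and a single stratum of non-disjunctive, non-recursive tgds cannot existentially quantify over a global assignment of tiles to exponentially many cells. The nondeterminism has to be pushed into which objects of the (huge) chase the goal's derivation uses: the chase must contain one object per locally consistent \emph{partial tiling}, built by divide and conquer so that a full-grid tiling object exists iff the tiling instance is solvable. This is exactly the compositional construction the present paper uses in the appendix for Theorem~\ref{the:cont-nr} (fixed-arity predicates $T_i(x,x_1,x_2,x_3,x_4)$ assembling a $2^i\times 2^i$ tiling from nine overlapping $2^{i-1}\times 2^{i-1}$ tilings, following \cite{DaVo97}), and it is the style of argument that actually yields \NEXP-hardness for fixed arity in \cite{LMPS15}. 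A second, smaller problem is the address machinery itself: packing a $2n$-bit coordinate into an atom, as in the address-based encoding of Theorem~\ref{the:fnr-into-lin} (adapted from \cite{BeGo10}), needs arity $\Theta(n)$; with fixed arity you would have to create address \emph{objects} and rebuild successor/equality on them by stratified rules. So what you label the ``main obstacle'' is in fact essentially the entire content of the hardness proof, and the grid-plus-local-checks organisation you propose needs to be replaced by (or reorganised into) the tiling-object composition to go through.
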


\begin{figure}[t]
 \epsfclipon
  \centerline
  {\hbox{
  \leavevmode
  \epsffile{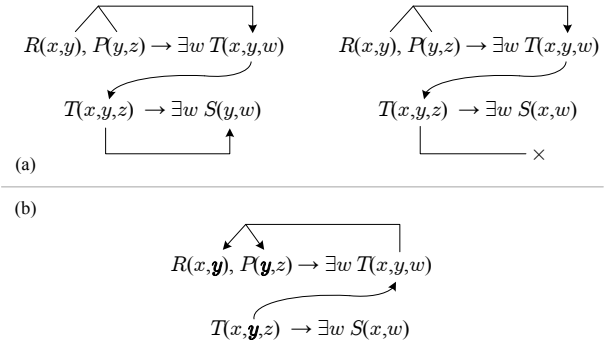}
  }} \epsfclipoff \caption{Stickiness and Marking.}
  \label{fig:stickiness}
  \vspace{-2mm}
\end{figure}

\noindent
\underline{\smash{\em Stickiness:}}
This condition ensures neither termination nor bounded treewidth of the chase. Instead, the decidability of OMQ evaluation is obtained by exploiting query rewriting techniques (more details on query rewriting are given in Section~\ref{sec:ucq-rewritability}).
The goal of stickiness is to capture joins among variables that are not expressible via guarded tgds, but without forcing the chase to terminate.
The key property underlying this condition can be described as follows: during the chase, terms that are associated (via a homomorphism) with variables that appear more than once in the body of a tgd (i.e., join variables) are always propagated (or ``stick'') to the inferred atoms. This is illustrated in Figure~\ref{fig:stickiness}(a); the left set of tgds is sticky, while the right set is not.
The formal definition is based on an inductive marking procedure that marks the variables that may violate the semantic property of the chase described above~\cite{CaGP12}. Roughly, during the base step of this procedure, a variable that appears in the body of a tgd $\tau$ but not in every head-atom of $\tau$ is marked. Then, the marking is inductively propagated from head to body as shown in Figure~\ref{fig:stickiness}(b).
Finally, a finite set of tgds $\Sigma$ is {\em sticky} if no tgd in $\Sigma$ contains two occurrences of a marked variable. Let $\class{S}$ be the class of sticky (finite) sets of tgds. Then:

\begin{proposition} \label{prop:eval-sticky}{\em \cite{CaGP12}}
$\eval(\sticky,\class{CQ})$ and $\eval(\sticky,\class{UCQ})$ are \EXP-complete, and \NP-complete for fixed arity.
\end{proposition}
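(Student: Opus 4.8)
The plan is to rely on the fact, established in~\cite{CaGP12}, that $(\sticky,\class{UCQ})$ is \emph{UCQ rewritable}: for every $Q=(\insS,\dep,q)\in(\sticky,\class{UCQ})$ there is a UCQ $q_\dep$ over $\insS$ with $Q(D)=q_\dep(D)$ for every $\insS$-database $D$. The rewriting $q_\dep$ is produced by a backward-chaining process that starts from the disjuncts of $q$ and repeatedly rewrites a CQ $q'$ into a new CQ by choosing a set of atoms of $q'$ that unify (via a most general unifier) with the head of some tgd of $\dep$ and replacing them by the correspondingly instantiated body; $q_\dep$ is the disjunction of all CQs generated this way. Consequently, $\bar c\in Q(D)$ iff some disjunct of $q_\dep$ has a homomorphism into $D$ mapping its free variables to $\bar c$, so $\eval(\sticky,\class{UCQ})$ reduces to evaluating the (possibly large) UCQ $q_\dep$ over $D$. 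The work is then to control the size of the disjuncts of $q_\dep$ using stickiness; the \class{CQ} case is subsumed since $\class{CQ}\subseteq\class{UCQ}$.

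The key structural input, and the point where stickiness is actually used, is that this backward-chaining process stays ``small''. One shows that every CQ generated has only polynomially many distinct variables --- intuitively the rewriting-dual of the stickiness property, namely that the chase never ``forgets'' a join and hence the rewriting never needs fresh variables to reconstruct one --- and that, given the bounded variable budget, it is reachable by a rewriting sequence of length polynomial for fixed arity and exponential in general (the latter because the number of distinct atoms one can form over a fixed set of $V$ variables is at most $|\sch{\dep}|\cdot V^{w}$, with $w$ the maximum arity, which caps how far the rewriting can genuinely progress). These bounds give the upper bounds directly. For fixed arity: guess a rewriting sequence of polynomial length --- equivalently, a single disjunct of $q_\dep$, of polynomial size --- together with a homomorphism from it into $D$ sending the answer variables to $\bar c$, and verify in polynomial time; this is an \NP~procedure. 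In general: a deterministic procedure computes and evaluates $q_\dep$ over $D$ within exponential time (one inspects only an exponentially bounded portion of the search), yielding the \EXP~upper bound.

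For the lower bounds, \NP-hardness for fixed arity is inherited: evaluating a Boolean CQ over a database is already \NP-hard~\cite{ChMe77}, and a plain CQ is the special case of an OMQ with empty ontology. \EXP-hardness in general is obtained by a reduction from the acceptance problem of alternating polynomial-space Turing machines, which coincides with \EXP: a polynomially bounded configuration is encoded as a single atom of polynomial arity, the transition function and the existential/universal branching are encoded by tgds whose chase unfolds the computation tree, and a Boolean CQ tests whether an accepting configuration is reached. The main obstacle of the whole argument is to make this construction \emph{sticky}: stickiness rules out the obvious recursive encodings (e.g.\ transitive closure is not sticky), so the gadgets that copy unchanged tape cells from one configuration atom to the next and that handle alternation must be laid out so that no \emph{marked} variable occurs twice in any tgd --- which is exactly where one exploits the freedom of using wide relations in place of chains of narrow ones. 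Verifying stickiness of the resulting set of tgds and correctness of the simulation is the technically demanding step; the variable-bounding lemma behind the upper bounds is the corresponding delicacy on the other side.
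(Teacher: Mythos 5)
This proposition is not proved in the paper at all: it is imported verbatim from \cite{CaGP12}, so the comparison is against the known proof there, and your reconstruction has genuine gaps on both sides. The main one is the upper bounds. Deriving them from UCQ rewritability does not work as you describe: what stickiness buys (and what the paper itself records in Proposition~\ref{pro:function-sticky}) is that each \emph{disjunct} of the rewriting has at most $|\insS| \cdot (|T(q)|+|C(\dep)|+1)^{\arity{\insS}}$ atoms, i.e., exponentially many in the arity; the rewriting as a whole consists of (essentially) sets of such atom types and can therefore contain double-exponentially many disjuncts --- indeed, in Section~\ref{sec:different-languages} the paper explicitly states that $\mathsf{XRewrite}$ on a sticky OMQ produces double-exponentially many CQs and runs in double-exponential time, which is exactly why it invokes Proposition~\ref{prop:eval-sticky} as a black box instead of re-deriving it from the rewriting. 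So your claim that "a deterministic procedure computes and evaluates $q_\dep$ over $D$ within exponential time" is unsubstantiated; the parenthetical "one inspects only an exponentially bounded portion of the search" is precisely the missing argument. The fixed-arity \NP{} bound has the same problem: the guessed disjunct is of polynomial size, but your certificate of its soundness is "a rewriting sequence of polynomial length", which you assert without proof --- the only obvious bound on derivation length is the number of distinct CQs, which is exponential even for fixed arity --- and the claim that every generated CQ has polynomially many distinct variables is false in general (only the variables occurring in more than one atom are confined to those of $q$; the number of atoms, hence of non-shared variables, is exponential in the arity). The actual arguments in \cite{CaGP12} work on the chase/proof side rather than by materializing a rewriting: stickiness guarantees that any term joined on anywhere in the derivation of an atom persists into that atom, so the atoms relevant for entailing $q$ live in an atom space that is exponential in the arity (polynomial for fixed arity), and membership in \EXP{} (resp.\ \NP) follows from a fixpoint or guess-and-check computation over that space.

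On the lower bounds, \NP-hardness for fixed arity via plain CQ evaluation with an empty ontology is fine. For \EXP-hardness, however, you only gesture at the construction, and the step you defer is exactly the hard one: in the standard alternating-\PSPACE{} encoding, the rules that combine the acceptance of two successor configurations, and those that copy unchanged tape cells between configurations, join on variables that do not all survive into the head; those variables are then marked, so the naive encoding is not sticky, and it is not clear that "using wide relations in place of chains of narrow ones" resolves this in the presence of recursion with fixed relation arities (the padding trick used in this paper's appendix to make full tgds lossless, and hence sticky, is applied only to \emph{non-recursive} rule sets for this reason). As it stands, the hardness part is a plan rather than a proof.
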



\section{OMQ Containment: The Basics}\label{sec:containment-basics}

The goal of this work is to study in depth the problem of checking whether an OMQ $Q_1$ is {\em contained} in an OMQ $Q_2$, both over the same data schema $\insS$, or, equivalently, whether $Q_1(D) \subseteq Q_2(D)$ over every (finite) $\insS$-database $D$. In this case we write $Q_1 \subseteq Q_2$; we write $Q_1 \equiv Q_2$ if $Q_1 \subseteq Q_2$ and $Q_2 \subseteq Q_1$. The {\em OMQ containment} problem in question is defined as follows; $\class{O}_1$ and $\class{O}_2$ are OMQ languages $(\class{C},\class{Q})$, where $\class{C}$ is a class of tgds (e.g., linear, non-recursive, sticky, etc.), and $\class{Q} \in \{\class{CQ}, \class{UCQ}\}$:

\begin{center}
\fbox{\begin{tabular}{ll}
{\small PROBLEM} : & $\cont(\class{O}_1,\class{O}_2)$
\\
{\small INPUT} : & Two OMQs $Q_1 \in \class{O}_1$ and $Q_2 \in \class{O}_2$.
\\
{\small QUESTION} : &  Does $Q_1 \subseteq Q_2$?
\end{tabular}}
\end{center}
Whenever $\class{O}_1 = \class{O}_2 = \class{O}$, we refer to the containment problem by simply writing $\cont(\class{O})$.

In what follows, we establish some simple but fundamental results, which help to better understand the nature of our problem.
We first investigate the relationship between evaluation and containment, which in turn allows us to obtain an initial boundary for the decidability of our problem, i.e., we can obtain a positive result only if the evaluation problem for the involved OMQ languages is decidable (e.g., those introduced in the previous section).
We then focus on the OMQ languages introduced in Section~\ref{sec:preliminaries} and observe that, once we fix the class of tgds, it does not make a difference whether we consider CQs or UCQs. In other words, we show that an OMQ in $(\class{C},\class{UCQ})$, where $\class{C} \in \{\class{G},\class{L},\class{NR},\class{S}\}$, can be rewritten as an OMQ in $(\class{C},\class{CQ})$. This fact simplifies our later complexity analysis since for establishing upper (resp., lower) bounds it suffices to focus on CQs (resp., UCQs).

\subsection{Evaluation vs. Containment}

As one might expect, OMQ evaluation and OMQ containment are strongly connected. In fact, as we explain below, the former can be easily reduced to the latter. But let us first introduce some auxiliary notation. Consider a database $D$ and a tuple ${\bar c} = (c_1,\ldots,c_n) \in \adom{D}^n$, where $n \geq 0$. We denote by $q_{D,{\bar c}}({\bar x})$, where ${\bar x} = (x_{c_1},\ldots,x_{c_n})$, the CQ obtained from the conjunction of atoms occurring in $D$ after replacing each constant $c$ with the variable $x_{c}$.
Consider now an OMQ $Q = (\insS,\dep,q({\bar x})) \in (\class{C},\class{CQ})$, where $\class{C}$ is some class of tgds, an $\insS$-database $D$, and a tuple ${\bar c} \in \adom{D}^{|{\bar x}|}$. It is not difficult to show that
\[
{\bar c} \in Q(D)
\iff
\underbrace{(\sch{\dep},\emptyset,q_{D,{\bar c}})}_{Q_1} \subseteq \underbrace{(\sch{\dep},\dep,q)}_{Q_2}\!.
\]
Let $\class{O}_{\emptyset}$ be the OMQ language that consists of all OMQs of the form $(\insS,\emptyset,q)$, i.e., the set of tgds is empty, where $q$ is a CQ. It is clear that $Q_1 \in \class{O}_{\emptyset}$ and $Q_2 \in (\class{C},\class{CQ})$. Therefore, for every OMQ language $\class{O} = (\class{C},\class{CQ})$, where $\class{C}$ is a class of tgds, we immediately get that:

\begin{proposition}\label{pro:eval-to-cont}
$\eval(\class{O})$ can be reduced in polynomial time into $\cont(\class{O}_{\emptyset},\class{O})$.
\end{proposition}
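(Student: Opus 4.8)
The plan is to establish the stated biconditional
\[
{\bar c} \in Q(D)
\iff
(\sch{\dep},\emptyset,q_{D,{\bar c}}) \subseteq (\sch{\dep},\dep,q),
\]
and then observe that the reduction $D, {\bar c} \mapsto (Q_1, Q_2)$ is trivially computable in polynomial time (it just copies $D$ with constants renamed to variables on the left, and copies $\dep$ and $q$ on the right), so that all the work is in proving the equivalence. Note that $Q_1$ and $Q_2$ are both over the data schema $\sch{\dep}$, so the containment quantifies over all finite $\sch{\dep}$-databases, and $Q_1 \in \class{O}_\emptyset \subseteq (\class{C},\class{CQ})$ since $\emptyset$ lies in every class of tgds, while $Q_2 = (\sch{\dep},\dep,q) \in (\class{C},\class{CQ})$.

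For the forward direction, assume ${\bar c} \in Q(D)$, i.e., ${\bar c} \in \cert{q}{D}{\dep} = q(\chase{D}{\dep})$. Let $D'$ be an arbitrary finite $\sch{\dep}$-database and let ${\bar d} \in Q_1(D')$; I must show ${\bar d} \in Q_2(D')$. Since $Q_1$ has empty ontology, ${\bar d} \in q_{D,{\bar c}}(D')$, meaning there is a homomorphism $g$ from $q_{D,{\bar c}}$ to $D'$ with $g({\bar x}) = {\bar d}$ (here ${\bar x} = (x_{c_1},\dots,x_{c_n})$). By construction of $q_{D,{\bar c}}$, the map $c_i \mapsto g(x_{c_i})$, extended to all of $\adom{D}$, is a homomorphism $h$ from $D$ (viewed as a set of facts) to $D'$ with $h({\bar c}) = {\bar d}$. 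Composing $h$ with the embedding $D' \hookrightarrow \chase{D'}{\dep}$ gives a homomorphism from $D$ into $\chase{D'}{\dep}$, and since $\chase{D'}{\dep} \models \dep$ and is an instance extending (the image of) $D$, the universality of the chase yields a homomorphism $h^*$ from $\chase{D}{\dep}$ into $\chase{D'}{\dep}$ agreeing with $h$ on $\adom{D}$. Now ${\bar c} \in q(\chase{D}{\dep})$ gives a homomorphism $f$ from $q$ to $\chase{D}{\dep}$ with $f$ mapping the free variables to ${\bar c}$; then $h^* \circ f$ is a homomorphism from $q$ to $\chase{D'}{\dep}$ mapping the free variables to $h^*({\bar c}) = h({\bar c}) = {\bar d}$, so ${\bar d} \in q(\chase{D'}{\dep}) = Q_2(D')$, as desired. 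The one subtlety here is to be slightly careful that the nulls introduced by the chase of $D$ are mapped consistently — but this is exactly what the universal-model property of $\chase{\cdot}{\dep}$ delivers, and the excerpt has already recorded that $\cert{q}{D}{\dep} = q(\chase{D}{\dep})$, so I can cite that freely.

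For the backward direction, I instantiate the containment at a cleverly chosen witness database. Take $D' := D$ itself (a finite $\sch{\dep}$-database). The identity-like map $x_{c_i} \mapsto c_i$ is a homomorphism from $q_{D,{\bar c}}$ to $D$, so ${\bar c} \in q_{D,{\bar c}}(D) = Q_1(D)$. By the assumed containment $Q_1 \subseteq Q_2$, we get ${\bar c} \in Q_2(D) = \cert{q}{D}{\dep} = Q(D)$, which is exactly the conclusion. This direction is short; the only thing to check is that $x_{c_i} \mapsto c_i$ genuinely lands in $Q_1(D)$, which is immediate from the definition of $q_{D,{\bar c}}$ as "$D$ with constants turned into variables."

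I expect the main obstacle — such as it is — to be purely expository: making precise the interplay between (i) the homomorphism $g$ witnessing $Q_1(D')$, (ii) the induced homomorphism $h : D \to D'$, and (iii) the chase-extension $h^*: \chase{D}{\dep} \to \chase{D'}{\dep}$, and ensuring the free variables are tracked correctly through the composition so that the final image is ${\bar d}$ (resp.\ ${\bar c}$). No class-specific properties of $\class{C}$ are used — only that $\emptyset \in \class{C}$ and the universality of the chase — so the argument is uniform across guarded, linear, non-recursive, and sticky sets of tgds. This also makes clear (for free) that any hardness result for $\eval(\class{O})$ transfers to $\cont(\class{O}_\emptyset, \class{O})$, which is the point of recording the proposition.
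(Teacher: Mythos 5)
Your proof is correct and takes essentially the same route as the paper: the backward direction instantiates the containment at the database $D$ itself, exactly as in the paper, and your forward direction simply unfolds, via chase universality and homomorphism composition, the ``well-known characterization of CQ containment in terms of the chase'' that the paper invokes (in contrapositive form). The only caveat, which you share with the paper's own argument, is that the induced map $h$ is not the identity on constants, so the chase-extension step tacitly assumes the constants of $D$ do not clash with constants occurring in $\dep$ or $q$.
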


We now show that the problem of evaluation is reducible to the complement of containment.
Let us say that, for technical reasons which will be made clear in a while, we focus our attention on classes $\class{C}$ of tgds that are {\em closed under fact tgd extension}, i.e., for every set $\dep \in \class{C}$, a set obtained from $\dep$ by adding a (finite) set of fact tgds is still in $\class{C}$. This is not an unnatural assumption since every reasonable class of tgds, such as the ones introduced above, enjoy this property.
Consider now an OMQ $Q = (\insS,\dep,q({\bar x})) \in (\class{C},\class{CQ})$, where $\class{C}$ is some class of tgds, an $\insS$-database $D$, and a tuple ${\bar c} \in \adom{D}^{|{\bar x}|}$. It is easy to see that
\[
{\bar c} \in Q(D)
\iff
\underbrace{(\insS,\dep_{D}^{\star},q^{\star}_{\bar c})}_{Q_1} \not\subseteq \underbrace{(\insS,\emptyset,\exists x \, P(x))}_{Q_2},
\]
where $\dep_{D}^{\star}$ is obtained from $\dep$ by renaming each predicate $R$ in $\dep$ into $R^{\star} \not\in \insS$ and adding the set of fact tgds
\[
\{\top \ra R^{\star}(c_1,\ldots,c_k) \mid R(c_1,\ldots,c_k) \in D\},
\]
$q^{\star}_{\bar c}$ is obtained from $q(\bar c)$ by renaming each predicate $R$ into $R^{\star} \not\in \insS$, and the predicate $P$ does not occur in $\insS$. Indeed, the above equivalence holds since $P \not\in \insS$ implies that $Q_2(D) = \emptyset$, for every $\insS$-database $D$.
Since $\class{C}$ is closed under fact tgd extension, $Q_1 \in (\class{C},\class{CQ})$, while $Q_2 \in \class{O}_{\emptyset}$. We write $\cocont(\class{O}_1,\class{O}_2)$ for the complement of $\cont(\class{O}_1,\class{O}_2)$. Hence, for every OMQ language $\class{O} = (\class{C},\class{CQ})$, where $\class{C}$ is a class of tgds (closed under fact tgd extension), it holds that:

\begin{proposition}\label{pro:coeval-to-cocont}
$\eval(\class{O})$ can be reduced in polynomial time into $\cocont(\class{O},\class{O}_{\emptyset})$.
\end{proposition}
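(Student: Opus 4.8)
The plan is to verify the claimed equivalence
\[
{\bar c} \in Q(D) \iff Q_1 \not\subseteq Q_2,
\]
where $Q_1 = (\insS,\dep_D^\star,q^\star_{\bar c})$ and $Q_2 = (\insS,\emptyset,\exists x\, P(x))$, and then observe that the reduction is computable in polynomial time and stays within the stated OMQ languages. First I would dispose of the membership claims: since $P \notin \insS$ and the ontology of $Q_2$ is empty, for any $\insS$-database $D'$ we have $Q_2(D') = (\exists x\, P(x))(D') = \emptyset$, because no atom over $P$ can occur in $D'$; hence $Q_2 \in \class{O}_\emptyset$ trivially. For $Q_1$, note that $q^\star_{\bar c}$ is a CQ over the $\star$-renamed predicates (so over $\sch{\dep_D^\star}$, which need not be contained in $\insS$, but the OMQ definition explicitly allows the query to use predicates outside the data schema), and $\dep_D^\star$ is obtained from $\dep \in \class{C}$ by renaming predicates — which preserves membership in any reasonable syntactic class — and then adding a finite set of fact tgds; by the assumption that $\class{C}$ is closed under fact tgd extension, $\dep_D^\star \in \class{C}$, so $Q_1 \in (\class{C},\class{CQ})$. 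Polynomial-time computability is immediate: the renaming and the family of fact tgds $\{\top \ra R^\star(\bar a) \mid R(\bar a) \in D\}$ are produced by a single linear scan of $D$ and $\dep$.

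The core is the equivalence itself. Since $Q_2(D') = \emptyset$ for every $\insS$-database $D'$, we have $Q_1 \not\subseteq Q_2$ iff there exists an $\insS$-database $D'$ with $Q_1(D') \neq \emptyset$, i.e.\ with ${\bar c}' \in Q_1(D')$ for some tuple ${\bar c}'$. So it suffices to show: ${\bar c} \in Q(D)$ iff $Q_1(D') \neq \emptyset$ for some $\insS$-database $D'$. For the forward direction, I would take $D' = \emptyset$ (or any $\insS$-database; the key point is that the $\star$-predicates occurring in $q^\star_{\bar c}$ and in the heads of $\dep_D^\star$ are \emph{not} in $\insS$, so the content of $D'$ over $\insS$ is irrelevant to $Q_1$). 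Using $Q_1(D') = q^\star_{\bar c}(\chase{D'}{\dep_D^\star})$, I would argue that $\chase{D'}{\dep_D^\star}$ restricted to the $\star$-predicates is, up to the obvious renaming $R^\star \mapsto R$, isomorphic to $\chase{D}{\dep}$: the fact tgds of $\dep_D^\star$ generate exactly the $\star$-copy of $D$ (independently of $D'$), and from that point on the renamed rules of $\dep$ fire exactly as the original rules of $\dep$ fire over $D$. Hence $q^\star_{\bar c}(\chase{D'}{\dep_D^\star}) \ne \emptyset$ iff $q_{\bar c}(\chase{D}{\dep}) \ne \emptyset$ — and $q_{\bar c}$ being $q$ with the free variables frozen to the constants $\bar c$, the latter holds iff ${\bar c} \in q(\chase{D}{\dep}) = Q(D)$. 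The backward direction is the same argument run in reverse.

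I expect the main obstacle to be making the "chase simulation up to renaming" argument precise, in particular handling two subtleties: (i) the $\star$-predicates are genuinely fresh (disjoint from $\insS$), so one must check that no rule of $\dep_D^\star$ can ever interact with the $\insS$-atoms that might be present in $D'$ — this is immediate because body atoms of $\dep_D^\star$ only mention $\star$-predicates, and it is precisely why the choice of $D'$ does not matter; and (ii) establishing a homomorphism/isomorphism between $\chase{D}{\dep}$ and the $\star$-fragment of $\chase{D'}{\dep_D^\star}$ that is the identity on constants, so that the frozen query $q^\star_{\bar c}$ is answered in one exactly when $q_{\bar c}$ is answered in the other. Once that correspondence is set up, the equivalence and the complexity bound follow routinely, and the proposition is proved. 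A subtlety worth a sentence in the writeup is why we pass through $\cocont$ rather than $\cont$: evaluation is reducible to \emph{non}-containment here because $Q_2$ is the always-empty query, so "$Q_1 \subseteq Q_2$" is the degenerate case and the informative event is a witness of non-containment.
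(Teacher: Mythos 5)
Your proposal is correct and takes essentially the same route as the paper: the paper uses exactly this reduction (the $\star$-renamed ontology extended with fact tgds encoding $D$, the frozen query $q^{\star}_{\bar c}$, and the always-empty right-hand query $\exists x\,P(x)$ with $P\notin\insS$), justifying the equivalence precisely by the observation that $Q_2$ evaluates to $\emptyset$ on every $\insS$-database and that $\class{C}$ is closed under fact tgd extension. Your write-up merely fills in the chase-simulation details that the paper leaves implicit, and those details are right.
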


By definition, $\class{O}_{\emptyset}$ is contained in every OMQ language $(\class{C},\class{CQ})$, where $\class{C}$ is a class of tgds. Therefore, as a corollary of Propositions~\ref{pro:eval-to-cont} and~\ref{pro:coeval-to-cocont}, we obtain an initial boundary for the decidability of OMQ containment: we can obtain a positive result only if the evaluation problem for the involved OMQ languages is decidable. More formally:

\begin{corollary}\label{cor:undecidability}
$\cont(\class{O}_1,\class{O}_2)$ is undecidable if $\eval(\class{O}_1)$ is undecidable or $\eval(\class{O}_2)$ is undecidable.
\end{corollary}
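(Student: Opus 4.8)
The plan is to derive Corollary~\ref{cor:undecidability} as a direct consequence of Propositions~\ref{pro:eval-to-cont} and~\ref{pro:coeval-to-cocont}, using the fact that $\class{O}_\emptyset$ is a sublanguage of every OMQ language of the form $(\class{C},\class{CQ})$.

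First I would handle the case where $\eval(\class{O}_1)$ is undecidable. Write $\class{O}_1 = (\class{C}_1,\class{Q}_1)$. The relevant reductions in Propositions~\ref{pro:eval-to-cont} and~\ref{pro:coeval-to-cocont} are stated for CQ-based languages, so one first observes that $\eval(\class{C}_1,\class{CQ})$ is undecidable whenever $\eval(\class{C}_1,\class{UCQ})$ is (since CQs are a special case of UCQs, and the paper has anyway announced that for the concrete classes under consideration CQ- and UCQ-based languages are equivalent). By Proposition~\ref{pro:eval-to-cont}, $\eval(\class{C}_1,\class{CQ})$ reduces in polynomial time to $\cont(\class{O}_\emptyset,(\class{C}_1,\class{CQ}))$; since $\class{O}_\emptyset \subseteq \class{O}_1$ and $(\class{C}_1,\class{CQ}) \subseteq \class{O}_1$, any instance of $\cont(\class{O}_\emptyset,(\class{C}_1,\class{CQ}))$ is also an instance of $\cont(\class{O}_1,\class{O}_2)$ precisely when $\class{O}_\emptyset$ embeds into the left argument and $(\class{C}_1,\class{CQ})$ into the right; but here the right argument must be $\class{O}_2$, not $\class{O}_1$, so this direction is cleaner via Proposition~\ref{pro:coeval-to-cocont} instead. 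Using Proposition~\ref{pro:coeval-to-cocont}, $\eval(\class{C}_1,\class{CQ})$ reduces to $\cocont((\class{C}_1,\class{CQ}),\class{O}_\emptyset)$, and since $(\class{C}_1,\class{CQ}) \subseteq \class{O}_1$ and $\class{O}_\emptyset \subseteq \class{O}_2$, every such instance is an instance of $\cocont(\class{O}_1,\class{O}_2)$. Hence $\cocont(\class{O}_1,\class{O}_2)$, and therefore $\cont(\class{O}_1,\class{O}_2)$, is undecidable.

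Next I would handle the symmetric case where $\eval(\class{O}_2)$ is undecidable. Writing $\class{O}_2 = (\class{C}_2,\class{Q}_2)$ and reducing to the CQ case as before, Proposition~\ref{pro:eval-to-cont} gives a polynomial-time reduction from $\eval(\class{C}_2,\class{CQ})$ to $\cont(\class{O}_\emptyset,(\class{C}_2,\class{CQ}))$. Since $\class{O}_\emptyset \subseteq \class{O}_1$ and $(\class{C}_2,\class{CQ}) \subseteq \class{O}_2$, every produced instance is an instance of $\cont(\class{O}_1,\class{O}_2)$, so the latter is undecidable. One caveat to record: Proposition~\ref{pro:coeval-to-cocont} required $\class{C}$ to be closed under fact tgd extension, so if one wants to invoke it in the first case one should note this mild hypothesis; the corollary as stated is meant for the concrete classes of Section~\ref{sec:preliminaries}, all of which enjoy this closure, so no additional assumption is needed there. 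The two cases together establish the corollary.

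The only real subtlety — and the place where a careless argument could slip — is making sure the two reductions land in the right arguments of $\cont$: Proposition~\ref{pro:eval-to-cont} produces instances whose \emph{left} query is empty (good for the $\class{O}_2$-undecidability case, since $\class{O}_\emptyset \subseteq \class{O}_1$ always), whereas Proposition~\ref{pro:coeval-to-cocont} produces instances whose \emph{right} query is empty (good for the $\class{O}_1$-undecidability case, since $\class{O}_\emptyset \subseteq \class{O}_2$ always). Getting this pairing backwards would force one to embed a nontrivial $\class{C}_i$-ontology into $\class{O}_\emptyset$, which is impossible. Beyond that bookkeeping, the proof is immediate.
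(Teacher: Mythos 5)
Your proposal is correct and takes essentially the same route as the paper: it derives the corollary by pairing Proposition~\ref{pro:eval-to-cont} with the case where $\eval(\class{O}_2)$ is undecidable and Proposition~\ref{pro:coeval-to-cocont} (with the closure-under-fact-tgd-extension caveat you rightly record) with the case where $\eval(\class{O}_1)$ is undecidable, using that $\class{O}_\emptyset$ is contained in every OMQ language. The only blemish is the parenthetical justification in your UCQ-to-CQ remark --- ``CQs are a special case of UCQs'' yields the opposite implication to the one you claim; the cleaner observation is that both reductions work verbatim when the involved query is a UCQ (or, for the concrete classes, invoke Proposition~\ref{pro:from-ucq-to-cq}), but this side issue does not affect the substance of the argument.
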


Can we prove the converse of Corollary~\ref{cor:undecidability}: $\cont(\class{O}_1,\class{O}_2)$ is decidable if both $\eval(\class{O}_1)$ and $\eval(\class{O}_2)$ are decidable? The answer to this question is negative. This is due to the fact that containment of Datalog queries is undecidable~\cite{Shmu93}. Since Datalog queries can be directly encoded in the OMQ language based on the class $\class{F}$ of {\em full tgds}, i.e., those without existentially quantified variables, we obtain the following:

\begin{proposition}\label{prop:cont-full}{\em \cite{Shmu93}}
$\cont((\class{F},\class{CQ}))$ is undecidable.
\end{proposition}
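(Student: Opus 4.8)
The plan is to reduce the known undecidability of Datalog containment to $\cont((\class{F},\class{CQ}))$, which is essentially immediate once we fix a faithful encoding of Datalog programs as sets of full tgds. First I would recall the standard correspondence: a Datalog program $\Pi$ with goal predicate $\mathsf{Goal}$ can be read as a finite set $\dep_\Pi$ of full tgds (tgds without existentially quantified head variables), where each rule $\mathsf{head} \leftarrow \mathsf{body}$ of $\Pi$ becomes the tgd $\mathsf{body} \to \mathsf{head}$, and the Datalog query $(\Pi,\mathsf{Goal})$ over extensional schema $\insS$ corresponds to the OMQ $(\insS, \dep_\Pi, \mathsf{Goal}(\bar x))$ in $(\class{F},\class{CQ})$. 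The semantic match is exactly the well-known fact quoted in the preliminaries that $\cert{q}{D}{\dep} = q(\chase{D}{\dep})$: for full tgds the chase of $D$ under $\dep_\Pi$ computes precisely the least fixpoint of $\Pi$ on $D$, so the certain answers of the OMQ on $D$ coincide with the Datalog answers of $(\Pi,\mathsf{Goal})$ on $D$. Hence, for two Datalog programs $\Pi_1,\Pi_2$ over the same extensional schema $\insS$ with the same goal predicate, $(\Pi_1,\mathsf{Goal}) \subseteq (\Pi_2,\mathsf{Goal})$ as Datalog queries iff the corresponding OMQs satisfy $Q_1 \subseteq Q_2$.

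Next I would invoke Shmueli's theorem~\cite{Shmu93}: containment of Datalog queries is undecidable. Since the map $(\Pi,\mathsf{Goal}) \mapsto (\insS,\dep_\Pi,\mathsf{Goal}(\bar x))$ is computable in polynomial (indeed linear) time and, by the paragraph above, preserves containment in both directions, it is a valid reduction from Datalog containment to $\cont((\class{F},\class{CQ}))$. Therefore $\cont((\class{F},\class{CQ}))$ is undecidable. One minor point to address cleanly is the data schema: Shmueli's reduction may produce programs whose intensional predicates are disjoint from the extensional ones, but if not, one can harmlessly rename all intensional predicates to fresh symbols so that $\insS$ contains only the extensional predicates, which is exactly what the OMQ framework expects as data schema (the ontology is allowed to introduce and populate new predicates).

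The only genuine subtlety — and the step I would flag as the ``main obstacle,'' though it is more of a bookkeeping issue than a real difficulty — is making sure the notion of containment lines up: Datalog containment is typically stated over arbitrary (finite) extensional databases, which is exactly the quantification ``$Q_1(D)\subseteq Q_2(D)$ over every finite $\insS$-database $D$'' used in Section~\ref{sec:containment-basics}. If one were instead working with a Boolean goal (a $0$-ary query), the argument is identical; if the goal has free variables, one should note that the standard correspondence between CQ/Datalog evaluation and homomorphisms (as set up in the preliminaries) handles the free variables uniformly, so nothing changes. Everything else is routine: full tgds are trivially a syntactic subclass of $\class{TGD}$, their chase terminates, and the fixpoint/chase equivalence is already recorded in the excerpt.
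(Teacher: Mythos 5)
Your proposal is correct and coincides with the paper's argument: the paper simply observes that Datalog queries are directly encodable as OMQs with full tgds (via the chase/least-fixpoint correspondence) and invokes Shmueli's undecidability of Datalog containment, which is exactly your reduction. Your extra care about keeping the data schema purely extensional (renaming intensional predicates if needed) is a harmless and sensible bit of bookkeeping that the paper leaves implicit.
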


This result, combined with the fact that $\eval(\class{F})$ is decidable (since the chase under full tgds always terminates), implies that the converse of Corollary~\ref{cor:undecidability} does not hold.
Proposition~\ref{prop:cont-full} also rules out the OMQ languages that are based on classes of tgds that extend $\class{F}$; e.g., the weak versions of the ones introduced in Section~\ref{sec:preliminaries}, called {\em weakly guarded}~\cite{CaGK13}, {\em weakly acyclic}~\cite{FKMP05}, and {\em weakly sticky}~\cite{CaGP12} that guarantee the decidability of OMQ evaluation.\footnote{The idea of those classes is the same: relax the conditions in the definition of the class, so that only those positions that receive null values during the chase are taken into account.}
The question that comes up concerns the decidability and complexity of containment for the OMQ languages that are based on the non-weak versions of the above classes, i.e., guarded, non-recursive, and sticky. This will be the subject of the next two sections.

\subsection{From UCQs to CQs}

Before we proceed with the complexity analysis of containment for the OMQ languages in question, let us state the following useful result:

\begin{proposition}\label{pro:from-ucq-to-cq}
Given an OMQ $Q \in (\class{C},\class{UCQ})$, where $\class{C} \in \{\class{G}, \class{L}, \class{NR}, \class{S}\}$, we can construct in polynomial time an OMQ $Q' \in (\class{C},\class{CQ})$ such that $Q \equiv Q'$.
\end{proposition}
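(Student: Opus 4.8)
The plan is to show how to encode a UCQ $q = q_1 \vee \cdots \vee q_n$ inside the ontology using auxiliary tgds, so that a single fresh CQ can collect the answers produced by any disjunct. First I would fix $Q = (\insS, \dep, q(\bar x))$ with $q$ a UCQ of the form $q_1(\bar x) \vee \cdots \vee q_n(\bar x)$, where each $q_i(\bar x) = \exists \bar y_i\, \phi_i(\bar x, \bar y_i)$. I introduce a fresh $|\bar x|$-ary predicate $\mathsf{Goal} \notin \insS \cup \sch{\dep}$ and, for each $i \in \{1,\ldots,n\}$, a tgd $\tau_i : \phi_i(\bar x, \bar y_i) \to \mathsf{Goal}(\bar x)$. (If $\bar x$ is empty, $\mathsf{Goal}$ is a propositional atom; minor care is needed here but it is routine.) Then I set $\dep' := \dep \cup \{\tau_1, \ldots, \tau_n\}$ and $Q' := (\insS, \dep', \mathsf{Goal}(\bar x))$, which is an OMQ with a single CQ query. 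The equivalence $Q \equiv Q'$ follows because $\chase{D}{\dep'}$ restricted to the predicates of $\dep$ coincides with $\chase{D}{\dep}$ (the $\tau_i$ only derive $\mathsf{Goal}$-atoms, which never feed back into any other rule), and $\mathsf{Goal}(\bar c) \in \chase{D}{\dep'}$ iff some $\tau_i$ fires with $\bar x \mapsto \bar c$, iff $\bar c \in q_i(\chase{D}{\dep})$ for some $i$, iff $\bar c \in q(\chase{D}{\dep}) = Q(D)$.

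The remaining work is to verify that $\dep'$ stays in the same class $\class{C}$ for each of $\class{G}, \class{L}, \class{NR}, \class{S}$; this is where the real (though still light) argument lies. For $\class{L}$: each $\tau_i$ has a single body atom only if $\phi_i$ is a single atom, which need not hold, so here the construction must be refined — instead of a direct tgd I would first \emph{decompose} each disjunct $q_i$ into a chain of linear tgds that cumulatively "remember" the matched variables, i.e., process the body atoms of $q_i$ one at a time using fresh intermediate predicates carrying the growing tuple of answer-relevant terms, a standard trick. The same refinement handles $\class{G}$: either observe that a direct $\tau_i$ with body $\phi_i$ is guarded only if some atom of $\phi_i$ contains all of $\bar x \cup \bar y_i$ — which again fails in general — so I use the same chain-of-guarded-tgds decomposition, where each step introduces an auxiliary atom collecting the previously seen terms that serves as the guard of the next step. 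For $\class{NR}$: the predicate graph of $\dep'$ adds only edges into $\mathsf{Goal}$ (and into the fresh intermediate predicates, which form a DAG among themselves and point only "forward" toward $\mathsf{Goal}$), so acyclicity is preserved. For $\class{S}$: the marking procedure on $\dep'$ agrees with that on $\dep$ on the old predicates, and the new tgds $\tau_i$ (or their linear/guarded decompositions) can be arranged so that a variable appearing twice in a body is never marked — since $\mathsf{Goal}$ and the intermediates carry all their body variables into the head, the base step of marking introduces no new marks, hence stickiness is retained.

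The key steps, in order: (1) define the fresh predicate(s) and the encoding tgds; (2) for $\class{L}$ and $\class{G}$, replace the naive encoding by the body-atom-at-a-time decomposition with intermediate predicates, and check the linearity/guardedness of each resulting rule; (3) prove $\chase{D}{\dep'}$ conservatively extends $\chase{D}{\dep}$ and that $\mathsf{Goal}(\bar c)$ captures exactly $q(\chase{D}{\dep})$, giving $Q(D) = Q'(D)$ for every $\insS$-database $D$; (4) verify membership of $\dep'$ in $\class{C}$ for each of the four classes; (5) note the construction is polynomial since each disjunct contributes a number of rules linear in its size. The main obstacle is step (2)/(4) for the \emph{linear} and \emph{guarded} cases: the obvious one-rule-per-disjunct encoding breaks linearity (and guardedness when no body atom covers all variables), so the careful decomposition through auxiliary predicates — and the accompanying check that it neither creates cycles (for $\class{NR}$) nor introduces marked repeated variables (for $\class{S}$) — is the part that requires genuine attention rather than bookkeeping.
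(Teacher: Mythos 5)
There is a genuine gap, and it sits exactly where you located the ``real work'': the linear and guarded cases. Your repair for $\class{L}$ and $\class{G}$ --- processing the body atoms of a disjunct one at a time through intermediate predicates that ``remember'' the terms seen so far --- does not produce tgds in the required class. Each chaining step necessarily has a body of the form $I_j(\bar z) \wedge \alpha(\bar v)$, where $\alpha$ is the next atom of $\phi_i$: this body has two atoms, so it is not linear, and neither $I_j(\bar z)$ (which misses the fresh variables of $\bar v$) nor $\alpha(\bar v)$ (which misses most of $\bar z$) can serve as a guard, so it is not guarded either. An auxiliary atom can only guard variables it already contains, so it cannot guard the join with the \emph{next} atom; this is precisely the join restriction that defines these classes, and no bookkeeping gets around it. There is also a semantic obstruction showing that \emph{any} variant of your route must fail for $\class{G}$ and $\class{L}$: if the whole conjunction $\phi_i$ is pushed into a guarded (resp.\ linear) ontology and the answer is read off an atomic predicate $\mathsf{Goal}$, then whether $\mathsf{Goal}(\bar c)$ is derived depends only on the guarded unravelling of $D$ (resp.\ on single atoms of $D$, since a chase step under a linear tgd consumes one atom), so the resulting OMQ is invariant under these unravellings --- whereas a UCQ whose disjunct is, say, a cyclic CQ such as $\exists x \exists y \exists z\,(R(x,y) \wedge R(y,z) \wedge R(z,x))$ is not. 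Finally, even for $\class{S}$ your rule $\phi_i(\bar x,\bar y_i) \ra \mathsf{Goal}(\bar x)$ as stated violates stickiness whenever a non-answer variable of $q_i$ occurs in two body atoms (it is marked in the base step and repeated), so that case too needs a different head, not just ``arranging''.

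The paper's proof takes the opposite route precisely to avoid this: the conjunctions of the disjuncts are kept \emph{inside the final CQ}, and the ontology is only used for a class-preserving annotation trick. Each original tgd is modified by appending one shared truth-value variable to every atom (this preserves $\class{G}$, $\class{L}$, $\class{NR}$, $\class{S}$); single-body-atom tgds tag database atoms with the constant $1$ and generate the truth table of an $\mathrm{Or}$ predicate; a tgd with body $\mathrm{True}(t)$ creates a ``false-tagged'' copy of the atoms of $q$ so that every disjunct has a dummy match. The new CQ $q'$ then conjoins all (annotated) disjuncts and chains them with $\mathrm{Or}$-atoms from $\mathrm{False}$ to $\mathrm{True}$, which forces at least one disjunct to be matched by atoms tagged $1$, i.e., by $\chase{D}{\dep}$. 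Your idea of a fresh goal predicate fed by one rule per disjunct would be fine for $\class{NR}$ (and is salvageable for $\class{S}$ with full-head intermediate predicates), but for the proposition as stated you need a single construction that works for all four classes, and for $\class{L}$ and $\class{G}$ the disjuncts simply cannot be absorbed into the ontology.
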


The proof of Proposition~\ref{pro:from-ucq-to-cq} relies on the idea of encoding boolean operations (in our case the `or' operator) using a set of atoms; this idea has been used in several other works (see, e.g.,~\cite{BeGo10,BMMP16,GoPa03}).
Proposition~\ref{pro:from-ucq-to-cq} allows us to focus on OMQs that are based on CQs; in fact, $\cont((\class{C}_1,\class{CQ}),(\class{C}_2,\class{CQ}))$ is $\C$-complete, where $\class{C}_1,\class{C}_2 \in \{\class{G},\class{L},\class{NR},\class{S}\}$ and $\C$ is a complexity class that is closed under polynomial time reductions, iff $\cont((\class{C}_1,\class{UCQ}),(\class{C}_2,\class{UCQ}))$ is $\C$-complete.

\subsection{Plan of Attack}

We are now ready to proceed with the complexity analysis of containment for the OMQ languages in question. Our plan of attack can be summarized as follows:
\begin{itemize}\itemsep-\parsep
\item We consider, in Section~\ref{sec:ucq-rewritability}, $\cont((\class{C},\class{CQ}))$, for $\class{C} \in \{\class{L},\class{NR},\class{S}\}$. These languages enjoy a crucial property, called UCQ rewritability, which is very useful for our purposes. This property allows us to show the following result: if the containment does not hold, then this is witnessed via a ``small'' database, which in turn allows us to devise simple guess-and-check algorithms.
\item We then proceed, in Section~\ref{sec:guardedness}, with $\cont((\class{G},\class{CQ}))$. This OMQ language does not enjoy UCQ rewritability, and the task of establishing a small witness property as above turned out to be challenging. However, we show the following: if the containment does not hold, then this is witnessed via a ``tree-shaped'' database, which allows us to devise a decision procedure based on two-way alternating parity automata on finite trees.

\item In Section~\ref{sec:different-languages}, we study the case where the OMQ containment problem involves two different languages. If the left-hand side language is UCQ rewritable, then we can devise a guess-and-check algorithm by exploiting the above small witness property. The challenging case is when the left-hand side language is $(\class{G},\class{CQ})$, where again we employ techniques based on tree automata.
\end{itemize}


\section{UCQ Rewritable Languages}\label{sec:ucq-rewritability}

We now focus on OMQ languages that enjoy the crucial property of UCQ rewritability.

\begin{definition}\label{def:ucq-rewritability}(\textbf{UCQ Rewritability})
An OMQ language $(\class{C},\class{CQ})$, where $\class{C} \subseteq \class{TGD}$, is {\em UCQ rewritable} if, for each OMQ $Q = (\insS,\dep,q({\bar x})) \in (\class{C},\class{CQ})$  we can construct a UCQ $q'({\bar x})$ such that $Q(D) = q'(D)$ for every $\insS$-database $D$. \hfill\markfull
\end{definition}

We proceed to establish our desired small witness property, based on UCQ rewritability. By the definition of UCQ rewritability, for each language $\class{O}$ that is UCQ rewritable, there exists a computable function $f_{\class{O}}$ from $\class{O}$ to the natural numbers such that the following holds: for every OMQ $Q = (\insS,\dep,q({\bar x})) \in \class{O}$, and UCQ rewriting $q_1({\bar x}) \vee \cdots \vee q_n({\bar x})$ of $Q$, it is the case that $\max_{1 \leq i \leq n} \{|q_i|\} \leq f_{\class{O}}(Q)$, where $|q_i|$ denotes the number of atoms occurring in $q_i$. Then:

\begin{proposition}\label{pro:small-witness-property}
Consider a UCQ rewritable language $\class{O}$, and two OMQs $Q \in \class{O}$ and $Q' \in (\class{TGD},\class{CQ})$, both with data schema $\insS$. If $Q \not\subseteq Q'$, then there exists an $\insS$-database $D$, where $|D| \leq f_{\class{O}}(Q)$, such that $Q(D) \not\subseteq Q'(D)$.
\end{proposition}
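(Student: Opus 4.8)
The plan is to exploit the UCQ rewriting of $Q$ on the left-hand side and a homomorphism argument on the right-hand side. Suppose $Q \not\subseteq Q'$. Then there is an $\insS$-database $D_0$ and a tuple $\bar c$ such that $\bar c \in Q(D_0)$ but $\bar c \notin Q'(D_0)$. Since $\class{O}$ is UCQ rewritable, fix a UCQ rewriting $q_1(\bar x) \vee \cdots \vee q_n(\bar x)$ of $Q$; by the definition of $f_{\class{O}}$ we have $|q_i| \leq f_{\class{O}}(Q)$ for each $i$. From $\bar c \in Q(D_0) = (q_1 \vee \cdots \vee q_n)(D_0)$, there is some disjunct $q_i$ and a homomorphism $h$ from $q_i$ to $D_0$ with $h(\bar x) = \bar c$. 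Let $D$ be the image of $h$, i.e., the sub-database of $D_0$ consisting of the atoms $h(R_j(\bar v_j))$ for the atoms of $q_i$. Then $D$ is an $\insS$-database, $D \subseteq D_0$, and $|D| \leq |q_i| \leq f_{\class{O}}(Q)$, since $D$ has at most one atom per atom of $q_i$.

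Next I would verify the two required properties of this small $D$. First, $\bar c \in Q(D)$: by construction $h$ is still a homomorphism from $q_i$ to $D$ with $h(\bar x) = \bar c$, so $\bar c \in q_i(D) \subseteq (q_1 \vee \cdots \vee q_n)(D) = Q(D)$, using UCQ rewritability again (the rewriting is sound and complete on \emph{every} $\insS$-database, in particular on $D$). Second, $\bar c \notin Q'(D)$: here I would use monotonicity of OMQ evaluation. Since $D \subseteq D_0$ and both are $\insS$-databases, and $Q'$ is an OMQ (certain answers are monotone under adding facts — if $I \supseteq D_0 \supseteq D$ and $I \models \dep'$ then $I$ is also a superset of $D$ satisfying $\dep'$, so $Q'(D) \subseteq Q'(D_0)$). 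Hence $\bar c \notin Q'(D_0)$ implies $\bar c \notin Q'(D)$. Combining the two, $Q(D) \not\subseteq Q'(D)$, with $|D| \leq f_{\class{O}}(Q)$, as desired.

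The only mildly delicate point — and the step I expect to be the main obstacle, though it is really just a matter of being careful rather than a genuine difficulty — is making sure that restricting to the homomorphic image $D$ of $q_i$ does not destroy the non-containment witness. This is exactly where monotonicity of certain answers is essential: shrinking the database can only shrink the set of certain answers of $Q'$, so a tuple that was \emph{not} a certain answer over the larger $D_0$ remains a non-answer over $D$. One should also note that $D$ is genuinely a database over $\insS$ (it contains only facts, i.e. constants, since $D_0$ does and $D \subseteq D_0$, and all its atoms use predicates of $\insS$), so that it is a legitimate input to both $Q$ and $Q'$. With these observations the argument is complete.
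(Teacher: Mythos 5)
Your proof is correct, but the second half takes a different route from the paper's. The paper argues purely at the level of the rewriting: from $Q \not\subseteq Q'$ it infers $q_i \not\subseteq Q'$ for some disjunct $q_i$ of the UCQ rewriting, takes as witness the \emph{canonical (frozen) database} $D_{q_i}$ of that disjunct, and then needs an auxiliary lemma to show $c(\bar x) \notin Q'(D_{q_i})$; that lemma maps $D_{q_i}$ homomorphically into an arbitrary counterexample for $q_i \not\subseteq Q'$ and lifts the homomorphism to the chase, i.e., it exploits closure of OMQ answers under homomorphisms. You instead stay with a concrete counterexample $D_0$ for $Q \not\subseteq Q'$, carve the small witness $D$ out of $D_0$ as the homomorphic image of the satisfied disjunct, and obtain $\bar c \notin Q'(D)$ from monotonicity of certain answers under sub-databases, which is immediate from the intersection definition of $\cert{q}{D}{\dep}$ (every model counted for $D_0$ is counted for $D$). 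Both arguments use UCQ rewritability identically, to pick the disjunct and to bound $|D| \leq |q_i| \leq f_{\class{O}}(Q)$, and both witnesses have at most $|q_i|$ facts. What you gain is a slightly more elementary proof that never reasons about the chase; what the paper's version gives is a canonical witness (the frozen disjunct itself), independent of the particular counterexample database, though for the guess-and-check algorithms built on this proposition either form of the small-witness property is equally usable.
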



In Proposition~\ref{pro:small-witness-property} we assume that the left-hand side query falls in a UCQ rewritable language, be we do not pose any restriction on the language of the right-hand side query. Thus, we immediately get a decision procedure for $\cont(\class{O}_1,\class{O}_2)$ if $\class{O}_1$ is UCQ rewritable and $\eval(\class{O}_2)$ is decidable. Given $Q_1 = (\insS, \dep_1, q_1({\bar{x}})) \in \class{O}_1$ and $Q_2 = (\insS, \dep_2, q_2({\bar x})) \in \class{O}_2$:
\begin{enumerate}
\item Guess an $\insS$-database $D$ such that $|D| \leq f_{\class{O}_1}(Q_1)$, and a tuple ${\bar c} \in \adom{D}^{|{\bar x}|}$; and

\item Verify that ${\bar c} \in Q_1(D)$ and ${\bar c} \not\in Q_2(D)$.
\end{enumerate}
We immediately get that:

\begin{theorem}\label{the:ucq-rewritable-languages-decidability}
$\cont(\class{O}_1,\class{O}_2)$ is decidable if $\class{O}_1$ is UCQ rewritable and $\eval(\class{O}_2)$ is decidable.
\end{theorem}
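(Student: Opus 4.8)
The plan is to establish the decision procedure sketched right before the statement: given $Q_1 \in \class{O}_1$ and $Q_2 \in \class{O}_2$, we decide $\cont(\class{O}_1,\class{O}_2)$ by checking non-containment. First I would invoke Proposition~\ref{pro:small-witness-property}: since $\class{O}_1$ is UCQ rewritable, there is a computable bound $f_{\class{O}_1}(Q_1)$ so that if $Q_1 \not\subseteq Q_2$, then this is witnessed by some $\insS$-database $D$ with $|D| \leq f_{\class{O}_1}(Q_1)$, meaning $Q_1(D) \not\subseteq Q_2(D)$. Note that $Q_2 \in \class{O}_2 \subseteq (\class{TGD},\class{CQ})$, so Proposition~\ref{pro:small-witness-property} applies (we may assume by Proposition~\ref{pro:from-ucq-to-cq} that the queries are CQs, though this is not strictly needed here since the small witness property is stated for the right-hand side being an arbitrary CQ; for UCQs one argues disjunct-wise).

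The procedure is then the obvious enumeration: iterate over all $\insS$-databases $D$ with $|D| \leq f_{\class{O}_1}(Q_1)$ — there are only finitely many up to isomorphism, since the data schema $\insS$ is finite and the size is bounded — and over all tuples $\bar c \in \adom{D}^{|\bar x|}$. For each such pair $(D,\bar c)$, check whether $\bar c \in Q_1(D)$ and $\bar c \notin Q_2(D)$. If some pair passes both tests, output ``not contained''; if no pair does, output ``contained''. Correctness of the ``not contained'' answer is immediate from the semantics; correctness of the ``contained'' answer follows from the contrapositive of Proposition~\ref{pro:small-witness-property}. Each membership test $\bar c \in Q_i(D)$ is an instance of $\eval(\class{O}_i)$: decidability of $\eval(\class{O}_2)$ is assumed in the hypothesis, and decidability of $\eval(\class{O}_1)$ follows because a UCQ rewritable language trivially has decidable evaluation (compute the UCQ rewriting $q'$ and evaluate $q'(D)$, which is just CQ evaluation). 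Hence every step of the procedure is effective, and the procedure halts on all inputs, giving decidability.

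The only mildly delicate point — and the step I would be most careful about — is the computability of the bound $f_{\class{O}_1}$ and the finiteness of the search space. The definition of UCQ rewritability in Definition~\ref{def:ucq-rewritability} already guarantees that a UCQ rewriting can be \emph{constructed}, so $f_{\class{O}_1}(Q_1)$ can be taken to be the maximal number of atoms over the disjuncts of one such rewriting, which is computable; this is exactly the function discussed in the paragraph preceding Proposition~\ref{pro:small-witness-property}. Given this bound, the set of candidate witness databases over the fixed finite schema $\insS$ with at most $f_{\class{O}_1}(Q_1)$ atoms is finite up to isomorphism (constants can be drawn from a fixed finite pool whose size is bounded by the number of atoms times the maximal arity), so the enumeration terminates. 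Everything else is routine, and no genuine obstacle remains: the theorem is essentially a packaging of Proposition~\ref{pro:small-witness-property} together with the trivial observation that UCQ rewritable languages have decidable evaluation.
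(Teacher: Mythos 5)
Your proposal is correct and follows essentially the same route as the paper: the theorem is obtained directly from Proposition~\ref{pro:small-witness-property} by (exhaustively or nondeterministically) searching the finitely many candidate witness databases of size at most $f_{\class{O}_1}(Q_1)$ together with a tuple $\bar c$, and deciding the two membership tests via $\eval(\class{O}_1)$ (decidable through the computable UCQ rewriting) and $\eval(\class{O}_2)$ (decidable by hypothesis). Your extra remarks on the computability of $f_{\class{O}_1}$ and the finiteness of the search space up to isomorphism are exactly the points the paper leaves implicit.
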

This general result shows that $\cont((\class{C},\class{CQ}))$ is decidable for every $\class{C} \in \{\class{L},\class{NR},\class{S}\}$, but it says nothing about its complexity. This will be the subject of the rest of the section.

\subsection{Linearity}

The problem of computing UCQ rewritings for OMQs in $(\class{L},\class{CQ})$ has been studied in~\cite{GoOP14}, where a resolution-based procedure, called $\mathsf{XRewrite}$, has been proposed. This rewriting algorithm accepts a query $Q = (\insS,\dep,q({\bar x})) \in (\class{L},\class{CQ})$ and constructs a UCQ rewriting $q'({\bar x})$ over $\insS$ by starting from $q$ and exhaustively applying rewriting steps based on resolution. Let us illustrate this via a simple example:

\begin{example}
Assume that $\insS = \{P,T\}$. Consider the set $\dep$ consisting of the linear tgds
\[
P(x) \ra \exists y \, R(x,y), \qquad R(x,y) \ra P(y), \qquad T(x) \ra P(x),
\]
and the CQ $q({\bar x}) \coloneqq \exists y (R(x,y) \wedge P(y))$.
$\mathsf{XRewrite}$ will first resolve the atom $P(y)$ in $q$ using the second tgd, and produce the CQ $\exists y (R(x,y) \wedge R(x,z))$, which is equivalent to the CQ $\exists y\, R(x,y)$. Then, $\exists y\, R(x,y)$ will be resolved using the first tgd, and the CQ $P(x)$ will be obtained, which in turn will be resolved using the third tgd in order to produce the CQ $T(x)$. The UCQ rewriting $q'({\bar x})$ is $P(x) \vee T(x)$. \hfill\markfull
\end{example}

It is easy to see that, whenever the input OMQ consists of linear tgds, during the execution of $\mathsf{XRewrite}$ it is not possible to obtain a CQ that has more atoms than the original one. This is an immediate consequence of the fact that linear tgds have only one atom in their body. Then:

\begin{proposition}\label{pro:function-linear}
$f_{(\class{L},\class{CQ})}\big((\insS,\dep,q)\big)\ \leq\ |q|$.
\end{proposition}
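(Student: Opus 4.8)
The statement claims that for any OMQ $Q = (\insS,\dep,q)$ with $\dep$ a set of linear tgds, every disjunct produced by $\mathsf{XRewrite}$ (and hence the bound $f_{(\class{L},\class{CQ})}(Q)$) has at most $|q|$ atoms. The plan is to argue by induction on the number of rewriting steps performed by $\mathsf{XRewrite}$, showing that the atom count of any CQ in the rewriting set never exceeds $|q|$. The base case is immediate: the initial CQ is $q$ itself, with $|q|$ atoms.

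For the inductive step, I would recall what a single resolution-based rewriting step of $\mathsf{XRewrite}$ does: given a CQ $p$ already in the rewriting set and a tgd $\tau = \phi(\bar x,\bar y) \ra \exists \bar z\,\psi(\bar x,\bar z)$ in $\dep$, the step selects a subset $S$ of the atoms of $p$ that unifies with (a subset of) the head atoms $\psi$ via a most general unifier, and replaces that subset $S$ by the body atoms $\phi$ (after applying the unifier), subject to the usual applicability conditions ensuring the existential variables $\bar z$ do not leak. The resulting CQ $p'$ therefore has $|p'| = |p| - |S| + |\phi|$ atoms, where $|S| \geq 1$ since a resolution step always rewrites at least one atom. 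The crucial observation is that for a \emph{linear} tgd, $\phi$ consists of a single atom, so $|\phi| = 1$; hence $|p'| = |p| - |S| + 1 \leq |p| - 1 + 1 = |p|$. Combined with the induction hypothesis $|p| \leq |q|$, we conclude $|p'| \leq |q|$, closing the induction.

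Finally I would note that this bound transfers to $f_{(\class{L},\class{CQ})}(Q)$: by Definition~\ref{def:ucq-rewritability} and the discussion preceding Proposition~\ref{pro:small-witness-property}, $f_{(\class{L},\class{CQ})}(Q)$ is (an upper bound on) the maximal number of atoms in a disjunct of a UCQ rewriting of $Q$, and since $\mathsf{XRewrite}$ correctly computes such a rewriting~\cite{GoOP14} whose every disjunct has at most $|q|$ atoms by the induction above, we get $f_{(\class{L},\class{CQ})}(Q) \leq |q|$.

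The main obstacle is being precise about the mechanics of a $\mathsf{XRewrite}$ step — in particular, that the unification/applicability conditions do not force extra atoms to be added beyond the body of the applied tgd, and that a simplification/factorization phase (which can merge atoms after unification, as in the example where $R(x,y) \wedge R(x,z)$ collapses to $R(x,y)$) only ever \emph{decreases} the atom count, never increases it. Once one accepts the standard description of resolution-based rewriting, however, the argument reduces to the one-line arithmetic inequality $|p| - |S| + 1 \leq |p|$, which is exactly where linearity (the single body atom) is used.
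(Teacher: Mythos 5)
Your proposal is correct and follows essentially the same reasoning as the paper, which justifies the bound by observing that since linear tgds have a single body atom, no rewriting step of $\mathsf{XRewrite}$ can increase the number of atoms beyond that of the original CQ (and factorization can only merge atoms). Your explicit induction with the arithmetic $|p| - |S| + 1 \leq |p|$ is just a more detailed write-up of the paper's one-line argument.
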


Having the above result in place, it can be shown that the algorithm underlying Theorem~\ref{the:ucq-rewritable-languages-decidability} guesses a polynomially sized witness to non-containment, and then calls a $\C$-oracle for solving query evaluation under linear OMQs, where $\C$ is \PSPACE~in general, and \NP~if the arity is fixed; these complexity classes are obtained from Proposition~\ref{prop:eval-linear}. Therefore, $\cocont((\class{L},\class{CQ}))$ is in \PSPACE~in general, and in $\Sigma_{2}^{P}$ in case of fixed arity. Regarding the lower bounds, Proposition~\ref{pro:eval-to-cont} allows us to inherit the \PSPACE-hardness of $\eval(\class{L},\class{CQ})$; this holds even for constant-free tgds. Unfortunately, in the case of fixed arity, we can only obtain \NP-hardness, while Proposition~\ref{pro:coeval-to-cocont} allows to obtain co\NP-hardness. Nevertheless, it is implicit in~\cite{BiLW12} (see the proof of Theorem~9), where the containment problem for OMQ languages based on description logics is considered, that $\cont((\class{L},\class{CQ}))$ is $\Pi_{2}^{P}$-hard, even for tgds of the form $P(x) \ra R(x)$. Then:

\begin{theorem}\label{the:cont-linear}
$\cont((\class{L},\class{CQ}))$ is \PSPACE-complete, and $\Pi_{2}^{P}$-complete if the arity of the schema is fixed. The lower bounds hold even for tgds without constants.
\end{theorem}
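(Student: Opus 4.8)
The plan is to establish the matching upper and lower bounds separately, relying on the machinery already set up in the excerpt.

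\textbf{Upper bounds.} For the membership in \PSPACE\ (resp.\ $\Pi_2^P$ for fixed arity), I would first reduce to $\cocont((\class{L},\class{CQ}))$ and invoke the guess-and-check algorithm underlying Theorem~\ref{the:ucq-rewritable-languages-decidability}. Given $Q_1 = (\insS,\dep_1,q_1)$ and $Q_2 = (\insS,\dep_2,q_2)$ in $(\class{L},\class{CQ})$, Proposition~\ref{pro:small-witness-property} together with Proposition~\ref{pro:function-linear} tells us that $Q_1 \not\subseteq Q_2$ is witnessed by an $\insS$-database $D$ with $|D| \leq |q_1|$, which is of polynomial size (the number of distinct constants in $D$ is at most the number of variables in $q_1$, again polynomial). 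The algorithm guesses such a $D$ and a candidate tuple $\bar c \in \adom{D}^{|\bar x|}$, then verifies $\bar c \in Q_1(D)$ and $\bar c \notin Q_2(D)$. By Proposition~\ref{prop:eval-linear}, each of these two evaluation checks (the second in complemented form) is in \PSPACE; since \PSPACE\ is closed under complementation and under a polynomial-size nondeterministic guess at the front, the whole procedure is in $\exists\cdot\PSPACE = \PSPACE$, hence $\cont((\class{L},\class{CQ}))$ is in \PSPACE. For fixed arity, the evaluation checks drop to \NP\ (resp.\ \co\NP), so the procedure is a single polynomial guess followed by an \NP\ check and a \co\NP\ check, placing $\cocont$ in $\Sigma_2^P$ and hence $\cont$ in $\Pi_2^P$.

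\textbf{Lower bounds.} For the \PSPACE-hardness in the unbounded-arity case, I would use Proposition~\ref{pro:eval-to-cont}, which gives a polynomial-time reduction from $\eval((\class{L},\class{CQ}))$ to $\cont(\class{O}_\emptyset,(\class{L},\class{CQ}))$, a special case of $\cont((\class{L},\class{CQ}))$; the \PSPACE-hardness of $\eval(\class{L},\class{CQ})$ is Proposition~\ref{prop:eval-linear}, citing~\cite{JoKl84}, and one checks that the tgds produced there are constant-free (or can be made so), yielding the claimed restriction. For the fixed-arity case, the naive approach through Proposition~\ref{pro:eval-to-cont} only gives \NP-hardness, and through Proposition~\ref{pro:coeval-to-cocont} only \co\NP-hardness, which is not enough. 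Instead I would appeal to the construction implicit in~\cite{BiLW12} (proof of Theorem~9): although stated for description-logic-based OMQ languages, the relevant inclusions used there — axioms of the shape ``$A$ is subsumed by $B$'' on unary concepts — translate directly into linear tgds $P(x) \to R(x)$ over a schema of arity one, and these are trivially constant-free. One verifies that the reduction given there indeed only uses such axioms (on the right-hand query; the left-hand query is an arbitrary CQ, as allowed), which gives $\Pi_2^P$-hardness for $\cont((\class{L},\class{CQ}))$ over fixed-arity, constant-free tgds. Combining with the upper bounds yields $\Pi_2^P$-completeness.

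\textbf{Main obstacle.} The routine part is the upper bound, which is essentially bookkeeping on top of the already-proven small-witness property; the delicate point is the fixed-arity lower bound. The difficulty is not in any deep new argument but in correctly transplanting the~\cite{BiLW12} reduction from the DL setting into the tgd setting: one must confirm that the concept inclusions used there do not secretly rely on role inclusions or qualified existentials (which would not be linear tgds of the allowed form), that the data schema stays of bounded arity, and that no constants creep in. If the~\cite{BiLW12} construction turns out to use slightly richer axioms, a small adaptation — e.g.\ padding or simulating a constant with a designated unary predicate seeded by a fact tgd, which is permissible since $\class{L}$ is closed under fact tgd extension — would be needed, but the $\Pi_2^P$ bound itself should be robust to such tweaks.
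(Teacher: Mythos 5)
Your proposal is correct and follows essentially the same route as the paper: the upper bounds come from the guess-and-check algorithm of Theorem~\ref{the:ucq-rewritable-languages-decidability} combined with the polynomial witness bound of Proposition~\ref{pro:function-linear} and the evaluation complexities of Proposition~\ref{prop:eval-linear}, while the \PSPACE-hardness is inherited from evaluation via Proposition~\ref{pro:eval-to-cont} and the $\Pi_2^P$-hardness for fixed arity is taken from the proof of Theorem~9 in~\cite{BiLW12}, which the paper likewise invokes for tgds of the form $P(x) \ra R(x)$. The paper gives no more detail on the transplantation from the DL setting than you do, so your argument is at the same level of rigor as the published one.
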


\subsection{Non-Recursiveness}

Although the OMQ language $(\class{NR},\class{CQ})$ is not explicitly considered in~\cite{GoOP14}, where the algorithm $\mathsf{XRewrite}$ is defined, the same algorithm can deal with $(\class{NR},\class{CQ})$. By analyzing the UCQ rewritings constructed by $\mathsf{XRewrite}$, whenever the input query falls in $(\class{NR},\class{CQ})$, we can establish the following result; here, $\body{\tau}$ denotes the body of the tgd $\tau$:

\begin{proposition}\label{pro:function-nr}
It holds that
\[
f_{(\class{NR},\class{CQ})}\big((\insS,\dep,q)\big)\ \leq\ |q| \cdot \left(\max_{\tau \in \dep} \{|\body{\tau}|\}\right)^{|\sch{\dep}|}.
\]
\end{proposition}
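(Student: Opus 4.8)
The plan is to analyze the behavior of $\mathsf{XRewrite}$ on an input $Q = (\insS,\dep,q)$ with $\dep \in \class{NR}$, and bound the number of atoms in any CQ that can be produced. Recall from the linear case that $\mathsf{XRewrite}$ works by resolution: it picks a CQ $p$ already generated, selects an atom $\alpha$ in $p$, selects a tgd $\tau \in \dep$ whose head unifies with $\alpha$ (plus possibly other atoms of $p$, which are ``factored'' together), and replaces those atoms by (the appropriately substituted) body of $\tau$. Thus a single rewriting step removes at least one atom and adds at most $|\body{\tau}| \le \max_{\tau \in \dep}\{|\body{\tau}|\}$ atoms. Writing $b \coloneqq \max_{\tau \in \dep}\{|\body{\tau}|\}$, after one step a CQ of size $m$ becomes a CQ of size at most $m - 1 + b$. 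Iterating this naively gives no bound at all (the size could grow without limit), so the crux is to use non-recursiveness to bound how many ``expansion'' steps can affect a given atom.

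\textbf{Key steps.} First I would set up a \emph{provenance} or \emph{rewriting tree} bookkeeping: associate with every atom occurring in any generated CQ the sequence of tgds that were used to produce it, tracing back to an atom of the original query $q$. Concretely, assign to each atom a ``derivation branch'', and observe that each rewriting step that consumes an atom with predicate $R$ (matching a head atom of $\tau$) introduces atoms whose predicates are exactly the body predicates of $\tau$, all of which (by definition of the predicate graph) are strict predecessors of $R$ in the predicate graph of $\dep$. Second, I would use acyclicity of the predicate graph to bound the length of any such derivation branch by $|\sch{\dep}|$: along a branch the predicates form a directed path in the (acyclic) predicate graph, so no predicate repeats and the branch has length at most $|\sch{\dep}|$. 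Third, I would bound the total size: the generated CQ is, up to the factoring/unification that only merges atoms, contained in the ``full unfolding'' of $q$ obtained by replacing each of the $|q|$ original atoms by a tree of branching at most $b$ and depth at most $|\sch{\dep}|$; such a tree has at most $b^{|\sch{\dep}|}$ leaves, and since factoring only decreases the atom count, the number of atoms in any generated CQ is at most $|q| \cdot b^{|\sch{\dep}|}$. This is exactly the claimed bound $f_{(\class{NR},\class{CQ})}\big((\insS,\dep,q)\big) \le |q| \cdot \left(\max_{\tau \in \dep}\{|\body{\tau}|\}\right)^{|\sch{\dep}|}$.

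\textbf{Main obstacle.} The delicate point is the interaction between resolution \emph{factoring} (where several atoms of the current CQ are unified with a single head atom, or where a tgd with multiple head atoms is used to rewrite several atoms at once) and the tree-unfolding argument. I need to argue that factoring never \emph{increases} the atom count relative to the unfactored unfolding — intuitively clear, since unifying atoms can only identify terms and merge atoms, never create new ones — and that the ``depth along a branch is bounded by $|\sch{\dep}|$'' claim survives factoring (an atom produced by factoring inherits a predecessor relation in the predicate graph from \emph{each} of the consumed atoms, so its branch depth is still bounded). I would handle this by defining the unfolding bound on a \emph{per-atom} basis before any factoring, showing $\mathsf{XRewrite}$'s output is always a homomorphic image (with merged atoms) of a sub-instance of the full unfolding, and then reading off the cardinality bound. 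The rest — that $\mathsf{XRewrite}$ is sound and complete as a UCQ-rewriting procedure for $(\class{NR},\class{CQ})$, and that it terminates — is inherited from~\cite{GoOP14} together with the termination guaranteed by acyclicity, so no further work is needed there.
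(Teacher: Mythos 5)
Your proposal is correct and follows essentially the same route as the paper: the paper also bounds the size of any rewritten CQ by viewing a branch of the rewriting as a forest with the $|q|$ original atoms as roots, branching at most $\max_{\tau \in \dep}\{|\body{\tau}|\}$ per resolution step, and depth at most $|\sch{\dep}|$ — the paper derives the depth bound from a stratification of $\dep$ (equivalent to your acyclic-path argument on the predicate graph), and it likewise treats factorization as harmless since unification only merges atoms.
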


Proposition~\ref{pro:function-nr} implies that non-containment for queries that fall in $(\class{NR},\class{CQ})$ is witnessed via a database of at most exponential size. We show next that this bound is optimal:

\begin{proposition}\label{pro:function-nr-lower-bound}
There are sets of $(\class{NR},\class{CQ})$ OMQs
\[
\{Q_{1}^{n} = (\insS,\dep_{1}^{n},q_1)\}_{n>0} \quad\textrm{and}\quad \{Q_{2}^{n} = (\insS,\dep_{2}^{n},q_2)\}_{n>0},
\]
where $|\sch{\dep_{1}^{n}}| = |\sch{\dep_{2}^{n}}| = n+2$, such that for every $\insS$-database $D$, if $Q_{1}^{n}(D) \not\subseteq Q_{2}^{n}(D)$ then
$|D| \geq 2^{n-1}$.
\end{proposition}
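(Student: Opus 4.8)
The goal is to construct two families of non-recursive OMQs whose non-containment can only be witnessed over exponentially large databases; equivalently, we want a gadget in which the ontology $\dep_1^n$ forces a chain of length $2^{n-1}$ (or more) to appear in the chase before $q_1$ matches, while $\dep_2^n$ matches $q_2$ on precisely the same databases, except that the smallest such witnessing database must be exponential. The natural template is a binary-counter construction: use $n+2$ predicates so that a database encoding the bits of a counter can be "unrolled'' by the tgds into a path that enumerates all $2^n$ configurations (or a path of length $2^{n-1}$), and design $q_1$ so that it detects the endpoint of this unrolling. The standard trick (used, e.g., in lower bounds for acyclic tgds and in \cite{LMPS15}) is to have predicates $R_0,\dots,R_{n-1}$ representing bit positions, plus an "initialization'' predicate and a "successor/termination'' predicate; the non-recursiveness is guaranteed because the predicate graph is a DAG of depth $n$, and the exponential blow-up comes from the $(\max_\tau |\body\tau|)^{|\sch\dep|}$ factor in Proposition~\ref{pro:function-nr} being tight.

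Concretely, I would proceed as follows. \textbf{Step 1: Define $\dep_1^n$.} Introduce predicates $\rel{I}$ (arity $1$, "start''), $\rel{B}_0,\dots,\rel{B}_{n-1}$ folded into a single predicate of arity $n+1$ representing "position $p$ holds a node whose counter value has bits $b_0\dots b_{n-1}$'', and a unary "goal'' predicate $\rel{G}$. Write linear-ish (but non-recursive) tgds that, starting from an $\rel{I}$-fact, generate via existential quantification the node for counter value $0$, then a tgd that from counter value $v$ produces counter value $v+1$ (implemented layer by layer so the predicate graph stays acyclic — this is where the $n$ distinct intermediate predicates are spent), and finally a tgd that fires $\rel{G}$ only when the all-ones configuration is reached. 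Then let $q_1$ be the Boolean CQ $\exists x\,\rel{G}(x)$. \textbf{Step 2: Define $\dep_2^n$ and $q_2$.} Following the $\cocont$-style reduction, I want $Q_2^n(D)$ to be "trivially true'' exactly when $D$ is large enough and "false'' on the small candidate witnesses, so that $Q_1^n \not\subseteq Q_2^n$ forces $D$ to be the one database that triggers $q_1$ but not $q_2$. A clean way is to make $q_2$ a CQ that is satisfied whenever the chase under $\dep_2^n$ produces the full unrolled path but "short-circuited'' — e.g., $\dep_2^n$ is a copy of $\dep_1^n$ with one tgd removed or weakened, so that $\chase{D}{\dep_2^n}$ entails $q_2$ only via a path strictly shorter than $2^{n-1}$, which is impossible when the witnessing database encodes a genuine counter seed. \textbf{Step 3: Verify the syntactic class.} Check that the predicate graphs of $\dep_1^n$ and $\dep_2^n$ are acyclic, so both sets are in $\class{NR}$, and that $|\sch{\dep_i^n}| = n+2$. \textbf{Step 4: The counting argument.} Show that any $\insS$-database $D$ with $Q_1^n(D)\not\subseteq Q_2^n(D)$ must contain a seed that the chase of $\dep_1^n$ unrolls into $\geq 2^{n-1}$ distinct configuration-atoms before $\rel{G}$ is derived, and argue that the only way $q_2$ fails on $D$ is if $D$ itself already contains these configurations as facts (rather than them being generated) — forcing $|D| \geq 2^{n-1}$.

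\textbf{Main obstacle.} The delicate part is Step 2 together with Step 4: we need the \emph{complement} behaviour — a database $D$ that makes $q_1$ true over $\chase{D}{\dep_1^n}$ but $q_2$ false over $\chase{D}{\dep_2^n}$ — and we must ensure the \emph{only} such databases are the exponentially large ones. If $\dep_1^n$ alone can already fabricate the whole counter from a constant-size seed (which is the whole point of the unrolling), then $q_1$ is true on a \emph{small} database too, and there is no separation. So the construction must arrange that the \emph{chase alone} never derives $\rel{G}$ (or derives it only when $D$ already supplies the bulk of the configurations), while $q_2$ is designed to "catch'' precisely the degenerate short configurations; balancing these two requirements — true on exactly the big databases, false on the small ones, for \emph{both} queries simultaneously — is the heart of the proof and the step most likely to need a careful, explicit tgd-by-tgd argument rather than a slick high-level one. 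I expect the cleanest route is to mimic the structure of the known \NEXP-hardness proof for $\eval(\class{NR})$ from \cite{LMPS15} and port it through Proposition~\ref{pro:coeval-to-cocont}, reading off the database-size lower bound from the size of the tiling/counter instance.
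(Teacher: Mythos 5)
There is a genuine gap, and you have in fact put your finger on it yourself in your ``main obstacle'' paragraph: a counter-unrolling construction, in which $\dep_1^n$ \emph{generates} the chain of $2^{n-1}$ configurations by existential chase steps from a constant-size seed and $q_1$ detects the all-ones endpoint, can never yield the statement. The statement quantifies over \emph{every} $\insS$-database $D$ and requires that non-containment forces $|D|\geq 2^{n-1}$; if the chase fabricates the exponential path, then $Q_1^n$ is already nonempty on a constant-size database, and since for Boolean queries $Q_1^n(D)\not\subseteq Q_2^n(D)$ just means $Q_1^n(D)\neq\emptyset$ and $Q_2^n(D)=\emptyset$, no choice of a ``weakened copy'' $\dep_2^n$ with a ``short-circuit'' query $q_2$ can repair this: you would need $Q_2^n$ to be true on all small databases on which $Q_1^n$ is true and false on some exponentially large one, which contradicts monotonicity of OMQs under homomorphisms (the small witness maps into the chase of the large one). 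Your fallback—porting the \NEXP-hardness of $\eval(\class{NR},\class{CQ})$ through Proposition~\ref{pro:coeval-to-cocont}—is also off target, since a complexity-hardness reduction gives no lower bound on the size of witnessing databases of the specific form demanded here.

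The intended construction is the dual of what you sketch, and it is exactly the technique the paper uses for the analogous sticky lower bound (Proposition~\ref{pro:function-sticky-lower-bound}, whose set of tgds is incidentally non-recursive as well): the tgds must only \emph{verify}, never generate. Concretely, keep a single data predicate $S$ of arity roughly $n$ in $\insS$, and let $\dep_1^n$ contain a chain of predicates $P_n,\dots,P_0$ where $S$-facts are copied into $P_n$, and each rule for $P_{i-1}$ has \emph{two} $P_i$-atoms in its body that agree everywhere except that the $i$-th argument is $0$ in one and $1$ in the other; finally $P_0$ on the all-equal tuple fires the goal atom matched by $q_1$. Since no tgd has $S$ in its head, deriving the goal requires the database itself to contain $S$-facts for all $2^{n-1}$ bit-vectors, so $Q_1^n(D)\neq\emptyset$ alone forces $|D|\geq 2^{n-1}$; the predicate graph $S\to P_n\to\cdots\to P_0\to\mathrm{Goal}$ is acyclic, and the predicate count is the stated $n+2$. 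The right-hand family $Q_2^n$ then plays no structural role and can be taken essentially trivial (e.g., its CQ uses a predicate that cannot be populated from $\insS$, so $Q_2^n(D)=\emptyset$ always), because non-containment reduces to satisfiability of $Q_1^n$. So the key missing idea in your proposal is this inversion: the exponential blow-up must be forced on the input side of the chase, not produced by it.
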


Let us now focus on the complexity of $\cont((\class{NR},\class{CQ}))$. The algorithm underlying Theorem~\ref{the:ucq-rewritable-languages-decidability}, together with the exponential bound provided by Proposition~\ref{pro:function-nr}, implies that $\cocont((\class{NR},\class{CQ}))$ is feasible in non-deterministic exponential time with access to a \NEXP~oracle, which immediately implies that $\cont((\class{NR},\class{CQ}))$ is in \EXPSPACE.
Unfortunately, the exact complexity of $\cont((\class{NR},\class{CQ}))$ is still an open problem, and we conjecture that is \text{\rm P}$^{\textsc{NEXP}}$-complete; recall that $\NEXP \subseteq \text{\rm P}^{\textsc{NEXP}} \subseteq \EXPSPACE$.
In what follows, we briefly explain how the \text{\rm P}$^{\textsc{NEXP}}$-hardness is obtained.
To this end, we exploit a tiling problem that has been recently introduced in~\cite{EiLP16}. Roughly speaking, an instance of this tiling problem is a triple $(m,T_1,T_2)$, where $m$ is an integer in unary representation, and $T_1,T_2$ are standard tiling problems for the exponential grid $2^n \times 2^n$. The question is whether, for every initial condition $w$ of length $m$, $T_1$ has no solution with $w$ or $T_2$ has some solution with $w$. The initial condition $w$ simply fixes the first $m$ tiles of the first row of the grid.
We construct in polynomial time two $(\class{NR},\class{CQ})$ queries $Q_1$ and $Q_2$ such that $(m,T_1,T_2)$ has a solution iff $Q_1 \subseteq Q_2$. The idea is to force every input database to store an initial condition $w$ of length $m$, and then encode the problem whether $T_i$ has a solution with $w$ into $Q_i$, for each $i \in \{1,2\}$. From the above discussion we get that:

\begin{theorem}\label{the:cont-nr}
$\cont((\class{NR},\class{CQ}))$ is in \EXPSPACE, and {\em P}$^{\textsc{NEXP}}$-hard. The lower bound holds even if the arity of the schema is fixed and the tgds are without constants.
\end{theorem}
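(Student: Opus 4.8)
The plan is to establish the two parts of Theorem~\ref{the:cont-nr} separately: the \EXPSPACE~upper bound follows essentially for free from the machinery already in place, so the real work is the \text{\rm P}$^{\textsc{NEXP}}$~lower bound.

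\medskip
\noindent\textbf{Upper bound.} For the \EXPSPACE~membership, I would simply chain together Proposition~\ref{pro:function-nr} and the guess-and-check algorithm underlying Theorem~\ref{the:ucq-rewritable-languages-decidability}. Given $Q_1,Q_2 \in (\class{NR},\class{CQ})$, to decide $\cocont((\class{NR},\class{CQ}))$ we guess an $\insS$-database $D$ with $|D| \leq f_{(\class{NR},\class{CQ})}(Q_1)$, which by Proposition~\ref{pro:function-nr} is at most single-exponential in the input, together with a tuple $\bar c \in \adom{D}^{|\bar x|}$, and then verify $\bar c \in Q_1(D)$ and $\bar c \notin Q_2(D)$ using the \NEXP~procedure of Proposition~\ref{prop:eval-nr} as an oracle. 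The whole computation is a nondeterministic exponential-time computation with a \NEXP~oracle; since $\mathrm{NEXP}^{\mathrm{NEXP}} \subseteq \EXPSPACE$ (an exponential-time nondeterministic machine with an exponential-size oracle query can be simulated deterministically in exponential space by cycling through all nondeterministic branches and answering each oracle call by brute force within exponential space), we conclude $\cont((\class{NR},\class{CQ})) \in \EXPSPACE$.

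\medskip
\noindent\textbf{Lower bound.} The hard part is the \text{\rm P}$^{\textsc{NEXP}}$-hardness. I would follow the blueprint sketched in the text: reduce from the tiling problem of~\cite{EiLP16}, whose instances are triples $(m,T_1,T_2)$ with $m$ in unary and $T_1,T_2$ exponential-grid tiling problems over a $2^n \times 2^n$ grid, and whose yes-instances are those where \emph{for every} initial condition $w$ of length $m$, either $T_1$ has no tiling extending $w$ or $T_2$ has a tiling extending $w$. The goal is to build, in polynomial time, OMQs $Q_1 = (\insS,\dep_1,q_1)$ and $Q_2 = (\insS,\dep_2,q_2)$ in $(\class{NR},\class{CQ})$ with $Q_1 \subseteq Q_2$ iff $(m,T_1,T_2)$ is a yes-instance. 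The design has three ingredients. First, I force the data schema $\insS$ and the queries to be such that any database $D$ witnessing non-containment must encode a full initial condition $w$ of length $m$ (this uses the unary $m$, so the encoding gadget is polynomial): the idea is that $Q_1(D) \not\subseteq Q_2(D)$ can only happen when $D$ carries a well-formed $w$, otherwise $Q_1$ trivially ``fails'' or $Q_2$ trivially ``succeeds'' on $D$. Second, into $\dep_1$ (resp.\ $\dep_2$) together with $q_1$ (resp.\ $q_2$) I encode the statement ``$T_1$ has a tiling extending the $w$ stored in $D$'' (resp.\ ``$T_2$ has a tiling extending $w$''); since non-recursive tgds with $n$ predicates and bounded body size already give rewritings of size roughly $(\text{body size})^n$ by Proposition~\ref{pro:function-nr}, a non-recursive ontology of polynomial size can ``count to $2^n$'' and thereby address the cells of the exponential grid — this is the standard trick used, e.g., in the \NEXP-hardness proof for $\eval(\nr,\class{CQ})$ of Proposition~\ref{prop:eval-nr}, which I would adapt. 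Third, combining these, $Q_1(D) \subseteq Q_2(D)$ for \emph{all} $D$ exactly captures ``for all $w$: ($T_1$ has a tiling with $w$) implies ($T_2$ has a tiling with $w$)'', which is the defining condition of the tiling problem. The $\Pi$-style ``for all $w$'' quantifier comes for free from the universal quantification over databases $D$ in the definition of containment, while the internal existential over tilings of $T_i$ is captured by the certain-answer semantics (homomorphism into the chase).

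\medskip
\noindent\textbf{Where the difficulty lies.} The main obstacle is the encoding of the exponential grid addressing inside a \emph{non-recursive} set of tgds of polynomial size, and simultaneously making the ``initial condition'' gadget robust: I must ensure that a malformed $D$ (one not encoding a legitimate $w$, or encoding inconsistent grid data) never spuriously witnesses $Q_1 \not\subseteq Q_2$, which typically requires adding ``sanity-check'' disjuncts/atoms that make $Q_2$ accept any ill-formed input. Getting the non-recursiveness condition to hold while still allowing the ontology to propagate the $\Theta(n)$-bit cell addresses requires a careful layered predicate graph (one layer per bit, so acyclicity is immediate but the rewriting blows up exponentially, which is exactly what we need). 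The bookkeeping to verify that the reduction is polynomial-time and that the tgds can be taken constant-free over a fixed-arity schema — as claimed in the theorem statement — is delicate but, given the analogous constructions in~\cite{EiLP16} and the \NEXP-hardness argument behind Proposition~\ref{prop:eval-nr}, should go through by a combination of those two techniques.
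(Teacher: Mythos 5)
Your overall strategy coincides with the paper's on both fronts. The upper bound is obtained exactly as you describe: guess a database of size at most $f_{(\class{NR},\class{CQ})}(Q_1)$, single-exponential by Proposition~\ref{pro:function-nr}, and verify the two evaluation conditions. The lower bound in the paper is likewise a reduction from the tiling problem of~\cite{EiLP16} in which the data schema consists only of $0$-ary atoms encoding a candidate initial condition, $Q_1$ additionally checks that every position carries at least one tile, $Q_2$ additionally derives its goal when some position carries two tiles, and each $Q_i$ internalizes ``$T_i$ has a solution with the stored condition''; this is precisely your existence/uniqueness split, with the ``or'' realized by several tgds sharing a $0$-ary goal head (so the queries stay CQs). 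The part you defer to ``adapting known techniques'' is the technical core of the paper's appendix: the non-recursive ontology makes the chase generate \emph{all} $2^i \times 2^i$ tilings by composing nine overlapping $2^{i-1} \times 2^{i-1}$ subtilings via $5$-ary predicates, then extracts the first $k$ tiles of the top row and compares them with the stored condition. Any gadget you build (including a bit-addressing one) must likewise arrange that the chase enumerates all candidate tilings, since the chase cannot guess; with that in mind your plan is structurally the same as the paper's, and fixed arity and constant-freeness come out of such a construction as claimed.

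The one step that would fail as written is your justification of the \EXPSPACE~membership via the blanket inclusion $\mathrm{NEXP}^{\mathrm{NEXP}} \subseteq \EXPSPACE$. Answering a \NEXP~oracle call ``by brute force'' on a query of exponential size requires searching certificates of double-exponential length, hence double-exponential (not exponential) space; and by padding the oracle queries, $\mathrm{NEXP}^{\mathrm{NEXP}}$ contains nondeterministic doubly-exponential time, so its containment in \EXPSPACE~is not a known fact and would be a surprising collapse. The bound nevertheless holds because the oracle calls here are not arbitrary \NEXP~instances: they are $\eval(\class{NR},\class{CQ})$ instances whose OMQ is the original, polynomial-size one, and whose certificates --- a rewriting derivation producing a disjunct of at most exponential size together with a homomorphism into the guessed exponential-size database --- remain single-exponential in the \emph{original} input. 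Hence non-containment is an exponential-time $\exists\forall$ predicate over exponential-size witnesses, which can be evaluated by exhaustive enumeration (or via alternating exponential time) in exponential space; this is the accounting behind the paper's remark that nondeterministic exponential time with a \NEXP~oracle suffices. With that repair, your argument matches the paper's.
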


\subsection{Stickiness}

We now focus on $(\class{S},\class{CQ})$. As shown in~\cite{GoOP14}, given a query $(\insS,\dep,q)$, there exists an execution of $\mathsf{XRewrite}$ that constructs a UCQ rewriting $q_1({\bar x}) \vee \cdots \vee q_n({\bar x})$ over $\insS$ with the following property: for each $i \in \{1,\ldots,n\}$, if a variable $v$ occurs in $q_i$ in more than one atom, then $v$ already occurs in $q$.
This property has been used in~\cite{GoOP14} to bound the number of atoms that can appear in a single CQ $q_i$. We write $T(q)$ for the set of terms (constants and variables) occurring in $q$, $C(\dep)$ for the set of constants occurring in $\dep$, and $\arity{\insS}$ for the maximum arity over all predicates of $\insS$.

\begin{proposition}\label{pro:function-sticky}
It holds that
\[
f_{(\class{S},\class{CQ})}((\insS,\dep,q))\ \leq\ |\insS| \cdot \left(|T(q)| + |C(\dep)| + 1\right)^{|\arity{\insS}|}.
\]
\end{proposition}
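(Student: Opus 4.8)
\textbf{Proof plan for Proposition~\ref{pro:function-sticky}.}

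The plan is to bound the number of atoms that can occur in a single disjunct $q_i$ of the UCQ rewriting produced by the (specific) execution of $\mathsf{XRewrite}$ described above, where the key structural property is: any variable $v$ appearing in more than one atom of $q_i$ already occurs in the original query $q$. First I would fix such a $q_i$ and classify its variables into two groups: the \emph{shared} variables (those occurring in at least two atoms of $q_i$) and the \emph{non-shared} variables (those occurring in exactly one atom). By the stated property, every shared variable lies in $T(q)$, so there are at most $|T(q)|$ shared variables. I would also note that constants occurring in $q_i$ come either from $q$ or from $\dep$ (resolution introduces no new constants), so the set of "reused" terms — constants plus shared variables — has size at most $|T(q)| + |C(\dep)|$.

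Next I would argue that an atom of $q_i$ is determined, up to its non-shared variables, by its predicate symbol together with the tuple of "positions that are filled by reused terms". Since each predicate has arity at most $\arity{\insS}$ and there are at most $|T(q)| + |C(\dep)|$ reused terms (plus one "slot value" to indicate a non-shared variable position), the number of distinct such "atom skeletons" over all predicates of $\insS$ is at most $|\insS| \cdot (|T(q)| + |C(\dep)| + 1)^{\arity{\insS}}$. The crucial observation is that two atoms of $q_i$ with the same skeleton must actually be the \emph{same} atom: in their differing positions they would carry non-shared variables, but a non-shared variable occurs in only one atom of $q_i$, so it cannot appear in two distinct atoms; hence the atoms must coincide on those positions too. (One should treat repeated occurrences of the same non-shared variable within a single atom carefully, but this only strengthens the bound.) Therefore $|q_i|$ is at most the number of skeletons, which gives the claimed bound, and since $f_{(\class{S},\class{CQ})}((\insS,\dep,q))$ is defined as the maximum of $|q_i|$ over the disjuncts of this rewriting, the proposition follows.

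The main obstacle I expect is the bookkeeping around non-shared variables: one must be sure that the "skeleton" abstraction genuinely collapses only identical atoms, which requires ruling out the possibility that two syntactically different atoms share a skeleton by virtue of cleverly placed non-shared variables — and handling the edge case where a non-shared variable occurs twice \emph{inside} one atom (which is allowed). This is a routine but slightly delicate combinatorial argument; the conceptual content is entirely carried by the structural property of the $\mathsf{XRewrite}$ output imported from~\cite{GoOP14}. The $+1$ in the base of the exponent is exactly the extra symbol needed in the skeleton alphabet to mark "this position holds a fresh non-shared variable", and I would make sure the counting lines up with that.
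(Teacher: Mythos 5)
Your counting argument breaks at its crucial step. The claim that two atoms of $q_i$ with the same ``skeleton'' must be the same atom is false, and the justification you give is a non sequitur: the fact that a non-shared variable occurs in only one atom does not force two skeleton-equal atoms to coincide -- it merely means that each of the two atoms carries its \emph{own} fresh non-shared variables in the starred positions. Concretely, take data schema $\insS = \{P/2\}$, the sticky (even linear) tgds $P(x,z) \ra R_1(x)$ and $P(x,w) \ra R_2(x)$, and $q = \exists y\,(R_1(y) \wedge R_2(y))$. The execution of $\mathsf{XRewrite}$ produces the disjunct $P(y,z^1) \wedge P(y,z^2)$: two distinct atoms, both with skeleton $P(y,\star)$. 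Worse, the same construction with five tgds $P(x,z_i) \ra R_i(x)$ and $q = \exists y\,(R_1(y) \wedge \cdots \wedge R_5(y))$ yields a disjunct with five atoms over $\{P/2\}$, while your skeleton count is $1 \cdot (1+0+1)^2 = 4$. So the bound is simply not true for the raw disjuncts of the $\mathsf{XRewrite}$ output, which means your overall strategy (injectivity of the skeleton map on a literal disjunct) cannot be repaired as stated.

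The argument the paper relies on (imported from~\cite{GoOP14}) has to pass to an \emph{equivalent} rewriting: replace each disjunct $q_i$ by its condensation (core), which still yields a UCQ rewriting and hence still witnesses the bound required of $f_{(\class{S},\class{CQ})}$ and of the small-witness property. In a condensed disjunct, two skeleton-equal atoms can be ruled out by folding the fresh variables of one atom onto the corresponding terms of the other, contradicting minimality -- but this folding only works if no atom repeats a non-shared variable internally. That is exactly the ``edge case'' you dismiss with ``this only strengthens the bound''; in fact it is the case that defeats the folding (e.g., $P(u,u,t)$ and $P(v,s,s)$ share a skeleton yet neither maps onto the other), and excluding it is where stickiness is actually used: a body variable of a sticky tgd that is absent from the head is marked and therefore occurs only once in the body, so the fresh variables introduced by a rewriting step cannot repeat inside an atom. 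Without the condensation step and this use of the sticky marking, your proof does not go through.
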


Proposition~\ref{pro:function-sticky} implies that non-containment for $(\class{S},\class{CQ})$ queries is witnessed via a database of at most exponential size. As for $(\class{NR},\class{CQ})$ queries,
we can show that this bound is optimal; here, for a set $\dep$ of tgds, we denote by $||\dep||$ the number of symbols occurring in $\dep$:

\begin{proposition}\label{pro:function-sticky-lower-bound}
There exists a set of $(\class{S},\class{CQ})$ OMQs
\[
\{Q^{n} = (\{S/n\},\dep^{n},q({\bar x}))\}_{n>0},~\text{\rm where}~||\dep^n|| \in O(n^2),
\]
such that for every $Q = (\{S\},\dep',q'({\bar x})) \in (\class{TGD},\class{CQ})$ and $\{S\}$-database $D$, if $Q^{n}(D) \not\subseteq Q(D)$ then $|D| \geq 2^{n-2}$.
\end{proposition}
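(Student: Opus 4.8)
The plan is to exhibit, for each $n>0$, a single sticky OMQ $Q^{n} = (\{S/n\},\dep^{n},q({\bar x}))$ and a database $D$ witnessing non-containment in \emph{any} OMQ $Q = (\{S\},\dep',q')$, yet forcing $|D| \geq 2^{n-2}$. The key idea is to build $Q^{n}$ so that the only databases $D$ over the single relation $S/n$ on which $Q^{n}$ has a nonempty evaluation are (essentially) encodings of a full $\{0,1\}$-cube of dimension roughly $n$, i.e.\ databases that must contain all $2^{n}$ bit-vectors of length $n$ as $S$-tuples (suitably encoded). Since $Q^{n}$ returns a nonempty answer \emph{only} on such ``cube'' databases, and $Q$ is the trivial Boolean query $\exists x\,P(x)$ over a predicate $P\notin\{S\}$ — or, more robustly, any OMQ whose answer is empty unless $D$ is large — the non-containment $Q^{n}(D)\not\subseteq Q(D)$ can only be witnessed by a cube database, which has at least $2^{n-2}$ facts. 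First I would fix the encoding: use, say, constants $0,1$ and let the database consist of all tuples $S(b_1,\ldots,b_n)$ with $b_i\in\{0,1\}$ (or a padded variant to make the counting clean), so $|D| = 2^{n}$, and then argue the lower bound $2^{n-2}$ tolerates the padding overhead.

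The heart of the construction is the ontology $\dep^{n}$, which must (a) be sticky, (b) have size $O(n^{2})$, and (c) enforce the cube structure. The standard trick is to make $\dep^{n}$ ``scan'' the tuples of $D$ and, via the chase, check completeness of the cube: introduce auxiliary predicates that, starting from an arbitrary $S$-tuple, derive all tuples obtainable by flipping individual coordinates, and a ``goal'' atom $P$ that is produced only if every flip is already present in $D$. Concretely, for each coordinate $i\in\{1,\ldots,n\}$ one has a pair of tgds (size $O(n)$ each, $O(n^{2})$ total) expressing ``if $S(\ldots,b_i,\ldots)$ is in the database and its $i$-flip is also in the database, propagate a marker''; chaining these markers across all $n$ coordinates yields $P$ exactly when $D$ is closed under coordinate flips, hence (being nonempty) is the whole cube. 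The crucial point for stickiness is that the join variables used to ``look up'' the flipped tuple in $S$ must be ones that \emph{already occur in the query/head}, so that the marking procedure never marks a variable twice in any rule; laying the rules out so that the repeated variables are head-variables of fact-free tgds, as in Figure~\ref{fig:stickiness}, is what makes $\dep^{n}$ sticky. This is essentially the style of encoding used to prove the \EXP-hardness of $\eval(\sticky,\class{CQ})$ (Proposition~\ref{prop:eval-sticky}) and the exponential UCQ-rewriting blow-up implicit in~\cite{GoOP14}, adapted so that a \emph{single} relation of arity $n$ suffices.

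The main obstacle I expect is simultaneously meeting all three constraints: keeping $\dep^{n}$ sticky while it performs a genuinely recursive-looking ``completeness check'' on an exponentially large structure, and doing so with only $O(n^{2})$ symbols over a \emph{single} predicate of arity $n$. Stickiness is quite restrictive — it forbids non-head join variables from recurring — so the coordinate-flip lookups have to be encoded carefully, likely by threading the ``current tuple'' through the head of each rule and reading off flipped coordinates via constants rather than via fresh joined variables. A secondary subtlety is the $2^{n-2}$ (rather than $2^{n}$) bound: this slack should absorb whatever padding coordinates or sentinel values the encoding needs, so I would first get a clean $2^{n-c}$ bound for a small constant $c$ and then note $c\le 2$ suffices, or adjust the arity bookkeeping. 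Finally, since the statement quantifies over \emph{all} $Q\in(\class{TGD},\class{CQ})$ on the right, I must make sure the right-hand query genuinely returns the empty set on every small database; taking $Q = (\{S\},\emptyset,\exists x\,P(x))$ with $P\notin\{S\}$ does this trivially (as in Proposition~\ref{pro:coeval-to-cocont}), so the only real work is the left-hand ontology.
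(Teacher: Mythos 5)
Your construction is essentially the paper's: there, $\dep^n$ copies each $S$-tuple into a level-$n$ predicate and, for each coordinate $i$, has a lossless (hence trivially sticky) rule deriving a level-$(i-1)$ atom from the two level-$i$ atoms that agree everywhere except at coordinate $i$, where they carry the two designated values; the goal atom $\mathrm{Ans}(z,o)$ is queried as $\mathrm{Ans}(0,1)$, so a nonempty answer forces all $2^{n-2}$ tuples $S(c_1,\ldots,c_{n-2},0,1)$ to be present (two coordinates are sacrificed to carry the constants $0,1$, whence the exponent $n-2$). Your variant — thread the current tuple through the head and read the flipped coordinate off via constants — is the same idea, and your size/stickiness reasoning ($n$ rules of size $O(n)$, losslessness so no variable is ever marked) matches the paper's.

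Two points in your write-up need repair, though. First, the central paragraph as literally stated — one marker per coordinate, chained, with the goal produced ``exactly when $D$ is closed under coordinate flips'' — would fail: if the markers do not carry the tuple, the database consisting of the all-zero tuple together with its $n$ single-bit flips ($n{+}1$ facts) fires every per-coordinate marker and hence the goal, and ``exactly when closed'' is in any case unattainable for a monotone formalism. What is needed, and what the paper's level predicates $P_i$ implement, is the only-if direction with the whole tuple threaded through each marker; you mention this threading only as a likely fix in the obstacles paragraph, but it must be the construction itself. Second, the right-hand side: the proposition is universally quantified over $Q \in (\class{TGD},\class{CQ})$, so you can neither ``take'' $Q=(\{S\},\emptyset,\exists x\,P(x))$ nor arrange that the right-hand query is empty on small databases — and you do not need to. The only observation required is that $Q^{n}(D) \not\subseteq Q(D)$ implies $Q^{n}(D)\neq\emptyset$ (some answer of $Q^n$ over $D$ is missing from $Q(D)$), after which the property of $\dep^n$ alone yields $|D|\geq 2^{n-2}$, independently of $Q$; this is exactly how the paper concludes.
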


We now study the complexity of $\cont((\class{S},\class{CQ}))$. We first focus on schemas of unbounded arity. Proposition~\ref{pro:function-sticky} implies that the algorithm underlying Theorem~\ref{the:ucq-rewritable-languages-decidability} runs in exponential time assuming access to a $\C$-oracle, where $\C$ is a complexity class powerful enough for solving $\eval(\class{S},\class{CQ})$ and its complement. But, since $\eval(\class{S},\class{CQ})$ is in \EXP~(see Proposition~\ref{prop:eval-sticky}), both $\eval(\class{S},\class{CQ})$ and its complement are in~\NEXP, and thus, the oracle call is not really needed. Consequently, $\cocont((\class{C},\class{CQ}))$ is in~\NEXP.

A matching lower bound is obtained by a reduction from the standard tiling problem for the exponential grid $2^n \times 2^n$. In fact, the same lower bound has been recently established in~\cite{BP16}; however, our result is stronger as it shows that the problem remains hard even if the right-hand side query is a linear OMQ of a simple form -- this is also discussed in Section~\ref{sec:different-languages}, where containment of queries that fall in different OMQ languages is studied.
Regarding schemas of fixed arity, Proposition~\ref{pro:function-sticky} provides a witness for non-containment of polynomial size, which implies that the algorithm underlying Theorem~\ref{the:ucq-rewritable-languages-decidability} runs in polynomial time with access to an \NP-oracle. Therefore, $\coeval(\class{S},\class{CQ})$ is in $\Sigma_{2}^{P}$, while a matching lower bound is implicit in~\cite{BiLW12}. Then:

\begin{theorem}\label{the:cont-sticky}
$\cont((\class{S},\class{CQ}))$ is {\em co}\NEXP-complete, even if the set of tgds uses only two constants. In the case of fixed arity, it is $\Pi_{2}^{P}$-complete, even for constant-free tgds.
\end{theorem}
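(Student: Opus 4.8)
The plan is to establish the upper and lower bounds separately, treating the unbounded-arity case and the fixed-arity case in parallel. For the \emph{upper bounds}, I would invoke the guess-and-check algorithm underlying Theorem~\ref{the:ucq-rewritable-languages-decidability}: since $(\class{S},\class{CQ})$ is UCQ rewritable, to decide $\cocont((\class{S},\class{CQ}))$ it suffices to guess an $\insS$-database $D$ of size at most $f_{(\class{S},\class{CQ})}(Q_1)$ together with a candidate answer tuple $\bar c$, and then verify $\bar c \in Q_1(D)$ and $\bar c \notin Q_2(D)$. By Proposition~\ref{pro:function-sticky}, $f_{(\class{S},\class{CQ})}((\insS,\dep,q)) \leq |\insS| \cdot (|T(q)| + |C(\dep)| + 1)^{|\arity{\insS}|}$, which is exponential in general and polynomial when $\arity{\insS}$ is fixed. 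In the unbounded-arity case, $D$ has exponential size; by Proposition~\ref{prop:eval-sticky}, $\eval(\class{S},\class{CQ})$ is in \EXP, so evaluating $Q_1$ and $Q_2$ on the exponentially large $D$ takes exponential time in the input, hence both the check $\bar c \in Q_1(D)$ and the check $\bar c \notin Q_2(D)$ run in \NEXP~\emph{without} needing an oracle; consequently $\cocont((\class{S},\class{CQ}))$ is in \NEXP, i.e.\ $\cont((\class{S},\class{CQ}))$ is in co\NEXP. In the fixed-arity case, $D$ is polynomial, and by Proposition~\ref{prop:eval-sticky} $\eval(\class{S},\class{CQ})$ is \NP-complete, so after the polynomial-size guess we need one \NP-oracle call to confirm $\bar c \in Q_1(D)$ and one co\NP-oracle call to confirm $\bar c \notin Q_2(D)$; this places $\coeval(\class{S},\class{CQ})$-style containment in $\Sigma_2^P$, so $\cont((\class{S},\class{CQ}))$ is in $\Pi_2^P$.

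For the \emph{co\NEXP lower bound} (unbounded arity), I would reduce from the complement of the standard tiling problem for the exponential grid $2^n \times 2^n$: given a tiling instance, construct in polynomial time two sticky OMQs $Q_1, Q_2$ such that $Q_1 \not\subseteq Q_2$ iff the instance admits a tiling. The key is Proposition~\ref{pro:function-sticky-lower-bound}, which already provides a family of sticky OMQs forcing any non-containment witness to have size at least $2^{n-2}$ over a single $n$-ary relation, with $\|\dep^n\| \in O(n^2)$; I would adapt that construction so that a witness database is forced to encode a correct tiling of the exponential grid, with the "colouring correctness" and "adjacency compatibility" constraints pushed into the left-hand query's ontology and the right-hand query a simple (ideally linear) OMQ of the form $(\insS,\emptyset,\exists x\, P(x))$ or similar, so that $Q_1 \not\subseteq Q_2$ exactly when a consistent tiling exists. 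The stickiness of $\dep_1^n$ must be verified via the marking procedure of Section~\ref{sec:preliminaries}; this is where most care is needed, since the encoding of an exponential counter typically wants to reuse join variables, and stickiness forbids two occurrences of a marked variable. The trick, following the style of Proposition~\ref{pro:function-sticky-lower-bound}, is to keep all genuine joins confined to the initial query (whose variables are never "fresh" and whose reuse is therefore harmless) and to make the ontology merely propagate data monotonically; I expect replicating the $O(n^2)$-size stickiness-safe counter gadget to be the main obstacle, together with arguing that only two constants are needed.

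For the fixed-arity lower bounds, I would invoke the construction implicit in~\cite{BiLW12} (proof of Theorem~9), which yields $\Pi_2^P$-hardness of containment already for OMQ languages based on description logics expressible as constant-free linear --- hence sticky --- tgds of the simple form $P(x) \ra R(x)$; since such sets are trivially sticky (no repeated body variables, hence no marked-variable violation), this lower bound transfers verbatim to $\cont((\class{S},\class{CQ}))$ over fixed-arity, constant-free schemas, matching the $\Pi_2^P$ upper bound. Finally, combining: co\NEXP-membership with co\NEXP-hardness (using at most two constants) gives co\NEXP-completeness in the unbounded-arity case, and $\Pi_2^P$-membership with $\Pi_2^P$-hardness (constant-free) gives $\Pi_2^P$-completeness in the fixed-arity case, which is exactly the statement of Theorem~\ref{the:cont-sticky}.
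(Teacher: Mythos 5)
Your upper-bound analysis (exponential witness from Proposition~\ref{pro:function-sticky}, no oracle needed in the unbounded-arity case because $\eval(\class{S},\class{CQ})$ is in \EXP, and the polynomial-witness/\NP-oracle argument giving $\Pi_2^P$ for fixed arity) and your fixed-arity lower bound inherited from~\cite{BiLW12} are exactly the paper's arguments. The gap is in the co\NEXP-hardness.

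The step that fails is the shape of your reduction: you propose to push tile-consistency and adjacency compatibility into the left-hand ontology and take $Q_2$ to be (essentially) $(\insS,\emptyset,\exists x\, P(x))$ with $P \notin \insS$. Then $Q_2(D) = \emptyset$ for every $\insS$-database $D$, so $Q_1 \not\subseteq Q_2$ collapses to satisfiability of $Q_1$. But sticky OMQs are monotone and closed under homomorphisms, so $Q_1$ is satisfiable iff $Q_1$ evaluates non-emptily on the critical instance (all atoms over the constants of $Q_1$ plus one fresh constant), a fixed polynomial-size database; this check is in \EXP\ by Proposition~\ref{prop:eval-sticky}, so a correct polynomial-time reduction of this shape from the exponential tiling problem would imply \NEXP\ $\subseteq$ \EXP. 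More concretely, a monotone left-hand query simply cannot verify that a database encodes a \emph{correct} tiling: "no cell carries two tiles" and "no incompatible adjacent pair" are anti-monotone conditions, invisible to certain answers of a CQ.

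The paper's proof splits the work the other way around. The left-hand query checks only the monotone condition that the entire $2^n \times 2^n$ grid is tiled, via a doubling construction with \emph{full non-recursive} tgds (adapting the nonrecursive-Datalog containment lower bound of~\cite{BeGo10}, Theorem~\ref{the:fnr-into-lin}), while the right-hand query is a UCQ over simple linear (indeed lossless, hence sticky) tgds that fires precisely on violations: a doubly-tiled cell, incompatible neighbours, or a wrong initial row. Non-containment then says "some database encodes a total tiling with no violation", which is exactly solvability of the tiling instance. Stickiness of the left-hand side is not achieved by confining joins to the initial CQ---the doubling rules unavoidably join body variables---but by the 0-1 trick of Proposition~\ref{f-to-sticky}: since the encoding uses only the constants $0,1$ (whence the "two constants" in the statement), every full tgd can be padded with extra positions so that all body variables occur in the head; such lossless tgds have no marked variables, so arbitrary body joins are permitted. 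Finally, Proposition~\ref{pro:from-ucq-to-cq} turns the right-hand UCQ into a CQ. Your sketch is missing both the violation-detecting right-hand query and the 0-1/lossless padding device, and without them the satisfiability-style reduction you outline cannot establish co\NEXP-hardness.
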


Clearly, there exists a double-exponential time algorithm for solving $\cont((\class{S},\class{CQ}))$, which might sound discouraging. However, Proposition~\ref{pro:function-sticky} implies that the runtime is double-exponential only in the maximum arity of the data schema.


\section{Guardedness}\label{sec:guardedness}

We proceed with the problem of containment for guarded OMQs, and we establish the following result:

\begin{theorem}\label{the:cont-guarded}
$\cont((\class{G},\class{CQ}))$ is {\em 2}\EXP-complete. The lower bound holds even if the arity of the schema is fixed, and the tgds are without constants.
\end{theorem}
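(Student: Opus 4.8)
The plan is to prove Theorem~\ref{the:cont-guarded} in two parts: the $2$\EXP~upper bound via tree automata, and the matching lower bound by citing prior work on OMQs over description logics.

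\textbf{Upper bound.} The strategy is to show that non-containment $Q_1 \not\subseteq Q_2$ for guarded OMQs $Q_1 = (\insS,\dep_1,q_1)$ and $Q_2 = (\insS,\dep_2,q_2)$ is always witnessed by a ``tree-like'' $\insS$-database, and then to recognize such witnesses with a two-way alternating parity automaton on finite trees (2WAPA). First I would establish a \emph{tree-like witness property}: since the chase of any $\insS$-database under $\dep_1$ has bounded treewidth (this is exactly the reason $\eval(\class{G},\class{UCQ})$ is decidable, cf.~Proposition~\ref{prop:eval-guarded}), and since certain answers are preserved under the homomorphisms that unravel a database into a tree-decomposition-shaped structure, one can argue that if some $\insS$-database $D$ and tuple $\bar c$ witness $\bar c \in Q_1(D) \setminus Q_2(D)$, then there is a database $D'$ of treewidth bounded by a constant $k$ depending only on $Q_1$ (the maximum arity and body size of $\dep_1$) that still witnesses non-containment; the $\subseteq$-direction for $Q_1$ is monotone under the relevant homomorphism because $q_1$ maps into $\chase{D}{\dep_1}$, while for $Q_2$ we need that $\bar c \notin Q_2(D')$ is \emph{preserved} — this requires care and is where guardedness of $\dep_2$ is used, so that $\chase{D'}{\dep_2}$ also has a tree-like shape and a homomorphism from $q_2$ into it would pull back. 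I would encode databases of treewidth $\leq k$ over $\insS$ as labelled trees over a finite alphabet (each node carrying a ``bag'' of $\leq k{+}1$ elements together with the atoms holding on them, plus consistency constraints between adjacent bags). Then I would build a 2WAPA $\fk{A}$ accepting exactly the (codes of) consistent such trees, a 2WAPA recognizing the trees $t$ such that $\bar c \in Q_1(t)$ for some marked $\bar c$, and a 2WAPA recognizing the trees $t$ with $\bar c \notin Q_2(t)$; the latter is the crux, as it requires complementing ``there is a homomorphism from $q_2$ into $\chase{t}{\dep_2}$'', which by closure of alternating automata under complementation amounts to building a 2WAPA for the positive statement and dualizing. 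Navigating the chase of $\dep_2$ inside the tree and detecting a homomorphic image of $q_2$ is done by a suitably alternating automaton with exponentially many states (states track a guarded ``type'' of a bag together with partial matches of $q_2$), so that the product $\fk{B}$ of these automata has exponentially many states. Then $Q_1 \subseteq Q_2$ iff $\fk{B}$ accepts no tree, and since 2WAPA emptiness is solvable in exponential time in the number of states~\cite{CGKV88}, we obtain a $2$\EXP~bound; for fixed arity the treewidth $k$ and alphabet shrink but the state count stays exponential because of $q_2$, so the bound is still $2$\EXP.

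\textbf{Lower bound.} For the matching lower bound I would invoke~\cite{BLW16}: there it is shown that containment for OMQs based on an expressive Horn description logic is $2$\EXP-hard, and since such DL ontologies translate into (fixed-arity, constant-free) guarded — indeed linear/inclusion-dependency-style — tgds, and DL CQs translate into CQs, the hardness transfers to $\cont((\class{G},\class{CQ}))$ with schemas of bounded (unary and binary) arity and tgds without constants. I would just have to check that the translation is polynomial and stays within the guarded fragment, which is standard.

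\textbf{Main obstacle.} The hard part will be the automaton for $\bar c \notin Q_2(t)$, i.e., correctly and complementably simulating the guarded chase of $\dep_2$ on a tree-coded database and searching for a match of $q_2$ using only exponentially many states: one must be careful that the tree code of $D'$ (whose decomposition is adapted to $\dep_1$) is still usable by an automaton reasoning about $\dep_2$, that the ``guarded types'' carry enough information to detect any CQ match across bags, and that the whole thing is genuinely an alternating automaton so complementation is free. Getting the tree-like witness property itself right — in particular showing non-membership in $Q_2$ is preserved when passing to the bounded-treewidth witness — is the other delicate point, and is exactly where the construction must differ from the binary-only setting of~\cite{BLW16}.
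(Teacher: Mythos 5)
Your overall architecture is the same as the paper's: a tree-shaped witness property for non-containment, an encoding of the (bounded-treewidth) witnesses as labelled trees over a finite alphabet, 2WAPA constructions (consistency, satisfaction of $Q_1$, and the complement of an automaton for satisfaction of $Q_2$) with exponentially many states, emptiness of 2WAPA in exponential time in the number of states \cite{CGKV88}, and the lower bound imported from \cite{BLW16} via the translation of the DL ontologies into constant-free guarded tgds over unary and binary predicates.

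There is, however, a genuine gap in the tree-witness step: you have the two preservation directions backwards. The tree-like witness $\hat{I}$ produced by unravelling comes with a homomorphism $\hat{I} \rightarrow D$, so the part that is free is precisely $\bar c \notin Q_2(\hat{I})$: any match of $q_2$ in $\chase{\hat{I}}{\Sigma_2}$ would push forward along (the extension of) this homomorphism to a match in $\chase{D}{\Sigma_2}$, contradicting $\bar c \notin Q_2(D)$. This uses only closure of OMQs under homomorphisms; neither guardedness of $\Sigma_2$ nor any tree shape of $\chase{\hat{I}}{\Sigma_2}$ is involved, and indeed nothing about the right-hand side enters the witness property (which is why the same machinery later handles non-guarded right-hand sides). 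What is \emph{not} monotone, and is the actual crux, is that $\hat{I}$ still satisfies $Q_1$: a match of $q_1$ in $\chase{D}{\Sigma_1}$ does not transfer along a homomorphism going the other way, so your appeal to monotonicity there does not work. The paper handles this by a guarded unravelling of $D$ around a core of size at most $\arity{\insS \cup \sch{\Sigma_1}} \cdot |q_1|$ (chosen from the database atoms rooting the relevant parts of the guarded chase forest for the match of $q_1$), a universal-model argument that crucially exploits guardedness of $\Sigma_1$ to show the unravelling still entails $q_1$ under $\Sigma_1$, and compactness to extract a finite $C$-tree; your sketch supplies no substitute for this argument. Guardedness of $\Sigma_2$ is needed only later, for the automaton that detects a match of $q_2$ in the chase of the decoded tree (via squid decompositions and derivation trees). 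Finally, drop the aside that the \cite{BLW16} ontologies translate into linear tgds: they do not (that would contradict the \PSPACE-completeness of containment for linear OMQs); they translate into guarded, constant-free tgds over unary and binary predicates, which is exactly what the stated lower bound requires.
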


The lower bound is immediately inherited from~\cite{BLW16}, where it is shown that containment for OMQs based on the description logic $\ca{ELI}$ is 2\EXP-hard. Recall that a set of $\ca{ELI}$ axioms can be equivalently rewritten as a constant-free set of guarded tgds using only unary and binary predicates, which implies the lower bound stated in Theorem~\ref{the:cont-guarded}.
However, we cannot immediately inherit the desired upper bound since the DL-based OMQ languages considered in~\cite{BLW16} are either weaker than or incomparable to $(\class{G},\class{CQ})$. Nevertheless, the technique developed in~\cite{BLW16} was extremely useful for our analysis. Actually, our automata-based procedure exploits a combination of ideas from~\cite{BLW16,GrWa99}. The rest of this section is devoted to providing a high-level explanation of this procedure.

For the sake of technical clarity, we focus on constant-free tgds and CQs, but all the results can be extended to the general case at the price of more involved definitions and proofs.
Moreover, for simplicity, we focus on Boolean CQs. In other words, we study the problem for $(\class{G},\class{BCQ})$, where $\class{BCQ}$ denotes the class of Boolean CQs. This does not affect the generality of our proof since it is known that $\cont((\class{G},\class{CQ}))$ can be reduced in polynomial time to $\cont((\class{G},\class{BCQ}))$~\cite{BLW16}.

\medskip
\noindent \paragraph{A first glimpse.} As already said, $(\class{G},\class{CQ})$ is not UCQ rewritable and, therefore, we cannot employ Proposition~\ref{pro:small-witness-property} in order to establish a small witness property as for the languages considered in Section~\ref{sec:ucq-rewritability}. We have tried to establish a small witness property for $(\class{G},\class{CQ})$ by following a different route, but it turned out to be a difficult task. Nevertheless, we can show a tree witness property, which states that non-containment for $(\class{G},\class{CQ})$ is witnessed via a tree-like database. This allows us to devise a procedure based on alternating tree automata. Summing up, the proof for the 2\EXP~membership of $(\class{G},\class{CQ})$ proceeds in three steps:
\begin{enumerate}
\item Establish a tree witness property;

\item Encode the tree-like witnesses as trees that can be accepted by an alternating tree automaton; and

\item Construct an automaton that decides $\cont((\class{G},\class{CQ}))$; in fact, we reduce $\cont((\class{G},\class{CQ}))$ into emptiness for two-way alternating parity automata on finite trees.
\end{enumerate}
Each one of the above three steps is discussed in more details in the following three sections.

\subsection{Tree Witness Property}

From the above informal discussion, it is clear that tree-like databases are crucial for our analysis. Let us make this notion more precise using guarded tree decompositions.
A \emph{tree decomposition} of a database $D$ is a labeled rooted tree $T = (V,E,\lambda)$, where $\lambda : V \ra 2^{\adom{D}}$, such that:
(i) for each atom $R(t_1,\ldots,t_n) \in D$, there exists $v \in V$ such that
$\lambda(v) \supseteq \{t_1,\ldots,t_n\}$, and (ii) for every term $t \in \adom{D}$, the set $\{v \in V \mid t \in \lambda(v)\}$ induces a connected subtree of $T$.
The tree decomposition $T$ is called {\em $[U]$-guarded}, where $U \subseteq V$, if, for every node $v \in V \setminus U$, there exists an atom $R(t_1,\ldots,t_n) \in D$ such that $\lambda(v) \subseteq \{t_1,\ldots,t_n\}$.
We write $\rt{T}$ for the root node of $T$, and $D_T(v)$, where $v \in V$, for the subset of $D$ induced by $\lambda(v)$. We are now ready to formalize the notion of the tree-like database:

\begin{definition}\label{def:c-tree}
An $\insS$-database $D$ is a {\em $C$-tree}, where $C \subseteq D$, if there is a tree decomposition $T$ of $D$ such that:
\begin{enumerate}
\item $D_T(\rt{T}) = C$ and

\item $T$ is $[\{\rt{T}\}]$-guarded. \hfill\markfull
\end{enumerate}
\end{definition}

Roughly, whenever a database $D$ is a $C$-tree, $C$ is the cyclic part of $D$, while the rest of $D$ is tree-like. Interestingly, for deciding $\cont((\class{G},\class{BCQ}))$ it suffices to focus on databases that are $C$-trees and $|\adom{C}|$ depends only on the left-hand side OMQ. Recall that for a schema $\insS$ we write $\arity{\insS}$ for the maximum arity over all predicates of $\insS$. Then:

\begin{proposition}\label{pro:tree-witness-property}
Let $Q_i = (\insS,\dep_i,q_i) \in (\class{G},\class{BCQ})$, for $i \in \{1,2\}$. The following are equivalent:
\begin{enumerate}
\item $Q_1 \subseteq Q_2$.

\item $Q_1(D) \subseteq Q_2(D)$, for every $C$-tree $\insS$-database $D$ such that $|\adom{C}| \leq (\arity{\insS \cup \sch{\dep_1}} \cdot |q_1|)$.
\end{enumerate}
\end{proposition}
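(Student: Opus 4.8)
\textbf{Proof plan for Proposition~\ref{pro:tree-witness-property}.}
The implication $(2) \Rightarrow (1)$ is trivial, so the work is in $(1) \Rightarrow (2)$; equivalently, I will argue the contrapositive: if $Q_1 \not\subseteq Q_2$, then there is a $C$-tree witness $D$ with $|\adom{C}| \leq \arity{\insS \cup \sch{\dep_1}} \cdot |q_1|$. The plan is to start from an arbitrary $\insS$-database $D_0$ witnessing $Q_1(D_0) \not\subseteq Q_2(D_0)$, i.e., such that $q_1$ maps into $\chase{D_0}{\dep_1}$ but $q_2$ does not map into $\chase{D_0}{\dep_2}$, and then to transform $D_0$ into a tree-like database that still separates the two OMQs. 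The central tool is the bounded-treewidth property of the chase under guarded tgds (the fact underlying Proposition~\ref{prop:eval-guarded}): $\chase{D_0}{\dep_1}$ admits a tree decomposition in which every bag is guarded by an atom of the chase, and the "root region" corresponds to $D_0$ itself.

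The key steps, in order, are as follows. First, I would fix a homomorphism $h$ from $q_1$ into $\chase{D_0}{\dep_1}$ and look at the (at most $|q_1|$ many) atoms in the image of $h$; since $\chase{D_0}{\dep_1}$ has a guarded tree decomposition, the bags containing these image atoms span a connected "skeleton" subtree, and the terms appearing in these bags that are also constants of $D_0$ form a set $C$ of size at most $\arity{\insS \cup \sch{\dep_1}} \cdot |q_1|$ — this is where the claimed bound on $|\adom{C}|$ comes from. Second, I would carve out from $D_0$ the sub-database $C$ consisting exactly of the $D_0$-atoms over these constants, and then "unravel" the rest of $D_0$ around $C$ into a tree: concretely, take $C$ as the root bag and, following the guarded tree decomposition, attach tree-like copies of the remaining guarded pieces of $D_0$, introducing fresh constants for nodes that are repeated across different branches so that the result is genuinely a $C$-tree in the sense of Definition~\ref{def:c-tree}. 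Call the resulting database $D$. Third, I must check both separation conditions survive: (a) $q_1$ still maps into $\chase{D}{\dep_1}$ — here I use that the unraveling $D$ homomorphically maps onto $D_0$ (collapsing fresh copies back), that chase commutes with homomorphisms in the right direction, and that the homomorphism $h$ was "absorbed" into the root region $C$, which is preserved verbatim in $D$, so $h$ lifts to $\chase{D}{\dep_1}$; (b) $q_2$ still fails on $\chase{D}{\dep_2}$ — here I use the converse direction, namely that there is a homomorphism from $D$ onto $D_0$, hence from $\chase{D}{\dep_2}$ onto $\chase{D_0}{\dep_2}$, so any homomorphism $q_2 \to \chase{D}{\dep_2}$ would compose to one $q_2 \to \chase{D_0}{\dep_2}$, contradicting the choice of $D_0$.

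The main obstacle I anticipate is step (a): ensuring that unraveling $D_0$ into a $C$-tree does \emph{not} destroy the witness for $q_1$. Unraveling is a homomorphism-decreasing operation in the "wrong" direction for preserving positive information — a query that held in $\chase{D_0}{\dep_1}$ need not hold after we split shared constants into fresh copies. The resolution is precisely the reason $C$ must be defined from \emph{all} the constants of $D_0$ touched by the bags hosting the image of $h$, together with the guarded structure of the chase: everything the homomorphism $h$ "sees", including the guards that generated the relevant null-labeled chase atoms, is confined to the part of the tree decomposition rooted at (or adjacent to) $C$, and that part is copied into $D$ without any identification. Making this confinement argument precise — tracking which chase atoms are needed to realize $h$, showing their provenance stays within a bounded-diameter neighborhood of the root bag, and verifying the bound $\arity{\insS\cup\sch{\dep_1}}\cdot|q_1|$ is actually met — is the technical heart of the proof. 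A secondary subtlety is that $\dep_1$ may itself rewrite relations of $\insS$ (predicates of $\insS$ can occur in rule heads), so "the root region equals $D_0$" must be stated carefully in terms of a guarded tree decomposition of $\chase{D_0}{\dep_1}$ whose root bag induces exactly $D_0$; this is standard but needs to be invoked explicitly.
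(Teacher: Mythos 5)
Your overall route---start from a counterexample database $D_0$, unravel it around a small core determined by the match of $q_1$, and push the failure of $q_2$ back along the homomorphism from the unraveled instance onto $D_0$---is essentially the paper's own strategy (guarded unraveling plus closure of OMQs under homomorphisms; see Lemma~\ref{lem:main-lemma}). A small slip first: $(1)\Rightarrow(2)$ is the trivial direction, not $(2)\Rightarrow(1)$; your contrapositive (``if $Q_1\not\subseteq Q_2$ then there is a small-core $C$-tree witness'') is the contrapositive of $(2)\Rightarrow(1)$, so the mathematics you set out to do is the right hard direction. The genuine gap sits exactly where you flag the ``technical heart'', and your proposed resolution does not hold as stated. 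You take $C$ to be the $D_0$-constants occurring in the bags hosting the image atoms of $h$, and justify survival of the $q_1$-match by claiming that everything $h$ sees is ``confined to the part rooted at (or adjacent to) $C$, which is copied into $D$ without any identification''. That is not what unraveling does: the part of $D_0$ outside the core is \emph{duplicated} into many copies, and the image atoms of $h$ that contain nulls live in chase trees hanging below database facts that in general are not over $C$ and are not preserved verbatim anywhere. What is needed is a re-assembly argument: for each database fact anchoring a chase tree that hosts image atoms, exhibit a copy of its guarded set near the root whose $C$-constants are identified with the root copies, and check that image atoms from different anchors can only share constants of $C$ (nulls are local to one chase tree). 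The paper eases this by choosing the core differently: $G_\mu$ consists of \emph{all} constants of the anchor database facts themselves (the roots, in the guarded chase forest, of the image atoms containing database constants), so these facts lie literally in the root bag of the unraveling; even then, proving that the unraveling entails $q_1$ requires the universal-model construction and the transfer claim for guarded queries of Lemma~\ref{lem:unravelingentailsq}. Your smaller core also respects the bound $\arity{\insS\cup\sch{\dep_1}}\cdot|q_1|$ and can likely be made to work, but the confinement claim you lean on is false as written, and the re-assembly argument is precisely the missing content.

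A second gap is finiteness. The guarded unraveling is an infinite instance, whereas item (2) quantifies over $C$-tree $\insS$-databases, which are finite by definition. The paper closes this with compactness: some finite subset of the unraveling already entails $q_1$, one then fattens it to all bags up to a fixed depth so that it is again a $C$-tree, and finally replaces labeled nulls by fresh constants. Your plan never explains how the object you build is finite, nor which depth or how many copies suffice; as written it is either infinite (hence not a database) or an unspecified finite truncation whose adequacy for your step (a) is exactly what would have to be proved.
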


The fact that $(1) \Rightarrow (2)$ holds trivially, while $(2) \Rightarrow (1)$ is shown by using a variant of the notion of guarded unravelling and compactness.
Let us clarify that the above result does not provide a decision procedure for $\cont((\class{G}, \class{BCQ}))$, since we have to consider infinitely many databases that are $C$-trees with $|\adom{C}| \leq (\arity{\insS \cup \sch{\dep_1}} \cdot |q_1|)$.

\subsection{Encoding Tree-like Databases}

It is generally known that a database $D$ whose treewidth\footnote{Recall that the treewidth of a database $D$ is the minimum width among all possible tree decompositions $T =(V,E,\lambda)$ of $D$, while the width of $T$ is defined as $\max_{v \in V} \{|\lambda(v)|\}-1$.} is bounded by an integer $k$ can be encoded into a tree over a finite alphabet of double-exponential size in $k$ that can be accepted by an alternating tree automaton; see, e.g.,~\cite{BeBB16}.
Consider an alphabet $\Gamma$, and let $\mbb{N}^\ast$ be the set of finite sequences of natural numbers, including the empty sequence. A \emph{$\Gamma$-labeled tree} is a pair $L = \tup{T, \lambda}$, where $T \subseteq \mbb{N}^\ast$ is closed under prefixes, and $\lambda \colon T \ra \Gamma$ is the labeling function. The elements of $T$ identify the nodes of $L$.
It can be shown that $D$ and a tree decomposition $T$ of $D$ with width $k$ can be encoded as a $\Gamma$-labeled tree $L$, where $\Gamma$ is an alphabet of double-exponential size in $k$, such that each node of $T$ corresponds to exactly one node of $L$ and vice versa.

Consider now a $C$-tree $\insS$-database $D$, and let $T$ be the tree decomposition that witnesses that $D$ is a $C$-tree. The width of $T$ is at most $k = (|\adom{C}| + \arity{\insS} - 1)$, and thus, the treewidth of $D$ is bounded by $k$. Hence, from the above discussion, $D$ and $T$ can be encoded as a $\Gamma$-labeled tree, where $\Gamma$ is of double-exponential size in $k$.
In general, given an $\insS$-database $D$ that is a $C$-tree due to the tree decomposition $T$, we show that $D$ and $T$ can be encoded as a $\Gamma_{\insS,l}$-labeled tree, with $|\adom{C}| \leq l$ and $|\Gamma_{\insS,l}|$ being double-exponential in $\arity{\insS}$ and exponential in $|\insS|$ and $l$.

Although every $C$-tree $\insS$-database $D$ can be encoded as a $\Gamma_{\insS,l}$-labeled tree, the other direction does not hold. In other words, it is not true that every $\Gamma_{\insS,l}$-labeled tree encodes a $C$-tree $\insS$-database $D$ and its corresponding tree decomposition. In view of this fact, we need the additional notion of consistency. A $\Gamma_{\insS,l}$-labeled tree is called {\em consistent} if it satisfies certain syntactic properties -- we do not give these properties here since they are not vital in order to understand the high-level idea of the proof. Now, given a consistent $\Gamma_{\insS,l}$-labeled tree $L$, we can show that $L$ can be decoded into an $\insS$-database $\dec{L}$ that is a $C$-tree with $|\adom{C}| \leq l$.
From the above discussion and Proposition~\ref{pro:tree-witness-property}, we obtain:

\begin{lemma}\label{lem:consistent-labeled-trees}
Let $Q_i = (\insS,\dep_i,q_i) \in (\class{G},\class{BCQ})$, for $i \in \{1,2\}$. The following are equivalent:
\begin{enumerate}
\item $Q_1 \subseteq Q_2$.

\item $Q_1(\dec{L}) \subseteq Q_2(\dec{L})$, for every consistent $\Gamma_{\insS,l}$-labeled tree $L$, where $l = (\arity{\insS \cup \sch{\dep_1}} \cdot |q_1|)$.
\end{enumerate}
\end{lemma}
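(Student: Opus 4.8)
The plan is to reduce Lemma~\ref{lem:consistent-labeled-trees} to Proposition~\ref{pro:tree-witness-property} via the encoding/decoding machinery just sketched. First I would set $l = (\arity{\insS \cup \sch{\dep_1}} \cdot |q_1|)$, matching the bound in Proposition~\ref{pro:tree-witness-property}, so that the $C$-trees relevant for containment are exactly those whose cyclic part $C$ satisfies $|\adom{C}| \leq l$. The direction $(1) \Rightarrow (2)$ is immediate: if $Q_1 \subseteq Q_2$ then $Q_1(D) \subseteq Q_2(D)$ for \emph{every} $\insS$-database $D$, in particular for every $D$ of the form $\dec{L}$ with $L$ a consistent $\Gamma_{\insS,l}$-labeled tree, since by the decoding property $\dec{L}$ is a genuine $\insS$-database.

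For $(2) \Rightarrow (1)$ I would argue contrapositively, combining the two halves of the encoding/decoding correspondence. Suppose $Q_1 \not\subseteq Q_2$. By Proposition~\ref{pro:tree-witness-property} (the $\neg(1) \Rightarrow \neg(2)$ contrapositive of its $(2) \Rightarrow (1)$), there is a $C$-tree $\insS$-database $D$ with $|\adom{C}| \leq l$ such that $Q_1(D) \not\subseteq Q_2(D)$; let $T$ be the tree decomposition witnessing that $D$ is a $C$-tree. Now invoke the encoding direction: $D$ together with $T$ can be encoded as a $\Gamma_{\insS,l}$-labeled tree $L$, and this $L$ is consistent — indeed the notion of consistency is precisely the collection of syntactic properties that hold of any tree obtained by encoding an actual $C$-tree database with its tree decomposition. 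Finally I would use the fact that decoding inverts encoding up to isomorphism, i.e.\ $\dec{L}$ is isomorphic to $D$ (as an $\insS$-database); since containment of OMQs is preserved under isomorphism of databases, $Q_1(\dec{L}) \not\subseteq Q_2(\dec{L})$, contradicting~(2). Hence $Q_1 \subseteq Q_2$.

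The bulk of the work — and the main obstacle — is not this lemma itself, which is essentially a bookkeeping step, but rather justifying the three facts it rests on: that every relevant $C$-tree database with its tree decomposition can in fact be encoded as a $\Gamma_{\insS,l}$-labeled tree of the claimed size (double-exponential in $\arity{\insS}$, exponential in $|\insS|$ and $l$); that the encoded tree satisfies the syntactic consistency conditions; and, conversely, that every consistent $\Gamma_{\insS,l}$-labeled tree decodes to a $C$-tree $\insS$-database with $|\adom{C}| \leq l$ — so that the encoding and decoding maps are mutually inverse up to isomorphism. These are the technical heart of the section and I would develop them carefully (choosing $\Gamma_{\insS,l}$ so that a node label records the bag content up to renaming, the atoms holding inside the bag, and the overlap pattern with the parent bag), isolating the consistency conditions as exactly those needed to make decoding well-defined and inverse to encoding. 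Given all of this, Lemma~\ref{lem:consistent-labeled-trees} follows by the short argument above; it is the faithful, size-bounded encoding that carries the real weight.
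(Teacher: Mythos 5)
Your proposal is correct and follows essentially the same route as the paper: the paper likewise derives the lemma from Proposition~\ref{pro:tree-witness-property} together with the decoding lemma (every consistent $\Gamma_{\insS,l}$-labeled tree decodes to a $C$-tree with $|\adom{C}| \leq l$) and the fact that encoding a $C$-tree $D$ with a witnessing tree decomposition yields a consistent labeled tree $L$ with $\dec{L}$ isomorphic to $D$. You also correctly identify that the real work lies in the encoding/decoding correspondence and the consistency conditions, which is exactly how the paper structures its appendix argument.
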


\subsection{Constructing Tree Automata}

Having the above result in place, we can now proceed with our automata-based procedure. We make use of two-way alternating parity automata (2WAPA) that run on finite labeled trees. Two-way alternating automata process the input tree while branching in an alternating fashion to successor states, and thereby moving either down or up the input tree; the detailed definition can be found in~\cite{BaBP17}.
Our goal is to reduce $\cont((\class{G},\class{BCQ}))$ to the emptiness problem for 2WAPA. As usual, given a 2WAPA $\fk{A}$, we denote by $\ca{L}(\fk{A})$ the \emph{language} of $\fk{A}$, i.e., the set of labeled trees it accepts. The emptiness problem is defined as follows: given a 2WAPA $\fk{A}$, does $\ca{L}(\fk{A}) = \emptyset$? Thus, given $Q_1,Q_2 \in (\class{G},\class{BCQ})$, we need to construct a 2WAPA $\fk{A}$ such that $Q_1 \subseteq Q_2$ iff $\ca{L}(\fk{A}) = \emptyset$.
It is well-known that deciding whether $\ca{L}(\fk{A}) = \emptyset$ is feasible in exponential time in the number of states, and in polynomial time in the size of the input alphabet \cite{CGKV88}. Therefore, 
we should construct $\fk{A}$ in double-exponential time, while the number of states must be at most exponential.

We first need a way to check consistency of labeled trees. It is not difficult to devise an automaton for this task.

\begin{lemma}\label{lem:automaton-1}
Consider a schema $\insS$ and an integer $l > 0$. There is a 2WAPA $\fk{C}_{\insS,l}$ that accepts a $\Gamma_{\insS,l}$-labeled tree $L$ iff $L$ is consistent.
The number of states of $\fk{C}_{\insS,l}$ is logarithmic in the size of $\Gamma_{\insS,l}$. Furthermore, $\fk{C}_{\insS,l}$ can be constructed in polynomial time in the size of $\Gamma_{\insS,l}$.
\end{lemma}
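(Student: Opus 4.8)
The plan is to exhibit a concrete 2WAPA $\fk{C}_{\insS,l}$ whose transition function directly checks, node by node, all of the syntactic requirements that constitute ``consistency'' of a $\Gamma_{\insS,l}$-labeled tree. First I would enumerate precisely which conditions consistency imposes; each is a \emph{local} requirement relating the label of a node to the labels of its parent and children (for example: the root label must be of a designated form encoding a bag of size at most $l$; every non-root label must designate a bag guarded by some atom; the ``interface'' between a node and its parent must be respected in the sense that terms shared between parent and child bags are named consistently; an atom may be asserted at a node only if all its terms lie in that node's bag; and the connectivity condition, namely that the set of nodes whose bag contains a fixed term $t$ forms a connected subtree, so $t$ cannot ``reappear'' in a child after being absent). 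All but the connectivity condition are purely one-step local and are verified by a simple deterministic automaton that, at each node, looks at the current label together with the label of the parent (read via an upward move) and the labels of the children (read via universal downward branching). The connectivity condition is handled by the usual trick: encode in each node's label a bit for each of the finitely many term-slots recording whether that slot is ``active,'' and require in the transition that if a slot is inactive at a node it stays inactive in the entire subtree below; this is again a local check.

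Next I would assemble these checks into a single 2WAPA. Since every condition is expressible as ``for all children, the label pair $(\lambda(v),\lambda(\text{child}))$ belongs to an explicitly describable relation, and the pair $(\lambda(\text{parent}),\lambda(v))$ likewise,'' the automaton needs only a constant number of control states (one to process an ordinary node, plus a handful of bookkeeping states, plus an accepting sink and a rejecting sink); crucially the \emph{number} of states does not depend on $\Gamma_{\insS,l}$, only the \emph{size of the transition function} does, since the transition at a state must branch on the current label. Hence the number of states is $O(1)$, which is in particular logarithmic in $|\Gamma_{\insS,l}|$, and the whole automaton — states, alphabet, transition table — is written down in time polynomial in $|\Gamma_{\insS,l}|$. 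A parity (or even a simple Büchi/co-Büchi, hence trivially parity) acceptance condition with two priorities suffices because consistency is a safety property: the automaton rejects as soon as it sees a local violation and otherwise accepts. I would then verify the correctness equivalence $L$ is consistent $\iff$ $L \in \ca{L}(\fk{C}_{\insS,l})$ by a routine induction on the tree: soundness because any accepting run certifies that no local violation occurs anywhere, completeness because from a consistent tree one reads off the unique (deterministic) run and checks it is accepting.

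The main obstacle, and the only place demanding care, is pinning down the exact list of consistency conditions in step one and checking that each is genuinely local in the above sense — in particular that the guardedness-of-bags condition and the parent/child interface condition can be checked by inspecting only adjacent labels rather than requiring the automaton to ``remember'' unbounded information along a branch. This is where the encoding $\Gamma_{\insS,l}$ from the previous subsection must be set up so that each label already carries, explicitly, the names of the terms in its bag, the atoms asserted at that node, and the correspondence of its terms with its parent's terms; given such an encoding, locality is immediate and the automaton construction is mechanical. I expect the bulk of the proof to be this bookkeeping, with the automaton-theoretic content being light. I do not foresee difficulty with the complexity bounds: the bound on the number of states falls out because the control is finite-state and label-independent, and the polynomial construction-time bound falls out because the transition table has one entry per (state, label) pair with each entry of polynomial size.
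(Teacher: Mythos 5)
There is a genuine gap, and it sits exactly where you placed your bet: the claim that every consistency condition is \emph{local}, i.e., checkable by comparing a node's label only with the labels of its parent and children, is false for the paper's encoding. The guardedness requirement for a non-root node $v$ does not ask that a guard atom be recorded at $v$ (or at a neighbour of $v$): it asks that there exist \emph{some} node $w$, possibly far away in the tree, whose label contains an atom $R_{\bar a}$ with $\names{v} \subseteq \{\bar a\}$ and such that $v$ and $w$ are $b$-connected for every name $b \in \names{v}$. This non-locality is unavoidable: in a witnessing tree decomposition of a $C$-tree, the bag $X_v$ is merely a \emph{subset} of the arguments of its guard atom, and that atom is recorded at a node whose bag contains \emph{all} of its arguments, which need not be $v$ or any neighbour of $v$. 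Your proposed fix --- setting up the encoding so that each label ``already carries'' its own guard --- would change the notion of consistency and break the encoding direction of the correspondence, since one cannot in general place the full guard atom inside the bag it guards. Consequently your $O(1)$-state bound does not go through. The paper instead has the automaton branch universally to every non-root node and then perform a reachability search for a suitable guard, storing the sought name set $\names{v}$ in its state; this costs roughly $O((\arity{\insS}+l)^{\arity{\insS}})$ states, which is logarithmic in $|\Gamma_{\insS,l}|$ (recall $\Gamma_{\insS,l} = 2^{\mbb{K}_{\insS,l}}$) but certainly not constant, and it is precisely this count that the lemma's statement is calibrated to.

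Two smaller mismatches: the connectivity check you describe (a slot, once inactive, must stay inactive in the whole subtree) is not one of the paper's consistency conditions --- the decoding identifies elements via $a$-equivalence classes, so a name may be legitimately reused for a different element once it disappears along a path; the only upward-persistence requirement is for core names $C_a$, which must be propagated along the path to the root. Also, comparing a node's label with its parent's via a single upward move cannot be done with a label-independent constant number of states; it needs states carrying the relevant features of the child's label (again fine within the ``logarithmic in $|\Gamma_{\insS,l}|$'' budget, but not within $O(1)$). The overall architecture you propose (one dedicated sub-automaton per condition, intersected, with a trivial two-priority parity condition) matches the paper; the missing idea is the reachability-with-stored-tuple automaton for the guardedness condition.
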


Now, the crucial task is, given an OMQ $Q \in (\class{G},\class{BCQ})$, to devise an automaton that accepts labeled trees which correspond to databases that make $Q$ true. 

\begin{lemma}\label{lem:automaton-2}
Let $Q = (\insS,\dep,q) \in (\class{G},\class{BCQ})$. There is a 2WAPA $\fk{A}_{Q,l}$, where $l > 0$, that accepts a consistent $\Gamma_{\insS,l}$-labeled tree $L$ iff $Q(\dec{L}) \neq \emptyset$.
The number of states of $\fk{A}_{Q,l}$ is exponential in $||Q||$ and $l$. Furthermore, $\fk{A}_{Q,l}$ can be constructed in double-exponential time in $||Q||$ and $l$.
\end{lemma}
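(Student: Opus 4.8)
The plan is to construct the 2WAPA $\fk{A}_{Q,l}$ as a product-like combination of two automata: one that ``materializes'' enough of the chase of $\dec{L}$ under $\dep$ on the tree, and one that checks for the existence of a homomorphism from $q$ into that materialized chase. The key observation, which justifies doing this on a tree of bounded width, is that for guarded tgds the chase $\chase{\dec{L}}{\dep}$ again has a guarded tree decomposition that refines the tree decomposition encoded by $L$: each bag of $L$ can be locally expanded by attaching the ``forest'' of chase atoms generated from the guarded atoms living in that bag, and — crucially — a CQ $q$ maps into $\chase{\dec{L}}{\dep}$ iff it maps into it in a ``splittable'' way along this decomposition, so the homomorphism can be decomposed into pieces each of which lives in (the chase expansion of) a single subtree. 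First I would make precise, following the guarded-chase analysis in \cite{CaGK13} and the automata construction ideas of \cite{GrWa99,BLW16}, a notion of a local ``type'' at a node: the isomorphism type of the restriction of the chase to the terms currently in the bag plus a bounded amount of information about which atoms over those terms are derivable. Because $\dep$ is guarded and the bag size is bounded by roughly $\arity{\insS} + l$, there are only exponentially many such types in $\|Q\|$ and $l$.

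Next I would build the ``chase automaton'': it nondeterministically guesses, at each node, a consistent local type and verifies, by two-way moves to parents/children, that the guessed types glue together correctly (a tgd applicable within a bag must have its consequences reflected in that bag's type or in the type of the unique child bag introduced to host the newly generated guarded atom; conversely every atom asserted in a type must be justified). This is a finite, local consistency condition, so it is expressible by an alternating automaton whose states encode (partial) types — exponentially many. Then I would build the ``query automaton'': using the fact that a BCQ $q$ of bounded size maps into a width-$k$ tree-decomposed structure iff it can be partitioned into connected subqueries, each folded into a single bag, with shared variables passed between a bag and its neighbour, I would guess such a decomposition of $q$ and verify it against the guessed chase types, again with two-way moves. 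The number of states here is exponential in $|q|$ (we track a partial assignment of $q$'s variables to the at most $k$ terms of a bag, plus which part of $q$ remains to be matched). Finally, $\fk{A}_{Q,l}$ is the intersection of these two automata (intersection of alternating automata is polynomial in the number of states and preserves the parity-acceptance machinery, and on top of it we conjoin the obligation that the guessed data is ``nonempty'' in the right sense so that $Q(\dec{L}) \neq \emptyset$); the overall state count stays exponential in $\|Q\|$ and $l$, and writing down the (exponentially large) transition function takes double-exponential time, matching the stated bounds.

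The main obstacle is the query automaton and its interaction with the infinite (nonterminating) guarded chase: unlike in the DL setting of \cite{BLW16}, the chase here is over higher-arity relations and does not terminate, so we cannot simply materialize it; we must argue that a homomorphism from $q$, which may use chase nulls arbitrarily deep in the generated forest, can always be ``pulled back'' to use only boundedly much of the chase per bag and can be discovered by a finite-memory automaton walking the tree. The right tool is a decomposition/pumping argument: if $q$ maps into $\chase{\dec{L}}{\dep}$, then because $q$ has boundedly many atoms and the chase has a guarded tree decomposition of bounded width, the image of $q$ meets only boundedly many bags, and within the chase-forest hanging off a single $L$-node the relevant part is itself a bounded-size guarded instance determined by the local type — so a homomorphism exists iff a ``type-level'' homomorphism exists, which the automaton can check locally. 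Making this equivalence airtight — in particular handling the case where two $q$-variables are mapped into the same deep chase null reachable via different bags, and ensuring the two-way automaton's parity condition correctly rejects spurious ``infinite descent'' guesses while accepting genuine finite matches — is the technically delicate part, and is where the combination of \cite{GrWa99} (chase-on-automaton) and \cite{BLW16} (CQ-on-tree-automaton) techniques must be carefully merged.
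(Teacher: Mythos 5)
Your plan (decorate the tree with local chase ``types'' and intersect a chase automaton with a query automaton) is a genuinely different route from the paper's, but as written it has two gaps, one of which breaks the quantitative part of the statement. Your chase automaton stores in its \emph{states} the set of atoms of $\chase{\dec{L}}{\dep}$ over the names of the current bag. A bag has up to $\arity{\insS}+l$ names and the predicates of $\insS\cup\sch{\dep}$ have unbounded arity, so the number of possible atoms over a bag is already exponential in $||Q||$, and the number of \emph{sets} of such atoms --- i.e.\ of your types --- is double-exponential, not exponential as you claim. Double-exponential size is harmless in the alphabet (indeed $|\Gamma_{\insS,l}|$ is double-exponential, and emptiness is only polynomial in the alphabet), but it is fatal in the state set: the lemma requires exponentially many states precisely because emptiness is exponential in the number of states, and with double-exponentially many states the containment test of Theorem~\ref{the:cont-guarded} degrades to 3\EXP. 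Since the chase-derived atoms are \emph{not} part of the input labels (which only encode $\dec{L}$), the guessed types cannot be off-loaded to the alphabet; the only way out is to verify derivability of individual atoms/subqueries on demand by alternation, which is essentially the paper's mechanism rather than a product of two type-decorated automata.

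The second gap is the step you yourself flag as delicate: a match of $q$ in $\chase{\dec{L}}{\dep}$ may send query variables to nulls arbitrarily deep in the chase forests hanging below a bag, with connected pieces of $q$ straddling the core and several subtrees, and the asserted equivalence ``a homomorphism exists iff a type-level homomorphism exists'' is precisely what has to be proved --- it does not follow from $|q|$ being bounded. The paper's proof spends most of its effort here: squid decompositions (Lemma~\ref{lem:squid}) split $q$ into a part matched inside $C$ and strictly acyclic parts whose free variables lie on tree elements, and derivation trees (Lemmas~\ref{lem:derivtreeatom} and~\ref{prop:derivtreequery}) replace ``maps somewhere into the infinite chase below'' by ``is $\Sigma$-implied by a strictly acyclic query of size at most $3|q|$ over tree elements''. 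This is what makes an exponential state set possible: the states of $\fk{A}_{Q,l}$ are equivalence classes of such bounded-size acyclic queries instantiated with names, and all the expensive $\Sigma$-entailment checks (the $\mathrm{impl}$ sets) are pushed into the double-exponential \emph{construction} of the transition function rather than into the states. Without an argument of this derivation-tree kind, your construction neither meets the stated state bound nor establishes correctness for matches that descend into the chase.
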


The intuition underlying $\fk{A}_{Q,l}$ can be described as follows. $\fk{A}_{Q,l}$ tries to identify all the possible ways the CQ $q$ can be mapped to $\chase{D}{\dep}$, for \emph{any} $C$-tree $\insS$-database $D$ such that $|\adom{C}| \leq l$.
It then arrives at possible ways how the input tree can satisfy $Q$. These ``possible ways'' correspond to \emph{squid decompositions}, a notion introduced in~\cite{CaGK13} that indicates which part of the query is mapped to the cyclic part $C$ of $D$, and which to the tree-like part of $D$. The automaton exhaustively checks all squid decompositions by traversing the input tree and, at the same time, explores possible ways how to match the single parts of the squid decomposition at hand. The automaton finally accepts if it finds a squid decomposition that can be mapped to $\chase{D}{\dep}$.

Having the above automata in place, we can proceed with our main technical result, which shows that $\cont(\class{G},\class{BCQ})$ can be reduced to the emptiness problem for 2WAPA. But let us first recall some key results about 2WAPA, which are essential for our final construction. It is well-known that languages accepted by 2WAPAs are closed under intersection and complement.  Given two 2WAPAs $\fk{A}_1$ and $\fk{A}_2$, we write $\fk{A}_1 \cap \fk{A}_2$ for a 2WAPA, which can be constructed in polynomial time, that accepts the language $\ca{L}(\fk{A}_1) \cap \ca{L}(\fk{A}_2)$. Moreover, for a 2WAPA $\fk{A}$, we write $\overline{\fk{A}}$ for the 2WAPA, which is also constructible in polynomial time, that accepts the complement of $\ca{L}(\fk{A})$. We can now show the following:

\begin{proposition}\label{pro:cont-to-emptiness}
Consider $Q_1,Q_2 \in (\class{G},\class{BCQ})$. We can construct in double-exponential time a 2WAPA $\fk{A}$, which has exponentially many states, such that
\[
Q_1 \subseteq Q_2 \iff \ca{L}(\fk{A}) = \emptyset.
\]
\end{proposition}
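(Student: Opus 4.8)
The plan is to combine the two automata from Lemmas~\ref{lem:automaton-1} and~\ref{lem:automaton-2} so that the resulting 2WAPA $\fk{A}$ accepts exactly those $\Gamma_{\insS,l}$-labeled trees that are consistent, decode to a $C$-tree with $|\adom{C}| \leq l$, make $Q_1$ true, and make $Q_2$ false. Here $l = (\arity{\insS \cup \sch{\dep_1}} \cdot |q_1|)$ is the bound supplied by Lemma~\ref{lem:consistent-labeled-trees}. If we can build such an $\fk{A}$, then by that lemma the existence of an accepted tree is precisely a witness to $Q_1 \not\subseteq Q_2$, so $Q_1 \subseteq Q_2$ iff $\ca{L}(\fk{A}) = \emptyset$, which is the claim.

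First I would fix $l$ as above and invoke Lemma~\ref{lem:automaton-1} to obtain the consistency checker $\fk{C}_{\insS,l}$, whose number of states is logarithmic in $|\Gamma_{\insS,l}|$, hence polynomial in $\arity{\insS}$, $|\insS|$ and $l$. Next I would apply Lemma~\ref{lem:automaton-2} twice: once to $Q_1$ to get $\fk{A}_{Q_1,l}$, and once to $Q_2$ to get $\fk{A}_{Q_2,l}$; each has a number of states exponential in $\|Q_i\|$ and $l$, and is constructible in double-exponential time. Using the stated closure properties of 2WAPA, I then form
\[
\fk{A} \ :=\ \fk{C}_{\insS,l} \,\cap\, \fk{A}_{Q_1,l} \,\cap\, \overline{\fk{A}_{Q_2,l}}.
\]
Complementation and intersection are polynomial-time operations on 2WAPA, so the construction of $\fk{A}$ from $\fk{C}_{\insS,l}$, $\fk{A}_{Q_1,l}$, $\fk{A}_{Q_2,l}$ costs only polynomial overhead; since those three automata are built in double-exponential time with exponentially many states, $\fk{A}$ itself has exponentially many states and is constructed in double-exponential time, as required.

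It remains to argue correctness, i.e., that $\ca{L}(\fk{A}) = \emptyset$ iff $Q_1 \subseteq Q_2$. A tree $L$ lies in $\ca{L}(\fk{A})$ iff $L$ is consistent (via $\fk{C}_{\insS,l}$), $Q_1(\dec{L}) \neq \emptyset$ (via $\fk{A}_{Q_1,l}$, which requires $L$ consistent), and $Q_2(\dec{L}) = \emptyset$ (via $\overline{\fk{A}_{Q_2,l}}$; note $\fk{A}_{Q_2,l}$ only has the desired semantics on consistent trees, but consistency is already enforced by the first conjunct, so on such trees the complement automaton accepts exactly when $Q_2(\dec{L}) = \emptyset$). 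Hence $\ca{L}(\fk{A}) \neq \emptyset$ iff there is a consistent $\Gamma_{\insS,l}$-labeled tree $L$ with $Q_1(\dec{L}) \neq \emptyset$ and $Q_2(\dec{L}) = \emptyset$, i.e., $Q_1(\dec{L}) \not\subseteq Q_2(\dec{L})$. By Lemma~\ref{lem:consistent-labeled-trees}, such an $L$ exists iff $Q_1 \not\subseteq Q_2$. Taking contrapositives gives $Q_1 \subseteq Q_2 \iff \ca{L}(\fk{A}) = \emptyset$.

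\textbf{Main obstacle.} The delicate point is the interplay between consistency and the semantics of $\fk{A}_{Q_2,l}$: Lemma~\ref{lem:automaton-2} characterizes the behaviour of $\fk{A}_{Q,l}$ only on \emph{consistent} trees, so its complement $\overline{\fk{A}_{Q_2,l}}$ might (spuriously) accept inconsistent trees on which $Q_2$ would in fact be ``true''. This is harmless here only because we intersect with $\fk{C}_{\insS,l}$, which throws away all inconsistent trees before the $Q_2$-test matters; one should state this explicitly so the reader sees that the conjunction order and the restriction to consistent trees are exactly what makes the reduction sound. A secondary bookkeeping point is to double-check the complexity arithmetic — that exponentially-many-states automata survive a constant number of polynomial-time Boolean operations with only exponentially many states — but this is routine.
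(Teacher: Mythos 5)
Your proposal is correct and follows essentially the same route as the paper: it builds $\fk{A}$ as $(\fk{C}_{\insS,l}\cap\fk{A}_{Q_1,l})\cap\overline{\fk{A}_{Q_2,l}}$ with $l = \arity{\insS\cup\sch{\dep_1}}\cdot|q_1|$, and derives correctness and the complexity bounds from Lemmas~\ref{lem:automaton-1}, \ref{lem:automaton-2} and \ref{lem:consistent-labeled-trees}. Your explicit remark that the consistency automaton neutralizes spurious acceptances of $\overline{\fk{A}_{Q_2,l}}$ on inconsistent trees is a sound elaboration of a point the paper leaves implicit.
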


\begin{proofsk}
Let $Q_i = (\insS,\dep_i,q_i)$, for $i \in \{1,2\}$, and $l = (\arity{\insS \cup \sch{\dep_1}} \cdot |q_1|)$. Then
$\fk{A}$ is defined as
$(\fk{C}_{\insS,l}\ \cap\ \fk{A}_{Q_1,l})\ \cap\ \overline{\fk{A}_{Q_2,l}}$.
Since $\Gamma_{\insS,l}$ has double-exponential size, Lemmas~\ref{lem:automaton-1} and~\ref{lem:automaton-2} imply that $\fk{A}$ can be constructed in double-exponential time, while it has exponentially many states. Lemma~\ref{lem:consistent-labeled-trees} implies that $Q_1 \subseteq Q_2$ iff $\ca{L}(\fk{A}) = \emptyset$.
\end{proofsk}

Proposition~\ref{pro:cont-to-emptiness} implies that $\cont((\class{G},\class{BCQ}))$ is in 2\EXP, and Theorem~\ref{the:cont-guarded} follows.
Thus, there exists a double-exponential time algorithm for solving $\cont((\class{G},\class{CQ}))$. Interestingly, the runtime is double-exponential only in the size of the CQs and the maximum arity of the schema. This can be obtained by a providing a more refined complexity analysis of the construction of the 2WAPA $\fk{A}$ in Proposition~\ref{pro:cont-to-emptiness}.


\section{Combining Languages}\label{sec:different-languages}

In the previous two sections, we studied the containment problem relative to a language $\class{O}$, i.e., both OMQs fall in $\class{O}$. However, it is natural to consider the version of the problem where the involved OMQs fall in different languages. This is the goal of this section. Our analysis proceeds by considering the two cases where the left-hand side (LHS) query falls in a UCQ rewritable OMQ language, or it is guarded.

\subsection{The LHS Query is UCQ Rewritable}

As an immediate corollary of Theorem~\ref{the:ucq-rewritable-languages-decidability} we obtain the following result: $\cont((\class{C}_1,\class{CQ}),(\class{C}_2,\class{CQ}))$, for $\class{C}_1 \neq \class{C}_2$, $\class{C}_1 \in \{\class{L},\class{NR},\class{S}\}$ and $\class{C}_2 \in \{\class{L},\class{NR},\class{S},\class{G}\}$, is decidable.
By exploiting the algorithm underlying Theorem~\ref{the:ucq-rewritable-languages-decidability}, we establish optimal upper bounds for all the problems at hand with the only exception of $\cont((\class{S},\class{CQ}),(\class{NR},\class{CQ}))$.
For the latter, we obtain an \EXPSPACE~upper bound, by providing a similar analysis as for $\cont((\class{NR},\class{CQ}))$, while a \NEXP~lower bound is inherited from query evaluation by exploiting Proposition~\ref{pro:eval-to-cont}.
It is rather tedious, and not very interesting from a technical point of view, to go through all the containment problems in question\footnote{There are eighteen different cases obtained by considering all the possible pairs $(\class{O}_1,\class{O}_2)$ of OMQ languages, where $\class{O}_1 \neq \class{O}_2$ and $\class{O}_1$ is UCQ rewritable, and the two cases whether the arity of the schema is fixed or not.} and explain in details how the exact upper bounds are obtained; we leave this as an exercise to the interested reader. 


Regarding the matching lower bounds, in most of the cases they are inherited from query evaluation or its complement by exploiting Propositions~\ref{pro:eval-to-cont} and~\ref{pro:coeval-to-cocont}, respectively. There are, however, some exceptions:
\begin{itemize}
\item $\cont((\class{S},\class{CQ}),(\class{L},\class{CQ}))$ in the case of unbounded arity, where the problem is \text{\rm co}\NEXP-hard, even for sets of tgds that use only two constants. This is shown by a reduction from the standard tiling problem for the exponential grid $2^n \times 2^n$.

\item $\cont((\class{L},\class{CQ}),(\class{S},\class{CQ}))$ and $\cont((\class{S},\class{CQ}),(\class{L},\class{CQ}))$ in the case of bounded arity, where both problems are $\Pi_{2}^{P}$-hard even for constant-free tgds; implicit in~\cite{BiLW12}.
\end{itemize}

\subsection{The LHS Query is Guarded}

We proceed with the case where the LHS query is guarded, and we show the following result:

\begin{theorem}
\label{th:lhsguarded}
$\cont((\class{G},\class{CQ}),(\class{C},\class{CQ}))$ is $\C$-complete:
\begin{eqnarray*}
\C &=& \left\{
\begin{array}{ll}
\text{\rm {\em 2}\EXP}, & \class{C} \in \{\class{L},\class{S}\},\\
& \\
\text{\rm {\em 3}\EXP}, & \class{C} = \class{NR}.
\end{array} \right.
\end{eqnarray*}
The lower bounds hold even if the arity of the schema is fixed. Moreover, for $\class{C} = \class{L}$ (resp., $\class{C} \in \{\class{NR},\class{S}\}$) it holds even for tgds with one constant (resp., without constants).
\end{theorem}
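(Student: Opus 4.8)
The plan is to establish, for each choice of $\class{C}$, a matching upper and lower bound. For the upper bounds I would reuse the automata-based machinery from Section~\ref{sec:guardedness}. The key observation is that when the LHS query $Q_1 \in (\class{G},\class{CQ})$, Proposition~\ref{pro:tree-witness-property} still applies to bound the ``cyclic part'' $C$ of a witness database $D$ in terms of $Q_1$ only; hence non-containment of $Q_1$ in $Q_2$ is still witnessed by a $C$-tree $\insS$-database $D$ with $|\adom{C}| \leq \arity{\insS \cup \sch{\dep_1}} \cdot |q_1|$. As in Lemma~\ref{lem:consistent-labeled-trees}, such $D$ can be encoded as a consistent $\Gamma_{\insS,l}$-labeled tree. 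The automata $\fk{C}_{\insS,l}$ (Lemma~\ref{lem:automaton-1}) and $\fk{A}_{Q_1,l}$ (Lemma~\ref{lem:automaton-2}) carry over verbatim, since $Q_1$ is guarded. What changes is the automaton recognizing labeled trees whose decoded database satisfies $Q_2$: I would build, for each $\class{C} \in \{\class{L},\class{S},\class{NR}\}$, a 2WAPA $\fk{A}'_{Q_2,l}$ accepting exactly the consistent $\Gamma_{\insS,l}$-labeled trees $L$ with $Q_2(\dec{L}) \neq \emptyset$, and then set $\fk{A} = (\fk{C}_{\insS,l} \cap \fk{A}_{Q_1,l}) \cap \overline{\fk{A}'_{Q_2,l}}$, so that $Q_1 \subseteq Q_2$ iff $\ca{L}(\fk{A}) = \emptyset$, which is decidable in exponential time in the number of states of $\fk{A}$.

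The crucial quantity is therefore the size (number of states) of $\fk{A}'_{Q_2,l}$. For $\class{C} \in \{\class{L},\class{S}\}$, I would exploit UCQ rewritability: by Propositions~\ref{pro:function-linear} and~\ref{pro:function-sticky}, each disjunct of a UCQ rewriting of $Q_2$ has at most exponentially many atoms (polynomially many for linear), and a CQ of that size can be checked against $\dec{L}$ by a 2WAPA whose state count is exponential in $l$ and in $\|Q_2\|$ — one simply guesses a squid-free (ordinary tree) embedding and verifies it by navigating the labeled tree. Intersecting/complementing keeps the state count exponential, so the whole construction is double-exponential time with exponentially many states, yielding the $2\EXP$ upper bound. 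For $\class{C} = \class{NR}$ the rewriting of $Q_2$ may have exponentially many atoms per disjunct but, worse, the number of disjuncts and the atom count blow up so that a CQ to be matched has size singly exponential in $\|Q_2\|$; checking such a CQ against $\dec{L}$ needs a 2WAPA with exponentially-many-states-in-the-CQ-size, i.e. doubly exponentially many states overall, and emptiness then costs another exponential — giving the $3\EXP$ bound. Concretely, I would either invoke the exponential UCQ-rewriting bound implicit in~\cite{GoOP14} together with Proposition~\ref{pro:function-nr}, or, to keep things self-contained, build the chase of $\dec{L}$ up to the bounded depth dictated by non-recursiveness directly inside the automaton.

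For the lower bounds, the $2\EXP$-hardness for $\class{C} \in \{\class{L},\class{S}\}$ is already essentially present: $\cont((\class{G},\class{CQ}))$ is $2\EXP$-hard (Theorem~\ref{the:cont-guarded}), and one needs to replace the guarded RHS by a linear, resp.\ sticky, one. Here I would refine the construction of~\cite{CV97} (and of~\cite{BLW16}): the alternating-Turing-machine / $\ca{ELI}$ reduction producing the hard instance can be massaged so that the RHS ontology only uses inclusion-dependency-style rules (which are linear, hence also sticky), at the cost of one constant for $\class{C}=\class{L}$ and none for $\class{C}=\class{S}$, pushing the actual reasoning work onto the guarded LHS. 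For $\class{C} = \class{NR}$, the $3\EXP$-hardness requires a genuinely new reduction: I would encode an alternating exponential-space Turing machine (equivalently, a doubly-exponential-time computation) by using the guarded LHS to generate a doubly-exponential structure (a counter/tree of guarded tgds, as in the $2\EXP$-hardness of guarded query answering) and the non-recursive RHS to enforce, via its exponentially long but acyclic derivation chains, a condition that is itself hard for the next exponential level — essentially composing the $\NEXP$-hardness of $\eval(\nr,\class{CQ})$ (Proposition~\ref{prop:eval-nr}) with the exponential jump provided by the guarded side.

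The main obstacle I expect is twofold: on the upper-bound side, carefully accounting the state count of $\fk{A}'_{Q_2,l}$ for the non-recursive case so that it comes out at most doubly exponential (and not worse), which requires a precise handling of how the exponential rewriting interacts with the squid/tree embeddings; and on the lower-bound side, the $3\EXP$-hardness reduction for $\class{C}=\class{NR}$, where one must simultaneously keep the LHS guarded and the RHS non-recursive while wiring together two exponential blow-ups and still have the whole thing computable in polynomial time and work over a schema of fixed arity.
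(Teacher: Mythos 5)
Your overall architecture for the upper bounds (encode $C$-tree witnesses as labeled trees, reuse $\fk{C}_{\insS,l}$ and $\fk{A}_{Q_1,l}$, and intersect with the complement of an automaton for the rewritten right-hand side) is the paper's architecture, and it does yield the 3\EXP~bound for $\class{C}=\class{NR}$; for $\class{C}=\class{L}$ no new automaton is even needed, since linear tgds are guarded and 2\EXP~membership is immediate from Theorem~\ref{the:cont-guarded}. The gap is in the sticky case. A disjunct $q'$ of the UCQ rewriting of a sticky $Q_2$ has exponentially many atoms and variables, and an automaton that ``guesses an embedding'' of such a CQ must remember the assignment of its join variables; counted naively this gives doubly exponentially many states and hence only a 3\EXP~bound, so your assertion that the state count is exponential in $||Q_2||$ is precisely the claim that needs proof and is false for general exponential-size CQs. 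The paper closes this with a dedicated automaton (Lemma~\ref{lem:automaton-for-cqs}, Proposition~\ref{pro:guarded-into-cq}) whose number of states is exponential only in $|\mi{var}_{\geq 2}(q')|$ and polynomial in $|\mi{var}_{=1}(q')|$, combined with the key property of $\mathsf{XRewrite}$ on sticky sets that $\mi{var}_{\geq 2}(q')$ is always contained in the variables of the original CQ of $Q_2$ (hence small); only this combination brings the sticky case down to 2\EXP. Your proposal never isolates or uses this property, so as written it establishes only 3\EXP~membership for $\class{S}$.

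On the lower bounds, your plan for $\class{L}$ and $\class{S}$ (massage the hard instance so the right-hand side becomes linear resp.\ sticky, at the cost of one constant resp.\ none) is in the spirit of the paper, which first shows 2\EXP-hardness of containment of a guarded OMQ in a UCQ by refining~\cite{CV97} — the LHS Datalog program is made guarded by padding predicate arities with constantly many extra positions, so the arity stays fixed — and then converts the UCQ into a linear or sticky OMQ via the gadget underlying Proposition~\ref{pro:from-ucq-to-cq}. For $\class{C}=\class{NR}$, however, no ``genuinely new reduction'' is needed, and the one you sketch is not sound as stated: \cite{CV97} already proves 3\EXP-hardness of containment of Datalog in non-recursive Datalog with fixed arity and constant-free rules, and the same arity-padding trick guards the LHS, which is exactly how the paper obtains 3\EXP-hardness. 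Your composed reduction also miscounts: an alternating exponential-space machine characterizes 2\EXP, not 3\EXP, and ``composing'' the \NEXP-hardness of $\eval(\nr,\class{CQ})$ with a guarded exponential blow-up does not by itself yield a 3\EXP-hard problem — that step would require a concrete construction that the proposal does not supply.
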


\noindent \paragraph{Upper bounds.}
The 2\EXP~membership when $\class{C} = \class{L}$ is an immediate corollary of Theorem~\ref{the:cont-guarded}. This is not true when $\class{C} \in \{\class{NR},\class{S}\}$ since the right-hand side query is not guarded. But in this case, since $(\class{NR},\class{CQ})$ and $(\class{S},\class{CQ})$ are UCQ rewritable, one can rewrite the right-hand side query as a UCQ, and then apply the machinery developed in Section~\ref{sec:guardedness} for solving $\cont((\class{G},\class{CQ}))$.
More precisely, given OMQs $Q_1 \in (\class{G},\class{CQ})$ and $Q_2 \in (\class{C},\class{CQ})$, where $\class{C} \in \{\class{NR},\class{S}\}$, $Q_1 \subseteq Q_2$ iff $Q_1 \subseteq q$, where $q$ is a UCQ rewriting of $Q_2$. Thus, an immediate decision procedure, which exploits the algorithm $\mathsf{XRewrite}$, is the following:
\begin{enumerate}
\item Let $q = \mathsf{XRewrite}(Q_2)$;

\item For each $q' \in q$: if $Q_1 \subseteq q'$, then proceed; otherwise, reject; and

\item Accept.
\end{enumerate}
The above procedure runs in triple-exponential time. The first step is feasible in double-exponential time~\cite{GoOP14}. Now, for a single CQ $q' \in q$ (which is a guarded OMQ with an empty set of tgds) the check whether $Q_1 \subseteq q'$ can be done by using the machinery developed in Section~\ref{sec:guardedness}, which reduces our problem to checking whether the language of a 2WAPA $\fk{A}$ is empty. However, it should not be forgotten that $q'$ is of exponential size, and thus,
the automaton $\fk{A}$ has double-exponentially many states. This  in turn implies that checking whether $\ca{L}(\fk{A}) = \emptyset$ is in 3\EXP, as claimed.

Although the above algorithm establishes an optimal upper bound for non-recursive OMQs, a more refined analysis is needed for sticky OMQs. In fact, we need a more refined complexity analysis for the problem $\cont((\class{G},\class{CQ}),\class{UCQ})$, that is, to decide whether a guarded OMQ is contained in a UCQ. To this end, we provide an automata construction different from the one employed in Section~\ref{sec:guardedness}, which allows us to establish a refined complexity upper bound for the problem in question.
Consider a $(\class{G},\class{CQ})$ query $Q$, and a UCQ $q = q_1 \vee \cdots \vee q_n$. As usual, we write $||Q||$ and $||q_i||$ for the number of symbols that occur in $Q$ and $q_i$, respectively, and we write $\mi{var}_{\geq 2}(q_i)$ for the set of variables that appear in more than one atom of $q_i$. By exploiting our new automata-based procedure, we show that the problem of checking if $Q \subseteq q$ is feasible in double-exponential time in $(||Q|| + \max_{1 \leq i \leq n} \{|\mi{var}_{\geq 2}(q_i)|\})$, exponential time in $\max_{1 \leq i \leq n} \{||q_i||\}$, and polynomial time in $n$.

This result allows us to show that the above procedure establishes 2\EXP-membership when the right-hand side OMQ is sticky. But first we need to recall the following
key properties of the UCQ rewriting $q = \mathsf{XRewrite}(Q_2)$, constructed during the first step of the algorithm:
\begin{enumerate}
\item $q$ consists of double-exponentially many CQs,

\item each CQ of $q$ is of exponential size, and

\item for each $q' \in q$, $\mi{var}_{\geq 2}(q')$ is a subset of the variables of the original CQ that appears in $Q_2$.
\end{enumerate}
By combining these key properties with the complexity analysis performed above, it is now straightforward to show that $\cont((\class{G},\class{CQ}),(\class{S},\class{CQ}))$ is in 2\EXP.

\medskip
\noindent \paragraph{Lower Bounds.} We establish matching lower bounds by refining techniques from~\cite{CV97}, where it is shown that containment of Datalog in UCQ is 2\EXP-complete, while containment of Datalog in non-recursive Datalog is 3\EXP-complete; the lower bounds hold for fixed-arity predicates, and constant-free rules.
Interestingly, the LHS query can be transformed into a Datalog query such that each rule has a body-atom that contains all the variables, i.e., is guarded. This is achieved by increasing the arity of some predicates in order to have enough positions for all the body-variables. However, for each rule, the number of unguarded variables that we need to guard is constant, and thus, the arity of the schema remains constant.
We conclude that $\cont((\class{G},\class{CQ}),(\class{NR},\class{CQ}))$ is 3\EXP-hard.
Moreover, containment of guarded OMQs in UCQs is 2\EXP-hard, which in turn allows us to show, by exploiting the construction underlying Proposition~\ref{pro:from-ucq-to-cq}, that $\cont((\class{G},\class{CQ}),(\class{L},\class{CQ}))$ is 2\EXP-hard, even if the set of linear tgds uses only one constant, while $\cont((\class{G},\class{CQ}),(\class{S},\class{CQ}))$ is 2\EXP-hard, even for tgds without constants. 
\section{Applications}\label{sec:applications}

Interestingly, our results on $\cont((\class{G},\class{CQ}))$ can be applied to other important static analysis tasks, in particular, distribution over components and UCQ rewritability. Each one of those tasks is considered in the following two sections.

\subsection{Distribution Over Components}

The notion of distribution over components has been introduced in~\cite{AKNZ14}, and it states that the answer to a query can be computed by parallelizing it over the (maximally connected) components of the input database. But let us first make precise what a component is.
A set of atoms $A$ is {\em connected} if for all $c,d \in \adom{A}$,
there exists a sequence $\alpha_1,\ldots,\alpha_n$ of atoms in $A$ such that $c \in \adom{\alpha_1}$, $d \in \adom{\alpha_n}$, and $\adom{\alpha_i} \cap \adom{\alpha_{i+1}} \neq \emptyset$, for each $i \in \{1, \ldots, n-1\}$. We call $B \subseteq A$ a {\em component} of $A$ if (i) $B$ is connected, and (ii) for every $\alpha \in A \setminus B$, $B \cup \{\alpha\}$ is not connected.\footnote{For technical clarity, the notion of component is defined only for sets of atoms that do not contain $0$-ary atoms.} Let $\co(A)$ be the set of components of $A$. We are now ready to introduce the notion of distribution over components.
Consider an OMQ $Q = (\insS,\dep,q) \in (\class{TGD},\class{CQ})$. We say that $Q$ {\em distributes over components} if $Q(D) = Q(D_1) \cup \cdots \cup Q(D_n)$, where $\co(D) = \{D_1,\ldots,D_n\}$, for every $\insS$-database $D$.
In this case, $Q(D)$ can be computed without any communication over a network using a distribution where every computing node is assigned some of the components of the database, and every component is assigned to at least one computing node. In other words, $Q$ can be evaluated in a distributed and coordination-free manner; for more details on coordination-free evaluation see~\cite{AKNZ14,AKNZ15,AmNB13}. Therefore, it would be quite beneficial if we can decide whether an OMQ distributes over components, and thus, we obtain the following interesting static analysis task:

\begin{center}
\fbox{\begin{tabular}{ll}
{\small PROBLEM} : & $\dist(\class{C},\class{CQ})$
\\
{\small INPUT} : & An OMQ $Q \in (\class{C},\class{CQ})$.
\\
{\small QUESTION} : &  Does $Q$ distributes over components?
\end{tabular}}
\end{center}

The above problem has been studied in~\cite{BP16}, where tight complexity bounds for $(\class{L},\class{CQ})$ and $(\class{S},\class{CQ})$ have been established. However, its exact complexity for guarded OMQs has been left open. Our results on containment for guarded OMQs allow us to close this problem. But first we need to recall a key result that semantically characterizes distribution over components. An OMQ $Q$ with data schema $\insS$ is {\em unsatisfiable} if there is no $\insS$-database $D$ such that $Q(D) \neq \emptyset$. Moreover, for a CQ $q$, we write $\co(q)$ for its components. The next result has been shown in~\cite{BP16}:

\begin{proposition}\label{pro:distribution-over-components}
Let $Q = (\insS,\dep,q({\bar x})) \in (\class{G},\class{CQ})$. The following are equivalent:
\begin{enumerate}
\item $Q$ distributes over components.

\item $Q$ is unsatisfiable or there exists $\hat{q}(\bar x) \in \co(q)$ such that $(\insS,\dep,\hat{q}(\bar x)) \subseteq Q$.
\end{enumerate}
\end{proposition}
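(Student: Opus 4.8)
The plan is to prove the two directions separately, exploiting the basic structure of the chase for guarded tgds and the containment framework already set up. Throughout, write $Q = (\insS,\dep,q(\bar x))$ with $\co(q) = \{\hat q_1(\bar x),\ldots,\hat q_k(\bar x)\}$; note that since each $\hat q_j$ shares all the free variables $\bar x$ (this is a mild technical point: if some free variable does not occur in a given component, one treats it as an isolated ``dummy'' atom, or one restricts attention to the components that do mention the relevant free variables), and that $q(I) = \bigcap_{j} \hat q_j(I)$ for every instance $I$, because a homomorphism for $q$ is exactly a compatible family of homomorphisms for the $\hat q_j$, and since distinct components share no variables except those in $\bar x$, compatibility is automatic. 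Also recall the fundamental identity $Q(D) = q(\chase{D}{\dep})$.

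\textbf{Direction $(2) \Rightarrow (1)$.} If $Q$ is unsatisfiable, then $Q(D) = \emptyset$ for every $\insS$-database $D$, and also $Q(D_i) = \emptyset$ for each component $D_i$, so distribution over components holds trivially. So assume instead that $(\insS,\dep,\hat q_j(\bar x)) \subseteq Q$ for some $j$. The inclusion ``$\supseteq$'' in the definition of distribution, i.e.\ $Q(D) \supseteq \bigcup_i Q(D_i)$, is the easy half and holds for every OMQ: each $D_i \subseteq D$, so $\chase{D_i}{\dep}$ maps homomorphically into $\chase{D}{\dep}$ by monotonicity of the chase, hence $q(\chase{D_i}{\dep}) \subseteq q(\chase{D}{\dep})$. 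For the nontrivial inclusion ``$\subseteq$'', take $\bar c \in Q(D)$. Then $\bar c \in \hat q_j(\chase{D}{\dep}) \supseteq q(\chase{D}{\dep})$, which by definition means $\bar c \in (\insS,\dep,\hat q_j)(D)$. The key structural fact for guarded tgds is that the chase ``respects components'': because each guarded tgd fires on a set of atoms all contained in a single atom's arguments, hence all within one component, we have $\chase{D}{\dep} = \bigcup_i \chase{D_i}{\dep}$ up to renaming of nulls, and a connected subquery (like $\hat q_j$, which is connected by definition of $\co(q)$) maps into this union iff it maps into one of the $\chase{D_i}{\dep}$. Hence $\bar c \in \hat q_j(\chase{D_i}{\dep}) = (\insS,\dep,\hat q_j)(D_i)$ for some $i$; since $(\insS,\dep,\hat q_j) \subseteq Q$, we get $\bar c \in Q(D_i) \subseteq \bigcup_i Q(D_i)$, as required.

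\textbf{Direction $(1) \Rightarrow (2)$.} Suppose $Q$ distributes over components and $Q$ is satisfiable; we must produce a component $\hat q_j$ with $(\insS,\dep,\hat q_j) \subseteq Q$. The idea is to build, from each $\hat q_j$, a ``canonical'' database witnessing membership, and use distribution to collapse the disjunction of components into a single component. Concretely, freeze $\hat q_j$: let $D_j$ be the database obtained by turning the variables of $\hat q_j$ into fresh constants, and let $\bar c_j$ be the tuple of constants corresponding to $\bar x$. Since $\hat q_j$ is connected, $D_j$ is a single component. Now consider the disjoint union $D = D_1 \uplus \cdots \uplus D_k$ (on disjoint constant sets); its components are exactly $D_1,\ldots,D_k$. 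Observe that $\bar c_j \in Q(D_j')$ can fail for $D_j'$ alone — indeed $(\insS,\dep,\hat q_j)$ is not in general contained in $Q$ — so this straightforward approach needs the right database. The cleaner route: suppose for contradiction that for \emph{every} $j$, $(\insS,\dep,\hat q_j) \not\subseteq Q$, i.e.\ there is an $\insS$-database $E_j$ and a tuple $\bar c_j$ with $\bar c_j \in \hat q_j(\chase{E_j}{\dep})$ but $\bar c_j \notin q(\chase{E_j}{\dep})$. Take $E$ to be a disjoint union of (suitably renamed copies of) the $E_j$ together with a satisfying database for $Q$; arrange the renaming so that one can recombine the tuples $\bar c_j$ into a single tuple $\bar c$ realizing all of $\hat q_1,\ldots,\hat q_k$ simultaneously — this is where one exploits that $q(\chase{E}{\dep}) = \bigcap_j \hat q_j(\chase{E}{\dep})$. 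One shows $\bar c \in Q(E)$ (all components satisfied) but $\bar c \notin Q(E_i)$ for each individual component $E_i$ (each component kills some $\hat q_j$, hence kills $q$), contradicting distribution.

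\textbf{Expected main obstacle.} The delicate part is the bookkeeping in $(1)\Rightarrow(2)$: making the disjoint unions and the constant-renamings line up so that the free-variable tuples from the different ``bad'' witnesses $E_j$ can be amalgamated into \emph{one} tuple $\bar c$ that simultaneously satisfies every component $\hat q_j$ over $\chase{E}{\dep}$ while no single component of $E$ supports $q$. This is a standard but fiddly ``fooling database'' construction; one has to be careful that adding extra components (to satisfy other $\hat q_j$'s) does not accidentally make $q$ true on some individual component, which is where the guardedness-based fact $\chase{E}{\dep} = \bigcup_i \chase{E_i}{\dep}$ and the connectedness of each $\hat q_j$ are used again. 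Everything else — monotonicity of the chase, the component-decomposition of the chase for guarded tgds, and the identity $q(I) = \bigcap_j \hat q_j(I)$ — is routine.
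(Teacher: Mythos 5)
The paper itself does not prove this proposition; it is imported verbatim from~\cite{BP16}, so your attempt can only be judged on its own merits. The overall architecture you propose (monotonicity for the trivial inclusion, the component-wise decomposition of the constant-free guarded chase plus connectedness of $\hat q$ for $(2)\Rightarrow(1)$, and a disjoint-union ``fooling database'' for $(1)\Rightarrow(2)$) is the right kind of argument. However, the issue you dismiss as a ``mild technical point'' --- answer variables of $q$ spread over several components --- is precisely where the real work lies, and as written both directions have a gap there. In $(2)\Rightarrow(1)$, the step ``$\bar c \in \hat q_j(\chase{D_i}{\dep}) = (\insS,\dep,\hat q_j)(D_i)$'' is unjustified when some position of $\bar c$ carries a constant from a component of $D$ other than $D_i$: that constant does not occur in $\adom{\chase{D_i}{\dep}}$, so $\bar c$ is not an answer to the component query over $D_i$ and the assumed containment cannot be applied. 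Worse, under your loose reading of $\hat q(\bar x)$ (missing answer variables left unconstrained, ranging over the active domain) the implication is simply false: take $\insS=\{A,B\}$, $\dep=\{A(u)\ra B(u)\}$, $q(x,y)=A(x)\wedge B(y)$; then $(\insS,\dep,A(x))\subseteq Q$ and $Q$ is satisfiable, yet over $D=\{A(a),B(b)\}$ the answer $(a,b)$ of $Q$ is not an answer over either component, so $Q$ does not distribute. Hence item (2) must be read as requiring $\hat q$ to contain \emph{all} answer variables; with that reading your $(2)\Rightarrow(1)$ argument does go through.

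With that reading, however, $(1)\Rightarrow(2)$ needs a case you never treat: if $\bar x$ is spread over two or more components of $q$, then no component qualifies in item (2), so you must show directly that a satisfiable such $Q$ never distributes (e.g., two renamed disjoint copies of a satisfying database yield an answer whose constants lie in different components, which no single component can reproduce). Your fooling-database construction cannot produce this, since it presupposes a non-containment witness $E_j$ for \emph{every} component. In the remaining case (all answer variables in one component, or $q$ Boolean) the disjoint-union argument does work, and more simply than you suggest: monotonicity gives $Q(F)\subseteq Q(E_j)$ for every component $F$ of the copy of $E_j$, and renaming apart handles the rest, so no auxiliary ``satisfying database'' is needed. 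Finally, both your key structural fact that the guarded chase respects components and the renaming-apart step rely on constant-free tgds (the assumption the paper makes in its technical sections): with constants, the guarded tgd $S(x,y)\wedge R(c,d)\ra P(x,y)$ lets the chase combine two components and the characterization itself breaks. So the skeleton is salvageable, but the answer-variable analysis and the constant-freeness hypothesis --- not the bookkeeping you flag as the main obstacle --- are the actual crux, and they are missing.
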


Checking unsatisfiability can be easily reduced to containment. Thus, the above result, together with Theorem~\ref{the:cont-guarded}, implies that $\dist(\class{G},\class{CQ})$ is in 2\EXP, while a matching lower bound is implicit in~\cite{BP16}. Then:

\begin{theorem}
$\dist(\class{G},\class{CQ})$ is {\em 2}\EXP-complete.
\end{theorem}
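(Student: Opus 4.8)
The plan is to reduce $\dist(\class{G},\class{CQ})$ to a constant number of instances of $\cont((\class{G},\class{CQ}))$, thus inheriting the $2$\EXP~upper bound from Theorem~\ref{the:cont-guarded}, and then appeal to the cited matching lower bound from~\cite{BP16}. The key leverage is Proposition~\ref{pro:distribution-over-components}, which says that a guarded OMQ $Q = (\insS,\dep,q(\bar x))$ distributes over components iff either $Q$ is unsatisfiable or some component $\hat q(\bar x) \in \co(q)$ satisfies $(\insS,\dep,\hat q(\bar x)) \subseteq Q$. The number of components of a CQ $q$ is at most $|q|$, which is linear in $\|Q\|$, so condition~(2) of Proposition~\ref{pro:distribution-over-components} amounts to a polynomial number of containment checks, each between two OMQs in $(\class{G},\class{CQ})$ (note that $(\insS,\dep,\hat q(\bar x))$ is still a guarded OMQ, since $\dep$ is unchanged and $\hat q$ is a sub-CQ of $q$). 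Each such check is in $2$\EXP~by Theorem~\ref{the:cont-guarded}, and a polynomial-length conjunction/disjunction of $2$\EXP~checks is still in $2$\EXP.

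First I would handle the unsatisfiability clause. A guarded OMQ $Q = (\insS,\dep,q(\bar x))$ is unsatisfiable iff $Q(D) = \emptyset$ for every $\insS$-database $D$; one should express this as a containment so it fits the uniform machinery. The cleanest route is to observe that $Q$ is unsatisfiable iff $Q \subseteq Q_\bot$, where $Q_\bot$ is a fixed unsatisfiable OMQ with the same data schema $\insS$ --- e.g.\ $(\insS,\emptyset,\exists x\, P(x))$ with $P \notin \insS$, which evaluates to $\emptyset$ on every $\insS$-database (this is exactly the trick already used in the reduction preceding Proposition~\ref{pro:coeval-to-cocont}). Since $Q_\bot \in \class{O}_\emptyset \subseteq (\class{G},\class{CQ})$, this is again an instance of $\cont((\class{G},\class{CQ}))$ and hence in $2$\EXP. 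One small subtlety: if $\bar x$ is non-empty, ``$Q(D)=\emptyset$ for all $D$'' and ``$Q \subseteq Q_\bot$'' still coincide because $Q_\bot(D)=\emptyset$ always, so no answer tuple of $Q$ can be contained in it unless there are none.

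Putting this together, the algorithm for $\dist(\class{G},\class{CQ})$ on input $Q = (\insS,\dep,q(\bar x))$ is: (i) test whether $Q \subseteq Q_\bot$ (unsatisfiability); if yes, accept; (ii) otherwise, for each $\hat q(\bar x) \in \co(q)$ --- there are at most $|q|$ of them, each computable in polynomial time --- test whether $(\insS,\dep,\hat q(\bar x)) \subseteq Q$; if some test succeeds, accept; (iii) otherwise reject. Correctness is immediate from Proposition~\ref{pro:distribution-over-components}. The running time is bounded by $(|q|+1)$ calls to a $2$\EXP~containment procedure plus polynomial overhead, so $\dist(\class{G},\class{CQ})$ is in $2$\EXP. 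For the lower bound, I would simply invoke~\cite{BP16}, where $2$\EXP-hardness of $\dist(\class{G},\class{CQ})$ is stated as implicit; combining the two bounds yields $2$\EXP-completeness.

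I do not expect a serious obstacle here: the conceptual content is entirely in Proposition~\ref{pro:distribution-over-components} and Theorem~\ref{the:cont-guarded}, both of which may be assumed. The only points requiring minor care are (a) checking that the ``component'' queries $(\insS,\dep,\hat q)$ genuinely lie in $(\class{G},\class{CQ})$ --- which holds because guardedness is a property of $\dep$ alone and taking a connected sub-CQ of the (U)CQ part does not affect it --- and (b) phrasing unsatisfiability as a containment instance so that the whole problem reduces uniformly to $\cont((\class{G},\class{CQ}))$. If one wanted to be fully careful about the CQ-vs-UCQ distinction when $q$ has several components, Proposition~\ref{pro:from-ucq-to-cq} is available, but it is not actually needed since each $\hat q$ is a single CQ.
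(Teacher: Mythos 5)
Your proposal is correct and follows essentially the same route as the paper: apply the semantic characterization of Proposition~\ref{pro:distribution-over-components}, reduce the unsatisfiability test to a containment check against a trivially empty OMQ, and invoke Theorem~\ref{the:cont-guarded} for the polynomially many containment calls, with the lower bound taken from~\cite{BP16}. The only difference is that you spell out details (the $Q_\bot$ construction and the arity remark) that the paper leaves implicit.
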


\subsection{Deciding UCQ Rewritability}

Query rewriting is a well-studied method for evaluating OMQs using standard database technology. The key idea is the following: given an OMQ
$Q = (\ins{S},\Sigma,q(\bar x))$, combine $\Sigma$ and $q$ into a new query $q_\dep(\bar x)$, the so-called rewriting, which can then be evaluated over $D$ yielding the same answer as $Q$ over $D$, for {\em every} $\insS$-database $D$.
For this approach to be realistic, though, it is essential that the rewriting is expressed in a language that can be handled by standard database systems. The typical language that is considered in this setting is first-order (FO) queries~\cite{CDLL*07}. Notice, however, that due to Rossman's Theorem~\cite{Rossman08}, and the fact that OMQs are closed under homomorphisms, FO and UCQ rewritability coincide.
Recall that some OMQ languages are UCQ rewritable, such as the ones based on linear, non-recursive and sticky sets of tgds, while others are not, e.g., guarded OMQs. For those languages $\class{O}$ that are not UCQ rewritable, it is important to be able to check whether a query $Q \in \class{O}$ can be rewritten as a UCQ, in which case we say that it is UCQ rewritable. This gives rise to the following fundamental static analysis task for an OMQ language $(\class{C},\class{CQ})$, where $\class{C} \subseteq \class{TGD}$:

\begin{center}
\fbox{\begin{tabular}{ll}
{\small PROBLEM} : & $\rew(\class{C},\class{CQ})$
\\
{\small INPUT} : & An OMQ $Q \in (\class{C},\class{CQ})$.
\\
{\small QUESTION} : &  Is it the case that $Q$ is UCQ rewritable?
\end{tabular}}
\end{center}

Bienvenu et al.~have recently carried out an in-depth study of the above problem for OMQ languages based on central Horn-DLs~\cite{BLW16}. One of their main results is that the above problem for the OMQ language based on $\ca{ELHI}$, one of the most expressive members of the $\ca{EL}$-family of DLs, is 2\EXP-complete.
Interestingly, by adapting the tree automata techniques developed in Section~\ref{sec:guardedness}, we can generalize the above result: deciding UCQ rewritability for the OMQ language based on guarded tgds over unary and binary relations is in 2\EXP. Let $\class{G}_2$ be the class of (finite) sets of guarded tgds over unary and binary relations. Then:

\begin{theorem}\label{theo:fo-rew}
$\rew(\guarded_2,\class{CQ})$ is {\em 2}\EXP-complete.
\end{theorem}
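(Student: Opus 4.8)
I would establish the upper bound ($\rew(\guarded_2,\class{CQ}) \in 2\EXP$) by reducing UCQ rewritability to the emptiness problem for 2WAPA, reusing as much of the machinery from Section~\ref{sec:guardedness} as possible, and inherit the lower bound from~\cite{BLW16} via the observation that every $\ca{ELHI}$ knowledge base corresponds to a set of guarded tgds over unary and binary predicates. The lower bound is immediate, so the real work is the $2\EXP$ membership.

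\textbf{Key steps for the upper bound.} First, I would recall a semantic characterization of UCQ rewritability in terms of a bound on the "depth" to which the chase must be explored: an OMQ $Q = (\insS,\dep,q)$ is UCQ rewritable iff there is an integer $k$ such that, for every $\insS$-database $D$ and every tuple $\bar c \in Q(D)$, already the initial $k$ levels of $\chase{D}{\dep}$ witness $\bar c \in q(\cdot)$ --- equivalently, $q$ maps into the restriction of the chase to terms of chase-depth at most $k$. One then shows that if such a $k$ exists at all, a $k$ that is at most double-exponential in $||Q||$ suffices; this is the standard "bounded unravelling" argument, and for $\class{G}_2$ the arity is constant so the relevant tree-width is constant, which keeps the alphabet $\Gamma_{\insS,l}$ of Section~\ref{sec:guardedness} of size at most exponential in $||Q||$. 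Second, I would build, for each candidate depth bound --- or rather for the single worst-case bound $k$ --- a 2WAPA $\fk{D}_{Q}$ over $\Gamma_{\insS,l}$-labeled trees that accepts exactly those (consistent) labeled trees $L$ encoding a $C$-tree database $D$ witnessing non-rewritability, i.e. such that $q$ maps into $\chase{D}{\dep}$ but does \emph{not} map into the depth-$k$ truncation of $\chase{D}{\dep}$. This is done by running, in parallel, a variant of $\fk{A}_{Q,l}$ from Lemma~\ref{lem:automaton-2} (checking $Q(\dec{L}) \neq \emptyset$, i.e. $q$ does map somewhere into the chase) intersected with the complement of a second automaton $\fk{A}^{(k)}_{Q,l}$ that checks whether $q$ already maps into the depth-$k$-truncated chase. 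The truncated-chase automaton is a minor modification of $\fk{A}_{Q,l}$: it behaves identically except that it refuses to descend more than $k$ levels into the (virtual) chase when trying to match query atoms; since $k$ is double-exponential it must be encoded succinctly --- a binary counter of exponentially many bits in the state --- so $\fk{A}^{(k)}_{Q,l}$ still has only exponentially many states. Finally, $Q$ is UCQ rewritable iff $\ca{L}(\fk{C}_{\insS,l} \cap \fk{A}_{Q,l} \cap \overline{\fk{A}^{(k)}_{Q,l}}) = \emptyset$, and since this automaton has exponentially many states over a double-exponential alphabet and is constructible in double-exponential time, emptiness is decidable in $2\EXP$ by~\cite{CGKV88}.

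\textbf{The main obstacle.} The delicate point is the correctness of the bounded-depth characterization together with the double-exponential bound on $k$: one must argue that \emph{if} $Q$ is UCQ rewritable at all then the witness depth is at most double-exponential, and \emph{if} it is not rewritable then there is a \emph{tree-shaped} witness database (a $C$-tree, so that the automata encoding applies) on which the required chase-depth exceeds any fixed bound. The first half is a pumping/compactness argument on the structure of UCQ rewritings (each disjunct has boundedly many atoms, each "produced" by a bounded chase derivation); the second half requires that non-rewritability --- which a priori might only manifest over arbitrary databases --- can be pushed onto tree-like databases, using the guarded unravelling / tree-witness property already established in Proposition~\ref{pro:tree-witness-property} for the containment setting, adapted so that the unravelled witness still separates "$q$ maps into the chase" from "$q$ maps into a shallow part of the chase". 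Restricting to $\class{G}_2$ (arity two) is what makes the bookkeeping --- tree-width, alphabet size, squid decompositions --- manageable and keeps all the size bounds at the exponential/double-exponential levels needed; extending verbatim to arbitrary arity would blow up the alphabet and is presumably why the theorem is stated only for $\class{G}_2$. Once these two structural facts are in place, wiring together the automata and invoking the emptiness bound is routine.
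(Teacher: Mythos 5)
There is a genuine gap, and it sits exactly at the point you yourself flag as ``the main obstacle.'' Your plan hinges on the claim that \emph{if} a depth bound $k$ exists at all, then a double-exponential $k$ suffices, which you dismiss as ``the standard bounded unravelling argument.'' This is not standard; it is the crux of the theorem. The paper is structured precisely so as to never have to exhibit such an explicit bound: it characterizes UCQ rewritability by a boundedness property over $C$-tree databases of bounded branching degree (Proposition~\ref{prop:fosemantic}: for some $k,m$, $Q(D)\neq\emptyset$ implies $Q(D_{\leq k})\neq\emptyset$ or $Q(D_{>0})\neq\emptyset$ — note the truncation is of the witness \emph{database}, which the automaton can see as input-tree depth, not of the chase), and then reduces the existence of such a $k$ to the \emph{finiteness} of a tree language $\ca{L}_Q$ accepted by a 2WAPA with exponentially many states (Proposition~\ref{pro:ucqrew-to-infinity}), decided via the infinity problem for tree automata in the style of the boundedness results for monadic Datalog \cite{CGKV88,Vardi92}. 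The alternative route — proving an explicit bound on $k$ when rewritability holds — is exactly the pumping argument of \cite{BLW16} for $\ca{ELHI}$; if you want to follow your plan you must actually carry out such a pumping argument for $(\guarded_2,\class{CQ})$, and you give no argument at all (tree-width and alphabet size, which you invoke, are irrelevant to bounding $k$).

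Even granting the bound on $k$, your complexity accounting breaks. An automaton whose state carries ``a binary counter of exponentially many bits'' has double-exponentially many states (one per counter value), not exponentially many, so the emptiness test via \cite{CGKV88} lands you in 3\EXP, not 2\EXP; you would need either a genuinely succinct, alternation-based counter (not sketched, and delicate to interleave with the two-way navigation that $\fk{A}_{Q,l}$ from Lemma~\ref{lem:automaton-2} performs to evaluate $Q$ on $\dec{L}$) or, again, the paper's device of trading the counter for a finiteness test. Two further ingredients your sketch omits but would need in either route: the characterization only works after fixing a branching-degree bound $m$ (extracted from the number of automaton states, and made invariant via \emph{lean} tree decompositions, which is where the restriction to unary/binary relations is really used), and after reducing to \emph{connected} CQs, which in turn relies on the containment results of Section~\ref{sec:guardedness}. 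The disjunct $Q(D_{>0})\neq\emptyset$ in the characterization, which handles matches avoiding the core, has no counterpart in your chase-depth formulation and would have to be accounted for as well.
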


Since the lower bound is inherited from~\cite{BLW16}, we concentrate  on the upper bound. As in Section \ref{sec:guardedness}, we can focus on BCQs, i.e., it suffices to show that $\rew(\class{G}_2,\class{BCQ})$ is in 2\EXP. Our proof proceeds in two steps:
\begin{enumerate}
\item We semantically characterize UCQ rewritability for queries in $(\class{G}_2,\class{CQ})$ in terms of a certain {\em boundedness} property for the set of $C$-trees defined in Section~\ref{sec:guardedness}.

\item We extend the techniques developed in Section~\ref{sec:guardedness} and construct in double-exponential time a 2WAPA $\fk{A}$ that has
    exponentially many states, such that the aforementioned boundedness property does not hold iff $\L(\fk{A})$ is infinite. (Such an {\em infinity problem} for tree automata has been used to obtain the decidability of the {\em boundedness} problem for monadic Datalog
  \cite{CGKV88,Vardi92}).
\end{enumerate}
Our \twoexptime~upper bound then follows since the infinity
problem for a 2WAPA $\fk{A}$, i.e., checking if $\L(\fk{A})$ is
infinite, is feasible in exponential time in the number of states,
and in polynomial time in the size of the alphabet. This follows from
two known results: (a) The 2WAPA $\fk{A}$ can be converted into an
equivalent non-deterministic
tree automata $\fk{B}$ with a single-exponential blow up in the
number of states \cite{Vardi98}, and (b) solving the infinity problem
for non-deterministic tree automata is feasible in polynomial time;
cf.~\cite{Vardi92}.

It is worth contrasting our proof with the one in~\cite{BLW16} for $\ca{ELHI}$, which
does not make use of the infinity problem for 2WAPA, but applies a different argument based on pumping. This leads to a finer complexity analysis in terms of the size of the different components of the OMQ, but, in our opinion, makes the proof conceptually harder.

\medskip
\noindent
{\bf The semantic characterization.}
To establish the semantic characterization from step 1, we need to
define the notion of {\em distance from the root} for an element $u$ in
a $C$-tree database $D$. Intuitively, this corresponds to the minimal
distance between a node that contains $u$ and the root of a tree
decomposition $T$ of $D$ that witnesses the fact that $D$ is a
$C$-tree. We do not consider all such tree decompositions, however, but
concentrate on a well-behaved subclass, which we call the {\em lean}
tree decompositions of the $C$-tree $D$; the formal definition can be found in~\cite{BaBP17}, as it does not add much to the explanation we provide here.
Due to the fact that we focus on unary and binary relations, such lean tree decompositions ensure the invariance of the notion of distance from the root, by severely limiting the level of redundancy allowed in a tree
representation of $D$. Therefore, it does not matter which lean tree decomposition we choose, since in all of them the distance of an element $u$ from the root will be the same. Let ${D}_{\leq k}$ be the subinstance of $D$ induced by the set of elements whose distance from the root is at most
$k$, and let $D_{> k}$ be the subinstance of $D$ induced by the set of elements whose distance from the root is at least $k+1$.

Another useful notion is the {\em branching degree} of a tree decomposition $T$, that is, the maximum number of child nodes over all nodes of $T$. Again, lean tree decompositions ensure the invariance of the branching degree.
This allows us to define the branching degree of a $C$-tree database $D$ as the branching degree of a lean tree decomposition that witnesses the fact that $D$ is a $C$-tree.

It follows from~\cite{BLW16} that being able to decide containment for the OMQ
language $(\guarded_{2},\class{BCQ})$ (as we have done in Section
\ref{sec:guardedness}) allows us to concentrate on {\em connected} CQs when deciding UCQ rewritability. This simplifies technicalities considerably and, in turn, allows us to obtain our desired semantic characterization of UCQ rewritability:

\begin{proposition}\label{prop:fosemantic}
Let $Q = \tup{\sche{S},\Sigma,q} \in (\guarded_{2},\class{BCQ})$, where $q$ is connected. The following are equivalent:
\begin{enumerate}
\item $Q$ is UCQ rewritable.

\item There exist $k,m \geq 0$ (which depend only on $Q$) s.t.
\[
Q(D) \neq \emptyset \ \Longrightarrow \
\big(Q(D_{\leq k}) \neq \emptyset \text{~~or~~} Q(D_{>0}) \neq \emptyset\big),
\]
for each $C$-tree $\insS$-database $D$ with $|\adom{C}| \leq 2 \cdot |q|$ and branching degree at most $m$.
\end{enumerate}
\end{proposition}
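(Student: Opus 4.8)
The plan is to prove Proposition~\ref{prop:fosemantic} by establishing both implications, with the bulk of the work going into $(2) \Rightarrow (1)$. For the direction $(1) \Rightarrow (2)$, I would argue contrapositively: suppose no such $k, m$ exist. Since $Q$ is UCQ rewritable, fix a UCQ rewriting $q' = p_1 \vee \cdots \vee p_\ell$ of $Q$, and let $M = \max_i |p_i|$. For a $C$-tree $\insS$-database $D$ with $|\adom{C}| \leq 2 \cdot |q|$, if $Q(D) \neq \emptyset$ then some disjunct $p_i$ maps into $D$, and since $p_i$ has at most $M$ atoms, its image touches only elements within distance $f(M)$ of the root for a bound $f$ depending on the branching degree and arity; but here arities are bounded by $2$, so the image of $p_i$ lies entirely within $D_{\leq k}$ for $k$ linear in $M$, provided we also cap the branching degree by some $m$ obtained by a standard argument that high-degree redundancy can be collapsed without affecting satisfaction (using Lemma~\ref{lem:consistent-labeled-trees}-style encoding or a direct homomorphism argument). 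Unless the image lies in $C$ only, in which case it lies in $D_{\leq 0} \subseteq D_{\leq k}$; and if it uses elements of distance $\geq 1$ it also witnesses $Q(D_{> 0}) \neq \emptyset$ after retracting onto the subtree. This contradicts the assumed failure, giving $(1) \Rightarrow (2)$.

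For $(2) \Rightarrow (1)$, the plan is to show that the implication in item~(2), together with the tree witness property (Proposition~\ref{pro:tree-witness-property}) specialized to $\class{G}_2$, forces $Q$ to have a UCQ rewriting. First I would invoke the reduction to connected CQs inherited from~\cite{BLW16} (already granted in the statement), so that $q$ is connected; this lets me talk about a single ``root component'' of any image. The key is to iterate the implication: applying it to $D_{>0}$ (which is again a $C'$-tree database, now rooted one level down, still with small $|\adom{C'}|$ and bounded branching degree since lean tree decompositions are preserved under taking subtrees) shows that whenever $Q(D) \neq \emptyset$, either $Q(D_{\leq k}) \neq \emptyset$ or $Q(D_{> 0}) \neq \emptyset$, and recursing on the latter, either some bounded-depth ``window'' $D_{[j, j+k]}$ of $D$ satisfies $Q$, or $Q$ is satisfied arbitrarily deep. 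The connectedness of $q$ plus the bounded size of $C$ rules out the ``arbitrarily deep forever'' case by a compactness/König argument unless $Q$ is already satisfied in a bounded-depth window. This shows: $Q(D) \neq \emptyset$ iff $Q(D')\neq\emptyset$ for some $C$-tree $D' $ of depth at most $k$ (and branching degree at most $m$) that embeds into $D$. There are only finitely many such $D'$ up to isomorphism, so the finite disjunction $\bigvee_{D'} q_{D'}$ (the canonical CQ of each such small witness) is a UCQ, and by the tree witness property it is equivalent to $Q$ over all databases. Hence $Q$ is UCQ rewritable.

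The main obstacle I anticipate is making the ``recurse downward and conclude boundedness'' step rigorous. Item~(2) only gives a dichotomy $Q(D_{\leq k}) \neq \emptyset$ \emph{or} $Q(D_{> 0}) \neq \emptyset$; turning repeated applications of this into a genuine depth bound requires care, because each application peels off only one level and re-enters the hypothesis on a structurally similar but different database. One must verify that $D_{>0}$ genuinely falls within the scope of item~(2) — i.e., that it is a $C'$-tree with $|\adom{C'}| \leq 2 \cdot |q|$ (true, since each node of a lean tree decomposition is guarded, hence of size $\leq 2$, so the new root component has at most $2$ elements $\leq 2|q|$) and branching degree $\leq m$ (true, since subtrees of a lean tree decomposition are lean). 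The delicate part is that after infinitely many peelings one needs a global homomorphism from $q$ into $\chase{D}{\Sigma}$ concentrated near the root; here I would use that $q$ is finite and connected, so any homomorphic image is connected and of bounded diameter, hence if it intersects level $j$ it is confined to levels $[j - |q|, j + |q|]$ of $\chase{D}{\Sigma}$ (again using that arities are $2$ and the chase of a $C$-tree under guarded tgds is itself $C$-tree-shaped with controlled structure, as in~\cite{CaGK13}). Combining this localization with the dichotomy yields the bounded-depth window, and the rest is routine. The secondary technical point — establishing the invariance of ``distance from the root'' and branching degree across lean tree decompositions — is cited to~\cite{BaBP17} and used as a black box.
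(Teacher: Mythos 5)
Your overall architecture matches the paper's for both directions (connected rewriting disjuncts plus a distance-from-the-root argument for $(1)\Rightarrow(2)$; a finite disjunction of small $C$-tree witnesses, made complete via the tree-witness property, for $(2)\Rightarrow(1)$), but there is a genuine gap at the one point that carries the real weight: the branching-degree bound $m$. In your $(2)\Rightarrow(1)$ argument you need, for an \emph{arbitrary} database $D$ with $Q(D)\neq\emptyset$, a $C$-tree witness with $|\adom{C}|\leq 2|q|$ \emph{and branching degree at most the same $m$ appearing in item (2)}, which maps homomorphically into $D$; otherwise the dichotomy of item (2) simply does not apply to the witness, and your candidate UCQ (built only from bounded-branching, bounded-depth witnesses) cannot be shown complete. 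Proposition~\ref{pro:tree-witness-property} (equivalently Lemma~\ref{lem:main-lemma}), which is all you invoke, gives no branching bound. The paper supplies exactly this missing ingredient as Lemma~\ref{lem:semanticfoadvanced}: intersecting the consistency, leanness and query automata (Lemmas~\ref{lem:automaton-1}, \ref{lem:automatonlean}, \ref{lem:automaton-2}) and appealing to the fact that a nonempty 2WAPA language contains a tree of branching degree bounded by the number of states~\cite{GrWa99}; moreover, this lemma is what \emph{defines} $m$. This is not a side remark one can wave off as ``high-degree redundancy can be collapsed'': if $m$ were allowed to be arbitrary, item (2) becomes vacuous (e.g.\ with $m=0$ it holds for every $Q$ with $k=0$), so $(2)\Rightarrow(1)$ is false unless $m$ is pinned to this specific constant and the bounded-branching witness property is actually proved. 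Your proposal places the collapse remark in the $(1)\Rightarrow(2)$ direction, where the branching degree is irrelevant, and omits it where it is indispensable.

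Two further comments. First, the iteration-plus-K\"onig machinery you flag as your ``main obstacle'' is avoidable: the paper takes the bounded-branching witness $\hat{D}$, passes to a \emph{minimal connected} sub-database $D'\subseteq\hat{D}$ with $Q(D')\neq\emptyset$, and applies the dichotomy once; $Q(D'_{>0})\neq\emptyset$ contradicts minimality (using connectedness of $q$), so $Q(D'_{\leq k})\neq\emptyset$, and $D'_{\leq k}$ is, up to equivalence, one of the finitely many disjuncts. Your level-by-level peeling could be made to work (databases are finite, so no compactness is needed), but it forces you to verify at each step that $D_{>0}$, or rather each of its connected components, again falls under item (2), which is precisely the bookkeeping the minimality argument sidesteps. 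Second, in $(1)\Rightarrow(2)$ your case split should simply be on whether the image of the connected disjunct meets $\adom{C}$ (then, choosing $k$ larger than the maximal disjunct size, the image lies in $D_{\leq k}$ by Lemma~\ref{lem:gaifmandist}) or avoids it (then it lies in $D_{>0}$); the statement ``if it uses elements of distance $\geq 1$ it also witnesses $Q(D_{>0})\neq\emptyset$'' is wrong when the image also touches the core, and no capping of the branching degree is needed in that direction.
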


\noindent
{\bf The reduction to the infinity problem.}
We now proceed with step 2, and we explain how the boundedness property established in item (2) of Proposition \ref{prop:fosemantic} can be reduced to the infinity problem for 2WAPAs. As in Section \ref{sec:guardedness}, we do not reason with $C$-tree databases directly, but we deal with their encodings as consistent $\Gamma_{\ins{S},l}$-labeled trees. In fact, using the same ideas as in Lemma~\ref{lem:consistent-labeled-trees}, we can show by exploiting Proposition~\ref{prop:fosemantic} that the following are equivalent:
\begin{enumerate}
\item[(i)] $Q$ is UCQ rewritable.

\item[(ii)] There are $k,m \geq 0$ such that
\[
Q(\dec{L}) \neq \emptyset \ \Longrightarrow \
\big(Q(\dec{L}_{\leq k}) \neq \emptyset \text{ or } Q(\dec{L}_{>0}) \neq
\emptyset\big),
\]
for every consistent $\Gamma_{\insS,l}$-labeled tree $L$ with
$l = 2 \cdot |q|$ and whose branching degree is bounded by $m$.
\end{enumerate}

Let us write {\sc Boundedness} for the property expressed in item (ii)
above, which can be reduced to the problem of checking whether some tree language is finite. Let $\L_Q$ be the set of all $\Gamma_{\sche{S},l}$-labeled trees $L$ of branching degree at most $m$
such that: (1) $Q(\dec{L}) = \emptyset$ and (2) there is some
``extension'' $L'$ of $L$, with branching degree $m$, such that $Q(\dec{L'}) \neq \emptyset$ and $Q(\dec{L'}_{> 0}) = \emptyset$. Notice that $L'$ can increase the depth but not the branching degree of $L$. It is not difficult to
show that {\sc Boundedness} holds iff $\L_Q$ is finite.
We then devise in double-exponential time a 2WAPA $\fk{C}_{Q,l}$, which has exponentially many states, such that $\L_Q = \L(\fk{C}_{Q,l})$. Therefore, the following holds:

\begin{proposition}\label{pro:ucqrew-to-infinity}
Consider $Q \in (\class{G}_2,\class{BCQ})$. We can construct in double-exponential time a 2WAPA $\fk{A}$, which has exponentially many states, such that
\[
Q \text{ is UCQ rewritable } \iff\ \ca{L}(\fk{A}) \text{ is finite}.
\]
\end{proposition}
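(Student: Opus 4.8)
The plan is to combine the semantic characterization of UCQ rewritability in terms of {\sc Boundedness} (the equivalence (i)~$\Leftrightarrow$~(ii) obtained above from Proposition~\ref{prop:fosemantic}) with the reduction of {\sc Boundedness} to the finiteness of the tree language $\L_Q$. So the whole statement reduces to a single construction: given $Q = \tup{\insS,\Sigma,q} \in (\class{G}_2,\class{BCQ})$, build in double-exponential time a 2WAPA $\fk{A}$, with exponentially many states, such that $\L(\fk{A}) = \L_Q$, where $l = 2\cdot|q|$ and $m$ is the branching-degree bound fixed in the characterization. Recall that $\L_Q$ consists of the consistent $\Gamma_{\insS,l}$-labeled trees $L$ of branching degree at most $m$ such that (1)~$Q(\dec{L}) = \emptyset$, and (2)~there is an ``extension'' $L'$ of $L$ (same branching degree, possibly larger depth) with $Q(\dec{L'}) \neq \emptyset$ and $Q(\dec{L'}_{>0}) = \emptyset$. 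Once $\fk{A}$ with $\L(\fk{A}) = \L_Q$ is in place, $Q$ is UCQ rewritable iff $\L_Q$ is finite iff $\L(\fk{A})$ is finite, which is exactly the claim.

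First I would assemble the building blocks already available from Section~\ref{sec:guardedness}. Checking consistency and branching degree $\le m$ is handled by (a variant of) $\fk{C}_{\insS,l}$ from Lemma~\ref{lem:automaton-1}, with logarithmically many states in $|\Gamma_{\insS,l}|$, hence exponentially many in $\|Q\|$ and $l$; the branching-degree constraint is a trivial local check that multiplies the state set by a constant. Condition~(1), $Q(\dec{L}) = \emptyset$, is recognized by the complement $\overline{\fk{A}_{Q,l}}$ of the automaton $\fk{A}_{Q,l}$ of Lemma~\ref{lem:automaton-2}, which is constructible in double-exponential time with exponentially many states and, being a 2WAPA, closed under complement in polynomial time. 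The genuinely new ingredient is an automaton $\fk{E}_{Q,l,m}$ recognizing exactly those consistent $L$ satisfying condition~(2): that some depth-extension $L'$ of $L$ (agreeing with $L$ on $L$'s nodes, not exceeding branching degree $m$) has $Q(\dec{L'}) \neq \emptyset$ while $Q(\dec{L'}_{>0}) = \emptyset$. Then $\fk{A}$ is simply $\fk{C}_{\insS,l} \cap \overline{\fk{A}_{Q,l}} \cap \fk{E}_{Q,l,m}$, and the size and time bounds follow from closure of 2WAPA under intersection in polynomial time together with the bounds on the components.

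The construction of $\fk{E}_{Q,l,m}$ is the step I expect to be the main obstacle, and it is where the $C$-tree machinery and the squid-decomposition idea from $\fk{A}_{Q,l}$ get reused in a subtler way. The automaton must, on input $L$, \emph{guess} the extension $L'$ below $L$'s frontier --- i.e.\ nondeterministically generate fresh node labels beyond the existing ones, which a two-way alternating automaton can do because acceptance is about the existence of a run over the \emph{given} tree while the guessed part can be encoded in the labels of the input tree itself (this is the standard trick: we run $\fk{E}_{Q,l,m}$ over trees that already carry enough label information, relying on $\fk{C}_{\insS,l}$-style consistency to glue the picture together; alternatively one argues that $L$ is extendable iff the sub-automaton for $Q$ has an accepting pseudo-run that may ``descend into unspecified subtrees''). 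Concretely, $\fk{E}_{Q,l,m}$ runs the squid-decomposition search of $\fk{A}_{Q,l}$ but allowed to match the tree-part of the squid into a hypothetical continuation of the input, while simultaneously a complemented copy of $\fk{A}_{Q,l}$ checks that the restriction $\dec{L'}_{>0}$ (elements at distance $\ge 1$ from the root, which is a $C'$-tree in its own right) does \emph{not} satisfy $Q$. Here the invariance of ``distance from the root'' and of branching degree under lean tree decompositions (the key technical property behind Proposition~\ref{prop:fosemantic}, established in~\cite{BaBP17}) is essential: it makes the notions $\dec{L'}_{\le k}$ and $\dec{L'}_{>0}$ well-defined functions of the labeled tree, so the automaton can refer to them by local structural tests on the labels. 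A careful but routine analysis shows $\fk{E}_{Q,l,m}$ needs only exponentially many states in $\|Q\|$ and $l$ (since it is a product of $\fk{A}_{Q,l}$-like components plus bounded bookkeeping for the extension) and is constructible in double-exponential time; plugging this into the intersection above yields $\fk{A}$ with the stated bounds, completing the plan.
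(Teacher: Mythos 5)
Your overall architecture matches the paper's: reduce the statement to building a 2WAPA for $\L_Q$, and obtain it as an intersection of the consistency/leanness checks, the complement of $\fk{A}_{Q,l}$ (for $Q(\dec{L})=\emptyset$), and one genuinely new automaton for the ``some extension exists'' condition. But that new automaton is exactly where your proposal has a real gap, and the concrete sketch you give would not work as stated. The condition to be certified is a conjunction about a single hypothetical extension $L'$: $Q(\dec{L'})\neq\emptyset$ \emph{and} $Q(\dec{L'}_{>0})=\emptyset$. Your plan runs the squid-decomposition search of $\fk{A}_{Q,l}$ ``into a hypothetical continuation'' while ``simultaneously a complemented copy of $\fk{A}_{Q,l}$ checks'' the negative condition. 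The complemented copy, however, runs on the input tree $L$ and has no access to the imagined part of $L'$, so it cannot certify that $\dec{L'}_{>0}\not\models Q$; and even if both components nondeterministically imagined continuations below the leaves, alternation gives no way to force the positive branch and the negative branch of the run to agree on the \emph{same} guessed extension. So ``guess the extension inside the run'' does not go through; the existential over extensions combined with a negative condition on that same extension is precisely the obstacle.

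The paper resolves this differently, and the resolution changes the object being accepted. First it folds both conditions into a single automaton $\fk{B}_Q$ that accepts a lean, consistent $\Gamma_{\insS,l}$-labeled tree $t'$ iff $\dec{t'}\models Q$ and $\dec{t'}_{>0}\not\models Q$, so that ``the extension is good'' becomes a single acceptance question for $\fk{B}_Q$. Then, instead of guessing the extension at run time, it moves the guess into the input: the trees of $\L_Q$ are labeled over an enlarged alphabet $\Lambda = 2^{\mbb{K}_{\insS,l}\cup\Pi}$ whose leaf labels may carry pairs $\tup{s,s'}$ of states of $\fk{B}_Q$; such a pair asserts the existence of some subtree (the unseen part of the extension) with a prescribed root label on which a partial run of $\fk{B}_Q$ started in $s$ either accepts or returns to the root in state $s'$. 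The realizable pairs $\hat{B}(\sigma)$ are precomputed offline for every $\sigma\in\Gamma_{\insS,l}$ (this is where the double-exponential time goes), and the automaton $\fk{D}_Q$ simply simulates $\fk{B}_Q$ on the visible tree, discharging obligations at annotated leaves via $\hat{B}$. This summary technique is what makes the extension quantifier, together with its negative side condition, expressible by a 2WAPA with only exponentially many states; your proposal would need to supply this (or an equivalent device) to be complete, and the finiteness lemma must then be proved for the annotated language, as the paper does.
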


Since checking whether $\L(\fk{A})$ is infinite is feasible in exponential time in the number of states and in polynomial time in the size of the alphabet, Proposition~\ref{pro:ucqrew-to-infinity} implies that $\rew(\guarded_2,\class{CQ})$ is in 2\EXP, as needed.


\section{Conclusions}\label{sec:conclusions}

We have concentrated on the fundamental problem of containment for OMQ languages based on the main decidable classes of tgds.
%
%
%
We have also used our techniques to close problems related to distribution over components and UCQ rewritability.
%
We believe that our techniques for solving containment under guarded OMQs can be extended to frontier-guarded OMQs, an interesting extension of guardedness~\cite{BLMS11}.
We are also convinced that our solution to the problem of deciding UCQ rewritability of guarded OMQs over unary and binary relations can be extended to guarded (or even frontier-guarded) OMQs over arbitrary schemas.
We are currently investigating these challenging problems. 


\onecolumn
\newpage
\section*{APPENDIX}\label{sec:appendix}

\section*{PRELIMINARIES}

\subsection*{Definition of Non-recursiveness}

In the main body of the paper, we define non-recursive sets of tgds via the notion of predicate graph. Here, we give an alternative definition, based on the well-known notion of stratification, which is more convenient for the combinatorial analysis that we are going to perform in the proof of Proposition~\ref{pro:function-nr}.

\begin{definition}\label{def:stratification}
Consider a set $\dep$ of tgds. A {\em stratification} of $\dep$ is a partition $\{\dep_1,\ldots,\dep_n\}$, where $n > 0$, of $\dep$ such that, for some function $\mu : \sch{\dep} \ra \{0,\ldots,n\}$, the following hold:
\begin{enumerate}
\item For each predicate $R \in \sch{\dep}$, all the tgds with $R$ in their head belong to $\dep_{\mu(R)}$, i.e., they belong to the same set of the partition.

\item If there exists a tgd in $\dep$ such that the predicate $R$ appears in its body, while the predicate $P$ appears in its head, then $\mu(R) < \mu(P)$.
\end{enumerate}
We say that $\dep$ is {\em stratifiable} if it admits a stratification. \hfill\markfull
\end{definition}

It is an easy exercise to show that the predicate graph of a set $\dep$ of tgds is acyclic iff $\dep$ is stratifiable. Then:

\begin{lemma}\label{lem:stratification}
$\dep$ is non-recursive iff $\dep$ is stratifiable.
\end{lemma}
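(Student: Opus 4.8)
The claim is the equivalence: $\dep$ is non-recursive (i.e., its predicate graph is acyclic) iff $\dep$ is stratifiable. The plan is to prove the two directions separately, using the remark already stated right before the lemma that acyclicity of the predicate graph is equivalent to stratifiability — but since the lemma is phrased as if it still needs a short argument, I would spell that equivalence out via a topological-ordering argument on the predicate graph.

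First I would fix notation: the \emph{predicate graph} of $\dep$ is the directed graph $G_\dep$ with vertex set $\sch{\dep}$ and an edge $R \to P$ whenever there is a tgd in $\dep$ with $R$ in its body and $P$ in its head. For the direction ``non-recursive $\Rightarrow$ stratifiable'': assume $G_\dep$ is acyclic. Then $G_\dep$ admits a topological order; equivalently, there is a function $\mu_0 \colon \sch{\dep} \to \{0,\ldots,|\sch{\dep}|\}$ that is strictly increasing along every edge, i.e., $\mu_0(R) < \mu_0(P)$ whenever $R \to P$ is an edge. This $\mu_0$ immediately witnesses condition~(2) of Definition~\ref{def:stratification}. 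To get condition~(1), I would note that $\mu_0$ already gives, for each predicate $P$, a well-defined value, so I can define the partition by putting, for $i \in \{1,\ldots,n\}$ with $n = \max_P \mu_0(P)$ (or $n=1$ in the degenerate case), the set $\dep_i$ to be all tgds whose head predicate $P$ satisfies $\mu_0(P) = i$; a tgd with several head predicates needs care, so I would instead first argue that one may assume each tgd has a single head atom (or, more cleanly, define $\mu$ on predicates and then check that all tgds sharing a head predicate land in the same block, which is automatic from the definition of the partition by head-predicate value). The one subtlety to check is that condition~(2) with the partition-level reading is consistent: if $R$ is in the body and $P$ in the head of some tgd, then that tgd is in block $\mu_0(P)$, $R$'s defining block is $\mu_0(R)$, and $\mu_0(R) < \mu_0(P)$ holds by the edge $R \to P$, so we are fine; we set $\mu = \mu_0$.

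For the converse, ``stratifiable $\Rightarrow$ non-recursive'': suppose $\{\dep_1,\ldots,\dep_n\}$ is a stratification witnessed by $\mu$. I claim $\mu$ is strictly increasing along every edge of $G_\dep$. Indeed, if $R \to P$ is an edge, there is a tgd with $R$ in its body and $P$ in its head, so condition~(2) of Definition~\ref{def:stratification} gives $\mu(R) < \mu(P)$. Hence $\mu$ is a strictly-increasing potential on $G_\dep$, which forbids directed cycles (a cycle would force $\mu(R) < \mu(R)$ for some $R$). Therefore $G_\dep$ is acyclic, i.e., $\dep$ is non-recursive.

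I expect the only mild obstacle to be the bookkeeping around tgds with multiple head atoms / head predicates when transferring the predicate-level function $\mu_0$ to a partition of the \emph{tgds} satisfying condition~(1); this is resolved either by the standard observation that one may normalize to single-head tgds without affecting the predicate graph, or by defining the blocks directly by head-predicate value and checking the two conditions literally, as sketched above. Everything else is the routine equivalence between acyclicity of a finite digraph and the existence of a topological ordering, so the proof is short.
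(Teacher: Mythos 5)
Your argument is correct and is precisely the routine equivalence the paper has in mind: the paper offers no written proof at all, dismissing the statement as ``an easy exercise'' that acyclicity of the predicate graph coincides with stratifiability, and your two directions (topological order gives a witnessing $\mu$; a witnessing $\mu$ is strictly increasing along edges and hence forbids cycles) are exactly that exercise. The multi-head subtlety you flag is genuine---under a literal reading of condition~(1), a tgd whose head mentions two distinct predicates forces them into the same stratum, which an arbitrary topological numbering need not respect---so your first remedy (normalizing to single-atom heads, an assumption the paper itself adopts elsewhere) is the clean way to finish, whereas your second remedy as phrased would not by itself handle that case.
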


\subsection*{Definition of Stickiness}

In the main body of the paper, we provide an intuitive explanation of stickiness. Here, we recall the formal definition of sticky sets of tgds, introduced in~\cite{CaGP12}.
Fix a set $\dep$ of tgds; w.l.o.g., we assume that, for every pair $(\sigma,\sigma') \in \dep \times \dep$, $\sigma$ and $\sigma'$ do not share variables.
For notational convenience, given an atom $\alpha$ and a variable
$x$ occurring in $\alpha$, $\pos{\alpha}{x}$ is the set of positions
in $\alpha$ at which $x$ occurs; a position $P[i]$ identifies the $i$-th attribute of the predicate $P$.
The definition of stickiness hinges on the notion of marked variables in a set of tgds.

\begin{definition}\label{def:marked-variables}
Consider a tgd $\sigma \in \dep$, and a variable $x$ occurring in the body of $\sigma$. We inductively define when $x$ is \emph{marked in
$\dep$} as follows:
\begin{enumerate}
\item If there exists an atom  $\alpha$ in the head of $\sigma$ such that $x$ does not occur in $\alpha$, then $x$ is marked in $\dep$; and
\item Assuming that there exists an atom $\alpha$ in the head of $\sigma$ such that $x$ occurs in $\alpha$, if there exists $\sigma' \in \dep$ (not necessarily different than $\sigma$) and an atom $\beta$ in the body of $\sigma'$ such that (i) $\alpha$ and $\beta$ have the same predicate and, (ii) each variable in $\beta$ that occurs at a position of $\pos{\alpha}{x}$ is marked in $\dep$, then $x$ is marked in $\dep$. \hfill\markfull
\end{enumerate}
\end{definition}

We are now ready to recall when a set of tgds is sticky:

\begin{definition}\label{def:sticky-tgds}
A set $\dep$ of tgds is \emph{sticky} if, for each $\sigma \in
\dep$, and for each variable $x$ occurring in the body of $\sigma$, the following holds: if $x$ is marked in $\dep$, then $x$ occurs only once in the body of $\sigma$. \hfill\markfull
\end{definition}

\section*{PROOFS OF SECTION~\ref{sec:containment-basics}}

\subsection*{Proof of Proposition~\ref{pro:eval-to-cont}}

Consider an OMQ $Q = (\insS,\dep,q({\bar x})) \in (\class{C},\class{CQ})$, where $\class{C}$ is a class of tgds, an $\insS$-database $D$, and a tuple ${\bar c} \in \adom{D}^{|{\bar x}|}$. We show that:
\[
{\bar c} \in Q(D)\
\iff\
\underbrace{(\sch{\dep},\emptyset,q_{D,{\bar c}})}_{Q_1} \subseteq \underbrace{(\sch{\dep},\dep,q)}_{Q_2}.
\]

($\Rightarrow$) Assume that $Q_1 \not\subseteq Q_2$. This implies that there exists a $\sch{\dep}$-database $D'$, and a tuple ${\bar t}$ of constants such that ${\bar t} \in q_{D,{\bar c}}(D')$ and ${\bar t} \not\in q(\chase{D'}{\dep})$. Due to the monotonicity of CQs, ${\bar t} \in q_{D,{\bar c}}(\chase{D'}{\dep})$. Since, by construction, the instance $\chase{D'}{\dep}$ satisfies $\dep$, we conclude that $q_{D,{\bar c}} \not\subseteq_{\dep} q$.\footnote{This is the standard notation for the fact that $q_{D,{\bar c}}(I) \not\subseteq q(I)$, for every (possibly infinite) instance $I$ that satisfies $\dep$.} By exploiting the well-known characterization of CQ containment in terms of the chase, we get that ${\bar c} \not\in q(\chase{D}{\dep})$, which is equivalent to ${\bar c} \not\in Q(D)$, as needed.

($\Leftarrow$) Conversely, assume that ${\bar c} \not\in Q(D)$, or, equivalently, ${\bar c} \not\in q(\chase{D}{\dep})$. This implies that ${\bar c} \not\in Q_2(D)$. Observe that ${\bar c} \in q_{D,{\bar c}}(D)$ holds trivially, which in turn implies that ${\bar c} \in Q_1(D)$. Therefore, $Q_1 \not\subseteq Q_2$, and the claim follows.

\subsection*{Proof of Proposition~\ref{pro:from-ucq-to-cq}}

The construction underlying Proposition~\ref{pro:from-ucq-to-cq} relies on the idea of encoding boolean operations (in our case the `or' operator) using a set of atoms; this idea has been exploited in several other works; see, e.g.,~\cite{BeGo10,BMMP16,GoPa03}.
Let $Q = (\insS,\dep,q) \in (\class{C},\class{UCQ})$. Our goal is to construct in polynomial time $Q' = (\insS,\dep',q') \in (\class{C},\class{CQ})$ such that $Q \equiv Q'$. We assume, w.l.o.g., that the predicates of $\insS$ do not appear in the head of a tgd of $\dep$; we can copy the content of a relation $R/k \in \insS$ into an auxiliary predicate $R^{\star}/k$, using the tgd $R(x_1,\ldots,x_k) \ra R^{\star}(x_1,\ldots,x_k)$, while staying inside $\class{C}$, and then rename each predicate $P$ in $\dep$ and $q$ with $P^{\star}$.
The set $\dep'$ consists of the following tgds:

\begin{enumerate}
\item For every $R/k \in \insS$:
\[
R(x_1,\ldots,x_k)\ \ra\ R'(x_1,\ldots,x_k,1), \text{\rm True}(1).
\]
These tgds are annotating the database atoms with the truth constant true, indicating that these are true atoms.

\item Assuming that $q = \exists {\bar y} \, \phi({\bar x}, {\bar y})$, a tgd:
\[
\text{\rm True}(t)\ \ra \exists {\bar x} \exists {\bar y} \exists f \, \phi'_{\wedge}({\bar x},{\bar y},f), \psi(t,f),
\]
where $\phi'_{\wedge}$ is the conjunction of atoms in $\phi$, after replacing each atom $R(v_1,\ldots,v_k)$ with $R'(v_1,\ldots,v_k,f)$, and $\psi$ is the conjunction of atoms
\[
\text{\rm Or}(t,t,t), \text{\rm Or}(t,f,t), \text{\rm Or}(f,t,t), \text{\rm Or}(f,f,f).
\]
This tgd generates a ``copy'' of the atoms in $q$, while annotating them with a null value that represents the truth constant false, indicating that are not necessarily true atoms. Moreover, the truth table of `or' is generated.

\item Finally, for each tgd $\phi({\bar x},{\bar y}) \ra \exists {\bar z} \, \psi({\bar x},{\bar x})$ in $\dep$, a tgd
\[
\phi'({\bar x},{\bar y},w)\ \ra\ \exists {\bar z} \, \psi'({\bar x},{\bar z},w),
\]
where $\phi'$ and $\psi'$ are obtained from $\phi$ and $\psi$, respectively, by replacing each atom $R(v_1,\ldots,v_k)$ with $R'(v_1,\ldots,v_k,w)$. In fact, this is the actual set of tgds $\dep$, with the difference that the value at the last position of each atom (which indicates whether it is true or false) is propagated to the inferred atoms.
\end{enumerate}

Now, assuming that $q = q_1 \vee \cdots \vee q_n$, the CQ $q'$ is defined as follows; let ${\bar x} = x_1 \ldots x_n$ and ${\bar y} = y_1 \ldots y_{n+1}$:
\begin{eqnarray*}
\exists {\bar x} \exists {\bar y}\, (\text{\rm False}(y_1)\, \wedge
\bigwedge_{1 \leq i \leq n} (q'_{i}[x_i] \wedge \text{\rm Or}(y_i,x_i,y_{i+1}))\, \wedge
\text{\rm True}(y_{n+1})),
\end{eqnarray*}
where ${\bar x}$ and ${\bar y}$ are fresh variables not in $q$, and $q'_{i}[x_i]$ is obtained from $q_i$ by replacing each atom $R(v_1,\ldots,v_k)$ with $R'(v_1,\ldots,v_k,x_i)$. This completes our construction.

It is not difficult to show that $Q \equiv Q'$, or, equivalently, for every $\insS$-database $D$, $q(\chase{D}{\dep}) = q'(\chase{D}{\dep'})$. The key observation is that in order to satisfy $\text{\rm True}(y_{n+1})$ in the CQ $q'$, at least one of the ${\bar x_i}$'s must be mapped to $1$, which means that at least one $q_i$ is satisfied by $\chase{D}{\dep}$. Finally, it is easy to verify that, for each $\class{C} \in \{\class{G},\class{L},\class{NR},\class{S}\}$, $\dep \in \class{C}$ implies $\dep' \in \class{C}$, and Proposition~\ref{pro:from-ucq-to-cq} follows.

\section*{PROOFS OF SECTION~\ref{sec:ucq-rewritability}}

\subsection*{Proof of Proposition~\ref{pro:small-witness-property}}

We assume that $q({\bar x}) = \bigvee_{i=1}^{n} q_i({\bar x})$ is a UCQ rewriting of $Q$. Since, by hypothesis, $Q \not\subseteq Q'$, we conclude that $q \not\subseteq Q'$, which in turn implies that there exists $i \in \{1,\ldots,n\}$ such that $q_i \not\subseteq Q'$.
Let $c({\bar x})$ be a tuple of constants obtained by replacing each variable $x$ in ${\bar x}$ with the constant $c(x)$, and $D_{q_i}$ the $\insS$-database obtained from $q_i$ after replacing each variable $x$ in $q_i$ with the constant $c(x)$. We show that:

\begin{lemma}\label{lem:auxiliary-small-witness-property}
$c({\bar x}) \not\in Q'(D_{q_i})$.
\end{lemma}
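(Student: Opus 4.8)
The plan is to argue by contradiction, transferring a homomorphism witness through the chase. Write $Q' = (\insS,\dep',\rho)$, so that $Q'(D) = \rho(\chase{D}{\dep'})$ for every $\insS$-database $D$. Since $q_i \not\subseteq Q'$, we may fix an $\insS$-database $D$ and a tuple $\bar t$ of constants with $\bar t \in q_i(D)$ but $\bar t \notin Q'(D) = \rho(\chase{D}{\dep'})$. Suppose, towards a contradiction, that $c(\bar x) \in Q'(D_{q_i})$, i.e., $c(\bar x) \in \rho(\chase{D_{q_i}}{\dep'})$. We will derive $\bar t \in Q'(D)$, contradicting the choice of $D$ and $\bar t$. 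Note that $D_{q_i}$ is a legitimate $\insS$-database, since $q_i$ is a disjunct of a UCQ rewriting of $Q$ and hence a CQ over $\insS$, and that the fresh constants $c(x)$ occur neither in $\dep'$ nor in $\rho$.

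The transfer proceeds in three steps. First, from $\bar t \in q_i(D)$ take a homomorphism $h$ from $q_i$ to $D$ with $h(\bar x) = \bar t$. Since $D_{q_i}$ is obtained from $q_i$ by replacing each variable $x$ with the fresh constant $c(x)$, the assignment $c(x) \mapsto h(x)$, together with the identity on the (genuine) constants of $q_i$, induces an atom-preserving map $g$ from $D_{q_i}$ into $D$ with $g(c(\bar x)) = \bar t$. Second, invoke the standard universality of the chase: since $\chase{D}{\dep'}$ is a model of $\dep'$ and $g$ embeds $D_{q_i}$ into it, $g$ extends to an atom-preserving map $\hat g$ from $\chase{D_{q_i}}{\dep'}$ to $\chase{D}{\dep'}$ that is the identity on every genuine constant; concretely, $\hat g$ is built along a (possibly infinite) chase sequence for $D_{q_i}$ under $\dep'$ by sending, at each chase step, every freshly created null to a term of $\chase{D}{\dep'}$ that witnesses the corresponding head in the model $\chase{D}{\dep'}$. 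Third, compose: if $\mu$ is a homomorphism from $\rho$ to $\chase{D_{q_i}}{\dep'}$ mapping the free variables of $\rho$ to $c(\bar x)$, then $\hat g \circ \mu$ is a homomorphism from $\rho$ to $\chase{D}{\dep'}$ — it is the identity on the constants of $\rho$ because those are genuine constants and are fixed by $\hat g$ — and it maps the free variables of $\rho$ to $\hat g(c(\bar x)) = g(c(\bar x)) = \bar t$. Hence $\bar t \in \rho(\chase{D}{\dep'}) = Q'(D)$, the desired contradiction.

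The only delicate point is bookkeeping with the frozen constants. Because $c(\bar x)$ lives among the \emph{constants} of $D_{q_i}$, the map $\hat g$ we construct must ``forget'' that the $c(x)$ are constants (remapping $c(x) \mapsto h(x)$) while still acting as a genuine homomorphism on exactly those constants that occur in $\dep'$ and $\rho$. I expect this to be the main thing to get right: one has to check that the inductive construction of $\hat g$ (each new null mapped to a witness term) never uses the hypothesis that $\hat g$ fixes the $c(x)$, and that, conversely, any constant of $\dep'$ or $\rho$ — none of which equals a fresh $c(x)$ — is preserved by $\hat g$. This is precisely why the statement is phrased in terms of the frozen database $D_{q_i}$ rather than a canonical instance built with labeled nulls, and it is the familiar frozen-variables argument underlying the classical chase characterization of CQ containment under dependencies; once it is set up, the remaining composition of homomorphisms is routine.
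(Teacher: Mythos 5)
Your proof is correct and follows essentially the same route as the paper's: assume $c(\bar x) \in Q'(D_{q_i})$, use a counterexample $D$, $\bar t$ to $q_i \subseteq Q'$ to get the map $h \circ c^{-1}$ from $D_{q_i}$ into $D$, extend it along the chase to a map $\chase{D_{q_i}}{\dep'} \to \chase{D}{\dep'}$, and compose with the match of the query of $Q'$ to contradict $\bar t \notin Q'(D)$. Your explicit treatment of the frozen constants $c(x)$ (that they are fresh, hence absent from $\dep'$ and the query, so the extended map need not fix them) is exactly the point the paper leaves implicit when it calls the extension $\rho'$ a mere "extension" rather than a homomorphism.
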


\begin{proof}
Since $q_i \not\subseteq Q'$, there exists an $\insS$-database $D$, and a tuple of constants ${\bar t}$ such that ${\bar t} \in q_i(D)$ and ${\bar t} \not\in Q'(D)$. Clearly, there exists a homomorphism $h$ such that $h(q_i) \subseteq D$ and $h({\bar x}) = {\bar t}$. Observe also that $\rho(D_{q_i}) \subseteq D$, where $\rho = h \circ c^{-1}$. Towards a contradiction, assume that $c({\bar x}) \in Q'(D_{q_i})$. This implies that there exists a homomorphism $\gamma$ such that $\gamma(q') \subseteq \chase{D_{q_i}}{\dep}$ and $\gamma({\bar y}) = c({\bar x})$, where $Q' = (\insS,\dep,q'({\bar y}))$. It is not difficult to see that there exists an extension $\rho'$ of $\rho$ such that $\rho'(\chase{D_{q_i}}{\dep}) \subseteq \chase{D}{\dep}$ and $\rho'({\bar x}) = {\bar t}$. Hence, $\rho'(\gamma(q')) \subseteq \chase{D}{\dep}$, which implies that ${\bar t} \in q'(\chase{D}{\dep})$; thus, ${\bar t} \in Q'(D)$. But this contradicts the fact that ${\bar t} \not\in Q'(D)$, and the claim follows.
\end{proof}

Observe that $c({\bar x}) \in q(D_{q_i})$, which immediately implies that $c({\bar x}) \in Q(D_{q_i})$. Consequently, by Lemma~\ref{lem:auxiliary-small-witness-property}, $Q(D_{q_i}) \not\subseteq Q'(D_{q_i})$. The claim follows since, by construction, $D_{q_i}$ is an $\insS$-database such that $|D_{q_i}| \leq f_{\class{O}}(Q)$.

\subsection*{The Algorithm $\mathsf{XRewrite}$}

\begin{algorithm}[t]
\caption{The algorithm $\mathsf{XRewrite}$
\label{alg:tgd-rewrite}} \small
    \KwIn{An OMQ $Q = (\insS,\dep,q(\bar x)) \in (\class{TGD},\class{CQ})$}
    \KwOut{A UCQ $q'(\bar x)$ such that $Q(D) = q'(D)$, for every $\insS$-database $D$}
    \vspace{2mm}
    $i := 0$\;
    $Q_{\textsc{rew}} := \{\langle q,\mathsf{r},\mathsf{u} \rangle\}$\;
    \Repeat{$Q_{\textsc{temp}} = Q_{\textsc{rew}}$}{
        $Q_{\textsc{temp}} := Q_{\textsc{rew}}$\;
        \ForEach{$\langle q,x,\mathsf{u} \rangle \in Q_{\textsc{temp}},$ \emph{where} $x \in \{\mathsf{r},\mathsf{f}\}$}{
            \ForEach{$\sigma \in \Sigma$}{
                \tcc{\textrm{rewriting step}}
                \ForEach{$S \subseteq \body{q}$ \emph{such that} $\sigma$ \emph{is applicable to} $S$}{
                        $i := i + 1$\;
                        $q^{\prime} := \gamma_{S,\sigma^i}(q[S/\body{\sigma^i}])$\;
                        \If{\emph{there is no} $\tup{q'',\mathsf{r},\star} \in Q_{\textsc{rew}}$ \emph{such that} $q' \simeq q''$}{
                            $Q_{\textsc{rew}} := Q_{\textsc{rew}} \cup \{\langle q',\mathsf{r},\mathsf{u} \rangle\}$\;
                        }
                }
                \tcc{\textrm{factorization step}}
                \ForEach{$S \subseteq \body{q}$ \emph{that is factorizable w.r.t.} $\sigma$}{
                    $q^{\prime} := \gamma_{S}(q)$\;
                    \If{\emph{there is no} $\tup{q'',\star,\star} \in Q_{\textsc{rew}}$ \emph{such that} $q' \simeq q''$}{
                        $Q_{\textsc{rew}} := Q_{\textsc{rew}} \cup \{\langle q^{\prime},\mathsf{f},\mathsf{u} \rangle\}$\;
                    }
                }
            }
            \tcc{\textrm{query $q$ is now explored}}
            $Q_{\textsc{rew}} := (Q_{\textsc{rew}} \setminus \{\tup{q,x,\mathsf{u}}\}) \cup
            \{\tup{q,x,\mathsf{e}}\}$\;
        }
    }
    $Q_{\textsc{fin}} := \{q ~|~ \langle q,\mathsf{r},\mathsf{e} \rangle \in Q_{\textsc{rew}}, \text{ and } q \text{ contains only predicates of } \insS\}$\;
    \Return{$Q_{\textsc{fin}}$}
\end{algorithm}

In view of the fact that the rewriting algorithm $\mathsf{XRewrite}$ is heavily used in our complexity analysis, we would like to recall its definition. This algorithm is based on resolution, and thus, before we proceed further, we need to recall the crucial notion of unification.
A set of atoms $A = \{\alpha_1,\ldots,\alpha_n\}$, where $n \geqslant 2$, \emph{unifies} if there exists a substitution $\gamma$, called \emph{unifier} for $A$, such that $\gamma(\alpha_1) = \cdots = \gamma(\alpha_n)$. A \emph{most general unifier (MGU)} for $A$ is a unifier for $A$, denoted as $\gamma_{A}$, such that for each other unifier $\gamma$ for $A$, there exists a substitution $\gamma'$ such that $\gamma = \gamma' \circ \gamma_{A}$. Notice that if a set of atoms unify, then there exists a MGU. Furthermore, the MGU for a set of atoms is unique (modulo variable renaming).

The algorihtm proceeds by exhaustively applying two steps: {\em rewriting} and {\em factorization}, which in turn rely on the technical notions of {\em applicability} and {\em factorizability}, respectively.
We assume, w.l.o.g., that tgds and CQs do not share variables. Given a CQ $q$, a variable $x$ is called \emph{shared} in $q$ if $x$ is a free variable of $q$, or it occurs more than once in $q$.
In what follows, we assume, w.l.o.g., that tgds are in normal form, i.e., they have only one atom in the head, and only one occurrence of an existentially quantified variable~\cite{CaGP12}. We write $\pi_{\exists}(\sigma)$ for the position at which the existentially quantified variable of $\sigma$ occurs; in case $\sigma$ does not mention an existentially quantified variable, then $\pi_{\exists}(\sigma) = \varepsilon$. (Recall that a position $P[i]$ identifies the $i$-th attribute of a predicate $P$.)
We are now ready to recall applicability and factorizability; in what follows, we write $\body{q}$ for the set of atoms occurring in $q$, and $\head{\sigma}$ for the head-atom of $\sigma$.

\begin{definition}(Applicability)\label{def:applicability}
Consider a CQ $q$ and a tgd $\sigma$. Given a set of atoms $S
\subseteq \body{q}$, we say that $\sigma$ is \emph{applicable} to
$S$ if the following conditions are satisfied:
\begin{enumerate}
\item the set $S \cup \{\head{\sigma}\}$ unifies, and
\item for each $\alpha \in S$, if the term at position $\pi$ in
$\alpha$ is either a constant or a shared variable in $q$, then
$\pi \neq \pi_{\exists}(\sigma)$. \hfill\markfull
\end{enumerate}
\end{definition}

Roughly, whenever $\sigma$ is \emph{applicable} to $S$, this means that the atoms of $S$ may be generated during the chase procedure by applying $\sigma$. Therefore, we are allowed to apply a rewriting step (which is essentially a resolution step) that resolves $S$ using $\sigma$, i.e., $S$ is replaced by $\body{\sigma}$, and a new CQ that is closer to the input database is obtained.

If we start applying rewriting steps blindly, without checking for applicability, then the soundness of the rewriting procedure is not guaranteed. However, it is possible that the applicability condition is not satisfied, but still we should apply a rewriting step.
This may happen due to the presence of redundant atoms in a query. For example, given the CQ
\[
q\ =\ \exists x \exists y \exists z \, (R(x,y)\ \wedge\ R(x,z))
\]
and the tgd
\[
\sigma\ =\ P(u,v) \ra \exists w \, R(w,u)
\]
the applicability condition fails since the shared variable $x$ in $q$ occurs at the position $\pi_{\exists}(\sigma) = R[1]$. However, $q$ is essentially the CQ $ q = \exists x \exists y \, R(x,y)$, and now the applicability condition is satisfied.
From the above informal discussion, we conclude that the applicability condition may prevent the algorithm from being complete since some valid rewriting steps are blocked. Because of this reason, we need the so-called factorization step, which aims at converting some shared variables into non-shared variables, and thus, satisfy the applicability condition.
In general, this can be achieved by exhaustively unifying all the atoms that unify in the body of a CQ. However, some of these unifications do not contribute in any way to satisfying the applicability condition, and, as a result, many superfluous CQs are generated. It is thus better to apply a restricted form of factorization that generates a possibly small number of CQs
that are vital for the completeness of the rewriting algorithm. This corresponds to the identification of all the atoms in the query whose shared existential variables come from the same atom in the chase, and they can be unified with no loss of information. Summing up, the key idea underlying the notion of factorizability is as follows: in order to apply the factorization step, there must exist a tgd that can be applied to its output.\footnote{Let us clarify that for the purposes of the present work we can rely on the naive approach of exhaustively unifying all the atoms that unify in the body of a CQ. However, we would like to be consistent with~\cite{GoOP14}, where the algorithm $\mathsf{XRewrite}$ is proposed, and thus, we stick on the slightly more involved notion of factorizability.}

\begin{definition}(Factorizability)\label{def:factorizability}
Consider a CQ $q$ and a tgd $\sigma$. Given a set of atoms $S
\subseteq \body{q}$, where $|S| \geqslant 2$, we say that $S$ is
\emph{factorizable} w.r.t.~$\sigma$ if the following conditions are
satisfied:
\begin{enumerate}
\item $S$ unifies,
\item $\pi_{\exists}(\sigma) \neq \varepsilon$, and
\item there exists a variable $x \not\in \var{\body{q} \setminus S}$
that occurs in every atom of $S$ only at position
$\pi_{\exists}(\sigma)$. \hfill\markfull
\end{enumerate}
\end{definition}

Having the above key notions in place, we are now ready to recall the algorithm $\mathsf{XRewrite}$, which is depicted in Algorithm~\ref{alg:tgd-rewrite}. As said above, the UCQ rewriting of an OMQ $q = (\insS,\dep,q)$ is computed by exhaustively applying (i.e., until a fixpoint is reached) the rewriting and the factorization steps.
Notice that the CQs that are the result of the factorization step,
are nothing else than auxiliary queries which are critical for the
completeness of the final rewriting, but are not needed in the final
rewriting. Thus, during the iterative procedure, the queries are labeled with $\mathsf{r}$ (resp., $\mathsf{f}$) in order to keep track which of them are generated by the rewriting (resp., factorization) step. The CQ that is part of the input OMQ, although is not a result of the rewriting step, is labeled by $\mathsf{r}$ since it must be part of the final rewriting.
Moreover, once the two crucial steps have been exhaustively applied on a CQ $q$, it is not necessary to revisit $q$ since this will lead to redundant queries. Hence, the queries are also labeled with $\mathsf{e}$ (resp., $\mathsf{u}$) indicating that a query has been already explored (resp., is unexplored).
Let us now describe the two main steps of the algorithm. In the sequel, consider a triple $\tup{q,x,y}$, where $\tup{x,y} \in \{\mathsf{r},\mathsf{f}\} \times \{\mathsf{e},\mathsf{u}\}$ (this is how we indicate that $q$ is labeled by $x$ and $y$), and a tgd $\sigma \in \dep$. We assume that $q$ is of the form $\exists {\bar x} \, \varphi({\bar x},{\bar y})$.

\begin{description}
\item[\textbf{Rewriting Step.}] For each $S \subseteq \body{q}$ such
that $\sigma$ is applicable to $S$, the $i$-th application of the
rewriting step generates the query $q' =
\gamma_{S,\sigma^i}(q[S/\body{\sigma^i}])$, where $\sigma^i$ is the
tgd obtained from $\sigma$ by replacing each variable $x$ with
$x^i$, $\gamma_{S,\sigma^i}$ is the MGU for the set $S \cup
\{\head{\sigma^i}\}$ (which is the identity on the variables that
appear in the body but not in the head of $\sigma^i$), and
$q[S/\body{\sigma^i}]$ is obtained from $q$ be replacing $S$ with
$\body{\sigma^{i}}$.
By considering $\sigma^i$ (instead of $\sigma$) we basically rename,
using the integer $i$, the variables of $\sigma$. This renaming step
is needed in order to avoid undesirable clutters among the variables
introduced during different applications of the rewriting step.
Finally, if there is no $\tup{q'',\mathsf{r},\star} \in
Q_{\textsc{rew}}$, i.e., an (explored or unexplored) query that is
the result of the rewriting step, such that $q'$ and $q''$ are the
same (modulo bijective variable renaming), denoted $q' \simeq q''$,
then $\tup{q',\mathsf{r},\mathsf{u}}$ is added to $Q_{\textsc{rew}}$.\\

\item[\textbf{Factorization Step.}] For each $S \subseteq \body{q}$
that is factorizable w.r.t.~$\sigma$, the factorization step
generates the query $q' = \gamma_S(q)$, where $\gamma_S$ is the MGU
for $S$. If there is no $\tup{q'',\star,\star} \in Q_{\textsc{rew}}$, i.e., a query that is the result of the rewriting or the factorization step, and is explored or unexplored, such that $q' \simeq q''$, then $\tup{q', \mathsf{f},\mathsf{u}}$ is added to $Q_{\textsc{rew}}$.
\end{description}

\subsection*{Proof of Proposition~\ref{pro:function-nr}}

We assume, w.l.o.g., that the predicates of $\insS$ do not appear in the head of a tgd of $\dep$. Since $\dep \in \class{NR}$, by Lemma~\ref{lem:stratification}, $\dep$ admits a stratification $\{\dep_1,\ldots,\dep_n\}$ with stratification function $\mu : \sch{\dep} \ra \{0,\ldots,n\}$.
Let us briefly explain how the rewriting tree $T_Q$ of the OMQ $Q = (\insS,\dep,q)$ is defined. $T_Q$ is a rooted tree with $q$ being its root. The $i$-th level of $T_Q$ consists of the CQs obtained from the CQs of the $(i-1)$-th level by applying rewriting steps (see the algorithm $\mathsf{XRewrite}$ for details on the rewriting step) using only tgds from $\dep_{n-i+1}$. It is easy to verify that the CQs of the $i$-th level contain only predicates $P$ such that $\mu(P) < n-i+1$. It is now clear that the $n$-th level of $T_Q$ (i.e., the leaves of $T_Q$) consists only of CQs obtained during the execution of $\mathsf{XRewrite}(Q)$ that contain only predicates of $\insS$. Thus, in order to obtained the desired upper bound, it suffices to show that the number of atoms that occur in a CQ that is a leaf of $T_Q$ is at most $|q| \cdot \left(\max_{\tau \in \dep} \{|\body{\tau}|\}\right)^{|\sch{\dep}|}$.
To this end, let us focus on one branch $B$ of $T_Q$ from the root $q$ to a leaf $q'$. Such a branch can be naturally represented as a $k$-ary forest $F_Q^B$, where the root nodes are the atoms of $q$, and whenever an atom $\alpha$ is resolved during the rewriting step using a tgd $\tau$, the atoms of $\body{\tau}$, after applying the appropriate MGU, are the child nodes of $\alpha$. Therefore, to obtain the desired upper bound, it suffices to show that the number of leaves of $F_Q^B$ is at most $|q| \cdot \left(\max_{\tau \in \dep} \{|\body{\tau}|\}\right)^{|\sch{\dep}|}$.
By construction, $F_Q^B$ consists, in general, of $|q|$ $k$-ary rooted trees, where $k = \max_{\tau \in \dep} \{|\body{\tau}|\}$, of depth $n$. Hence, the number of leaves of $F_Q^B$ is at most $|q| \cdot \left(\max_{\tau \in \dep} \{|\body{\tau}|\}\right)^{n}$. Since $n \leq |\sch{\dep}|$, the claim follows.

\subsection*{Proof of Theorem~\ref{the:cont-nr}}

A proof sketch for the \text{\rm co}\NEXP$^{\textsc{NP}}$ upper bound is given in the main body of the paper. We proceed to establish the \text{\rm P}$^{\textsc{NEXP}}$-hardness.
Our proof is by reduction from a tiling problem that has been recently introduced in~\cite{EiLP16}, which in turn relies on the standard {\em Exponential Tiling Problem}. Let us first recall the latter problem.

An instance of the Exponential Tiling Problem is a tuple $(n,m,H,V,s)$, where $n,m$ are numbers (in unary), $H,V$ are subsets of $\{1,\ldots,m\} \times \{1,\ldots,m\}$, and $s$ is a sequence of numbers of $\{1,\ldots,m\}$.
Such a tuple specifies that we desire a $2^n \times 2^n$ grid, where each cell is tiled with a tile from $\{1,\ldots,m\}$. $H$ (resp., $V$) is the horizontal (resp., vertical) compatibility relation, while $s$ represents a constraint on the initial part of the first row of the grid.
A \emph{solution} to such an instance of the Exponential Tiling Problem is a function $f : \{0,\ldots,2^n-1\} \times \{0,\ldots,2^n-1\} \ra \{1,\ldots,m\}$ such that:
\begin{enumerate}
\item $f(i,0) = s[i]$, for each $0 \leq i \leq (|s|-1)$;

\item $(f(i,j),f(i+1,j)) \in H$, for each $0 \leq i \leq 2^n-2$ and $0 \leq j \leq 2^n-1$; and

\item $(f(i,j),f(i,j+1)) \in V$, for each $0 \leq i \leq 2^n-1$ and $0 \leq j \leq 2^n-2$.
\end{enumerate}
We will refer to $\{0,\ldots,2^n-1\} \times \{0,\ldots,2^n-1\}$ as a \emph{grid}, with the pairs in it being \emph{cells}. A cell consists of two \emph{coordinates}, the column-coordinate (for short col-coordinate) and the row-coordinate, and any function on a grid is a \emph{tiling}. The Exponential Tiling Problem is defined as follows: given an instance $T$ as above, decide whether $T$ has a solution. It is known that this problem is \NEXP-hard (see, e.g., Section~3.2 of~\cite{John90}).

We are now ready to recall the tiling problem introduced in~\cite{EiLP16}, called {\em Extended Tiling Problem} (ETP), which is \text{\rm P}$^{\textsc{NEXP}}$-hard. An instance of this problem is a tuple $(k,n,m,H_1,V_1,H_2,V_2)$, where $k,n,m$ are numbers (in unary), and $H_1,V_1,H_2,V_2$ are subsets of $\{1,\ldots,m\} \times \{1,\ldots,m\}$. The question is as follows: is it the case that for every sequence $s$, where $|s| = k$, of numbers of $\{1,\ldots,m\}$, $(n,m,H_1,V_1,s)$ has no solution {\em or} $(n,m,H_2,V_2,s)$ has a solution?

We give a reduction from the ETP to $\cont(\class{NR},\class{CQ})$. More precisely, given an instance $T = (k,n,m,H_1,V_1,H_2,V_2)$ of the ETP, our goal is to construct in polynomial time two queries $Q_i = (\insS,\dep_i,q_i) \in (\class{NR},\class{CQ})$, for $i \in \{1,2\}$, such that $T$ has a solution iff $Q_1 \subseteq Q_2$.

\subsubsection*{Data Schema $\mathbf{S}$}

The data schema $\mathbf{S}$ consists of:

\begin{itemize}
\item $0$-ary predicates $C_i^j$, for each $i \in \{0,\ldots,k-1\}$ and $j \in \{1,\ldots,m\}$; the atom $C_i^j$ indicates that $s_i = j$.
\end{itemize}

\subsubsection*{The Query $Q_1$}

The goal of the query $Q_1$ is twofold: (i) to check that the so-called {\em existence property} of the input database, i.e., for every $i \in \{0,\ldots,k-1\}$, there exists at least one atom of the form $C_i^j$, is satisfied, and (ii) to check whether $(n,m,H_1,V_1,s)$, where $s$ is the sequence of tilings encoded in the input database, has a solution. To this end, the query $Q_1$ will mention the following predicates:

\begin{itemize}
\item $0$-ary predicate $C_i$, indicating that there exists at least one atom of the form $C_i^j$ in the input database.

\item $0$-ary predicate ${\rm Existence}$, indicating that the input database enjoys the existence property.

\item Unary predicate ${\rm Tile}_i$, for each $i \in \{1,\ldots,m\}$; the atom ${\rm Tile}_i(x)$ states that $x$ is the tile $i$.

\item Binary predicate $H$; the atom $H(x,y)$ encodes the fact that $(x,y) \in H_1$.

\item Binary predicate $V$; the atom $V(x,y)$ encodes the fact that $(x,y) \in V_1$.

\item $5$-ary predicate $T_i$, for each $i \in \{1,\ldots,n\}$; the atom $T_i(x,x_1,x_2,x_3,x_4)$ states that $x$ is a $2^i \times 2^i$ tiling obtained from the $2^{i-1} \times 2^{i-1}$ tilings $x_1,\ldots,x_4$ -- details on the inductive construction of $2^i \times 2^i$ tilings from $2^{i-1} \times 2^{i-1}$ tilings are given below.

\item Unary predicate ${\rm Initial}_i$, for each $i \in \{0,\ldots,k-1\}$; the atom ${\rm Initial}_i(x)$ states that $s[i] = x$, i.e., the $i$-th element of the sequence $s$ is $x$.

\item Binary predicate ${\rm Top}_i^j$, for each $i \in \{1,\ldots,n\}$ and $j \in \{0,\ldots,k-1\}$; the atom ${\rm Top}_i^j(x,y)$ states that in the $2^i \times 2^i$ tiling $x$ the tile at position $(j,0)$ is $y$.

\item $0$-ary predicate ${\rm Tiling}$, indicating that there exists a $2^n \times 2^n$ tiling that is compatible with the initial tiling $s$ encoded in the input database.

\item $0$-ary predicate ${\rm Goal}$, which is derived whenever the predicates ${\rm Existence}$ and ${\rm Tiling}$ are derived.
\end{itemize}

\medskip

\medskip

$Q_1$ is defined as the query $(\insS,\dep_1,{\rm Goal})$, where $\dep_1$ consists of the following tgds:
\begin{itemize}
\item Checking for the existence property of the input database

\smallskip
For each $i \in \{0,\ldots,k-1\}$ and $j \in \{1,\ldots,m\}$:
\begin{eqnarray*}
C_i^j &\ra& C_i
\end{eqnarray*}
and the tgd that checks for the existence property
\begin{eqnarray*}
C_0,\ldots,C_{k-1} &\ra& {\rm Existence}
\end{eqnarray*}

\item Generate the tiles
\begin{eqnarray*}
&\ra& \exists x_1 \ldots \exists x_m \, ({\rm Tile}_1(x_1),\ldots,{\rm Tile}_m(x_m))
\end{eqnarray*}

\item Generate the compatibility relations

\smallskip

For each $(i,j) \in H_1$:
\begin{eqnarray*}
{\rm Tile}_i(x), {\rm Tile}_j(y) &\ra& H(x,y)
\end{eqnarray*}
For each $(i,j) \in V_1$:
\begin{eqnarray*}
{\rm Tile}_i(x), {\rm Tile}_j(y) &\ra& V(x,y)
\end{eqnarray*}

\begin{figure}[t]
 \epsfclipon
  \centerline
  {\hbox{
  \leavevmode
  \epsffile{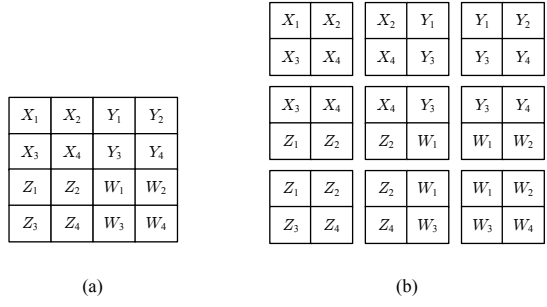}
  }} \epsfclipoff \caption{Inductive construction of tilings.}
  \label{fig:tilings}
\end{figure}

\item Generate the $2^n \times 2^n$ tilings. The key idea is to inductively construct $2^i \times 2^i$ tilings from $2^{i-1} \times 2^{i-1}$ tilings. It is easy to verify that the grid in Figure~\ref{fig:tilings}(a) is a $2^i \times 2^i$ tiling iff the nine subgrids of it, shown in Figure~\ref{fig:tilings}(b), are $2^{i-1} \times 2^{i-1}$ tilings. This has been already observed in~\cite{DaVo97}, where Datalog with complex values is studied.

\smallskip

First, we construct tilings of size $2 \times 2$ (the base case of the inductive construction):
\begin{eqnarray*}
H(x_1,x_2), H(x_3,x_4), V(x_1,x_3), V(x_2,x_4) &\ra& \exists x \, T_1(x,x_1,x_2,x_3,x_4)
\end{eqnarray*}
Then, we inductively construct tilings of larger size until we get tilings of size $2^n \times 2^n$. This is done using the following tgds. For each $i \in \{2,\ldots,n\}$:
\begin{align*}
&T_{i-1}(x_1,x_{11},x_{12},x_{21},x_{22}),T_{i-1}(x_2,x_{12},x_{13},x_{22},x_{23}),T_{i-1}(x_3,x_{13},x_{14},x_{23},x_{24})\\
&T_{i-1}(x_4,x_{21},x_{22},x_{31},x_{32}),T_{i-1}(x_5,x_{22},x_{23},x_{32},x_{33}),T_{i-1}(x_6,x_{23},x_{24},x_{33},x_{34}),\\
&T_{i-1}(x_7,x_{31},x_{32},x_{41},x_{42}),T_{i-1}(x_8,x_{32},x_{33},x_{42},x_{43}),T_{i-1}(x_9,x_{33},x_{34},x_{43},x_{44}) \ra\\
& \hspace{110mm} \exists x \, T_i(x,x_1,x_3,x_7,x_9)
\end{align*}

\item Extract from the $2^n \times 2^n$ tilings the tiles at positions $(0,0),(1,0),\ldots,(k-1,0)$. This is done using the following tgds:
\begin{eqnarray*}
T_1(x,x_1,x_2,x_3,x_4) &\ra& {\rm Top}_1^0(x,x_1),{\rm Top}_1^1(x,x_2)\\
T_2(x,x_1,x_2,x_3,x_4),{\rm Top}_1^0(x_1,y_0),{\rm Top}_1^1(x_1,y_1) &\ra& {\rm Top}_2^0(x,y_0),{\rm Top}_2^1(x,y_1)\\
T_2(x,x_1,x_2,x_3,x_4),{\rm Top}_1^0(x_2,y_0),{\rm Top}_1^1(x_2,y_1) &\ra& {\rm Top}_2^2(x,y_0),{\rm Top}_2^3(x,y_1)\\
&\vdots&\\
T_\ell(x,x_1,x_2,x_3,x_4),{\rm Top}_{\ell-1}^0(x_1,y_0), \cdots, {\rm Top}_{\ell - 1}^{2^{\ell-1}-1}(x_1,y_{2^{\ell-1}-1}) &\ra& {\rm Top}_{\ell}^0(x,y_0),\cdots,{\rm Top}_{\ell}^{2^{\ell-1}-1}(x,y_{2^{\ell-1}-1})\\
T_\ell(x,x_1,x_2,x_3,x_4),{\rm Top}_{\ell-1}^0(x_1,y_0), \cdots, {\rm Top}_{\ell - 1}^{k-2^{\ell-1}-1}(x_1,y_{k-2^{\ell-1}-1}) &\ra& {\rm Top}_{\ell}^{2^{\ell-1}}(x,y_0),\cdots,{\rm Top}_{\ell}^{k-1}(x,y_{k-2^{\ell-1}-1}),
\end{eqnarray*}
where $\ell = \lceil \log k \rceil$. Moreover, for each $i \in \{\ell+1,\ldots,n\}$:
\begin{eqnarray*}
T_i(x,x_1,x_2,x_3,x_4), {\rm Top}_{i-1}^{0}(x_1,y_0),\cdots,{\rm Top}_{i-1}^{k-1}(x_1,y_{k-1}) &\ra& {\rm Top}_{i}^{0}(x,y_0),\ldots,{\rm Top}_{i}^{k-1}(x,y_{k-1})
\end{eqnarray*}

\item Check whether there exists a $2^n \times 2^n$ tiling that is compatible with the sequence of tilings $s$

\smallskip

For each $i \in \{0,\ldots,k-1\}$ and $j \in \{1,\ldots,m\}$:
\begin{eqnarray*}
C_i^j, {\rm Tile}_j(x) &\ra& {\rm Initial}_i(x)
\end{eqnarray*}
and the tgd
\begin{eqnarray*}
{\rm Top}_{n}^{0}(x,y_0), {\rm Initial}_0(y_0), \cdots,
{\rm Top}_{n}^{k-1}(x,y_{k-1}), {\rm Initial}_{k-1}(y_{k-1}) &\ra& {\rm Tiling}
\end{eqnarray*}

\item Finally, we have the output tgd
\begin{eqnarray*}
{\rm Existence}, {\rm Tiling} &\ra& {\rm Goal}
\end{eqnarray*}
\end{itemize}
This concludes the construction of $Q_1$.

\subsubsection*{The Query $Q_2$}

The goal of the query $Q_2$ is twofold: (i) to check that the so-called {\em uniqueness property} of the input database, i.e., for every $i \in \{0,\ldots,k-1\}$, there exists at most one atom of the form $C_i^j$, is satisfied, and (ii) to check whether $(n,m,H_2,V_2,s)$, where $s$ is the sequence of tilings encoded in the input database, has a solution. The query $Q_2$ mentions the same predicates as $Q_1$, and is defined as $(\insS,\dep_2,{\rm Goal})$, where $\dep_2$ consists of the following tgds:

\begin{itemize}
\item Checking the uniqueness property

\smallskip

For each $i \in \{0,\ldots,k-1\}$ and $j,\ell \in \{1,\ldots,m\}$ with $j < \ell$:
\begin{eqnarray*}
C_i^j, C_i^\ell &\ra& {\rm Goal}
\end{eqnarray*}

\item The rest of $\dep_2$ encodes the tiling problem $(n,m,H_2,V_2,s)$ in exactly the same way as $\dep_1$ encodes $(n,m,H_1,V_1,s)$.
\end{itemize}
This concludes the construction of $Q_2$.

\subsection*{Proof of Proposition~\ref{pro:function-sticky-lower-bound}}

The set $\dep^n$ consists of the following tgds; for brevity, we write ${\bar x}_{i}^{j}$ for $x_i,x_{i+1},\ldots,x_j$:\footnote{A similar construction has been used in~\cite{GoOP14} for showing a lower bound on the size of a CQ in the UCQ rewriting of a $(\class{S},\class{CQ})$ OMQ.}
\begin{eqnarray*}
S(x_1,\ldots,x_n) &\ra& P_n(x_1,\ldots,x_n)\\
P_i({\bar x}_{1}^{i-1},z,{\bar x}_{i+1}^{n},z,o), P_i({\bar x}_{1}^{i-1},o,{\bar x}_{i+1}^{n},z,o) &\ra& P_{i-1}({\bar x}_{1}^{i-1},z,{\bar x}_{i+1}^{n},z,o), \quad 1 \leq i \leq n,\\
P_0(\underbrace{z,\ldots,z}_{n},z,o) &\ra& \text{\rm Ans}(z,o),
\end{eqnarray*}
while $q = \text{\rm Ans}(0,1)$.
It can be verified that, for every $\{S\}$-database $D$, $Q^{n}(D) \neq \emptyset$ implies that
\[
D\ \supseteq\ \{S(c_1,\ldots,c_{n-2},0,1) \mid (c_1,\ldots,c_{n-2}) \in \{0,1\}^{n-2}\},
\]
and thus, $|D| \geq 2^{n-2}$.
Let $Q = (\{S\},\dep',q')$, where $\dep'$ is a set of tgds and $q'$ a Boolean CQ, and $D$ an $\{S\}$-database. Clearly, $Q^{n}(D) \not\subseteq Q(D)$ iff $Q^{n}(D) \neq \emptyset$ and $Q(D) = \emptyset$. This implies that $|D| \geq 2^{n-2}$, and the claim follows.

\subsection*{Proof of Theorem~\ref{the:cont-sticky}}
The \text{\rm co}\NEXP~upper bound, as well as the $\Pi_{2}^{P}$-hardness in case of fixed-arity predicates, are discussed in the main body of the paper. Here, we show the \text{\rm co}\NEXP-hardness. The proof proceeds in two steps:
\begin{enumerate}
\item First, we show that $\cont((\class{FNR},\class{CQ}),(\class{L},\class{UCQ}))$ is \text{\rm co}\NEXP-hard, where $\class{FNR}$ denotes the class of full non-recursive sets of tgds, i.e., non-recursive sets of tgds without existentially quantified variables.

\item Then, we reduce $\cont((\class{FNR},\class{CQ}),(\class{L},\class{UCQ}))$ to $\cont((\class{S},\class{CQ}),(\class{L},\class{UCQ}))$ by showing that (under some assumptions that are explained below) every query in $(\class{FNR},\class{CQ})$ can be rewritten as an $(\class{S},\class{CQ})$ query.
\end{enumerate}
By Proposition~\ref{pro:from-ucq-to-cq}, we immediately get that $\cont((\class{S},\class{CQ}),(\class{L},\class{CQ}))$ is \text{\rm co}\NEXP-hard, as needed.

\subsection*{Step 1: $\cont((\class{FNR},\class{CQ}),(\class{L},\class{UCQ}))$ is \text{\rm co}\NEXP-hard}

We show that $\cont((\class{FNR},\class{CQ}),(\class{L},\class{UCQ}))$ is \text{\rm co}\NEXP-hard, even if we focus on {\em 0-1 queries}, that is, queries $Q$ with following property: for every database $D$, $Q(D) = Q(D_{01})$, where $D_{01} \subseteq D$ is the restriction of $D$ on the binary domain $\{0,1\}$, i.e., $D_{01} = \{R({\bar c}) \in D \mid {\bar c} \subseteq \{0,1\}\}$.
The proof is by reduction from the Exponential Tiling Problem, and is a non-trivial adaptation of the one given in~\cite{BeGo10} for showing that containment of non-recursive Datalog queries is co\NEXP-hard.

\begin{theorem}\label{the:fnr-into-lin}
$\cont((\class{FNR},\class{CQ}),(\class{L},\class{UCQ}))$ is \text{\rm co}\NEXP-hard, even for 0-1 queries.
\end{theorem}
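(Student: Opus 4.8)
The plan is to reduce the Exponential Tiling Problem — which is \NEXP-complete — to the \emph{complement} of $\cont((\class{FNR},\class{CQ}),(\class{L},\class{UCQ}))$; since non-containment will then be \NEXP-hard, containment is \text{\rm co}\NEXP-hard. Given an instance $\mathcal{I} = (n,m,H,V,s)$, I would construct in polynomial time two $0$-$1$ queries $Q_1 = (\insS,\dep_1,q_1) \in (\class{FNR},\class{CQ})$ and $Q_2 = (\insS,\emptyset,q_2) \in (\class{L},\class{UCQ})$ (an empty set of tgds is trivially linear) such that $\mathcal{I}$ has a solution iff $Q_1 \not\subseteq Q_2$. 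The data schema is $\insS = \{\mathrm{Cell}_t/2n \mid t \in \{1,\dots,m\}\}$ together with one auxiliary unary predicate used only for the $0$-$1$ bookkeeping discussed below; a fact $\mathrm{Cell}_t(\bar i,\bar j)$, with $\bar i,\bar j \in \{0,1\}^n$, is read as ``the cell with column $\bar i$ and row $\bar j$ carries tile $t$''. The guiding principle is that a database witnessing non-containment must be (the $0$-$1$ part of) the graph of an actual solution, and the division of labour between $Q_1$ and $Q_2$ is forced by the fact that $\dep_1$ has no existential quantifiers: $Q_1$ cannot \emph{guess} a tiling, it can only \emph{verify} a candidate already sitting in the database.

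First I would design $Q_1$ so that it derives its Boolean goal $q_1 = \mathrm{Goal}$ precisely on the databases whose $0$-$1$ fragment assigns \emph{at least one} tile to \emph{every} cell and agrees with $s$ on the initial segment of the first row. The ``every cell is tiled'' condition — a bounded universal statement that a positive query cannot read off the database directly — is certified by a polynomial-size doubling construction: predicates $\mathrm{OK}_i$ of arity $2(n-i)$, base rules $\mathrm{Cell}_t(\bar c,\bar r) \to \mathrm{OK}_0(\bar c,\bar r)$ for each $t$, and, for $0 \le i < n$, a full non-recursive tgd that assembles $\mathrm{OK}_{i+1}$ of a $2^{i+1}\times 2^{i+1}$ square from the $\mathrm{OK}_i$ facts of its four sub-squares, with the freshly fixed address bits appearing as the literal constants $0,1$; finally $\mathrm{OK}_n(),\mathrm{Init}_0,\dots,\mathrm{Init}_{|s|-1} \to \mathrm{Goal}$, where $\mathrm{Init}_j$ is derived by the fact-like tgd $\mathrm{Cell}_{s[j]}(\mathrm{bin}_n(j),\bar 0) \to \mathrm{Init}_j$. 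Then $\mathrm{OK}_n()$ is derivable iff every $\bar a \in \{0,1\}^{2n}$ carries some tile, and $\dep_1$ is plainly full, non-recursive, and of polynomial size. On the other side, $q_2$ is a polynomial-size UCQ whose disjuncts list the ``bad patterns'' that a genuine tiling must avoid: a cell carrying two distinct tiles ($\mathrm{Cell}_t(\bar x) \wedge \mathrm{Cell}_{t'}(\bar x)$ with $t \ne t'$); a horizontally adjacent pair of cells whose tiles are not in $H$; and the vertical analogue with $V$. The only subtlety is that ``successor'' on $n$-bit addresses is expressed by a \emph{disjunction over the carry position}, each disjunct pinning the low bits to literal all-$1$ resp.\ all-$0$, flipping the carry bit, and sharing the high bits; this keeps each disjunct, hence the whole UCQ, of polynomial size.

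Correctness is then immediate in both directions. If $\mathcal{I}$ has a solution $f$, put $D_f = \{\mathrm{Cell}_{f(i,j)}(\mathrm{bin}_n(i),\mathrm{bin}_n(j)) \mid 0 \le i,j < 2^n\}$: the doubling succeeds and the initial segment matches $s$, so $Q_1(D_f)\neq\emptyset$, while $f$ being a function that is $H$- and $V$-consistent gives $Q_2(D_f) = \emptyset$; hence $Q_1 \not\subseteq Q_2$. Conversely, any witness $D$ with $Q_1(D)\neq\emptyset$ and $Q_2(D)=\emptyset$ has a $0$-$1$ fragment in which every cell carries at least one tile (from $\mathrm{OK}_n$) and at most one (no two-tile pattern), hence exactly one, defining a total function that is $H$- and $V$-consistent (no adjacency pattern matches) and matches $s$ (from the $\mathrm{Init}_j$'s) — that is, a solution of $\mathcal{I}$.

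Two routine but fiddly points remain. One is the $0$-$1$ property $Q(D) = Q(D_{01})$: $Q_1$ has it essentially for free, since $\mathrm{Goal}$ is only ever derived through fully instantiated (literal $0/1$) addresses, whereas for $Q_2$ one forces the matches of the bad-pattern CQs to live in $\{0,1\}$ by the standard guarding device used for the analogous result in~\cite{BeGo10}; the second is the (tedious but straightforward) check that the doubling tgds and the carry disjunction are exactly as described and stay within polynomial size. The \emph{main obstacle} — the reason this is a \emph{non-trivial adaptation} of~\cite{BeGo10}, where containment of \emph{non-recursive Datalog in non-recursive Datalog} is shown \text{\rm co}\NEXP-hard — is exactly this allocation of work: because $\dep_1$ must be existential-free and non-recursive, the whole burden of checking functionality and $H/V$-consistency falls on the right-hand side, and squeezing those checks into a \emph{plain UCQ} (rather than a non-recursive Datalog program, as one could afford in~\cite{BeGo10}) is what necessitates the polynomial carry-disjunction encoding together with the $0$-$1$ restriction.
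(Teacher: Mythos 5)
Your reduction is essentially the paper's: the same Exponential Tiling Problem, the same division of labour (a full non-recursive left-hand side that certifies via a doubling construction that every cell of the exponential grid carries some tile, and a right-hand side UCQ that lists the ``bad patterns'' -- two tiles on one cell, $H$/$V$-incompatible neighbours), and the same correctness argument that a witness to non-containment, restricted to $\{0,1\}$, is exactly a solution. Your two deviations -- checking the initial condition $s$ on the left rather than the right, and inlining the successor relation into the UCQ by a polynomial disjunction over the carry position instead of generating $\mathrm{Succ}_n$ with linear tgds as the paper does -- are both harmless and keep everything polynomial.

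The genuine gap is the ``even for 0-1 queries'' part for $Q_2$. You take $Q_2=(\insS,\emptyset,q_2)$ and then say the matches of the bad-pattern CQs are forced into $\{0,1\}$ ``by the standard guarding device''; but that device is precisely a pair of fact tgds $\top\ra \mathrm{Bit}(0)$, $\top\ra\mathrm{Bit}(1)$ over an \emph{auxiliary} predicate whose chase extension is exactly $\{0,1\}$ independently of the database, together with $\mathrm{Bit}$-atoms conjoined to every disjunct -- it is unavailable once you insist on an empty ontology. Your fallback, an auxiliary \emph{data} predicate in $\insS$, does not work: its extension is chosen by the adversarial database, so either the guards can be emptied (a fully tiled but invalid database with no auxiliary facts then witnesses non-containment even when the tiling instance has no solution, breaking soundness, unless you also make $Q_1$ demand those facts), or, if the database puts non-0-1 constants into it, a bad pattern can match outside $\{0,1\}$ and $Q_2(D)\neq Q_2(D_{01})$, so the 0-1 property fails anyway. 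The fix is simply to give up the empty ontology and add the two $\mathrm{Bit}$ fact tgds to the right-hand side (this stays in $\class{L}$, which is closed under fact tgd extension) and guard every variable of every disjunct by $\mathrm{Bit}$ -- which is exactly what the paper's construction does, there additionally using linear tgds to build $\mathrm{Succ}_n$, although your carry-position disjunction remains a fine substitute for that part.
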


\begin{proof}
Given an instance $T = (n,m,H,V,s)$ of the Exponential Tiling Problem, we are going to construct a $(\class{FNR},\class{CQ})$ 0-1 query $Q_{T} = (\insS,\dep,q)$ and a $(\class{L},\class{UCQ})$ 0-1 query $Q'_{T} = (\insS,\dep_T,q_T)$ such that $T$ has a solution iff $Q_T \not\subseteq Q'_T$.

\subsubsection*{Data Schema $\mathbf{S}$}

The data schema $\mathbf{S}$ consists of:

\begin{itemize}
\item $2n$-ary predicates ${\rm TiledBy}_i$, for each $i \leq m$; the atom ${\rm TiledBy}_i(x_1,\ldots,x_n,y_1,\ldots,y_n)$ indicates that the cell with coordinates $((x_1,\ldots,x_n),(y_1,\ldots,y_n)) \in \{0,1\}^{n} \times \{0,1\}^n$ is tiled by tile $i$. Notice that we use $n$-bit binary numbers to represent a coordinate; this is the key difference between our construction and the one of~\cite{BeGo10}.
\end{itemize}

\subsubsection*{The Query $Q_T$}

The goal of the query $Q_T$ is to assert whether the input database encodes a candidate tiling, i.e., whether the entire grid is tiled, without taking into account the constraints, that is, the compatibility relations and the constraint on the initial part of the first row. To this end, the query $Q_T$ will mention the following predicates:
\begin{itemize}
\item Unary predicate ${\rm Bit}$; the atom ${\rm Bit}(x)$ simply says that $x$ is a bit, i.e., $x \in \{0,1\}$.

\item $2n$-ary predicate ${\rm TiledAboveCol}_i$, for each $i \leq n$; the atom ${\rm TiledAboveCol}_i({\bar x},{\bar y})$ says that for the row-coordinate ${\bar y}$ there are tiled cells with coordinates $({\bar x'},{\bar y})$ for every col-coordinate ${\bar x'}$ that agrees with ${\bar x}$ on the first $i-1$ bits. In other words, for the row corresponding to ${\bar y}$, every column extending the first $i-1$ bits of ${\bar x}$ is tiled. In particular, ${\rm TiledAboveCol}_1({\bar x},{\bar y})$ says that the entire row ${\bar y}$ is tiled.

\item $2n$-ary predicate ${\rm TiledAboveRow}_i$, for each $i \leq n$; the atom ${\rm TiledAboveRow}_i({\bar y})$ says that for every ${\bar y'}$ that agrees with ${\bar y}$ on the first $i-1$ bits, the row ${\bar y'}$ is fully tiled.

\item $n$-ary predicate ${\rm RowTiled}$; the atom ${\rm RowTiled}({\bar y})$ says that the row ${\bar y}$ is fully tiled.

\item $0$-ary predicate ${\rm AllTiled}$, which asserts that the entire grid is tiled.

\item $0$-ary predicate ${\rm Goal}$, which is derived whenever the predicate ${\rm AllTiled}$ is derived.
\end{itemize}

\medskip

$Q_T$ is defined as the query $(\insS,\dep,{\rm Goal})$, where $\dep$ consists of the following rules:
\begin{itemize}
\item Generate ${\rm Bit}$ atoms
\begin{eqnarray*}
&\ra& {\rm Bit}(0)\\
&\ra& {\rm Bit}(1).
\end{eqnarray*}

\item ${\rm RowTiled}$

\smallskip

For each $j,k \leq m$:
\begin{multline*}
{\rm TiledBy}_j(x_1,\ldots,x_{n-1},1,y_1,\ldots,y_n),
{\rm TiledBy}_k(x_1,\ldots,x_{n-1},0,y_1,\ldots,y_n),\\
{\rm Bit}(x_1), \ldots, {\rm Bit}(x_{n-1}), {\rm Bit}(y_1), \ldots, {\rm Bit}(y_n), {\rm Bit}(w) \ra\\
{\rm TiledAboveCol}_n(x_1,\ldots,x_{n-1},w,y_1,\ldots,y_n)
\end{multline*}

For each $2 \leq i \leq n$:
\begin{multline*}
{\rm TiledAboveCol}_i(x_1,\ldots,x_{i-1},1,x_{i+1},\ldots,x_n,y_1,\ldots,y_n),\\
{\rm TiledAboveCol}_i(x_1,\ldots,x_{i-1},0,x'_{i+1},\ldots,x'_n,y_1,\ldots,y_n),\\
{\rm Bit}(w_i), \ldots, {\rm Bit}(w_n) \ra\\
{\rm TiledAboveCol}_{i-1}(x_1,\ldots,x_{i-1},w_i,\ldots,w_n,y_1,\ldots,y_n)
\end{multline*}

A row is fully tiled:
\begin{eqnarray*}
{\rm TiledAboveCol}_{1}(x_1,\ldots,x_n,y_1,\ldots,y_n) &\ra& {\rm RowTiled}(y_1,\ldots,y_n)
\end{eqnarray*}

\item ${\rm AllTiled}$
\begin{eqnarray*}
{\rm RowTiled}(y_1,\ldots,y_{n-1},1),{\rm RowTiled}(y_1,\ldots,y_{n-1},0), {\rm Bit}(w) &\ra& {\rm TiledAboveRow}_{n}(y_1,\ldots,y_{n-1},w)
\end{eqnarray*}

For each $2 \leq i \leq n$:
\begin{multline*}
{\rm TiledAboveRow}_i(y_1,\ldots,y_{i-1},1,y_{i+1},\ldots,y_n),\\
{\rm TiledAboveRow}_i(y_1,\ldots,y_{i-1},0,y'_{i+1},\ldots,y'_n),\\
{\rm Bit}(w_i), \ldots, {\rm Bit}(w_n) \ra\\
{\rm TiledAboveRow}_{i-1}(y_1,\ldots,y_{i-1},w_i,\ldots,w_n)
\end{multline*}

The entire grid is tiled:
\begin{eqnarray*}
{\rm TiledAboveRow}_{1}(y_1,\ldots,y_n) &\ra& {\rm AllTiled}\\
{\rm AllTiled} &\ra& {\rm Goal}
\end{eqnarray*}
\end{itemize}
This concludes the construction of the query $Q_T$.

\subsubsection*{The Query $Q'_T$}

$Q'_T$ is defined in such a way that $Q'_T(D)$ is non-empty exactly when the input database $D$ encodes an invalid tiling, i.e., when one of the constraints on the tiles is violated. The query $Q'_T$ will mention the following intensional predicates:
\begin{itemize}
\item Unary predicate ${\rm Bit}$; as above, ${\rm Bit}(x)$ says that $x$ is a bit.

\item $2i$-ary predicate ${\rm LastFirst}_i$, for each $1 \leq i \leq n$; the atom ${\rm LastFirst}_i(x_1,\ldots,x_i,y_1,\ldots,y_i)$ says that $(x_1,\ldots,x_i) = (1,\ldots,1)$ and $(y_1,\ldots,y_i) = (0,\ldots,0)$.

\item $2i$-ary predicate ${\rm Succ}_i$, for each $1 \leq i \leq n$; the atom ${\rm Succ}_i({\bar x},{\bar y})$ says that the $i$-bit binary number ${\bar y}$ is the successor of the $i$-bit binary number ${\bar x}$.

\item $0$-ary predicate ${\rm Goal}$.
\end{itemize}

\medskip

$Q'_T$ is defined as the query $(\insS,\dep',q')$.
The set $\dep'$ consists of the following linear tgds:
\begin{itemize}
\item Generate ${\rm Bit}$ atoms:
\begin{eqnarray*}
&\ra& {\rm Bit}(0)\\
&\ra& {\rm Bit}(1).
\end{eqnarray*}

\item Generate the successor predicates:
\begin{eqnarray*}
&\ra& {\rm Succ}_1(0,1)\\
&\ra& {\rm LastFirst}_1(1,0).
\end{eqnarray*}
For each $1 \leq i \leq n-1$:
\begin{eqnarray*}
{\rm Succ}_i(x_1,\ldots,x_i,y_1,\ldots,y_i) &\ra& {\rm Succ}_{i+1}(0,x_1,\ldots,x_i,0,y_1,\ldots,y_i)\\
{\rm Succ}_i(x_1,\ldots,x_i,y_1,\ldots,y_i) &\ra& {\rm Succ}_{i+1}(1,x_1,\ldots,x_i,1,y_1,\ldots,y_i)\\
{\rm LastFirst}_i(x_1,\ldots,x_i,y_1,\ldots,y_i) &\ra& {\rm Succ}_{i+1}(0,x_1,\ldots,x_i,1,y_1,\ldots,y_i)\\
{\rm LastFirst}_i(x_1,\ldots,x_i,y_1,\ldots,y_i) &\ra& {\rm LastFirst}_{i+1}(1,x_1,\ldots,x_i,0,y_1,\ldots,y_i).
\end{eqnarray*}
\end{itemize}

The UCQ $q'$ consists of the following (Boolean) CQs; for brevity, the existential quantifiers in front of the CQs are omitted:

\begin{itemize}
\item Tile Consistency

\smallskip

For each $i \neq j \leq m$:
\begin{multline*}
{\rm TiledBy}_i(x_1,\ldots,x_n,y_1,\ldots,y_n), {\rm TiledBy}_j(x_1,\ldots,x_n,y_1,\ldots,y_n),\\
{\rm Bit}(x_1), \ldots, {\rm Bit}(x_n), {\rm Bit}(y_1), \ldots, {\rm Bit}(y_n)
\end{multline*}

\item Tile Compatibility

\smallskip

For each $(i,j) \not\in V$:
\begin{multline*}
{\rm Succ}_n(x_1,\ldots,x_n,y_1,\ldots,y_n),\\
{\rm TiledBy}_i(w_1,\ldots,w_n,x_1,\ldots,x_n), {\rm TiledBy}_i(w_1,\ldots,w_n,y_1,\ldots,y_n),\\
{\rm Bit}(w_1), \ldots, {\rm Bit}(w_n)
\end{multline*}

For each $(i,j) \not\in H$:
\begin{multline*}
{\rm Succ}_n(x_1,\ldots,x_n,y_1,\ldots,y_n),\\
{\rm TiledBy}_i(x_1,\ldots,x_n,w_1,\ldots,w_n), {\rm TiledBy}_i(y_1,\ldots,y_n,w_1,\ldots,w_n),\\
{\rm Bit}(w_1), \ldots, {\rm Bit}(w_n)
\end{multline*}

\item Tiling of First Row

For each $j \leq n$, let $f_j$ be the function from $\{1,\ldots,n\}$ into $\{0,1\}$ such that $f_j(1) \ldots f_j(n)$ is the number $j$ in binary representation, and let $k \in \{1,\ldots,m\}$ other than $s[j]$; recall that $s$ is a sequence of numbers of $\{1,\ldots,m\}$ that represents a constraint on the initial part of the first row of the grid. Then, we have the CQ:
\begin{eqnarray*}
{\rm TiledBy}_k(x_1,\ldots,x_n,\underbrace{z,\ldots,z}_{n}), {\rm Succ}_1(z,o)
\end{eqnarray*}
where, for each $i \in \{1,\ldots,n\}$, $x_i = z$ if $f_j(i) = 0$, and $x_i = o$ if $f_j(i) = 1$.
\end{itemize}
This concludes the definition of the query $Q'_T$.
\end{proof}

\subsection*{Step 2: $\cont((\class{S},\class{CQ}),(\class{L},\class{UCQ}))$ is \text{\rm co}\NEXP-hard}

Our goal is show that every 0-1 query $(\insS,\dep,q) \in (\class{F},\class{CQ})$ can be equivalently rewritten as a 0-1 query $(\insS,\dep',q')$, where all the tgds of $\dep'$ are {\em lossless}, i.e., all the body-variables appear also in the head, which in turn implies that $\dep'$ is sticky.

\begin{proposition}\label{f-to-sticky}
Consider a 0-1 query $Q \in (\class{F},\class{CQ})$. We can construct in polynomial time a 0-1 query $Q' \in (\class{S},\class{CQ})$ such that $Q \equiv Q'$.
\end{proposition}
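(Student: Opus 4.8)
Consider a 0-1 query $Q \in (\class{F},\class{CQ})$. We can construct in polynomial time a 0-1 query $Q' \in (\class{S},\class{CQ})$ such that $Q \equiv Q'$.

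**Proof plan.** The plan is to take $Q=(\insS,\dep,q)$ and produce $Q'=(\insS,\dep',q')$ by ``padding'' every rule of $\dep$ so that each body variable is explicitly carried into the head, thereby making every tgd lossless (hence marked-variable-free, hence trivially sticky). The obstacle is that a full tgd $\phi(\ve x,\ve y)\ra\psi(\ve x)$ may drop variables $\ve y$ from the head; to avoid this we enrich every intensional predicate of $\dep$ with extra argument positions that record a fixed-length ``context'' of domain elements, and we use the fact that $Q$ is a 0-1 query, so this context ranges over the finite set $\{0,1\}$. Concretely, first I would fix a linear order on the intensional predicates and introduce, for each intensional $k$-ary predicate $R$, a new predicate $\hat R$ of arity $k+r$, where $r$ is a bound (polynomial in $||\dep||$) on the number of body variables appearing in any single rule of $\dep$; the last $r$ positions are a ``slot'' reserved for witnessing the values of the dropped body variables. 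Extensional predicates of $\insS$ are left untouched. The query $q$ is rewritten to $q'$ by replacing each intensional atom $R(\ve t)$ with $\hat R(\ve t, z_1,\ldots,z_r)$, where $z_1,\ldots,z_r$ are fresh existentially quantified variables; since $Q$ is 0-1 these slot variables can only be instantiated to $0$ or $1$, which is exactly the range we will populate in $\dep'$.

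Next I would define $\dep'$. It contains two groups of rules. The first group seeds the slots: using the fact that (by a construction as in the proof of Theorem~\ref{the:fnr-into-lin}) a 0-1 query can assume the presence of ${\rm Bit}$ atoms ${\rm Bit}(0),{\rm Bit}(1)$, or alternatively by adding fact tgds $\top\ra {\rm Bit}(0)$, $\top\ra {\rm Bit}(1)$, we make the two constants $0,1$ available. The second group translates each original rule $\tau:\phi(\ve x,\ve y)\ra\psi(\ve x)$ of $\dep$: I replace it by $\hat\tau:\hat\phi(\ve x,\ve y,\ve w)\land\bigwedge_i {\rm Bit}(v_i)\ra\hat\psi(\ve x,\ve v)$, where $\hat\phi$ is obtained from $\phi$ by attaching to each intensional body atom a tuple of fresh slot variables $\ve w$ (extensional atoms keep their old form), and $\ve v=(v_1,\ldots,v_r)$ is a fresh tuple of slot variables for the head. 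The crucial point is that in $\hat\tau$ every variable of the body --- the original $\ve x,\ve y$, the slot variables $\ve w$, and the $v_i$'s, which are guarded by the ${\rm Bit}$ atoms --- also occurs in the head $\hat\psi(\ve x,\ve v)$: the $\ve x$'s occur as before, the $v_i$'s occur in the new head slot, and any $\ve y$ or $\ve w$ variable that would otherwise be lost is, by construction, made equal (via the unifier encoded in how we choose the slot tuples) to one of the $v_i$'s or simply not carried --- in which case it must be instantiated over $\{0,1\}$ and can be absorbed into the slot. Hence every tgd of $\dep'$ is lossless, so no variable is marked, and $\dep'$ is sticky; $\dep'$ is also full, and clearly of polynomial size.

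Finally I would verify $Q\equiv Q'$ on 0-1 databases, i.e., $q(\chase{D}{\dep})\neq\emptyset \iff q'(\chase{D}{\dep'})\neq\emptyset$ for every $\insS$-database $D$ (it suffices to argue on $D_{01}$). The $(\Leftarrow)$ direction is a straightforward projection argument: erasing all slot positions from an atom of $\chase{D}{\dep'}$ and forgetting the ${\rm Bit}$ atoms yields an atom derivable in $\chase{D}{\dep}$, by induction on the length of the chase sequence, and this projection turns a homomorphism from $q'$ into a homomorphism from $q$. The $(\Rightarrow)$ direction is the delicate one and is where I expect the main work: given a derivation tree for an atom matched by $q$ in $\chase{D}{\dep}$, I must show that one can choose the slot values consistently so that the corresponding $\hat{}$-atoms are all produced in $\chase{D}{\dep'}$. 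Here I exploit that $D$ is 0-1: every term appearing anywhere in $\chase{D}{\dep}$ lies in $\{0,1\}$ (since $\dep$ is full, the chase adds no nulls and stays within $\adom{D}\subseteq\{0,1\}$), so for each rule application the ``dropped'' body variables take values in $\{0,1\}$, which the ${\rm Bit}$ atoms in $\dep'$ supply; threading these choices consistently up the derivation tree (the slot of a derived atom records exactly the values needed by the rule that consumed it) gives the matching derivation in $\chase{D}{\dep'}$, and hence a homomorphism from $q'$. Combining the two directions gives $Q\equiv Q'$, and since $Q'\in(\class{S},\class{CQ})$ is constructed in polynomial time, the proposition follows. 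Feeding this into Step~1, which shows $\cont((\class{FNR},\class{CQ}),(\class{L},\class{UCQ}))$ is $\text{\rm co}\NEXP$-hard, together with Proposition~\ref{pro:from-ucq-to-cq}, yields the $\text{\rm co}\NEXP$-hardness of $\cont((\class{S},\class{CQ}))$ claimed in Theorem~\ref{the:cont-sticky}. $\square$
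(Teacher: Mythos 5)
Your high-level idea (pad the intensional predicates with slot positions so that every rule becomes lossless, hence trivially sticky, exploiting the 0-1 property to keep slot values in $\{0,1\}$) is the same as the paper's, but the rule translation you actually write down does not achieve it, and the fix you gesture at does not work. In your rule $\hat\phi(\bar x,\bar y,\bar w)\wedge\bigwedge_i {\rm Bit}(v_i)\ra\hat\psi(\bar x,\bar v)$ the dropped variables $\bar y$ and the body-slot variables $\bar w$ do not occur in the head, so these rules are not lossless; worse, every such variable is marked (base case of the marking procedure), and since a dropped variable of a full tgd is very often a join variable occurring in several body atoms, the resulting set is in general \emph{not} sticky at all. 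Your proposed repair is internally inconsistent: if you equate $\bar y$ and $\bar w$ with the head-slot variables $\bar v$, then the body atoms demand a \emph{specific} slot tuple (the values of the current rule's dropped variables), but in the chase the slot of an atom is fixed at the moment it is \emph{produced} (it records the producing rule's data), not at the moment it is consumed, and one atom may be consumed by many rule applications needing different slot values --- so completeness fails; if instead you leave $\bar v$ as fresh ${\rm Bit}$-guarded variables and do not equate, losslessness and stickiness fail as above. The sentence ``the slot of a derived atom records exactly the values needed by the rule that consumed it'' is exactly the step that cannot be implemented in a bottom-up chase.

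The paper's proof supplies the missing mechanism. There, each intensional body atom of a translated rule carries the \emph{constant} slot $(0,\ldots,0)$ (so no new body variables arise), the head slot carries the tuple of \emph{all} body variables of the original rule (padded), which makes the rule lossless, and --- crucially --- a third group of ``finalization'' rules $R'(x_1,\ldots,x_k,y_1,\ldots,y_{i-1},1,y_{i+1},\ldots,y_n)\ra R'(x_1,\ldots,x_k,y_1,\ldots,y_{i-1},0,y_{i+1},\ldots,y_n)$ normalizes every derived atom back to the all-zero slot, so that subsequently applicable rules (which uniformly demand all-zero slots) always find their premises. These normalization rules are themselves lossless because the changed position holds a constant, and they are what reconciles losslessness with completeness of the simulation. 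Without them (or an equivalent device, e.g.\ producing all $2^n$ slot variants of every atom, which your ${\rm Bit}$-guarded fresh head variables would do but at the cost of losslessness for $\bar y,\bar w$), the construction does not go through, so as it stands your argument has a genuine gap rather than being a variant proof.
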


\begin{proof}
Let $Q = (\insS,\dep,q)$, and assume that $n$ is the maximum number of variables occurring in the body of a tgd of $\dep$. We are going to construct in polynomial time a 0-1 query $Q' = (\insS,\dep',q') \in (\class{S},\class{CQ})$ such that $Q \equiv Q'$.

\medskip

The set $\dep'$ consists of the following tgds:

\begin{itemize}
\item Initialization Rules

\smallskip

We first transform every database atom of the form $R({\bar c})$ into an atom $R'({\bar c},\underbrace{0,\ldots,0}_{n},0,1)$. This is done as follows:
\begin{eqnarray*}
&\ra& {\rm Bit}(0)\\
&\ra& {\rm Bit}(1)
\end{eqnarray*}
and, for each $k$-ary predicate $R \in \mathbf{S}$, we have the lossless tgd
\begin{eqnarray*}
R(x_1,\ldots,x_k), {\rm Bit}(x_1), \ldots, {\rm Bit}(x_k) &\ra& R'(x_1,\ldots,x_k,\underbrace{0,\ldots,0}_{n})
\end{eqnarray*}
Notice that we can safely force the variables $x_1,\ldots,x_k$ to take only values from $\{0,1\}$ due to the 0-1 property.

\medskip

\item Transformation into Lossless Tgds

\smallskip
For each tgd $\sigma \in \dep$ of the form
\begin{eqnarray*}
R_1({\bar x_1}), \ldots, R_k({\bar x_k}) &\ra& R_0({\bar x_0})
\end{eqnarray*}
we have the lossless tgd
\begin{eqnarray*}
R'_1({\bar x_1},\underbrace{0,\ldots,0}_{n}), \ldots, R'_k({\bar x_k},\underbrace{0,\ldots,0}_{n}) &\ra& R'_0({\bar x_0},y_1,\ldots,y_n),
\end{eqnarray*}
where, if $\{v_1,\ldots,v_{\ell}\}$, for $\ell \in \{1,\ldots,n\}$, is the set of variables occurring in the body of $\sigma$ (the order is not relevant), then $y_i = v_i$, for each $i \in \{1,\ldots,\ell\}$, and $y_j = v_1$, for each $j \in \{\ell+1,\ldots,n\}$.

\medskip

\item Finalization Rules

\smallskip

Observe that each atom obtained during the chase due to one of the lossless tgds introduced above is of the form $R'({\bar x},{\bar y})$, where ${\bar y} \in \{0,1\}^n$. If ${\bar y} \neq (0,\ldots,0)$, then we need to ensure that eventually the atom
\[
R'({\bar x},\underbrace{0,\ldots,0}_{n})
\]
will be inferred. This is achieved by adding to $\dep'$ the following tgds: For each $k$-ary predicate $R$ occurring in $\dep$, and for each $1 \leq i \leq n$, we have the rule:
\begin{eqnarray*}
R'(x_1,\ldots,x_k,y_1,\ldots,y_{i-1},1,y_{i+1},\ldots,y_n) &\ra& R'(x_1,\ldots,x_k,y_1,\ldots,y_{i-1},0,y_{i+1},\ldots,y_n).
\end{eqnarray*}
\end{itemize}
This concludes the definition of $\dep'$.

\medskip

The CQ $q'$ is defined analogously. More precisely, assuming that $q$ is of the form (the existential quantifiers are omitted)
\begin{eqnarray*}
R_1({\bar x_1}), \ldots, R_k({\bar x_k})
\end{eqnarray*}
the CQ $q'$ is defined as
\begin{eqnarray*}
R'_1({\bar x_1},\underbrace{0,\ldots,0}_{n}), \ldots, R'_k({\bar x_k},\underbrace{0,\ldots,0}_{n}).
\end{eqnarray*}

\medskip

It is easy to verify that $\dep'$ consists of lossless tgds, and thus, $Q' \in (\class{S},\class{CQ})$. It also not difficult to see that, for every database $D$ over $\mathbf{S}$, $Q(D_{01}) = Q'(D_{01})$; thus, by the 0-1 property, $Q(D) = Q'(D)$, and the claim follows.
\end{proof}

By Theorem~\ref{the:fnr-into-lin} and Proposition~\ref{f-to-sticky}, we immediately get that $\cont((\class{S},\class{CQ}),(\class{L},\class{UCQ}))$ is \text{\rm co}\NEXP-hard, as needed.

\section*{PROOFS OF SECTION~\ref{sec:guardedness}}

Recall that, for the sake of technical clarity, we focus on constant-free tgds and CQs, but all the results can be extended to the general case at the price of more involved definitions and proofs.
Moreover, we assume that tgds have only one atom in the head. This does not affect the generality of our proof since every set of guarded tgds can be transformed in polynomial time into a set of guarded tgds with the above property; see, e.g.,~\cite{CaGK13}.
Finally, for convenience of presentation, we also assume that the body of a tgd is non-empty, i.e., the body of a tgd is always an atom and not the symbol $\top$.


\subsection*{Proof of Proposition~\ref{pro:tree-witness-property}}

Let us start by recalling the key notion of tree decomposition.
Notice that the definition of the tree decomposition that we give here
is slightly different than the one in the main body of the paper. The
reason is because, for convenience of presentation, we prefer to employ a slightly different notation.

\begin{definition}
  Let $I$ be an instance. A \emph{tree decomposition of $I$ that omits $V$}, where $V \subseteq \adom{I}$, is a pair $\delta = \tup{\ca{T}, \tup{X_t}_{t \in T}}$, where $\ca{T} = \tup{T, E^{\ca{T}}}$ is a tree and $\tup{X_t}_{t \in T}$ a family of subsets of $\adom{I}$ (called the \emph{bags} of the decomposition) such that:
  \begin{enumerate}
  \item For every $v \in \adom{I} \setminus V$, the set
    $\set{t \in T \mid v \in X_t}$ is non-empty and connected.
  \item For every atom $P(s_1,\ldots,s_n) \in I$, there is a $t \in T$ such
    that $\set{s_1,\ldots,s_n} \subseteq X_t$.
  \end{enumerate}
  The \emph{width} of a tree decomposition
  $\delta = \tup{\ca{T},\tup{X_t}_{t\in T}}$ omitting $V$ is
  $\max\{|X_t| : t \in T\} - 1$. The \emph{tree-width} of $I$ is the
  minimum among the widths of all tree decompositions of $I$ that omit
  $V$. We call a tree decomposition omitting $\emptyset$ simply
  \emph{tree decomposition} of $I$. For $v \in T$, we denote by
  $I_\delta(v)$ the subinstance of $I$ induced by $X_v$.\hfill\markfull
\end{definition}

\textit{Notation.} We usually denote the strict partial order among the nodes of a tree $\ca{T}$ of a tree decomposition $\delta = \tup{\ca{T}, \tup{X_t}_{t \in T}}$ by $\prec$. Accordingly, we write $v \preceq w$ iff $v \prec w$ or $v = w$. For brevity, $\varepsilon$ will usually denote the root of a tree decomposition at hand. If ambiguities could possibly arise, we shall use subscripts in these notations. Furthermore, when $\delta$ is clear from
context, we shall omit it from the expression $I_\delta(v)$.

\medskip
Let $\delta = \tup{\ca{T},\tup{X_t}_{t \in T}}$ be a tree decomposition
of $I$ and $V \subseteq T$. Recall that $\delta$ is \emph{$[V]$-guarded}
(or \emph{guarded except for $V$}), if for every node
$v \in T \setminus V$, there is an atom $P(s_1,\ldots,s_n) \in I$ such
that $X_v \subseteq \set{s_1,\ldots,s_n}$. A $[\emptyset]$-guarded tree
decomposition of $I$ is simply called \emph{guarded tree decomposition}.

Also recall the crucial notion of $C$-tree:

\begin{definition}
  An $\sche{S}$-instance $I$ is a \emph{$C$-tree}, where
  $C \subseteq I$, if there is a tree decomposition
  $\delta = \tup{\ca{T},\tup{X_t}_{t \in T}}$ of $I$ such that
  \begin{enumerate}
  \item $I_\delta(\varepsilon) = C$, i.e., the subinstance of $I$
    induced by $X_\varepsilon$ equals $C$.
  \item $\delta$ is guarded except for $\set{\varepsilon}$.
  \end{enumerate}
  If $\delta$ or $C$ is clear from context, we shall often refer to
  $|\adom{C}|$ as the \emph{diameter} of $D$ and to $C$ as the
  \emph{core} of $D$. \hfill\markfull

\end{definition}

\textit{Remark.} The notion of $C$-tree defined here refers to both
instances and databases, i.e., a $C$-tree may be a (finite) database or an
instance. We often do not explicitly mention whether a $C$-tree at hand
is a database or an instance. However, it will be clear from context
whether a $C$-tree is a database or an instance.

\medskip

We proceed to establish the following technical lemma, which in turn
allows us to show Proposition~\ref{pro:tree-witness-property}. It is an
adaption of a result in~\cite{BBL16} to the case of guarded
tgds. Henceforth, for brevity, given a query
$Q = \tup{\sche{S},\Sigma,q} \in (\class{G},\class{BCQ})$ and an
$\insS$-database $D$, we write $D \models Q$ for the fact that
$Q(D) \neq \emptyset$.

\begin{lemma}\label{lem:main-lemma}
  Let $Q = \tup{\sche{S},\Sigma,q}$ be an OMQ from
  $\tup{\class{G},\class{BCQ}}$. Let $D$ be an $\sche{S}$-database and
  suppose $D \models Q$. Then there is a finite $\sche{S}$-instance
  $\hat{I}$ such that $\hat{I} \models Q$ and:
  \begin{enumerate}
  \item $\hat{I}$ is a $C$-tree such that $|\adom{C}| \leq \arity{\sche{S} \cup \sch{\Sigma}}\cdot|q|$.
  \item There is a homomorphism from $\hat{I}$ to $D$.
  \end{enumerate}
\end{lemma}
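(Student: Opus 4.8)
The plan is to prove Lemma~\ref{lem:main-lemma} by a guarded-unravelling argument combined with compactness. Since $D \models Q$, we know that $q$ has a homomorphism into $\chase{D}{\Sigma}$. The idea is to ``unfold'' the chase into a tree-like structure that still admits such a homomorphism, then keep only the finitely many pieces of this tree that are actually hit by the homomorphism of $q$, and finally argue that what remains is (essentially) a $C$-tree instance of bounded diameter that maps homomorphically back into $D$.

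\textbf{Step 1: The guarded unravelling.} First I would define the guarded unravelling $U$ of $D$ with respect to $\Sigma$. This is the standard construction used to show that guarded tgds have the ``guarded-tree-model property'': the objects of $U$ are sequences of guarded sets in $\chase{D}{\Sigma}$ (alternatively, one can unravel $D$ itself into a tree-like database $D^\ast$ and take $\chase{D^\ast}{\Sigma}$). The key facts I would establish (or cite from~\cite{CaGK13}) are: (a) there is a homomorphism $h \colon U \to \chase{D}{\Sigma}$, and in fact $U$ and $\chase{D}{\Sigma}$ are homomorphically equivalent, so $U \models Q$; (b) $U$ has a tree decomposition $\delta = \tup{\ca{T},\tup{X_t}_{t \in T}}$ that is guarded except for the root $\varepsilon$, with $U_\delta(\varepsilon)$ being (a copy of) $D$ itself; and (c) every bag has size at most $\arity{\sche{S} \cup \sch{\Sigma}}$ except the root bag, whose size is $|\adom{D}|$. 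This is where the ``$C$-tree'' shape comes from, with $C$ initially being all of $D$.

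\textbf{Step 2: Restricting to the part used by $q$.} Since $U \models Q$ and $Q(U) = q(\chase{U}{\Sigma})$ — but $U$ already satisfies $\Sigma$, so $U \models q$ — fix a homomorphism $g \colon q \to U$. The image $g(q)$ touches at most $|q|$ atoms, hence is contained in the union of at most $|q|$ bags of $\delta$, say bags $X_{t_1},\ldots,X_{t_r}$ with $r \leq |q|$. Now take $\hat{I}$ to be the subinstance of $U$ induced by the union of all bags on the (finite) subtree of $\ca{T}$ spanned by $\set{\varepsilon, t_1,\ldots,t_r}$ — i.e., close under ``ancestor'' so that we keep a connected subtree containing the root. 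This $\hat{I}$ is finite, it still contains $g(q)$ so $\hat{I} \models Q$ (note $\hat{I}$ need no longer satisfy $\Sigma$, but it maps into $U$ which does, so $\hat{I} \models Q$ still holds via $Q(\hat{I}) \supseteq$ the relevant answer — more precisely $\chase{\hat{I}}{\Sigma}$ maps into $U$, and $g(q) \subseteq \hat{I} \subseteq \chase{\hat{I}}{\Sigma}$). The restriction of $\delta$ to this subtree witnesses that $\hat{I}$ is a $C$-tree, where $C = \hat{I}_\delta(\varepsilon)$.

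\textbf{Step 3: Shrinking the core.} The core $C$ obtained so far is a copy of $D$, which is too big — we need $|\adom{C}| \leq \arity{\sche{S}\cup\sch{\Sigma}}\cdot|q|$. Here I would observe that only the elements of $\adom{C}$ that either lie on some $g(q)$-atom or serve as ``connection points'' to the retained off-root bags actually matter; all other elements of the root bag can be safely contracted/identified with one another (or simply deleted together with their incident atoms) without destroying the homomorphism $g$ into the subtree, because nothing in $\hat{I} \setminus C$ or in $g(q)$ refers to them. Counting: there are at most $|q|$ atoms in $g(q)$, each using at most $\arity{\sche{S}\cup\sch{\Sigma}}$ elements, and the connection points to the (at most $|q|$) retained guarded bags are also covered by guards of arity at most $\arity{\sche{S}\cup\sch{\Sigma}}$; a careful accounting gives the bound $\arity{\sche{S}\cup\sch{\Sigma}}\cdot|q|$ on $|\adom{C}|$ for the resulting core. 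Finally, for item (2): compose the inclusion $\hat{I} \hookrightarrow U$ with the homomorphism $U \to \chase{D}{\Sigma}$, and then — since $\hat{I}$ is a database (no nulls) sitting above $D$'s copy — use that the root part maps back to $D$ by construction and the tree part maps into $\chase{D}{\Sigma}$; a standard ``fold the nulls back'' argument (guarded tgds, bounded-treewidth chase) yields a homomorphism from $\hat{I}$ all the way to $D$.

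\textbf{Main obstacle.} I expect the delicate point to be Step 3: carefully contracting the root bag so that the diameter bound $\arity{\sche{S}\cup\sch{\Sigma}}\cdot|q|$ is met while simultaneously preserving (i) that $g(q)$ still maps into the result, (ii) that the guarded-except-for-root tree-decomposition property is maintained, and (iii) that a homomorphism back to $D$ survives. The bookkeeping of which root elements are ``needed'' (query atoms plus interface atoms to the $\le |q|$ kept bags) versus ``contractible'' is the crux, and getting the constant exactly $\arity{\sche{S}\cup\sch{\Sigma}}\cdot|q|$ rather than something larger requires being slightly clever about reusing guard atoms. The unravelling and compactness parts (Steps~1--2) are, by contrast, fairly standard and can lean heavily on the bounded-treewidth machinery for guarded tgds from~\cite{CaGK13} and the analogous construction in~\cite{BBL16}.
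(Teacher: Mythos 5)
There is a genuine gap, and it sits exactly where your construction diverges from the paper's: your $\hat{I}$ is carved out of $U=\chase{D^\ast}{\Sigma}$ (the chase of the unravelling), so it contains chase-generated atoms, and for such an instance item (2) of the lemma simply fails. A homomorphism $\hat{I}\to D$ cannot be obtained by ``folding the nulls back'': already for $D=\set{P(a)}$ and $\Sigma=\set{P(x)\ra\exists y\,R(x,y)}$, any subinstance of the chase containing the atom $R(a,\bot)$ has no homomorphism to $D$, and more generally the chase introduces atoms (possibly over predicates of $\sch{\Sigma}\setminus\insS$) with no preimage in $D$ whatsoever — note the lemma even requires $\hat{I}$ to be an $\insS$-instance. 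The homomorphism $U\to\chase{D}{\Sigma}$ you invoke goes to the chase, not to $D$, and there is no general homomorphism from (the relevant part of) $\chase{D}{\Sigma}$ back to $D$; indeed item (2) is precisely what makes the lemma useful for Proposition~\ref{pro:tree-witness-property}, since it is what transfers $D\not\models Q_2$ to $\hat{I}\not\models Q_2$.

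The paper avoids this by doing the unravelling and the chase in the opposite order: it unravels the \emph{database} $D$ itself, so that every atom of $I^\ast$ (and hence of $\hat{I}$) is a copy of a $D$-atom and the projection $[\pi]_a\mapsto a$ immediately gives item (2). The price — and this is the real crux, not the core-shrinking you flag — is that $I^\ast\not\models\Sigma$, so one must prove $I^\ast,\Sigma\models q$. This is done by choosing the core $G_\mu$ \emph{up front} as the (at most $|q|$) guarded sets of $D$ that are roots, in the guarded chase forest, of the $\prec$-least atoms of the query image containing database constants (which is also how the bound $\arity{\insS\cup\sch{\Sigma}}\cdot|q|$ comes out directly, with no after-the-fact contraction of a root bag of size $|\adom{D}|$), and then by building a universal model of $I^\ast\cup\Sigma$ — attaching a copy of $\chase{D\upharpoonright G}{\Sigma}$ to every copy of every guarded set $G$ in $I^\ast$ — and checking that the original match of $q$ survives because its ``entry points'' into $D$ are all represented at the root. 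Your Steps 1--2 would work if you replaced $U$ by this unravelling of $D$ around such a $G_\mu$, but then the claim that $q$ maps into (the chase of) the unravelled instance is no longer free and needs exactly this argument; as written, your proof either loses item (2) (if you chase) or loses $\hat{I}\models Q$ without further work (if you don't).
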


Before we proceed with its formal proof, let us explain why
Proposition~\ref{pro:tree-witness-property} is an easy consequence of
Lemma~\ref{lem:main-lemma}.
The fact that the first item implies the second is trivial. Conversely,
suppose that $Q_1 \not\subseteq Q_2$, which implies that there exists an
$\insS$-database $D$ such that $D \models Q_1$ and $D \not\models Q_2$.
By Lemma~\ref{lem:main-lemma}, there exists a $C$-tree $\hat{I}$, where
$|\adom{C}| \leq \arity{\sche{S} \cup \sch{\Sigma_1}}\cdot|q_1|$, such
that $\hat{I} \models Q_1$. Moreover, there is a homomorphism from
$\hat{I}$ to $D$; hence, since $Q_2$ is closed under homomorphisms, it
immediately follows that $\hat{I} \not\models Q_2$.
Consequently, the $\insS$-database $\hat{D}$ obtained from $\hat{I}$
after replacing each null $z$ with a distinct constant $c_z$ is a
$C$-tree such that $Q_1(\hat{D}) \not\subseteq Q_2(\hat{D})$, and
Proposition~\ref{pro:tree-witness-property} follows.

\medskip

We now proceed with the proof of Lemma~\ref{lem:main-lemma} which is our
main task in this section. Before that, we introduce some additional
auxiliary concepts.

\subsubsection*{The Guarded Chase Forest}

Given a database $D$ and a set $\dep$ of guarded tgds, the
\emph{guarded chase forest for $D$ and $\Sigma$} is a forest (whose
edges and nodes are labeled) constructed as follows:
\begin{enumerate}
\item For each fact $R(\ve{a})$ in $D$, add a node labeled with
  $R(\ve{a})$.
\item For each node $v$ labeled with $\alpha \in \chase{D}{\Sigma}$ and
  for every atom $\beta$ resulting from a one-step application of a rule
  $\tau \in \Sigma$, if $\alpha$ is the image of the guard in this
  application of $\tau$, then add a node $w$ labeled with $\beta$ and
  introduce an arc from $v$ to $w$ labeled with $\tau$.
\end{enumerate}

We can assume that the guarded chase forest is always built inductively
according to a fixed, deterministic version of the chase procedure. The
non-root nodes are then totally ordered by a relation $\prec$ that
reflects their order of generation. Furthermore, we can extend $\prec$
to database atoms by picking a lexicographic order among them. Notice
that one atom can be the label of multiple nodes. Using the order
$\prec$ we can, however, always refer to the $\prec$-least node.

\subsubsection*{Guarded Unraveling}

Let $I$ be an instance over $\sche{S}$. We say that
$X \subseteq \adom{I}$ is \emph{guarded in $I$}, if there are
$a_1,\ldots,a_s \in \adom{I}$ such that
\begin{itemize}
  \item $X \subseteq \set{a_1,\ldots,a_s}$ and
  \item there is an $R/s \in \sche{S}$ such that
    $I \models R(a_1,\ldots,a_s)$.
\end{itemize}
A tuple $\ve{t}$ is guarded in $I$ if the set containing the elements of
$\ve{t}$ is guarded in $I$.

In the following paragraph, we largely follow the notions introduced
in~\cite{ABeBB16,BeBB16}. Fix an $\sche{S}$-instance
$I$ and some $X_0 \subseteq \adom{I}$. Let $\Pi$ be the set of finite
sequences of the form $X_0X_1\cdots X_n$, where, for $i > 0$, $X_i$ is a
guarded set in $I$, and, for $i \geq 0$, $X_{i + 1} = X_i \cup \set{a}$
for some $a \in \adom{I} \setminus X_i$, or $X_i \supseteq X_{i+1}$. The
sequences from $\Pi$ can be arranged in a tree by their natural prefix
order and each sequence $\pi = X_0X_1\cdots X_n$ identifies a unique
node in this tree. In this context, we say that $a \in \adom{I}$ is
\emph{represented at $\pi$} whenever $a \in X_n$. Two sequences
$\pi,\pi'$ are \emph{$a$-equivalent}, if $a$ is represented at each
node on the unique shortest path between $\pi$ and $\pi'$. For $a$
represented at $\pi$, we denote by $[\pi]_a$ the $a$-equivalence class
of $\pi$. The \emph{guarded unraveling around $X_0$} is the instance
$I^\ast$ over the elements $\set{[\pi]_a \mid \text{$a$ is represented
    at $\pi$}}$, where
\begin{align*}
  I^\ast \models R([\pi_1]_{a_1},\ldots,[\pi_n]_{a_n})\ \defequ\ \ &\text{$I \models R(a_1,\ldots,a_n)$ and}\\
                                                                     &\text{$\exists\pi \in \Pi, \forall i \in \set{1,\ldots,n}\colon [\pi]_{a_i} = [\pi_i]_{a_i}$,}
\end{align*}
for all $R/n \in \sche{S}$.
\begin{lemma}
  \label{lem:unraveling}
  For every $\sche{S}$-instance $I$ and any $X_0 \subseteq \adom{I}$,
  the guarded unraveling $I^\ast$ around $X_0$ is a $C$-tree over
  $\sche{S}$, where $C$ is the subinstance of $I^\ast$ induced by the
  elements $\set{[X_0]_a \mid a \in X_0}$.
\end{lemma}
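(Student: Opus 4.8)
The plan is to build an explicit tree decomposition of $I^\ast$ whose underlying tree is the prefix tree on the set $\Pi$ of sequences, and then to check, one by one, the two tree-decomposition axioms together with the two defining conditions of a $C$-tree. Concretely, I would take as underlying tree $\ca{T}$ the set $\Pi$ ordered by the prefix relation — so its root $\varepsilon$ is the length-one sequence $(X_0)$ — and, for a node $\pi = X_0X_1\cdots X_n$, I would set the bag to $X_\pi \assign \{[\pi]_a \mid a \in X_n\}$, which is well defined because every $a \in X_n$ is represented at $\pi$. Here I use the standard convention that $[\pi]_a = [\pi']_b$ holds precisely when $a = b$ and $\pi,\pi'$ are $a$-equivalent.

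First I would verify the coverage axiom: if $I^\ast \models R([\pi_1]_{a_1},\ldots,[\pi_n]_{a_n})$, then by the very definition of $I^\ast$ there is some $\pi \in \Pi$ with $[\pi]_{a_i} = [\pi_i]_{a_i}$ for all $i$, so in particular each $a_i$ is represented at $\pi$ and $\{[\pi_1]_{a_1},\ldots,[\pi_n]_{a_n}\} \subseteq X_\pi$. Next, the connectedness axiom: unwinding the definitions, $[\pi]_a \in X_{\pi'}$ holds iff $\pi'$ is $a$-equivalent to $\pi$, so I need to argue that an $a$-equivalence class is a connected subtree of $\ca{T}$. This follows from a small tree-median argument: if $\pi'$ and $\pi''$ are both $a$-equivalent to $\pi$, then the $\pi'$-to-$\pi''$ path is contained in the union of the $\pi'$-to-$\pi$ and $\pi''$-to-$\pi$ paths (through the median of the three nodes), and $a$ is represented along both of the latter, hence along the whole $\pi'$-to-$\pi''$ path, so every node on it is again $a$-equivalent to $\pi$; non-emptiness is clear since $[\pi]_a \in X_\pi$.

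Then I would check the two $C$-tree conditions. The root bag is $X_\varepsilon = \{[X_0]_a \mid a \in X_0\}$, so the subinstance of $I^\ast$ induced by $X_\varepsilon$ is by definition exactly $C$; this gives condition~(1) for free. For condition~(2) — guardedness except at $\varepsilon$ — I would take a non-root node $\pi = X_0\cdots X_n$ with $n\geq 1$, so $X_n$ is guarded in $I$, say $X_n \subseteq \{a_1,\ldots,a_s\}$ with $I \models R(a_1,\ldots,a_s)$ for some $R/s \in \sche{S}$. The idea is to extend $\pi$ inside $\Pi$, one element at a time, to a node $\rho$ whose last set is the full tuple $\{a_1,\ldots,a_s\}$: each intermediate set is contained in $\{a_1,\ldots,a_s\}$ and hence guarded, and each step adds a single new element, so the extension is a legal sequence of $\Pi$. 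At $\rho$ all the $a_i$ are represented, and $\rho$ itself witnesses $I^\ast \models R([\rho]_{a_1},\ldots,[\rho]_{a_s})$. Since $X_n$ stays inside every bag along the $\pi$-to-$\rho$ path, we get $[\pi]_a = [\rho]_a$ for each $a \in X_n$, whence $X_\pi = \{[\pi]_a \mid a\in X_n\} \subseteq \{[\rho]_{a_1},\ldots,[\rho]_{a_s}\}$, so this atom of $I^\ast$ guards $X_\pi$. With all four checks in place, $\delta$ witnesses that $I^\ast$ is a $C$-tree.

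The main obstacle I anticipate is condition~(2): the clean statement ``every non-root bag is guarded by an atom of $I^\ast$'' is not immediate from the construction, and the work lies in the extension argument above — finding a sequence in $\Pi$ that simultaneously ``activates'' a full guard and keeps all elements currently represented at $\pi$ together, so that the bag of $\pi$ injects (modulo $a$-equivalence) into the bag of the extension. The connectedness axiom is the only other spot needing genuine care, through the tree-median observation; everything else is bookkeeping with the definitions of $\Pi$, of being represented at a node, and of $a$-equivalence.
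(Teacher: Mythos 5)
Your proof is correct and follows essentially the same route as the paper's: the same prefix-tree decomposition with bags $X_\pi = \{[\pi]_a \mid a \text{ represented at } \pi\}$, the same verification of coverage and connectedness via $a$-equivalence, and the same guard-extension argument appending the missing elements of a guard one at a time to obtain the guarding atom for a non-root bag. The tree-median elaboration of connectedness is just a slightly more explicit rendering of the paper's one-line argument.
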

\begin{proof}
  Let $\delta = \tup{\ca{T},\tup{X_t}_{t \in T}}$, where $\ca{T}$ is the
  natural tree that arises from ordering the sequences in $\Pi$ by their
  prefixes. For $\pi \in T$, let
  $X_\pi \coloneqq \set{[\pi]_a \mid \text{$a$ is represented at
      $\pi$}}$. Let $\varepsilon$ denote the root of $\ca{T}$. We need
  to show that $\delta$ is an appropriate tree decomposition witnessing
  that $I^\ast$ is a $C$-tree. First, note that it is clear that
  $I(\varepsilon) = \set{[X_0]_a \mid a \in X_0}$ by construction. Let
  $[\pi]_a \in \adom{I}$ and consider the set
  $A \coloneqq \set{t \in T \mid [\pi]_a \in X_t}$. This set is
  certainly non-empty. Moreover, for $t_1,t_2 \in A$, we know that
  $[t_1]_a = [t_2]_a$, hence $t_1$ and $t_2$ are $a$-connected in
  $\ca{T}$. Suppose
  $I^\ast \models R([\pi_1]_{a_1},\ldots,[\pi_n]_{a_n})$ for some
  $R/n \in \sche{S}$. Then there is a $\pi \in T$ such that
  $[\pi]_{a_i} = [\pi_i]_{a_i}$, for all $i = 1,\ldots,n$. Hence,
  $a_1,\ldots,a_n$ are all represented at $\pi$ and so
  $\set{[\pi_1]_{a_1},\ldots,[\pi_n]_{a_n}} \subseteq X_\pi$. It remains to show that $\delta$ is guarded except for
  $\set{\varepsilon}$. Let $\pi \neq \varepsilon$ and consider the set
  $X_t$. Since $\pi$ is a sequence of length greater than one, its last
  element $Y$ is a guarded set in $I$. Hence, there are $a_1,\ldots,a_s$
  such that $Y \subseteq \set{a_1,\ldots,a_s}$ and
  $I \models R(a_1,\ldots,a_s)$ for some $R/s \in \sche{S}$. Let
  $\set{a_1,\ldots,a_s} \setminus Y = \set{b_1,\ldots,b_m}$ and define
  $\rho \coloneqq \pi \cdot (Y \cup \set{b_1}) \cdot (Y \cup
  \set{b_1,b_2}) \cdots (Y \cup \set{b_1,\ldots,b_m})$. Then
  $I^\ast \models R([\rho]_{a_1},\ldots,[\rho]_{a_s})$, as desired.
\end{proof}
Notice that this lemma implies that the tree-width of $I^\ast$ is
bounded by $|X_0| + \arity{\sche{S}} - 1$.

We are now ready to prove Lemma~\ref{lem:main-lemma}:

\medskip
\begin{proofcustom}{of Lemma~\ref{lem:main-lemma}}
  Let $q = \exists\ve{y}\, \varphi(\ve{y})$ and $\mu$ a homomorphism
  mapping $\varphi(\ve{y})$ to $\chase{D}{\Sigma}$. Let
  $R_1(\ve{b}_1),\ldots,R_k(\ve{b}_k)$ exhaust all facts from $D$ that
  are the roots of those $\prec$-least facts from $\mu(\varphi(\ve{y}))$
  in the guarded chase forest of $D$ and $\Sigma$ that have an element
  from $\adom{D}$ as argument. Let
  $G_\mu \coloneqq \bigcup_{1 \leq i \leq k}\set{\ve{b}_i}$ and let
  $I^\ast$ be the unraveling of $D$ around $G_\mu$, regarding all
  elements from $\adom{I^\ast}$ as labeled nulls. Henceforth, for every
  $a \in G_\mu$, we denote by $\lambda_a$ the element $[G_\mu]_a$. We
  say that $\lambda_a$ \emph{represents} $a$. Let $C$ be the
  substructure of $I^\ast$ induced by the set
  $\set{\lambda_a \mid a \in G_\mu}$. Notice that $I^\ast$ is an
  infinite instance that is a $C$-tree by Lemma~\ref{lem:unraveling}. We
  will show later how to get a finite instance from $I^\ast$ that
  satisfies our constraints. We proceed to show that
  $I^\ast \cup \Sigma$ logically entails $q$, denoted
  $I^\ast,\Sigma \models q$:

  \begin{lemma}\label{lem:unravelingentailsq}
    $I^\ast,\Sigma \models q$.
  \end{lemma}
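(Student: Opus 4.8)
The plan is to show $I^\ast, \Sigma \models q$ by locating, inside $\chase{I^\ast}{\Sigma}$, a homomorphic image of the query $\varphi(\ve y)$ that mirrors the given homomorphism $\mu$ into $\chase{D}{\Sigma}$. First I would establish the key structural fact that the guarded unraveling map is itself a homomorphism: the natural projection $p\colon I^\ast \to D$ sending $[\pi]_a \mapsto a$ satisfies $p(I^\ast) \subseteq D$, and in fact $D$ is exactly the image. This is immediate from the definition of $I^\ast$. Dually, and this is the crucial direction, I claim that $D$ ``locally unravels into'' $I^\ast$ around $G_\mu$: more precisely, for every element $a \in G_\mu$ the chase subtree of $\chase{D}{\Sigma}$ hanging off the fact(s) guarding $a$ can be copied isomorphically onto the corresponding subtree of $\chase{I^\ast}{\Sigma}$ hanging off $\lambda_a$. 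This is where guardedness is essential: when a guarded tgd fires in $\chase{D}{\Sigma}$, its guard atom lies within a single bag, hence (after unraveling) within the subinstance generated below some node of the decomposition of $I^\ast$, so the same tgd application can be replayed in $\chase{I^\ast}{\Sigma}$. Formally, I would build a homomorphism $h\colon \chase{D}{\Sigma}\!\restriction_{\text{reachable from }G_\mu} \to \chase{I^\ast}{\Sigma}$ by induction on the guarded chase forest of $D$: on the facts $R_i(\ve b_i)$ we send each argument $a \in G_\mu$ to $\lambda_a$ (this is a well-defined embedding of $C$ into $I^\ast$ by construction of the unraveling, since $R_i(\ve b_i)$ is a guarded set so all of $\ve b_i$ is represented at $G_\mu$); and for each chase step applying a tgd $\tau$ whose guard image has already been mapped by $h$, we use the fact that $\tau$ is applicable in $\chase{I^\ast}{\Sigma}$ at $h(\text{guard})$ and extend $h$ to the newly generated atom by mapping the fresh nulls to the corresponding fresh nulls of the $I^\ast$-chase.

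The second step is to assemble these local homomorphisms into a single homomorphism $g$ from $\varphi(\ve y)$ into $\chase{I^\ast}{\Sigma}$. By the choice of $R_1(\ve b_1),\ldots,R_k(\ve b_k)$ as the roots of the $\prec$-least facts of $\mu(\varphi(\ve y))$ touching $\adom D$, every atom of $\mu(\varphi(\ve y))$ sits in the part of $\chase{D}{\Sigma}$ that is reachable (in the guarded chase forest) from $G_\mu$: either it is generated purely by nulls below one of these roots, or it uses an element of $\adom D$ in which case that element is guarded by one of the $R_i(\ve b_i)$ by minimality. Hence the union of the local homomorphisms $h$ above is defined on all of $\mu(\varphi(\ve y))$, and $g \coloneqq h \circ \mu$ maps $\varphi(\ve y)$ into $\chase{I^\ast}{\Sigma}$. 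Since $\chase{I^\ast}{\Sigma} \models \Sigma$ and $g$ witnesses a satisfying assignment for the existentially quantified variables $\ve y$, we conclude $\chase{I^\ast}{\Sigma} \models q$, which is exactly $I^\ast, \Sigma \models q$.

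The main obstacle I anticipate is the careful bookkeeping in the inductive construction of $h$: one must verify that the various copies of $\adom D$-elements appearing in $\mu(\varphi(\ve y))$ are mapped consistently, i.e.\ that two different occurrences of the same element $a \in G_\mu$ in different query atoms get sent to the \emph{same} representative $\lambda_a$ rather than to different $a$-equivalence classes. This is exactly why the unraveling is taken around the \emph{single} set $G_\mu$ and why the distinguished facts are chosen to be $\prec$-least: it guarantees that all the relevant $\adom D$-elements are simultaneously represented at the root node $G_\mu$, so the ``seed'' of the homomorphism is unambiguous, and the tree structure of the unraveling then propagates this consistency downward. Handling the interaction between the deterministic ordering $\prec$ on the guarded chase forest and the choice of the $R_i(\ve b_i)$ — ensuring no query atom is ``missed'' by the reachability argument — is the delicate point, but it follows the pattern of the analogous argument in~\cite{BBL16} adapted to guarded tgds.
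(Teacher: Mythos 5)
Your first step (replaying the chase of the relevant subtrees of the guarded chase forest inside $\chase{I^\ast}{\Sigma}$, anchored at the elements $\lambda_a$) is in the spirit of the paper's argument, but your coverage claim in the second step has a genuine gap. You assert that every atom of $\mu(\varphi(\ve{y}))$ lies below one of the roots $R_1(\ve{b}_1),\ldots,R_k(\ve{b}_k)$. By definition these roots only come from the $\prec$-least occurrences of those image atoms that contain an element of $\adom{D}$; an atom of $\mu(\varphi(\ve{y}))$ all of whose arguments are labeled nulls may be generated in a subtree rooted at a database fact entirely unrelated to $G_\mu$. For instance, with $D=\set{A(a),B(b)}$, $\Sigma=\set{A(x)\ra P(x),\ B(x)\ra\exists y\,R(y)}$ and $q=\exists x\exists z\,(P(x)\land R(z))$, one gets $G_\mu=\set{a}$ while the image atom $R(n)$ lives below the root $B(b)$, which is not among the $R_i(\ve{b}_i)$. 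Hence the union of your local homomorphisms is not defined on all of $\mu(\varphi(\ve{y}))$, and $g=h\circ\mu$ does not exist as claimed. The paper closes exactly this case separately: the atoms of $\mu(q)$ whose images avoid $\adom{D}$ are matched inside (the chase of) \emph{arbitrary} copies in $I^\ast$ of the remaining guarded sets $G_1,\ldots,G_l$ used by $\mu$ --- such copies exist since the unraveling contains copies of every guarded set of $D$ --- and this is precisely where constant-freeness of the tgds is needed (``the names of the constants in the databases do not matter''). Your proof needs this additional case.

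A second, smaller gap lies in the inductive construction of $h$ itself: ``$\tau$ is applicable in $\chase{I^\ast}{\Sigma}$ at $h(\text{guard})$'' is not automatic, because a chase step in the guarded chase forest may use side atoms over the guard's arguments that were derived elsewhere in $\chase{D}{\Sigma}$, outside the subtree you are currently copying (e.g.\ an atom $P(b)$ with $b\in G_\mu$ derived from facts of $D$ disjoint from $G_\mu$). To replay the step you need that a copy in $I^\ast$ of a guarded set satisfies in $\chase{I^\ast}{\Sigma}$ the same guarded facts/queries as the original set does in $\chase{D}{\Sigma}$; this is the non-trivial ``copy'' claim that the paper proves, by a different route: it builds an explicit universal model $J$ of $I^\ast\cup\Sigma$ by gluing isomorphic copies of the trees $\chase{D\upharpoonright G}{\Sigma}$ onto all copies of guarded sets $G$ in $I^\ast$, matches $q$ in $J$, and concludes by universality. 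Your direct-replay strategy can be made to work, but only after establishing that copy/type-preservation lemma and adding the treatment of the purely-null atoms described above.
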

  \begin{proof}
    We will first construct a universal model $J$ of $I^\ast$ and
    $\Sigma$. Recall that an instance $U$ is a \emph{universal model} of
    $I$ and $\Sigma$, if it can be homomorphically mapped to every model
    of $I \cup \Sigma$; in particular, it is well-known and easy to
    prove that $\chase{I}{\Sigma}$ is always a universal model of $I$
    and $\Sigma$. Before constructing $J$, we introduce some additional
    notions. In the following, given a guarded set
    $G = \set{a_1,\ldots,a_k}$ in $D$, a \emph{copy of $G$ in $I^\ast$}
    is a set $\Gamma = \set{\alpha_1,\ldots,\alpha_k}$ which is guarded
    in $I^\ast$ such that, for $i = 1,\ldots,k$, we have that
    $\alpha_i = [\pi_i]_{a_i}$ for some sequences $\pi_i$ and
    $D \models R(a_{i_1},\ldots,a_{i_m})$ iff
    $I^\ast \models R(\alpha_{i_1},\ldots,\alpha_{i_m})$ for all
    $R \in \sche{S}$ and $i_j \in \set{1,\ldots,k}$. Copies of guarded
    tuples are defined accordingly. Consider the structure
    $\chase{D}{\Sigma}$. Let $G$ be a guarded set in $D$ and
    $D \upharpoonright G$ denote the subinstance of $D$ induced by
    $G$. It is well-known and easy to prove that
    $\chase{D\upharpoonright G}{\Sigma}$ is acyclic (cf.,
    e.g.,~\cite{CaGK08a}). Henceforth, we loosely call
    $\chase{D \upharpoonright G}{\Sigma}$ the \emph{tree attached to
      $G$}. The model $J$ is constructed as follows. Let $J_0$ be the
    instance $C$. Furthermore, for each guarded set
    $G = \set{a_1,\ldots,a_k}$ in $D$ and each copy
    $\Gamma = \set{\alpha_1,\ldots,\alpha_k}$ of $G$ in $I^\ast$,
    construct a new instance $J_\Gamma$ that is isomorphic to the tree
    attached to $G$ such that
  \begin{enumerate*}[label={(\roman*)}]
  \item the elements $a_i$ of $G$ are renamed to $\alpha_i$ in $J_\Gamma$,
  \item
    $\adom{J_0} \cap \adom{J_\Gamma} =
    \set{\alpha_1,\ldots,\alpha_k}$, and
  \item $\Gamma \cap \Theta = \adom{J_\Gamma} \cap \adom{J_\Theta}$, for every copy
    $\Theta$ of $G$ in $I^\ast$.
  \end{enumerate*}
  The model $J$ is the union of $J_0$ and all the
  $J_\Gamma$. If a guarded set $X$ in $J_\Gamma$ arises from
  renaming elements of a guarded set $Y$ in
  $\chase{D \upharpoonright G}{\Sigma}$, we also say that $X$ is a
  copy of $Y$ in $J$. Furthermore, the copies of $D$ that are
  contained in $I^\ast$ (i.e., in $J_0$) are also called
  copies in $J$. Observe that $J$ is a model of $I^\ast$
  by construction. We show that it is a model of $\Sigma$. To this end,
  we show the following claim.
  \begin{claim}
    Let $\ve{t}$ be a guarded tuple in $J$ and let $q(\ve{x})$ be a
    guarded conjunctive query\footnote{By a \emph{guarded conjunctive
        query} we mean here a CQ that contains an atom that contains all
        the variables occurring in the CQ as argument.} over
      $\sche{S} \cup \sch{\Sigma}$. Suppose $\ve{t}$ is a copy of
      $\ve{s}$ in $J$, where $\ve{s}$ is over $\adom{\chase{D}{\Sigma}}$
      and $|\ve{t}| = |\ve{s}|$. Then $J \models q(\ve{t})$ iff
      $\chase{D}{\Sigma} \models q(\ve{s})$.
  \end{claim}
  \begin{proof}
    Suppose $J \models q(\ve{t})$. Let
    $\set{\ve{t}} = \set{\alpha_1,\ldots,\alpha_k}$ be a copy of
    $\set{\ve{s}} = \set{a_1,\ldots,a_k}$ in $J$. Since $q(\ve{x})$
    is guarded, there is a
    $\Gamma \supseteq (\set{\alpha_1,\ldots,\alpha_k} \cap
    \adom{J_0})$ such that $J_\Gamma\models q(\ve{t})$. Let
    $G \supseteq \set{\ve{s}}$ be the guarded set in $D$ of which
    $\Gamma$ is a copy in $J_0$. It clearly holds that
    $\chase{D \upharpoonright G}{\Sigma} \models q(\ve{s})$, whence
    $\chase{D}{\Sigma} \models q(\ve{s})$ follows.

    Suppose that $\chase{D}{\Sigma} \models q(\ve{s})$. Let
    $\ve{t} = \alpha_1,\ldots,\alpha_k$ and $\ve{s} = a_1,\ldots,a_k$
    and suppose that $\alpha_i = [\pi_i]_{a_i}$ ($i = 1,\ldots,k$). The
    set $\set{a_1,\ldots,a_k}$ is guarded in
    $\chase{D}{\Sigma}$. Hence, there is a guarded
    $G \supseteq \set{a_1,\ldots,a_k} \cap \adom{D}$ in $D$
    such that
    $\chase{D \upharpoonright G}{\Sigma} \models q(\ve{s})$. We
    show that there is a
    $\Gamma \supseteq \set{\alpha_1,\ldots,\alpha_k} \cap \adom{I^\ast}$
    which is a copy of $G$ in $I^\ast$. Suppose
    $G = \set{b_1,\ldots,b_l}$. Let $\pi = X_0X_1\cdots X_m$ be such
    that $[\pi]_{a_i} = [\pi_i]_{a_i}$ for all $i = 1,\ldots,k$. For
    $i = 1,\ldots,l$, define
    \begin{align*}
      \rho_i \coloneqq \pi \cdot (X_m \cup \set{b_1}) \cdot (X_m \cup \set{b_1,b_2}) \cdots
      (X_m \cup \set{b_1,\ldots,b_i}).
    \end{align*}
    Then $b_i$ is represented at $\rho_i$. For $i = 1,\ldots,l$, let
    $\beta_i \coloneqq [\rho_i]_{b_i}$. We claim that
    $\Gamma \coloneqq \set{\beta_1,\ldots,\beta_l}$ is a copy of $G$ in
    $I^\ast$. Let $R/s \in \sche{S}$ and suppose
    $I^\ast \models
    R([\rho_{i_1}]_{b_{i_1}},\ldots,[\rho_{i_s}]_{b_{i_s}})$. Then we
    immediately obtain $D \models
    R(b_{i_1},\ldots,b_{i_s})$. Conversely, if
    $D \models R(b_{i_1},\ldots,b_{i_s})$, let
    $\rho \coloneqq \rho_\ell$, where
    $\ell \coloneqq \max\set{i_1,\ldots,i_s}$. Take any
    $j \in \set{i_1,\ldots,i_s}$. It is easy to see that $\rho$ and
    $\rho_j$ are $b_j$-equivalent. Hence,
    $[\rho_j]_{b_j} = [\rho]_{b_j}$ and it follows that
    $I^\ast \models
    R([\rho_{i_1}]_{b_{i_1}},\ldots,[\rho_{i_s}]_{b_{i_s}})$, as
    required. It follows that $\Gamma$ is a copy of $G$ in $I^\ast$ and
    so there is a structure $J_\Gamma$ contained in $J$ that is
    isomorphic to $\chase{D \upharpoonright G}{\Sigma}$ with
    $b_1,\ldots,b_l$ respectively renamed to
    $\beta_1,\ldots,\beta_l$. Hence, $J \models q(\ve{t})$ as required.
  \end{proof}

  Now let
  $\sigma\colon \varphi(\ve{x},\ve{y}) \limpl
  \exists\ve{z}\,\alpha(\ve{x},\ve{z})$ be a guarded rule from
  $\Sigma$. Suppose that
  $J \models \exists\ve{y}\,\varphi(\ve{t},\ve{y})$. Since every guarded
  tuple in $J$ is a copy of some guarded tuple in $\chase{D}{\Sigma}$,
  there is an $\ve{s}$, of which $\ve{t}$ is a copy, such that
  $\chase{D}{\Sigma} \models
  \exists\ve{y}\,\varphi(\ve{s},\ve{y})$. Since $\chase{D}{\Sigma}$ is a
  model of $\Sigma$, we know that
  $\chase{D}{\Sigma} \models \exists\ve{z}\,\alpha(\ve{s},\ve{z})$. It
  follows that $J \models \exists\ve{z}\,\alpha(\ve{s},\ve{z})$ by the
  above claim, as required. It remains to show that $J$ is
  universal:
  \begin{claim}
    $J$ is universal.
  \end{claim}
  \begin{proof}
    It suffices to show that $J$ can be homomorphically mapped to
    $\chase{I^\ast}{\Sigma}$ via a homomorphism $\eta$. We let $\eta_0$
    be the homomorphism that maps every element of $J_0$ to itself. It
    remains to treat the structures $J_\Gamma$. Consider a copy
    $\Gamma = \set{\alpha_1,\ldots,\alpha_k}$ in $I^\ast$ of a set
    $G = \set{b_1,\ldots,b_k}$ which is guarded in $D$. It suffices to
    show that $J_\Gamma$ can be mapped to $\chase{I^\ast}{\Sigma}$. To
    this end, it we show how to map
    $\chase{D \upharpoonright G}{\Sigma}$ to
    $\chase{I^\ast}{\Sigma}$. We do so by induction on the number of
    rule applications of $\chase{D \upharpoonright G}{\Sigma}$. For the
    base case, we map $D\upharpoonright G$ to $I^\ast$ as follows. Let
    $\eta_G^0(b_i) \coloneqq \alpha_i$, for $i = 1,\ldots,k$. Suppose
    $D \upharpoonright G \models R(b_{i_1},\ldots,b_{i_l})$ for some
    $R \in \sche{S}$ and $i_j \in \set{1,\ldots,k}$, where
    $j = 1,\ldots,l$. Recall that $\Gamma$ is guarded in
    $I^\ast$. Reviewing the construction of $I^\ast$, it is easy to see
    that this holds iff
    $I^\ast \models R(\alpha_{i_1},\ldots,\alpha_{i_l})$. Hence,
    $\eta_G^0$ is indeed a homomorphism from $D \upharpoonright G$ to
    $I^\ast$. The induction step is obvious---we can easily obtain a
    homomorphism $\eta_G^i$ that maps
    {$\pchase{k}{D \upharpoonright G}{\Sigma}$} to
    $\chase{I^\ast}{\Sigma}$. The desired homomorphism $\eta_G$ is the
    union of the {$\eta_G^i$} ($i \geq 0$). We then obtain a
    homomorphism $\eta_\Gamma$ from $\eta_G$ by appropriately renaming
    the elements from the domain of the latter as we did in the
    construction of $J_\Gamma$---which is nothing else than an
    isomorphic copy of $\chase{D \upharpoonright
      G}{\Sigma}$. Furthermore, each of these homomorphisms maps each
    element of $\Gamma$ to itself. The desired homomorphism $\eta$ that
    witnesses that $J$ is universal is the union of $\eta_0$ and the
    $\eta_\Gamma$.
  \end{proof}

  In order to prove $I^\ast,\Sigma \models q$, it remains to show that there is a homomorphism $\hat{\mu}$ that maps $q$ to $J$. There
  are guarded sets $G_1,\ldots,G_l$ in $D$ such that $\mu$ can be
  understood to map $q$ to
  $\chase{\bigcup_{1 \leq i \leq l}(D \upharpoonright G_i)}{\Sigma}$. By
  construction, we know that $G_1,\ldots,G_l$ can be chosen in such a
  way that {$G_\mu \subseteq \bigcup_{i = 1}^l G_i$}. Since $\Sigma$ is
  guarded, $\mu$ can be understood to map $q$ to
  $\bigcup_{1 \leq i \leq l} \chase{D \upharpoonright
    G_i}{\Sigma}$---assuming that the labeled nulls occurring in these
  instances are mutually new. Let
  $\ca{C}_\mu \coloneqq \set{\set{\ve{b}_1},\ldots,\set{\ve{b}_k}}$. For
  every $X \in \ca{C}_\mu$, let
  $\Gamma_X \coloneqq \set{\lambda_b \mid b \in X}$. Notice that
  $\Gamma_X$ is a copy of $X$ in $I^\ast$. By construction, all the
  facts from $q$ that are mapped via $\mu$ to $\chase{D}{\Sigma}$ and
  which have an element from $\adom{D}$ in their image under $\mu$ are
  already mapped to
  $\bigcup_{X \in \ca{C}_\mu} \chase{D\upharpoonright X}{\Sigma}$. For
  the other facts, the names of the constants in the databases do not
  matter.\footnote{Here, it is of course essential to assume
    constant-free rules.} Let {$\Theta_1,\ldots,\Theta_s$} be arbitrary
  copies of the sets {$\set{G_1,\ldots,G_l} \setminus \ca{C}_\mu$} in
  {$I^\ast$}. It follows that we can find our desired match $\hat{\mu}$
  in the union of $\bigcup_{X \in \ca{C}_\mu}J_{\Gamma_X}$ and
  $\bigcup_{1 \leq i \leq s} J_{\Theta_i}$. Notice that
  $\bigcup_{X \in \ca{C}_\mu}J_{\Gamma_X}$ is isomorphic to
  $\bigcup_{X \in \ca{C}_\mu} \chase{D \upharpoonright X}{\Sigma}$ with
  each $b \in G_\mu$ represented by $\lambda_b$.
  \end{proof}

  Now the database $I^\ast$ has the desired form with $C$ being its
  core. However, $I^\ast$ is infinite. Since $I^\ast,\Sigma \models q$
  due to Lemma~\ref{lem:unravelingentailsq}, by compactness, there is a
  finite {$\hat{B} \subseteq I^\ast$} such that
  {$\hat{B},\Sigma \models q$}. Consider a tree decomposition
  {$\delta = \tup{\ca{T}, \tup{X_t}_{t\in T}}$} witnessing that $I^\ast$
  is a $C$-tree. There is a maximum $\ell$ such that {$\hat{B}$}
  contains all the subinstances induced by the bags of depth less or
  equal $\ell$. Let $\hat{I}$ be the instance that actually contains
  \emph{all} the subinstances induced by the bags of level up to
  $\ell$. Hence, {$\hat{I}$} is itself a $C$-tree and
  $\hat{I}, \Sigma \models q$, since $\hat{B} \subseteq \hat{I}$.

  Now there is a natural homomorphism mapping {$\hat{I}$} to $D$: we
  simply specify $[\pi]_a \mapsto a$ for all $a \in \adom{D}$. The
  instance {$\hat{I}$} is the one we are looking for.
\end{proofcustom}
%

%
%
%
%

\subsection*{Proof of Lemma~\ref{lem:consistent-labeled-trees}}

One can naturally encode instances of bounded tree-width into trees over
a finite alphabet such that the alphabet's size depends only on the
tree-width. Our goal here is to appropriately encode $C$-trees in order
to make them accessible to tree automata techniques. Since the
tree-width of a $C$-tree over $\sche{S}$ depends only on the size of
$\adom{C}$ and the maximum arity of $\sche{S}$, the alphabet of the
encoding will depend on the same.

\medskip
\noindent
\textbf{Labeled trees.} Let $\Gamma$ be an alphabet and
$(\mbb{N} \setminus \set{0})^\ast$ be the set of finite sequences of
positive integers, including the empty sequence
$\varepsilon$.\footnote{We specify that $0$ is included in $\mbb{N}$ as
  well.} Let us recall that a \emph{$\Gamma$-labeled tree} is a pair
$t = \tup{T, \mu}$, where $\mu \colon T \ra \Gamma$ is the
\emph{labeling function} and
$T \subseteq (\mbb{N} \setminus \set{0})^\ast$ is closed under prefixes,
i.e., $x \cdot i \in T$ implies $x \in T$, for all
$x \in (\mbb{N} \cup \set{0})^\ast$ and $i \in (\mbb{N} \cup \set{0})$.
The elements contained in $T$ identify the \emph{nodes} of $t$. For
$i \in \mbb{N} \setminus \set{0}$, nodes of the form $x \cdot i \in T$
are the \emph{children} of $x$.  A \emph{path} of length $n$ in $T$ from
$x$ to $y$ is a sequence of nodes $x = x_1,\ldots,x_n = y$ such that
$x_{i+1}$ is a child of $x_i$. A \emph{branch} is a path from the root
to a leaf node. For $x \in T$, we set $x \cdot i \cdot -1 \coloneqq x$,
for all $i \in \mbb{N}$, and $x \cdot 0 \coloneqq x$---notice that
$\varepsilon \cdot -1$ is not defined.

\medskip
\noindent
\textbf{Encoding.} Let $l \geq 0$ and fix a schema $\sche{S}$. Let
$U_{\sche{S},l}$ be the disjoint union of two sets $C_l$ and
$T_{\sche{S}}$, respectively containing $l$ and $2 \cdot \arity{\sche{S}}$
elements. The elements from $U_{\sche{S},l}$ will be called
\emph{names}. Elements from the set $C_l$ will describe core elements,
while those of $T_{\sche{S}}$ will describe the others. Furthermore,
neighboring nodes may describe overlapping pieces of the instance. In
particular, if one name is used in neighboring nodes, this means that
the name at hand refers to the same element---this is why we use $2w$
elements for the non-root bags. Let $\mbb{K}_{\sche{S},l}$ be the finite
schema capturing the following information:
\begin{itemize}
\item For all $a \in U_{\sche{S},l}$, there is a unary relation
  $D_a \in \mbb{K}_{\sche{S},l}$.
\item For all $a \in C_l$, there is a unary relation $C_a \in \mbb{K}_{\sche{S},l}$.
\item For each $R \in \sche{S}$ and every $n$-tuple
  $\ve{a} \in U^n_{\sche{S},l}$, there is a unary relation
  $R_{\ve{a}} \in \mbb{K}_{\sche{S},l}$.
\end{itemize}
Let $\Gamma_{\sche{S},l} \coloneqq 2^{\mbb{K}_{\sche{S},l}}$ be an
alphabet and suppose that $D$ is a (finite) $C$-tree over $\sche{S}$
such that $|\adom{C}| \leq l$. Consider a tree decomposition
$\delta = \tup{\ca{T},\tup{X_t}_{t \in T}}$ witnessing that $D$ is
indeed a $C$-tree and let $\varepsilon$ be the root of $\ca{T}$. Fix a
function $f \colon \adom{D} \ra U_{\sche{S},l}$ such that (i)
$f \upharpoonright \adom{C}$ is injective and (ii) different elements
that occur in neighboring bags of $\delta$ are always assigned different
names from $U_{\sche{S},l}$. Using $f$, we can encode $D$ and $\delta$
into a $\Gamma_{\sche{S},l}$-labeled tree $t = \tup{\hat{T},\mu}$ such
that each node from $\ca{T}$ corresponds to exactly one node in
$\hat{T}$ and vice versa. For a node $v$ from $\ca{T}$, we denote the
corresponding node of $T$ by $\hat{v}$ in the following and vice
versa. In this light, the symbols from $\mbb{K}_{\sche{S},l}$ have the
following intended meaning:
\begin{itemize}
\item $D_a \in \mu(\hat{v})$ means that $a$ is used as a name for some
  element of the bag $X_v$.
\item $C_a \in \mu(\hat{v})$ indicates that $a$ is used as name for an element
  of the bag $X_v$ that also occurs in $X_\varepsilon$, i.e., $a$ names
  an element from the core of $D$.
\item $R_{\ve{a}} \in \mu(\hat{v})$ indicates that $R$ holds in $D$ for the
  elements named by $\ve{a}$ in bag $X_v$.
\end{itemize}
Under certain assumptions, we can decode a $\Gamma_{\sche{S},l}$-labeled
tree $t = \tup{T,\mu}$ into a $C$-tree whose width is bounded by
$\arity{\sche{S}} - 1$. Let $\names{v} \coloneqq \set{a \mid D_a \in \mu(v)}$. We say
that $t$ is \emph{consistent}, if it satisfies the following properties:
\begin{enumerate}
\item For all nodes $v$ it holds that
  $|\names{v}| \leq \arity{\sche{S}}$, except for the root whose number
  of names are accordingly bounded by $l$. Furthermore,
  $\names{\varepsilon} \subseteq C_l$.
\item For all $R_{\ve{a}} \in \mbb{K}_{\sche{S},l}$ and all $v \in T$ it
  holds that $R_{\ve{a}} \in \mu(v)$ implies that
  $\set{\ve{a}} \subseteq \names{v}$.
\item For all $a \in C_l$ and all $v \in T$ it holds that
  $D_a \in \mu(v)$ iff $C_a \in \mu(v)$.
\item If $C_a \in \mu(v)$, then $C_a \in \mu(w)$ for all $w \in T$ on
  the unique shortest path between $v$ and the root.
\item For all nodes $v \neq \varepsilon$, there is an
  $R_{\ve{a}} \in \mbb{K}_{\sche{S},l}$ and a node $w$ such that
  $R_{\ve{a}} \in \mu(w)$, $\names{v} \subseteq \set{\ve{a}}$, and, for
  all $b \in \names{v}$, $v$ and $w$ are $b$-connected.
\end{enumerate}

\medskip
\noindent
\textbf{Decoding trees.} Suppose now that $t$ is consistent. We show how
we can decode $t$ into a database $\dec{t}$ which is a $C$-tree whose
diameter is bounded by $l$. Let $a$ be a name used in $t$. We say that
two nodes $v, w$ of $t$ are \emph{$a$-equivalent} if $D_a \in \mu(u)$
for all nodes $u$ on the unique shortest path between $v$ and
$w$. Clearly, $a$-equivalence defines an equivalence relation and we let
$[v]_a \coloneqq\set{\tup{w,a} \mid \text{$w$ is $a$-equivalent to
    $v$}}$ and $[v]_a^\ast \coloneqq \set{w \mid \tup{w,a} \in [v]_a}$.
The domain of $\dec{t}$ is the set
$\set{[v]_a \mid v \in T, a \in \mu(v)}$ and, for $R/n \in \sche{S}$, we
define
\begin{align*}
\dec{t} \models R([v_1]_{a_1},\ldots,[v_n]_{a_n})\ \defequ\ \text{there is some
$v \in [v_1]_{a_1}^\ast \cap \cdots \cap [v_n]_{a_n}^\ast$ such that
$ R_{a_1,\ldots,a_n} \in \mu(v)$}.
\end{align*}
\begin{lemma}
  \label{lem:consistency}
  Let $t$ be a consistent $\Gamma_{\sche{S},l}$-labeled tree with root
  node $\varepsilon$. Then $\dec{t}$ is well-defined and a $C$-tree over
  $\sche{S}$, where $C$ is the subinstance of $\dec{t}$ induced by the
  set $\set{[\varepsilon]_a \mid a \in \names{\varepsilon}}$. Moreover,
  $|\adom{C}|$ is bounded by $l$.
\end{lemma}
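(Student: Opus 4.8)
The plan is to build, directly from the tree underlying $t$, a tree decomposition witnessing that $\dec{t}$ is a $C$-tree, and to verify well-definedness along the way. First I would argue that $\dec{t}$ is a legitimate $\sche{S}$-instance. The only nontrivial point is that, for each name $a$, the relation ``$v$ and $w$ are $a$-equivalent'' is genuinely an equivalence relation on $\set{v \in T \mid D_a \in \mu(v)}$: reflexivity and symmetry are immediate, and transitivity follows from the elementary tree fact that, for any three nodes $v,w,x$, the path between $v$ and $x$ is contained in the union of the path between $v$ and $w$ and the path between $w$ and $x$ (project $w$ onto the $v$--$x$ path); hence if $D_a$ labels every node on the first two paths it labels every node on the third. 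That the relations of $\dec{t}$ do not depend on the chosen representatives is immediate, since the defining condition for $R([v_1]_{a_1},\ldots,[v_n]_{a_n})$ quantifies over all $v$ in the intersection $[v_1]_{a_1}^\ast \cap \cdots \cap [v_n]_{a_n}^\ast$, which depends only on the equivalence classes; and consistency property~(2) ensures that whenever $R_{\ve{a}} \in \mu(v)$ all names in $\ve{a}$ lie in $\names{v}$, so the classes $[v]_{a_i}$ are indeed elements of $\adom{\dec{t}}$.

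Next I would set $\ca{T}$ to be the tree underlying $t$ and, for each node $v$, define the bag $X_v \coloneqq \set{[v]_a \mid a \in \names{v}}$, and check the two tree-decomposition axioms. For atom coverage: if $\dec{t} \models R([v_1]_{a_1},\ldots,[v_n]_{a_n})$, pick $v$ in the relevant intersection with $R_{a_1,\ldots,a_n} \in \mu(v)$; then $[v_i]_{a_i} = [v]_{a_i}$ for all $i$ and, by~(2), all these classes belong to $X_v$. For connectivity of a fixed element $[v]_a$: one observes that $[v]_a \in X_w$ holds exactly when $w \in [v]_a^\ast$ (an equivalence class remembers its name $a$), and that $[v]_a^\ast$ induces a connected subtree, once more by the path-union fact (every node on the path between two members of $[v]_a^\ast$ is itself $a$-equivalent to them). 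As a side remark, since $|X_v| \le |\names{v}| \le \arity{\sche{S}}$ for $v \neq \varepsilon$ by property~(1), this decomposition has width at most $\arity{\sche{S}} - 1$.

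Finally I would verify the two defining conditions of a $C$-tree for this decomposition. The root condition is immediate: $X_\varepsilon = \set{[\varepsilon]_a \mid a \in \names{\varepsilon}}$, and $C$ was defined in the statement as precisely the subinstance of $\dec{t}$ induced by this set, so $|\adom{C}| \le |\names{\varepsilon}| \le l$ follows from property~(1) (recall $\names{\varepsilon} \subseteq C_l$ and $|C_l| = l$). Guardedness except for the root is where property~(5) is used: given $v \neq \varepsilon$, it supplies a node $w$ and a tuple $\ve{a}$ with $R_{\ve{a}} \in \mu(w)$, $\names{v} \subseteq \set{\ve{a}}$, and $v$ $b$-connected to $w$ for every $b \in \names{v}$. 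Then $R([w]_{a_1},\ldots,[w]_{a_n})$ is an atom of $\dec{t}$ (witnessed by $w$ itself), and for each $b \in \names{v}$, say $b = a_j$, the $b$-connectedness yields $[v]_b = [w]_{a_j}$; hence $X_v$ is contained in the set of arguments of that atom, as required.

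I expect the main obstacle to be purely a matter of precise bookkeeping across the encoding: keeping the definition of $a$-equivalence, the ``$b$-connected'' relation of property~(5), and the induced-subinstance conventions consistent with one another, and isolating cleanly the two uses of the tree-path lemma (for transitivity of $a$-equivalence and for connectivity of the bags). There is no conceptual difficulty here, but an off-by-one in the list of consistency conditions, or a slippage in what ``induced subinstance'' means, would silently break the argument.
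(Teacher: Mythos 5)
Your proposal is correct and follows essentially the same route as the paper: you build the same tree decomposition (the tree of $t$ with bags $X_v = \set{[v]_a \mid a \in \names{v}}$), verify atom coverage via consistency condition~(2), connectivity of each $[v]_a$ via the tree-path argument, and $[\set{\varepsilon}]$-guardedness via condition~(5), with the bound $|\adom{C}| \leq l$ from condition~(1). The only difference is that you spell out the well-definedness of $a$-equivalence and of the relations of $\dec{t}$, which the paper explicitly leaves to the reader, and your argument for that part is sound.
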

\begin{proof}
  Let $t = \tup{T, \mu}$ be a consistent, $\Gamma_{\sche{S},l}$-labeled
  tree. The fact that $\dec{t}$ is well-defined is left to the
  reader. We are going to construct an appropriate decomposition
  $\delta = \tup{\ca{T}, \tup{X_t}_{t \in T}}$ for $\dec{t}$. The tree
  $\ca{T}$ has the same structure as $t$. Furthermore, for $v \in T$, we
  set $X_v \coloneqq \set{[v]_a \mid a \in \names{v}}$. We need to show
  that $\delta$ is indeed a tree decomposition that satisfies the
  desired properties.

  Let $[v]_a \in \adom{\dec{t}}$ and consider two nodes $v_1,v_2 \in T$
  such that $[v]_a \in X_{v_1}$ and $[v]_a \in X_{v_2}$.  Then
  $v_1,v_2 \in [v]_a$ and so $v_1$ and $v_2$ are $a$-connected. Hence,
  $w \in [v]_a$ for all $w \in T$ which lie on the unique shortest path
  between $v_1$ and $v_2$. Since $a \in \names{w}$ for all such $w$, it
  follows that $[v]_a \in X_w$, and so $[v]_a$ is contained in all bags
  on the unique path between $v_1$ and $v_2$.
  Suppose $\dec{t} \models R([v_1]_{a_1},\ldots,[v_n]_{a_n})$. Then
  there is a $v \in [v_1]_{a_1}^\ast \cap \cdots \cap [v_n]_{a_n}^\ast$
  such that $R_{a_1,\ldots,a_n} \in \mu(v)$. By consistency,
  $\set{a_1,\ldots,a_n} \subseteq \names{v}$. Moreover, we know that
  $[v_i]_{a_i} = [v]_{a_i}$, for $i = 1,\ldots,n$. It follows that
  $\set{[v_1]_{a_1},\ldots,[v_n]_{a_n}} \subseteq X_v$.
  Now let $v \in T \setminus \set{\varepsilon}$. By consistency, there
  is an $R_{a_1,\ldots,a_n} \in \mbb{K}_{\sche{S},l}$ and a $w \in T$
  such that
  $\names{v} \coloneqq \set{a_{i_1},\ldots,a_{i_s}} \subseteq
  \set{a_1,\ldots,a_n} \subseteq \names{w}$,
  $R_{a_1,\ldots,a_n} \in \mu(w)$, and $v$ and $w$ are
  $b_{i_j}$-connected for $j = 1,\ldots,s$. By construction,
  $X_v = \set{[v]_{a_{i_1}},\ldots,[v]_{a_{i_s}}}$ and
  $\set{[w]_{a_1},\ldots,[w]_{a_n}} \subseteq X_w$. The claim follows
  now since $[v]_{a_{i_j}} = [w]_{a_{i_j}}$ for $j = 1,\ldots,s$. It is
  immediate that $|\adom{C}|$ is bounded by $l$.
\end{proof}

\textit{Notation.} Given a consistent $\Gamma_{\sche{S},l}$-labeled tree
$t = \tup{T,\mu}$ and a label $\rho \in \mu(T)$, in order to ease
notation we often regard $\rho$ as a database consisting of the facts
$\set{R(\ve{a}) \mid R_{\ve{a}} \in \rho}$. Furthermore, we let
$\names{\rho} \coloneqq \set{a \mid D_a \in \rho}$.

\medskip
\begin{proofcustom}{of Lemma~\ref{lem:consistent-labeled-trees}}
  The lemma is an easy consequence of Lemma~\ref{lem:consistency} and
  the fact that, when encoding a $C$-tree $D$ over $\sche{S}$, together
  with a tree decomposition witnessing that $D$ is a $C$-tree, into a consistent $\Gamma_{\sche{S},l}$-labeled tree $t$, then $\dec{t}$ and $D$ are isomorphic.
\end{proofcustom}

Roughly, Lemma~\ref{lem:consistent-labeled-trees} states that containment among OMQs from $\tup{\class{G},\class{BCQ}}$ can be semantically characterized via the decodings of consistent $\Gamma_{\sche{S},l}$-labeled trees. This makes the problem of deciding containment amenable to tree automata techniques.

\subsection*{Proof of Lemma~\ref{lem:automaton-1}}

Before proceeding to the proof of Lemma~\ref{lem:automaton-1}, we first introduce the relevant automata model.

\subsubsection*{Automata Techniques}

For a set of propositional variables $X$, we denote by $\mbb{B}^+(X)$
the set of Boolean formulas using variables from $X$, the connectives
$\land, \lor$, and the constants $\ptrue, \pfalse$. Let us now introduce our
automata model.

\begin{definition}
  \label{def:2WAPA}
  A \emph{two-way alternating parity automaton (2WAPA) on trees} is a
  tuple $\fk{A} = \tup{S, \Gamma, \delta, s_0, \Omega}$, where $S$ is a
  finite set of \emph{states}, $\Gamma$ an alphabet (the \emph{input
    alphabet} of $\fk{A}$),
  $\delta \colon S \times \Gamma \rightarrow
  \mbb{B}^+(\mathsf{tran}(\fk{A}))$ the \emph{transition function},
  where we set
  $\mathsf{tran}(\fk{A}) \coloneqq \set{\ndia{\alpha}s, \nbox{\alpha} s
    \mid s \in S, \alpha \in \set{-1,0,\ast}}$, $s_0 \in S$ the
  \emph{initial state}, and $\Omega \colon S \rightarrow \mbb{N}$ the
  \emph{parity condition} that assigns to each $s \in S$ a
  \emph{priority} $\Omega(s)$. Elements from $\mathsf{tran}(\fk{A})$ are
  called \emph{transitions}.\hfill\markfull
\end{definition}

Intuitively, a transition of the form $\ndia{0}s$ means that a copy of
the automaton should change to state $s$ and stay at the current node. A
transition of the form $\ndia{-1} s$ means that a copy should be sent to
the parent node, which is then required to exist, and proceed in state
$s$, while one of the form $\ndia{\ast} s$ means that a copy of the
automaton that assumes state $s$ is sent to some child node. The
transition $\nbox{0}s$ means the same as $\ndia{0}s$, while $\nbox{-1}s$
means that a copy of the automaton that assumes state $s$ should be sent
to the parent node which is there not required to exist at
all. Likewise, $\nbox{\ast}s$ means that a copy of the automaton
assuming state $s$ should be sent to all child nodes.

\medskip \textit{Notation.} We write $\Diamond s$ for
$\bigvee \set{\ndia{\alpha}s \mid \ndia{\alpha}s \in
  \mathsf{tran}(\fk{A}), s \in S}$, $\Box s$ for
$\bigwedge \set{\nbox{\alpha}s \mid \nbox{\alpha}s \in
  \mathsf{tran}(\fk{A}), s \in S}$, and simply $s$ for $\ndia{0}s$.
Furthermore, for $\alpha \in \mbb{N} \cup \set{-1, \ast}$, we define
\begin{align*}
  T_\alpha(x) \coloneqq \begin{cases}
    \set{x \cdot \alpha}, & \text{if $\alpha = -1$ and $x \cdot \alpha \in T$,}\\
    \set{x \cdot i \mid x \cdot i \in T, i \in \mbb{N} \setminus \set{0}}, & \text{if $\alpha = \ast$,}\\
    \emptyset, & \text{otherwise.}
    \end{cases}
\end{align*}
\begin{definition}
  A \emph{run} of a 2WAPA
  $\fk{A} = \tup{S,\Gamma, \delta, s_0, \Omega}$ on a $\Gamma$-labeled
  tree $\tup{T,\eta}$ is a $T \times S$-labeled tree
  $\tup{T_r, \eta_r}$ such that the following holds:
\begin{enumerate}
  \item $\eta_r(\varepsilon) = \tup{\varepsilon, s_0}$,
  \item if $y \in T_r$, $\eta_r(y) = \tup{x,s}$, and
    $\delta(s,\eta(x)) = \varphi$, then there is an
    $I \subseteq \mathsf{tran}(\mbb{A})$ such that $I \models \varphi$
    holds and the following conditions are satisfied:
    \begin{itemize}
    \item If $\ndia{\alpha} s' \in I$ then there is a node
      $x' \in T_\alpha(x)$ and a child node $y' \in T_r$ of $y$ such
      that $\eta_r(y') = \tup{x', s'}$.
    \item If $\nbox{\alpha} s' \in I$ then for all $x' \in T_\alpha(x)$,
      there is a child node $y' \in T_r$ of $y$ such that
      $\eta_r(y') = \tup{x', s'}$.
    \end{itemize}
\end{enumerate}
We say that a run $\tup{T_r,\eta_r}$ is \emph{accepting} on $\fk{A}$, if
on all infinite paths
$\tup{\varepsilon,s_0},\tup{x_1,s_1},\tup{x_2,s_2},\ldots$ in $T_r$, the
maximum priority among $\Omega(s_0),\Omega(s_1),\Omega(s_2),\ldots$ that
appears infinitely often is even. $\fk{A}$ \emph{accepts} a
$\Gamma$-labeled tree $\tup{T,\eta}$, if there is an accepting run on
$\tup{T,\eta}$. We denote by $\ca{L}(\fk{A})$ the set of
$\Gamma$-labeled trees $\fk{A}$ accepts, i.e., the \emph{language
  accepted by $\fk{A}$}.\hfill\markfull
\end{definition}

\textit{Remark.} The automaton model defined above resembles that
in~\cite{Wi01}. However, we explicitly provide transitions that allow
the automaton move to the parent node, while the model defined
in~\cite{Wi01} provides transitions for moving to \emph{some}
neighboring node, including the parent node. Therefore, the automata
in~\cite{Wi01} offer transitions of the form $s$, $\Diamond s$, and
$\Box s$ with their intended meaning as defined above. Using techniques
as employed in~\cite{Vardi98,Wi01}, for a 2WAPA $\fk{A}$, one can show
that the problem of deciding whether $\ca{L}(\fk{A}) = \emptyset$ is
feasible in exponential time with respect to the number of states of
$\fk{A}$ and in polynomial time with respect to the size of the input
alphabet of $\fk{A}$.

\medskip
\begin{proofcustom}{of Lemma~\ref{lem:automaton-1}}
We only give an intuitive explanation for the construction of the desired 2WAPA. To check whether a $\Gamma_{\sche{S},l}$-labeled tree is consistent, we can check each condition for consistency separately by a dedicated 2WAPA and then take the intersection of all of them.
Most of the consistency conditions are easy to check. We give here a more
detailed verbal explanation for condition (5). A 2WAPA checking this condition can be constructed as follows. At the beginning of its run, the
automaton branches universally to all nodes (except the root) in a state
whose intended purpose is to find appropriate guards in the input tree
for the names available at the current node. To this end, the automaton
has to do a reachability analysis on the input tree and store, using
exponentially many states in $\arity{\sche{S}}$, the tuple it seeks to
guard. By a \emph{guard} for the node $v$ here, we mean a node $w$ with
an $R_{\ve{a}} \in \mu(w)$ such that
\begin{enumerate*}[label={(\roman*)}]
\item $\set{\ve{a}}$ contains all the
names present at $w$ and
\item is $b$-connected to $v$ for all $b \in \names{v}$.
\end{enumerate*}
Notice that such a reachability analysis can be easily performed once we
have the means to store the information contained in $\names{v}$ in a
single state. This is, however, possible since for this task we need
somewhat $O((\arity{\sche{S}} + l)^{\arity{\sche{S}}})$ states, i.e.,
  polynomially many in the size of $\Gamma_{\sche{S},l}$.
\end{proofcustom}

\subsection*{Proof of Lemma~\ref{lem:automaton-2}}

We first need to introduce some additional auxiliary notions.

\subsubsection*{Strictly Acyclic Queries}

Let $q$ be a CQ over a schema $\sche{S}$. We denote by $\free{q}$ the
free variables of $q$; the same notation is used for first-order
formulas in general. We can naturally view $q$ as an instance $[q]$
whose domain is the set of variables of $q$ and contains the body atoms
of $q$ as facts. In the following, we will often overload notation and
write $q$ for both the query $q$ and the instance $\cqasinst{q}$.  The
notions of tree-width, acyclicity, etc.~then immediately extend to
CQs. Given a tree decomposition $\delta$ of $q$ (i.e., of
$\cqasinst{q}$), we say that $\delta$ is \emph{strict}, if some bag of
$\delta$ contains all variables that are free in
$q$~(cf.~also~\cite{FlFG02}). Accordingly, $q$ is called \emph{strictly
  acyclic} if it has a guarded tree decomposition that is strict.

Strictly acyclic queries have the convenient property to be equivalent
to \emph{guarded formulas} of a special form. Recall that the set
of \emph{guarded formulas} over a schema $\sche{S}$ is built inductively
by including all atomic formulas, relativizing quantifiers by atomic
formulas, and closing under Boolean connectives. More precisely, all
quantifier occurrences have one of the forms
\begin{align*}
  \forall \ve{y}\,(\alpha(\ve{x},\ve{y}) \limpl \varphi)\quad \text{and}\quad \exists \ve{y}\,(\alpha(\ve{x},\ve{y}) \land \varphi),
\end{align*}
such that the free variables of $\varphi$ are among
$\set{\ve{x},\ve{y}}$.

We are interested in the guarded formulas that are build up using
conjunction and existential quantification; we restrict ourselves to
such formulas in the following. We call a formula from this class
\emph{strictly guarded}, if it is of the form
$\exists\ve{y}\,(\alpha(\ve{x},\ve{y}) \land \varphi)$. We explicitly
include the case where $\ve{y}$ is the empty sequence of variables,
i.e., if to formulas of the form $\alpha(\ve{x}) \land \varphi$ with
$\free{\varphi} \subseteq \set{\ve{x}}$. Notice that every guarded
sentence $\varphi$ (i.e., a formula having no free variables) is
strictly guarded, since it is equivalent to
$\exists y\,(y = y \land \varphi)$. Furthermore, notice that every usual
guarded formula that uses only existential quantifiers and conjunction
is equivalent to a conjunction of strictly guarded formulas. The
following lemma is proved in~\cite{FlFG02}.
\begin{lemma}
  Every strictly acyclic CQ can be rewritten in polynomial time into
  an equivalent strictly guarded formula that is built up using
  conjunction and existential quantification only. The converse holds as
  well.
\end{lemma}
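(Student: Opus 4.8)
The plan is to prove the two directions separately, each by induction following the shape of a guarded tree decomposition (respectively, of the formula).

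For the forward direction, suppose $q(\ve{x})$ is strictly acyclic and fix a strict guarded tree decomposition $\delta = \tup{\ca{T},\tup{X_t}_{t \in T}}$ of $\cqasinst{q}$; such a $\delta$ exists by hypothesis and can be computed in polynomial time by standard recognition algorithms for acyclic conjunctive queries. First I would re-root $\ca{T}$ at a node $r$ whose bag contains all of $\free{q}$, and assign every atom of $q$ to some node whose bag contains all of its variables. For a node $v$, let $\alpha_v$ be a guard atom with $X_v \subseteq \var{\alpha_v}$, let $Y_v$ be the set of variables occurring in the subtree rooted at $v$, and let $S_v \coloneqq X_v \cap X_{\parent{v}}$ (with $S_r \coloneqq \free{q}$). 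Define, bottom-up,
\[
\psi_v \ \coloneqq\ \exists (Y_v \setminus S_v)\ \Bigl(\alpha_v \ \wedge \bigwedge_{\beta \text{ assigned to } v} \beta \ \wedge \bigwedge_{w \text{ child of } v} \psi_w\Bigr).
\]
Then $\psi_r$ has the form $\exists \ve{y}\,(\alpha_r(\ve{x},\ve{y}) \wedge \varphi)$ with $\free{\psi_r} = \free{q}$, hence is strictly guarded; and a routine induction, using the connectivity property of tree decompositions (which guarantees that each variable is introduced at a single node and is in scope wherever it is needed), shows $q \equiv \psi_r$ by matching homomorphisms from $q$ into an instance with satisfying assignments of $\psi_r$. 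Since $\delta$ has polynomially many nodes and each atom of $q$ is used once, $\psi_r$ has size polynomial in $q$, so the construction runs in polynomial time.

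For the converse, let $\varphi(\ve{x})$ be strictly guarded and built from atoms using only $\wedge$ and $\exists$. By structural induction, $\varphi$ is logically equivalent to an existentially quantified conjunction of atoms, i.e.\ (after renaming bound variables apart and pulling all quantifiers to the front) to a CQ $q_\varphi(\ve{x})$. It remains to produce a strict guarded tree decomposition of $\cqasinst{q_\varphi}$. I would take the tree $\ca{T}$ whose nodes are the strictly guarded subformulas of $\varphi$, with $\varphi$ at the root and the parent of a subformula being the innermost strictly guarded formula containing it; the bag of a node $\psi = \exists \ve{y}\,(\alpha(\ve{z},\ve{y}) \wedge \cdots)$ is $\var{\alpha}$. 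One checks this is a tree decomposition of $\cqasinst{q_\varphi}$: every atom sits inside the scope of some guard (or is a guard), so it lies in a bag; and the connectivity condition holds because a bound variable occurs only inside the scope of the quantifier binding it, while a free variable occurs throughout, so the set of bags containing any given variable is connected. The decomposition is guarded since each bag equals $\var{\alpha}$ for the corresponding guard, and it is strict since the root bag $\var{\alpha_\varphi}$ contains $\ve{x} = \free{\varphi}$, as $\alpha_\varphi$ guards the whole formula.

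The main obstacle I expect is the bookkeeping around the connectivity condition in both directions: in the forward direction one must choose the root and the atom-to-node assignment so that the existential prefixes $Y_v \setminus S_v$ partition the bound variables and each shared variable is reused only along a connected path; in the converse direction one must argue that the quantifier scoping of a strictly guarded formula already enforces exactly that connectivity. A secondary point is the polynomial-time claim, which relies on being able to compute a \emph{strict} guarded tree decomposition efficiently; since strictness is part of the hypothesis, one such decomposition exists, and extracting one in polynomial time follows from the standard polynomial-time machinery for acyclic conjunctive queries.
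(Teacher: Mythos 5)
First, a remark on provenance: the paper does not prove this lemma at all --- it is imported verbatim from Flum, Frick, and Grohe \cite{FlFG02} --- so there is no in-paper argument to compare with; what matters is whether your construction is sound. Your converse direction is essentially the standard one and is fine: after renaming bound variables apart, the tree of strictly guarded subformulas with bag $\var{\alpha_v}$ for the guard $\alpha_v$ of each subformula is a strict guarded tree decomposition, and the guardedness condition $\free{\varphi} \subseteq \{\ve{x},\ve{y}\}$ is exactly what gives connectivity.

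The forward direction, however, has a genuine gap: as stated, $\psi_r$ need not be a (strictly) guarded formula, which is the very property the lemma asserts. The culprit is that in a guarded tree decomposition the guard $\alpha_v$ is only required to satisfy $X_v \subseteq \var{\alpha_v}$, so it may contain variables occurring in no bag of the subtree of $v$ and not in the parent's guard; these variables are then free in $\psi_v$, and the conjunct $\psi_v$ inside the parent's quantification violates $\free{\varphi} \subseteq \{\ve{x},\ve{y}\}$. Concretely, take $q = \exists x,y,z,w\,(R(x,y) \land S(y,z) \land T(z,w) \land U(y))$ with root bag $\{x,y\}$ (guard $R$), child $\{y,z\}$ (guard $S$), and below it children $\{z,w\}$ (guard $T$) and $\{y\}$, where your recipe legitimately allows choosing $R(x,y)$ as the guard of the bag $\{y\}$: then at the node $\{y,z\}$ the body contains $R(x,y) \land U(y)$, whose free variable $x$ does not occur in the guard $S(y,z)$, so the formula is not guarded (it happens to stay equivalent to $q$, but that is beside the point). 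A second, related slip is the prefix $\exists(Y_v \setminus S_v)$: it quantifies variables occurring only strictly below $v$, hence not in $\alpha_v$, which again is not of the required shape $\exists\ve{y}\,(\alpha(\ve{x},\ve{y}) \land \varphi)$. Both defects disappear with the standard normalization that your proof is missing: first turn the strict guarded tree decomposition into a join tree whose nodes are the atoms of $q$ (bag $=$ the atom's variable set, rooted at an atom containing $\free{q}$, which exists by strictness), take $\alpha_v$ to be the node's own atom, and quantify exactly $X_v \setminus X_{\parent{v}}$ at $v$; connectivity then yields $\free{\psi_w} \subseteq X_w \cap X_v \subseteq \var{\alpha_v}$ and places all quantified variables inside the guard, after which your equivalence argument and the polynomial-time claim go through.
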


\subsubsection*{Squid Decompositions}

Let $q$ be a BCQ over a schema $\sche{S}$ having $n$ body atoms. An
\emph{$\sche{S}$-cover} of $q$ is a BCQ $q^+$ that contains all the
atoms from $q$ and may additionally contain $2n$ other body atoms over
$\sche{S}$.  It is pretty straightforward that, for an
$\sche{S}$-instance $I$, it holds that $I \models q$ iff there is an
$\sche{S}$-cover $q^+$ of $q$ such that $I \models q^+$.
\begin{definition}
  Let $I$ be an instance. For $V \subseteq \adom{I}$, we say that $I$ is
  \emph{$[V]$-acyclic}, if it has a guarded tree decomposition that
  omits $V$.  \hfill\markfull
\end{definition}

\begin{definition}
  Let $q$ be a BCQ over $\sche{S}$. A \emph{squid decomposition} of $q$
  is a tuple $\delta = \tup{q^+,\mu, H, T, V}$, where $q^+$ is an
  $\sche{S}$-cover of $q$,
  $\mu \colon \var{{q^+}} \rightarrow \var{{q^+}}$ a mapping,
  $V \subseteq \var{\mu(q^+)}$, and $\tup{H,T}$ a partition of the atoms
  $\mu(q^+)$ such that
  \begin{itemize}
    \item $H$ is the set of atoms of $\mu(q^+)$ induced by $V$,
    \item $T = \mu(q^+) \setminus H$ and $T$ is $[V]$-acyclic.\hfill\markfull
  \end{itemize}
\end{definition}
Intuitively, a squid decomposition specifies a way how a BCQ can be
mapped to an instance that contains some ``cyclic parts''---the set $H$
specifies those atoms that are mapped to such cyclic parts, while $A$
declares those atoms that are mapped to the acyclic parts of the
instance at hand. We will make this more precise in
Lemma~\ref{lem:squid} below, where we analyze matches in $C$-trees.

Given a CQ $q$ and a set of variables $V \subseteq \var{q}$, the
\emph{$V$-reduct} of $q$, denoted $q^V$, is the conjunctive query that
arises from $q$ by dropping all the existential quantifiers that bind
variables in $V$.

\begin{lemma}
\label{lem:squid}
Let $J$ be a $C$-tree over $\sche{S}$ and $q$ a BCQ over
$\sche{S}$. Let $\tup{\ca{T},\tup{X_t}_{t\in T}}$ be a witnessing tree
decomposition of $J$. It holds that $J \models q$ iff there is a squid
decomposition $\delta = \tup{q^+,\mu,H,A,V \coloneqq \set{\ve{x}}}$ of
$q$ and a homomorphism $\eta \colon \mu(q^+) \rightarrow J$ such that
  \begin{enumerate}
  \item $C \models H$ is witnessed by $\eta$,
    \label{lem:squid:1}
    \item
    $\bigcup_{\varepsilon \prec v} J(v) \models A^V(\eta(\ve{x}))$
    is witnessed by $\eta$, and
    \label{lem:squid:2}
  \item there are strictly guarded formulas $\varphi_1,\ldots,\varphi_l$
    such that
    $A^V(\ve{x}) \equiv \varphi_1 \land \cdots \land \varphi_l$.
    \label{lem:squid:3}
  \end{enumerate}
\end{lemma}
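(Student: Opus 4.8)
The goal is to prove Lemma~\ref{lem:squid}, which characterizes, via squid decompositions, when a BCQ $q$ maps into a $C$-tree $J$. Let me sketch how I would attack it.

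\medskip

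\noindent\textbf{The plan.}
The statement has an ``if'' direction and an ``only if'' direction, and the natural strategy is to prove each by carefully unwinding the definitions of $C$-tree, $\sche{S}$-cover, and squid decomposition. For the ``only if'' direction, I would start from an arbitrary homomorphism $h \colon q \to J$ and use the witnessing tree decomposition $\tup{\ca{T},\tup{X_t}_{t\in T}}$ of $J$. The key structural observation is that the image $h(q)$ of the (finitely many) atoms of $q$ touches only finitely many bags, and because every bag except the root is guarded by an atom of $J$, I can ``fold'' those atoms that land in non-root bags into an $\sche{S}$-cover: at most $2n$ extra atoms of $J$ (two per original atom, since arities are bounded and each atom has at most $\arity{\sche{S}}$ positions — this is where the ``$2n$'' in the definition of $\sche{S}$-cover comes from) suffice to witness, via their guards, the connectivity needed to route $q$ through the tree part. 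I would then let $V$ be the set of variables of $q^+$ that $h$ maps into $\adom{C}$ (the root bag $X_\varepsilon$), and $\mu$ the variable identification induced by $h$ on $q^+$ (i.e.\ $\mu(x)=\mu(y)$ exactly when $h(x)=h(y)$, after choosing a canonical representative). Setting $H$ to be the atoms of $\mu(q^+)$ induced by $V$ and $A=\mu(q^+)\setminus H$, I must check: (i) $C \models H$ via the induced homomorphism $\eta$ — immediate since $h$ maps those variables into $C$ and $C$ is induced by $X_\varepsilon$; (ii) $\bigcup_{\varepsilon \prec v} J(v) \models A^V(\eta(\ve x))$ — this requires that the non-$V$ atoms genuinely live strictly below the root, which follows from the guardedness of all non-root bags and the connectivity condition of tree decompositions (an element outside the root bag that is connected via an atom to the rest forces that atom into a non-root bag); (iii) $A^V(\ve x)$ is equivalent to a conjunction of strictly guarded formulas — this is where I would invoke the earlier lemma relating strictly acyclic CQs to strictly guarded formulas, after arguing that $A$ restricted to the tree part, with $V$ treated as free/parameter variables, admits a strict guarded tree decomposition (strict because the root bag contains all of $V$).

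\medskip

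\noindent\textbf{The converse direction.}
For ``if'', I start from a squid decomposition $\delta = \tup{q^+,\mu,H,A,V}$ and a homomorphism $\eta \colon \mu(q^+) \to J$ satisfying (1)--(3), and I must build a homomorphism $q \to J$. Composing with $\mu$ it suffices to map $\mu(q^+) \to J$, and then restrict to $q \subseteq q^+$. The atoms in $H$ are already mapped into $C \subseteq J$ by item (1). For the atoms in $A$: item (2) says $A^V$ is satisfied in $\bigcup_{\varepsilon \prec v} J(v)$ with the $V$-variables pinned to $\eta(\ve x)$, which, combined with item (1) pinning those same variables into $C = J(\varepsilon)$, glues together into a single homomorphism from all of $\mu(q^+)$ into $J$ — the only subtlety being consistency of the two partial maps on the shared variables $V$, which holds by construction since both use $\eta$. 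Items (2) and (3) together (strict guardedness / strict acyclicity) are essentially a bookkeeping device ensuring the decomposition is ``honest'' and do not create real difficulty in this direction.

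\medskip

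\noindent\textbf{The main obstacle.}
I expect the crux to be the ``only if'' direction, specifically showing that one needs only a bounded-size $\sche{S}$-cover (the $2n$ bound) to reroute $q$ through the tree-like part, and then verifying item (3) — that the tree part, with the core-touching variables $V$ held fixed, is genuinely \emph{strictly} acyclic (a guarded tree decomposition with one bag containing all of $V$). The first point requires a careful argument that each original atom of $q$ contributes at most a constant number of ``guard atoms'' of $J$ needed to certify that its image sits inside a single bag and that the relevant bags form a connected region; this is exactly the squid-shaped picture — a cyclic ``head'' mapped to $C$ plus acyclic ``tentacles'' reaching down the tree — and formalizing the passage from ``$h$ exists'' to ``such a decomposition exists'' is the technical heart. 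The strict-acyclicity claim in (3) should then follow by taking the restriction of the witnessing tree decomposition $\tup{\ca{T},\tup{X_t}}$ of $J$ to the sub-bags hit by $A$, adding $V$ to the root bag (which stays guarded-except-for-root, hence yields a \emph{strict} guarded decomposition of $A$ with parameters $V$), and invoking the cited equivalence with strictly guarded formulas.
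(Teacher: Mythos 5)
Your overall plan (easy converse direction; for the hard direction, start from a homomorphism $\theta\colon q\to J$, enlarge $q$ to an $\sche{S}$-cover by adding preimages of guard atoms of bags of $J$, and split $\mu(q^+)$ into a core part $H$ and a tree part $A$) is the same as the paper's, and your treatment of items (1)--(2) and of the converse direction is fine (indeed the converse is even easier than you make it: $\eta$ is given as a homomorphism on all of $\mu(q^+)$, so $\eta\circ\mu$ restricted to $q$ already does the job). The genuine gap is in your argument for item (3), which is the technical heart. You propose to restrict the witnessing tree decomposition of $J$ to the bags hit by $\eta(A)$, add $V$ to the root bag, and conclude that this ``stays guarded-except-for-root, hence yields a strict guarded decomposition of $A$ with parameters $V$.'' This is a non sequitur on two counts. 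First, the equivalence with strictly guarded formulas requires a \emph{guarded} tree decomposition with a bag containing the free variables; a decomposition that is only guarded except for its root does not qualify, and the artificial root bag $X_v\cup\eta(V)$ you create is in general not guarded by any atom. Second, and more fundamentally, $A^V(\ve{x})$ is in general \emph{not} strictly acyclic (no single atom of $A$ contains all of $V$, since such atoms were put into $H$), so no construction of a single strict guarded decomposition of all of $A$ can succeed; the statement only claims a \emph{conjunction} of strictly guarded formulas, i.e., a decomposition of $A$ into ``tentacles'' whose pairwise shared variables lie in $V$ and each of which is strictly acyclic on its own.

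Producing that decomposition is exactly where the choice of extra atoms in the cover matters, and your cover is not the right one. The paper takes $W$ to be the set of nodes hosting the images $\theta(\alpha)$ of the query atoms \emph{closed under greatest lower bounds} in $\ca{T}$ (minus the root), and adds to the cover one guard atom $\beta_v$ per $v\in W$; the glb-closure gives both the size bound ($|W|\le 2|q|$, which is the real source of the ``$2n$'' in the definition of $\sche{S}$-cover --- not arities, as you suggest) and, crucially, a guard at the root $\varepsilon_i$ of each tentacle $\ca{T}_i$. By connectivity of the decomposition, every core element used anywhere in $\ca{T}_i$ occurs in $X_{\varepsilon_i}$, so this one guard atom collects all the free ($V$-) variables of that tentacle, which is what makes each conjunct strictly guarded (and also yields $[V]$-acyclicity of $A$, needed for $\delta$ to be a squid decomposition at all --- a point you never verify). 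Without guards at these branching nodes, item (3) can simply be false: e.g.\ if two atoms of $q$ share an existential variable while their core-valued variables $x,y$ co-occur only in the bag of the meet node, the corresponding piece of $A^V$ is of the form $\exists u\,(R(x,u)\land R'(y,u))$, which is not equivalent to any conjunction of strictly guarded formulas. So the missing idea is the glb-closure of the atom-hosting nodes and the inclusion of the guards of those meet nodes in the cover; with your cover as described, the passage to item (3) breaks down.
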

\begin{proof}
  For the direction from right to left, consider such a given squid
  decomposition $\delta$ and a homomorphism $\eta$ as in the hypothesis
  of the lemma. It is immediate that $\eta \circ \mu$ is a homomorphism
  mapping $q^+$ to $J$. Since $q^+$ is an $\sche{S}$-cover of $q$, we
  obtain $J \models q$ as required.

  For the other direction, suppose that $J \models q$ is witnessed by a
  homomorphism $\theta$. For each $v \in T \setminus \set{\varepsilon}$,
  let $\beta_v$ be an atom of $J$ such that $J(v) \models \beta_v$ and
  $\beta_v$ contains all domain elements from $J(v)$ as
  arguments. Notice that the $\beta_v$ exist, since $\delta$ is guarded
  except for $\set{\varepsilon}$. Since $\theta$ maps $q$ to $J$, for
  each atom $\alpha$ of $q$, there is a node $v_\alpha$ such that
  $\theta(\alpha) \in J(v_\alpha)$. Let $W$ be the set of all these
  nodes and their closure under greatest lower bounds with respect to
  $\preceq$, excluding the root node $\varepsilon$ of $\ca{T}$. Consider
  the set of atoms
  $Q^+ \coloneqq \theta(q) \cup \set{\beta_v \mid v \in W}$. Notice that
  at least half of the nodes of $W$ are of the form $v_\alpha$---hence,
  $|Q^+| \leq 3|q|$. Let $q^+$ be a BCQ constructed as follows. Take the
  conjunction of $q$ and for each $\beta_v(a_1,\ldots,a_n)$ ($v \in W$),
  add an atom $\beta_v(x_1,\ldots,x_n)$, where each $x_i$ is a newly
  chosen variable. Then $q^+$ is obviously an $\sche{S}$-cover of
  $q$. Furthermore, by construction, there is a mapping
  $\mu \colon \var{q^+} \rightarrow \var{q^+}$ and an isomorphism
  $\eta \colon \mu(q^+) \rightarrow Q^+$ such that
  $(\eta \circ \mu)(q^+) = Q^+$. Now let $H$ be the greatest set of
  atoms of $\mu(q^+)$ such that $\eta(H) \subseteq
  J(\varepsilon)$. Moreover, let $V \coloneqq \var{H}$ and
  $A \coloneqq \mu(q^+) \setminus H$. We claim that
  $\delta \coloneqq \tup{q^+, \mu, H, A, V}$ is a squid decomposition of
  $q$ that satisfies together with $\eta$ the points mentioned in the
  statement of the lemma.

  To see that $\delta$ is a squid decomposition of $q$, the only
  nontrivial point to prove is that $A$ is indeed $[V]$-acyclic. We will
  prove this below in the course of establishing the third item.

  The first two items are immediate by construction. We prove the third
  item. Suppose $V = \set{\ve{x}}$ and consider the $V$-reduct $A^V(\ve{x})$
  of $A$. By construction, the atoms $\eta(A)$ are contained in
  $\bigcup_{\varepsilon \prec v} J(v)$. Now the set $W$ together with
  the order $\preceq_{\ca{T}}$ gives rise to a forest consisting of
  trees $\ca{T}_1,\ldots,\ca{T}_l$ whose roots are descendants of
  $\varepsilon$, i.e., the root of $\ca{T}$ (recall that $\varepsilon$
  is not contained in $W$). Moreover, we have
  \begin{enumerate*}[label={(\roman*)}]
    \item $\bigcup_{i = 1}^l T_i = W$,
    \item $T_i \cap T_j = \emptyset$, for $i \neq j$, and
    \item
      $\bigcup_{v \in T_i} X_v \cap \bigcup_{v \in T_j} X_v \subseteq
      \adom{C}$, for $i \neq j$.
  \end{enumerate*}

  For $v \in T$, let
  $Q^+(v) \coloneqq \set{\alpha \in Q^+ \mid J(v) \models \alpha}$ and,
  for $i = 1,\ldots,l$, let $Q^+(\ca{T}_i)$ be the set of atoms
  $\bigcup_{v \in T_i}Q^+(v)$. Now it is easy to check using the facts
  stated before that each $Q^+(\ca{T}_i)$ is acyclic and, hence, so is
  $\eta^{-1}(Q^+(\ca{T}_i))$.  Furthermore, denoting by $\varepsilon_i$
  the root of $\ca{T}_i$, it holds that
  $\adom{Q^+(\ca{T}_i)} \cap \eta(V) \subseteq
  \adom{Q^+(\varepsilon_i)}$---indeed, if
  $a \in \adom{Q^+(v)} \cap \eta(V)$ for some $v \succeq \varepsilon_i$,
  then, since $\varepsilon_i \succ \varepsilon$ and
  $a \in X_\varepsilon$, it must be the case that
  $a \in \adom{Q^+(\varepsilon_i)}$ by connectivity. It follows that the
  $V$-reduct of $\eta^{-1}(Q^+(\ca{T}_i))$ (viewed as Boolean query),
  henceforth denoted $q^+_{\ca{T}_i}$, is strictly acyclic and is
  therefore equivalent to a strictly guarded formula $\varphi_i$. Hence,
  the query $A^V(\ve{x})$ is equivalent to
  $\bigwedge_{i=1}^l \varphi_i$. Moreover, it follows that $A$ itself is
  $[V]$-acyclic---notice that
  $A \equiv \exists\ve{x}\bigwedge_{i=1}^l q^+_{\ca{T}_i}$ and that
  $\adom{Q^+(\ca{T}_i)} \cap \adom{Q^+(\ca{T}_j)} \subseteq \eta(V)$,
  for $i \neq j$. Hence,
  $\var{q^+_{\ca{T}_i}} \cap \var{q^+_{\ca{T}_j}} \subseteq V$, for
  $i \neq j$. The claim now follows since every $q^+_{\ca{T}_i}$ is
  acyclic.
\end{proof}

\subsubsection*{Derivation trees}

Let $D$ be an $\sche{S}$-database and $\Sigma$ a set of guarded
rules. Let $q_0(\ve{x})$ be a strictly acyclic query whose free
variables are exactly those from $\ve{x} \coloneqq x_1,\ldots,x_n$ and
let $\ve{a} \coloneqq a_1,\ldots,a_n$ be a tuple from $\adom{D}$. A
\emph{derivation tree for $\tup{\ve{a},q_0(\ve{x})}$ with respect to $D$
  and $\Sigma$} is a finite tree $\ca{T}$ whose nodes are labeled via a
function $\mu$ with pairs of the form
$\tup{b_1,\ldots,b_k;q(y_1,\ldots,y_k)}$, where $b_1,\ldots,b_k$ are
constants from $\adom{D}$ and $q(y_1,\ldots,y_k)$ is a strictly acyclic
query over $\sche{S} \cup \sch{\Sigma}$ having exactly $y_1,\ldots,y_k$
free, such that the following conditions are satisfied:
\begin{enumerate}
\item $\mu(\varepsilon) = \tup{\ve{a},q_0(\ve{x})}$, where $\varepsilon$
  is the root node of $\ca{T}$.
\item If $\mu(v) = \tup{c_1,\ldots,c_m ;q(z_1,\ldots,z_m)}$ for some
  node $v$, then one of the following conditions holds (let
  $\ve{c} \coloneqq c_1,\ldots,c_m$ and
  $\ve{z} \coloneqq z_1,\ldots,z_m$):
\begin{enumerate}
\item $v$ is a leaf node and $q(\ve{z}) \equiv \beta(\ve{z})$, for some
  atomic formula $\beta(\ve{z})$ such that $D \models \beta(\ve{c})$.
\item The node $v$ has a successor labeled by
  $\tup{\ve{c},\ve{b}; p(\ve{z},\ve{y})}$ and it holds that
\begin{align*}
  \Sigma \models \forall \ve{z},\ve{y}\,(p(\ve{z},\ve{y}) \limpl q(\ve{z})).
\end{align*}
\item The query $q(\ve{z})$ is logically equivalent to
  $q_1(z_{i_{1,1}},\ldots,z_{i_{1,k_1}}) \land \cdots \land
  q_l(z_{i_{l,1}},\ldots,z_{i_{l,k_l}})$ and $v$ has $l$ successors
  $v_1,\ldots,v_l$ respectively labeled by
  $\tup{c_{i_{1,1}},\ldots,c_{i_{1,k_1}};
    q_1(z_{i_{1,1}},\ldots,z_{i_{1,k_i}})},\ldots,\tup{c_{i_{l,1}},\ldots,c_{i_{1,k_l}};
    q_l(z_{i_{l,1}},\ldots,z_{i_{1,k_l}})}$.
\end{enumerate}
\end{enumerate}

\begin{lemma}
  \label{lem:derivtreeatom}
  Let $\alpha(x_1,\ldots,x_n)$ be an atomic formula. Then
  $D,\Sigma \models \alpha(a_1,\ldots,a_n)$ iff there is a derivation
  tree for $\tup{a_1,\ldots,a_n; \alpha(x_1,\ldots,x_n)}$ with respect
  to $D$ and $\Sigma$.
\end{lemma}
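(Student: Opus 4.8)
I would route both directions through the universality of the chase, i.e.\ the fact that $D,\Sigma \models \gamma$ iff $\gamma \in \chase{D}{\Sigma}$ for any atom $\gamma$. Since every node of a derivation tree carries a \emph{strictly acyclic} query over $\sche{S}\cup\sch{\Sigma}$ (not just an atom), the natural object to prove by induction is the following generalisation, of which Lemma~\ref{lem:derivtreeatom} is the special case where $q$ is a single atom: \emph{for every strictly acyclic query $q(\ve{z})$ over $\sche{S}\cup\sch{\Sigma}$ and every $\ve{c}\in\adom{D}^{|\ve{z}|}$, we have $D,\Sigma\models q(\ve{c})$ iff there is a derivation tree for $\tup{\ve{c};q(\ve{z})}$.} Throughout I use the standing assumptions of the section (constant-free tgds, single-atom heads, non-empty bodies) and the lemma above relating strictly acyclic queries with strictly guarded formulas $\exists\ve{y}\,(\alpha(\ve{x},\ve{y})\land\varphi)$, $\varphi$ a conjunction of strictly guarded formulas.

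\textbf{Soundness ($\Leftarrow$).} This is the easy direction: by induction on the size of the derivation tree, show that whenever $\tup{\ve{c};q(\ve{z})}$ admits a derivation tree, $M\models q(\ve{c})$ for every model $M\supseteq D$ of $\Sigma$. In case (a) this is immediate from $D\models\beta(\ve{c})$ and $M\supseteq D$; in case (b) the subtree gives $M\models p(\ve{c},\ve{b})$ by induction (here one uses that $\ve{b}\subseteq\adom{D}$), and $\Sigma\models p\limpl q$ together with $M\models\Sigma$ yields $M\models q(\ve{c})$; in case (c) the subtrees give $M\models q_i(\ve{c}_i)$ for all $i$, hence $M\models q(\ve{c})$ by $q\equiv q_1\land\cdots\land q_l$. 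Specialising to atomic $q$ yields the ``if'' part of the lemma.

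\textbf{Completeness ($\Rightarrow$).} Assume $D,\Sigma\models q(\ve{c})$, i.e.\ there is a homomorphism $g\colon[q]\to\chase{D}{\Sigma}$ with $g(\ve{z})=\ve{c}$; let $k$ be the least number of chase steps after which $g$ already maps $[q]$ into the corresponding finite prefix $J_k$. I argue by induction on $(k,\text{size of }q)$, ordered lexicographically. If $k=0$ then $g$ maps $[q]$ into $D$, so all entries of $g(\ve{y})$ for any tuple of variables $\ve{y}$ lie in $\adom{D}$; writing $q\equiv\exists\ve{y}\,(\alpha(\ve{z},\ve{y})\land\varphi)$ as a strictly guarded formula, one case (b) step moves to the child $\tup{\ve{c}\,g(\ve{y});\,\alpha(\ve{z},\ve{y})\land\varphi}$ (legitimate since $\alpha\land\varphi$ trivially implies $q$), then a case (c) step splits it into the atomic conjunct $\alpha$ — closed off by case (a) using $D\models\alpha(\ve{c},g(\ve{y}))$ — and the strictly guarded conjuncts of $\varphi$, each of strictly smaller size, handled by the inner induction. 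If $k>0$, let $\tau\colon\phi(\ve{x}',\ve{y}')\to\exists\ve{z}'\,\psi(\ve{x}',\ve{z}')$ be the rule applied at step $k$, producing $\psi(\ve{e},\ve{n})$ with fresh nulls $\ve{n}$. If $g$ avoids this atom we conclude by induction on $k$; otherwise let $S$ be the set of atoms of $[q]$ with $g$-image $\psi(\ve{e},\ve{n})$, resolve $S$ against $\tau$ (their unification exists since all of them are images of $\psi$) to obtain $q^{\dagger}$, and check that $\Sigma\models q^{\dagger}(\ve{z})\limpl q(\ve{z})$ and that $[q^{\dagger}]$ maps into $J_{k-1}$ (using the body homomorphism of $\tau$ on the resolved part and $g$ elsewhere). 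By the induction hypothesis there is a derivation tree for $\tup{\ve{c};q^{\dagger}(\ve{z})}$, and a single case (b) step on top yields the one for $\tup{\ve{c};q(\ve{z})}$; taking $q=\alpha$ atomic gives the lemma.

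\textbf{Main obstacle.} The delicate point is the $k>0$ step: one must verify that the resolved query $q^{\dagger}$ is again \emph{strictly acyclic} (so that it is a legal derivation-tree label) while the chase-step measure strictly drops. This is a guardedness bookkeeping argument — the fresh nulls $\ve{n}$ of the last step occur in no other atom of $J_k$, so every variable of $[q]$ that $g$ sends into $\ve{n}$ already lies in $S$; hence no genuine join variable linking $S$ to the rest of $[q]$ can sit at an existential position of $\psi$, which is exactly what keeps $\theta_S(\phi)$ guarded and lets its ``star'' be grafted onto a guarded tree decomposition of the remainder of $[q]$. Making this precise (together with the soundness of the resolution step and the termination of the strictly-guarded decomposition in the $k=0$ base case) is where the real, if routine, work lies.
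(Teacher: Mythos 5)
Your plan differs from the paper's: you prove the statement for all strictly acyclic queries at once by induction on the number of chase steps, and in the inductive step you resolve the set $S$ of query atoms mapped to the last chase atom against the body of the applied tgd. The paper instead proves the atomic case by working inside the guarded chase forest: it jumps from the derived atom to the closest ancestor whose label $\beta_0(\ve{a},\ve{b})$ is null-free and uses the entailment $\Sigma \models \beta_0(\ve{a},\ve{b}) \land \gamma_1 \land \cdots \land \gamma_l \limpl \exists \ve{y}\,(\beta_0(\ve{a},\ve{y}) \land \beta_1 \land \cdots \land \beta_k)$, so every child query it introduces is a conjunction of atoms over a single guarded tuple of database constants and is trivially strictly acyclic. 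Your route is not merely a repackaging of this, and it has a genuine gap exactly at the point you call ``routine'': the resolved query $q^{\dagger}$ need \emph{not} be strictly acyclic, so it is not a legal derivation-tree label. Concretely, take $\Sigma = \set{R(x,y) \ra \exists z\, H(x,z)}$, $D = \set{R(a,b_1), R(b_1,b_2), R(b_2,a)}$, and the strictly acyclic query $q$ with atoms $H(u,s), H(v,s'), R(u,w_1), R(w_1,w_2), R(w_2,v)$ (its Gaifman graph is the path $s$--$u$--$w_1$--$w_2$--$v$--$s'$). The only homomorphism into the chase uses the single chase step producing $H(a,n)$ and sends $u,v \mapsto a$, $s,s' \mapsto n$, $w_1 \mapsto b_1$, $w_2 \mapsto b_2$; hence $S = \set{H(u,s), H(v,s')}$, and unifying $S$ with the head $H(x',z')$ identifies $u$ with $v$. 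The resulting $q^{\dagger}$ contains $R(u,w_1), R(w_1,w_2), R(w_2,u)$, a triangle with no covering atom, which is cyclic; so the induction hypothesis cannot be applied to it.

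The bookkeeping you sketch only controls variables sitting at existential positions (these indeed occur only in $S$, since the fresh nulls occur in no other atom) and the grafting of the guarded body; it does not control the identifications of \emph{frontier}-position variables across distinct atoms of $S$, and those identifications act inside the remainder $[q] \setminus S$ and can destroy acyclicity, as above. Note also that in the example the identification of $u$ and $v$ is not even needed for soundness (since $s$ and $s'$ are distinct and non-shared, one could resolve the two atoms against separate copies of the body and stay acyclic), so a repair would have to replace the blanket ``unify all of $S$'' by a case analysis of which atoms of $S$ are \emph{forced} to be unified (shared existential-position variables, repeated head variables) and then prove acyclicity preservation for exactly those identifications --- essentially the kind of analysis done in rewriting-completeness proofs, and decidedly not routine. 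The soundness direction and the $k=0$ base case of your argument are fine; but as written the completeness step fails, whereas the paper's type-based detour through the null-free ancestor guard avoids the issue altogether.
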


\begin{proofsk}
  Let $\ve{a} \coloneqq a_1,\ldots,a_n$ and
  $\ve{x} \coloneqq x_1,\ldots,x_n$. The direction from right to left is
  an easy induction on the construction of the derivation tree. We
  sketch the other direction. Consider the guarded chase forest
  $\tup{\ca{F},\eta}$ for $D$ and $\Sigma$, where $\eta$ is a function
  labeling the nodes and edges of $\ca{F}$. We construct a derivation
  tree for $\tup{\ve{a},\alpha(\ve{x})}$ by induction on the number of
  chase steps required to derive $\alpha(\ve{a})$ from $D$ and $\Sigma$.

  For the base case, if $D \models \alpha(\ve{a})$, the claim is obvious
  since we can apply rule 2.(a). Assume that $\alpha(\ve{a})$ is derived
  using a rule
  \begin{align*}
    \sigma \colon \quad \beta_0(\ve{x},\ve{y}),\beta_1,\ldots,\beta_k \limpl \alpha(\ve{x}),
  \end{align*}
  and a homomorphism $\mu$ such that $\mu(\ve{x}) = \ve{a}$, where
  $\beta_0(\ve{x},\ve{y})$ is the guard of $\sigma$. If
  $\mu(\set{\ve{x},\ve{y}}) \subseteq \adom{D}$, the result immediately
  follows by the induction hypothesis. Otherwise, the image of
  $\beta_0(\ve{x},\ve{y})$ under $\mu$ contains some labeled nulls as
  arguments. Assume that all the $\beta_1,\ldots,\beta_k$ contain nulls
  as their arguments---for those that do not, the induction hypothesis
  would yield appropriate derivation trees again. Notice that all the
  nulls occurring in $\beta_1,\ldots,\beta_k$ appear in
  $\mu(\set{\ve{y}})$. By construction of $\ca{F}$, there is a node
  $v_0$ that is an ancestor of the nodes having the atoms
  $\mu(\beta_0),\mu(\beta_1),\ldots,\mu(\beta_k)$ as labels and which
  has a label of the form $\beta_0(\ve{a},\ve{b})$ which contains no nulls
  at all as arguments. There is a corresponding atomic formula
  $\gamma_0(\ve{x},\ve{z})$ whose image under an appropriate
  homomorphism equals $\beta_0(\ve{a},\ve{b})$. Furthermore, there are atoms
  $\gamma_1,\ldots,\gamma_l$ such that
  $\adom{\set{\gamma_1,\ldots,\gamma_l}} \subseteq \set{\ve{a},\ve{b}}$
  and
  \begin{align*}
    \Sigma \models \beta_0(\ve{a},\ve{b}) \land \gamma_1 \land \cdots \land \gamma_l \limpl \exists\ve{y}\,(\beta_0(\ve{a},\ve{y}) \land \beta_1 \land \cdots \land \beta_k).
  \end{align*}
  Now regard the $\gamma_i$ ($i = 1,\ldots,l$) as atomic formulas with
  free variables among $\set{\ve{x},\ve{z}}$. The formula
  $p(\ve{x},\ve{z}) \coloneqq \gamma_0(\ve{x},\ve{z}) \land \gamma_1
  \land \cdots \land \gamma_l$ is then a strictly acyclic query that
  satisfies
  $\Sigma \models \forall\ve{x},\ve{z}\,(p(\ve{x},\ve{z}) \limpl
  \alpha(\ve{x}))$. An application of rule 2.(b) then requires us to
  find a derivation tree for $\tup{\ve{a},\ve{b};p(\ve{x},\ve{y})}$,
  whence an application of rule 2.(c) reduces this task to finding
  derivation trees for the atoms $\gamma_0,\gamma_1,\ldots,\gamma_l$ and
  their corresponding tuples of constants. These trees exist by
  induction hypothesis and we can simply concatenate them appropriately
  in order to arrive at a derivation tree for
  $\tup{\ve{a},\alpha(\ve{x})}$.
\end{proofsk}

Given a guarded formula $\varphi(\ve{x})$ built up from conjunctions and
existential quantification, we define the \emph{nesting depth} of
$\varphi(\ve{x})$, denoted $\nd{\varphi(\ve{x})}$, inductively:
\begin{itemize}
\item If $\varphi(\ve{x})$ is an atomic formula, then
  $\nd{\varphi(\ve{x})} \coloneqq 0$.
\item If $\varphi(\ve{x}) = (\psi_1 \land \psi_2)$, then
  $\nd{\varphi(\ve{x})} \coloneqq \max\{\nd{\psi_1},\nd{\psi_2}\}$.
\item If
  $\varphi(\ve{x}) = \exists\ve{y}\,(\alpha(\ve{x},\ve{y}) \land \psi)$
  and $\ve{y} \neq \emptyset$, then
  $\nd{\varphi(\ve{x})} \coloneqq \nd{\psi} + 1$.
\end{itemize}

\begin{lemma}
  \label{prop:derivtreequery}
  Let $D$ be a database, $\Sigma$ a set of guarded rules, and
  $q(\ve{x})$ a strictly acyclic conjunctive query. Then
  $D,\Sigma \models q(\ve{a})$ iff there is a derivation tree for
  $\tup{\ve{a},q(\ve{x})}$ with respect to $D$ and $\Sigma$.
\end{lemma}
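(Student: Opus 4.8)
The plan is to prove both directions by induction, with the reverse implication being the routine one and the forward implication carrying the weight. The two ingredients are Lemma~\ref{lem:derivtreeatom} (the atomic case) and the result stated above, from~\cite{FlFG02}, that every strictly acyclic CQ is logically equivalent to a strictly guarded formula built up using conjunction and existential quantification only. For the direction ``$\Leftarrow$'', I would argue by induction on the height of a derivation tree $\ca{T}$ for $\tup{\ve{a},q(\ve{x})}$, proving that every node labeled $\tup{\ve{c};q'(\ve{z})}$ satisfies $D,\Sigma \models q'(\ve{c})$: rule~2.(a) gives $D \models \beta(\ve{c})$ and hence $\chase{D}{\Sigma} \models q'(\ve{c})$ since $D \subseteq \chase{D}{\Sigma}$; rule~2.(b) gives, by the induction hypothesis, $\chase{D}{\Sigma} \models p(\ve{c},\ve{b})$, and then $\chase{D}{\Sigma} \models q'(\ve{c})$ because $\chase{D}{\Sigma} \models \Sigma$ and $\Sigma \models \forall\ve{z},\ve{y}\,(p(\ve{z},\ve{y}) \limpl q'(\ve{z}))$; rule~2.(c) is immediate from the induction hypothesis and the semantics of conjunction. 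Taking $\ve{c} = \ve{a}$ at the root yields $D,\Sigma \models q(\ve{a})$.

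For the direction ``$\Rightarrow$'', assume $D,\Sigma \models q(\ve{a})$ and fix, via the rewriting lemma, a strictly guarded formula $\varphi(\ve{x})$ equivalent to $q(\ve{x})$. I would induct on the pair $(\nd{\varphi(\ve{x})},|\varphi(\ve{x})|)$ ordered lexicographically, where $|\varphi(\ve{x})|$ is the syntactic size. Since $\chase{D}{\Sigma}$ is a universal model of $D$ and $\Sigma$, there is a homomorphism $\mu$ from the body of $\varphi(\ve{x})$ to $\chase{D}{\Sigma}$ with $\mu(\ve{x}) = \ve{a}$. If $\varphi(\ve{x})$ is atomic, Lemma~\ref{lem:derivtreeatom} supplies a derivation tree, which becomes one for $\tup{\ve{a},q(\ve{x})}$ after a single application of rule~2.(c) with $l = 1$ (using $q \equiv \varphi$). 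If $\varphi(\ve{x}) = \varphi_1(\ve{x}_1) \land \varphi_2(\ve{x}_2)$ with $\ve{x}_j = \free{\varphi_j}$, then each $\varphi_j$ is strictly guarded with $\nd{\varphi_j} \leq \nd{\varphi}$ and $|\varphi_j| < |\varphi|$, and $D,\Sigma \models \varphi_j(\ve{a}_j)$; the induction hypothesis yields derivation trees for $\tup{\ve{a}_1,\varphi_1}$ and $\tup{\ve{a}_2,\varphi_2}$, which rule~2.(c) combines under a root labeled $\tup{\ve{a},q(\ve{x})}$. If $\varphi(\ve{x}) = \exists\ve{y}\,(\alpha(\ve{x},\ve{y}) \land \psi(\ve{x},\ve{y}))$ with $\ve{y} \neq \emptyset$, so that $\nd{\psi} < \nd{\varphi}$, let $A \coloneqq \mu(\alpha(\ve{x},\ve{y}))$; all of $\mu(\set{\ve{x},\ve{y}})$ occurs among the arguments of $A$ because $\alpha$ guards $\set{\ve{x},\ve{y}}$. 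If $A$ is null-free, I would set $p(\ve{x},\ve{y}) \coloneqq \alpha(\ve{x},\ve{y}) \land \psi(\ve{x},\ve{y})$ with $\ve{y}$ now free; this $p$ is strictly acyclic with $\nd{p} = \nd{\psi} < \nd{\varphi}$, $\mu(\ve{y}) \subseteq \adom{D}$, and $D,\Sigma \models p(\ve{a},\mu(\ve{y}))$, and since $\Sigma \models \forall\ve{x},\ve{y}\,(p(\ve{x},\ve{y}) \limpl \varphi(\ve{x}))$ holds trivially, the induction hypothesis plus rule~2.(b) finish this subcase. If instead $A$ has a labeled null, I would trace $A$ back in the guarded chase forest for $D$ and $\Sigma$ to an ancestor whose label $\beta_0(\ve{b})$ is a null-free atom with $\ve{b} \subseteq \adom{D}$, argue that the whole match $\mu$ factors through $\chase{D \upharpoonright G}{\Sigma}$ for $G \coloneqq \set{\ve{b}}$ so that $\chase{D \upharpoonright G}{\Sigma} \models \varphi(\ve{a})$, and then take $p(\ve{x},\ve{z})$ to be the conjunction of the atoms of $D \upharpoonright G$ read as a query with free variables corresponding to $\ve{a},\ve{b}$; since all elements of $G$ occur in $\beta_0(\ve{b})$, $p$ is strictly acyclic with $\nd{p} = 0$, we have $\Sigma \models \forall\ve{x},\ve{z}\,(p(\ve{x},\ve{z}) \limpl \varphi(\ve{x}))$ and $D,\Sigma \models p(\ve{a},\ve{b})$, and the induction hypothesis plus rule~2.(b) (applied at a root labeled $\tup{\ve{a},q(\ve{x})}$) complete the construction.

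I expect the main obstacle to be the trace-back step in the last subcase: making rigorous that when $\mu(\alpha)$ contains a labeled null, the entire homomorphism $\mu$ — of the guard $\alpha$ and, crucially, of the guarded part $\psi$ — can be redirected through the sub-chase $\chase{D \upharpoonright G}{\Sigma}$ attached to the null-free ancestor $\beta_0(\ve{b})$, so that $D \upharpoonright G$ already entails $\varphi(\ve{a})$ modulo $\Sigma$. This rests on the structural properties of the guarded chase forest already exploited in the sketch of Lemma~\ref{lem:derivtreeatom} (in particular acyclicity of $\chase{D \upharpoonright G}{\Sigma}$ for guarded $G$, cf.~\cite{CaGK08a}) together with the guardedness of both $\Sigma$ and $\varphi$, and it relies essentially on working with constant-free rules so that the constants of $D$ may be renamed freely. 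Verifying well-foundedness of the induction is otherwise immediate: the existential case strictly decreases the nesting depth, while the atomic and conjunction cases either terminate or strictly decrease the syntactic size.
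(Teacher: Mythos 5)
Your backward direction, the atomic and conjunction cases, and the subcase where the guard's image is null-free are fine and essentially follow the paper's induction on nesting depth. The genuine gap is in the last subcase, and it is not merely a technicality to be "made rigorous": the claim you want to prove there is false. When the image $A=\mu(\alpha(\ve{x},\ve{y}))$ contains a null, it is not true in general that the whole match factors through $\chase{D\upharpoonright G}{\Sigma}$ for $G=\set{\ve{b}}$, nor that $\Sigma \models \forall\ve{x},\ve{z}\,(p(\ve{x},\ve{z}) \limpl \varphi(\ve{x}))$ when $p$ is just $D\upharpoonright G$ read as a query. Take $\Sigma=\set{S(x) \limpl \exists u\, T(x,u)}$, $D=\set{S(a),R(a,c)}$ and $\varphi(x)=\exists y\,(T(x,y) \land \exists z\, R(x,z))$, which is strictly guarded and satisfied at $a$. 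The guard maps to $T(a,\lambda)$, whose null-free ancestor in the guarded chase forest is $S(a)$, so $G=\set{a}$ and $D\upharpoonright G=\set{S(a)}$; but $\chase{D\upharpoonright G}{\Sigma}\not\models\varphi(a)$ and $\Sigma\not\models\forall x\,(S(x)\limpl\varphi(x))$, because the conjunct $\exists z\,R(x,z)$ is matched outside the subtree rooted at the ancestor and is simply lost. A second, independent problem is that even the part of the match that does lie in that subtree need not be entailed by $D\upharpoonright G$: the rule applications generating the subtree may rely on atoms over $G$ that are derived in $\chase{D}{\Sigma}$ using facts of $D$ outside $D\upharpoonright G$ (e.g.\ a rule $P(x)\land S(x)\limpl\exists u\,T(x,u)$ with $S(a)$ produced from a fact $U(a,c)$), so restricting to the induced subdatabase discards exactly the information that makes the subtree derivable.

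The paper's proof avoids both problems by never passing to $D\upharpoonright G$. Instead, after locating the null-free ancestor $\beta_0(\ve{a},\ve{b})$, it splits the query: let $p$ be the conjunction of those atoms of $q$ whose $\mu$-images lie \emph{outside} the subtree rooted at that ancestor (all atoms touching the nulls lie inside, so $p$ has free variables among $\ve{x}$ and strictly smaller nesting depth), and, as in the proof of Lemma~\ref{lem:derivtreeatom}, it produces finitely many atomic formulas $\beta_1,\ldots,\beta_m$ over $\set{\ve{x},\ve{z}}$ (the "type" of the guarded set $\set{\ve{a},\ve{b}}$ needed to regenerate the subtree) such that $\Sigma \models \forall\ve{x},\ve{z}\,(\beta_0(\ve{x},\ve{z})\land\beta_1\land\cdots\land\beta_m\land p \limpl \varphi(\ve{x}))$. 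Rules 2.(b) and 2.(c) then reduce the task to derivation trees for the atoms $\beta_0,\ldots,\beta_m$ (Lemma~\ref{lem:derivtreeatom}) and for $p$ (induction hypothesis, since $\nd{p}\leq n$). If you replace your "redirect the whole match through the sub-chase of $G$" step by this split-plus-side-atoms argument, the rest of your proof structure goes through.
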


\begin{proofsk}
  We again sketch only the direction from left to right. Let
  $\varphi(\ve{x})$ be the strictly guarded formula corresponding to
  $q(\ve{x})$. We proceed by induction on the nesting depth of
  $\varphi(\ve{x})$. If $\nd{\varphi(\ve{x})} = 0$, then
  $\varphi(\ve{x})$ is quantifier free and thus a conjunction of atoms
  $\alpha_0(\ve{x}) \land \alpha_1 \land \cdots \land \alpha_k$, where
  $\var{\alpha_i} \subseteq \set{\ve{x}}$ for $i = 1,\ldots,k$. An
  application of rule 2.(c) reduces the problem of building a derivation
  tree for $\tup{\ve{a},\varphi(\ve{x})}$ to the problem of building
  corresponding trees for the $\alpha_i$ and their corresponding
  constants from $\ve{a}$. The existence of these trees is guaranteed
  by Lemma~\ref{lem:derivtreeatom}.

  Now suppose that $\nd{\varphi(\ve{x})} = n + 1$. Let
  $\varphi(\ve{x}) = \exists \ve{y}\,(\alpha(\ve{x},\ve{y}) \land \psi)$
  and $\ve{y} \coloneqq y_1,\ldots,y_k$. Assume, without loss of
  generality, that all the bound variables from $\varphi(\ve{x})$ are
  pairwise distinct. In the following, we will describe how to construct
  a derivation tree for $\tup{\ve{a},q(\ve{x})}$. If
  $D,\Sigma \models q(\ve{a})$, then there is a homomorphism $\mu$
  mapping each atom of $q(\ve{x})$ to $\chase{D}{\Sigma}$ such that
  $\mu(\ve{x}) = \ve{a}$. Furthermore, $\mu$ maps each atom of
  $q(\ve{x})$ to a node of the guarded chase forest $\ca{F}$ of $D$ and
  $\Sigma$. Let $\alpha_\mu(\ve{a},\lambda_1,\ldots,\lambda_k)$ denote
  the atom labeling the node of $\ca{F}$ where $\alpha(\ve{x},\ve{y})$
  is mapped to via $\mu$. Let $\lambda_{i_1},\ldots,\lambda_{i_l}$
  exhaust all elements from $\lambda_1,\ldots,\lambda_k$ that are not
  from $\adom{D}$ and
  $\ve{b} \coloneqq \lambda_{j_1},\ldots,\lambda_{j_m}$ exhaust those
  from $\lambda_1,\ldots,\lambda_k$ that are from $\adom{D}$. Let
  $\varphi'(\ve{x},y_{j_1},\ldots,y_{j_m})$ be the formula
  $\exists y_{i_1},\ldots,y_{i_l}\,(\alpha(\ve{x},\ve{y}) \land
  \psi)$. Clearly,
  $\Sigma \models \forall
  \ve{x},y_{j_1},\ldots,y_{j_m}\,(\varphi'(\ve{x},y_{j_1},\ldots,y_{j_m})
  \limpl \varphi(\ve{x}))$. Hence, we can create a successor of
  $\tup{\ve{a},\varphi(\ve{x})}$ that is labeled by
  $\tup{\ve{a},\ve{b};\varphi'(\ve{x},y_{j_1},\ldots,y_{j_m})}$. Assume
  now that none of the $\lambda_1,\ldots,\lambda_k$ is from
  $\adom{D}$. Furthermore, assume that $k \geq 1$, since otherwise we
  can just simply apply rule 2.(c) to reduce $q(\ve{x})$ to a
  conjunction of queries of the desired form. As in the proof of
  Lemma~\ref{lem:derivtreeatom}, there is a node $v_0$ in $\ca{F}$ whose
  label $\beta_0(\ve{a},\ve{b})$ contains only values from $\adom{D}$ as
  arguments and such that $v_0$ is an ancestor of the node labeling
  $\alpha_\mu(\ve{a},\lambda_1,\ldots,\lambda_k)$. Furthermore, all the
  atoms from $\mu(q(\ve{x}))$ that contain an element from
  $\lambda_1,\ldots,\lambda_k$ as argument are also located in the
  subtree rooted at $v_0$. Let $p$ be the query that results from
  deleting all atoms from $q(\ve{x})$ which are mapped via $\mu$ into
  the subtree rooted at $v_0$. Notice that $p$ may be empty and has free
  variables among $\ve{x}$. Furthermore $p$ is equivalent to a
  conjunction $p_1 \land \cdots \land p_l$ of strictly acyclic
  queries. Let $\beta_0(\ve{x},\ve{z})$ be the atomic formula whose
  image under an appropriate homomorphism equals
  $\beta_0(\ve{a},\ve{b})$. A similar line of reasoning as in the proof
  of Lemma~\ref{lem:derivtreeatom} shows that there are atomic formulas
  $\beta_1,\ldots,\beta_m$ such that
  $\var{\beta_i} \subseteq \set{\ve{x},\ve{z}}$ and
  \begin{align*}
    \forall \ve{x},\ve{z}\,(\beta_0(\ve{x},\ve{z}) \land \beta_1 \land \cdots \land \beta_m \land p \limpl \varphi(\ve{x})),
  \end{align*}
  whence an application of rule 2.(b) and rule 2.(c) reduces the problem
  of constructing a derivation tree for $\tup{\ve{a},\varphi(\ve{x})}$
  to that of constructing corresponding trees for
  $\beta_0,\ldots,\beta_m$ and $p$. Notice that $p$ is a conjunction of
  strictly guarded formulas of nesting depth at most $n$. Hence, the
  induction hypothesis guarantees the existence of such derivation
  trees.
\end{proofsk}

Having the above results in place, it is easy to show the following
statement:
\begin{lemma}
  \label{lem:automatonguide}
  Let $D$ be a $C$-tree over $\sche{S}$ and $Q = \tup{\sche{S},\Sigma,q}$
  an OMQ where $\Sigma$ is guarded and $q$ a BCQ. Then
  $D \models Q$ iff there is a squid decomposition
  $\delta = \tup{q^+,\mu,H,A,V \coloneqq \set{\ve{x}}}$ of $q$ and a
  homomorphism $\eta \colon \mu(q^+) \rightarrow \chase{D}{\Sigma}$
  such that:
  \begin{enumerate}
  \item $F \models H$ is witnessed by $\eta$, where $F$ is
    the subinstance of $\chase{{D}}{\Sigma}$ induced by $\adom{C}$.
  \item There are strictly acyclic queries $q_1,\ldots,q_l$ such that
    \begin{enumerate}
    \item $A^V(\ve{x}) \equiv q_1 \land \cdots \land q_l$ and
    \item for $i =1,\ldots,l$ and $\free{q_i} = \set{\ve{x}_i}$, there
      are derivation trees for $\tup{\eta(\ve{x}_i), q_i}$ with respect
      to $D$ and $\Sigma$.
  \end{enumerate}
  \end{enumerate}
\end{lemma}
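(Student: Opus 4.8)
The plan is to obtain this statement as a synthesis of the two characterizations already established: Lemma~\ref{lem:squid}, which characterizes matches of a BCQ into a $C$-tree, and Lemma~\ref{prop:derivtreequery}, which characterizes semantic entailment of a strictly acyclic query by derivation trees. Write $J \coloneqq \chase{D}{\Sigma}$ and recall $D \models Q$ iff $J \models q$, noting that $J$ and $q$ live over $\sche{S} \cup \sch{\Sigma}$. The one preparatory fact I need is that $J$ is a $C$-tree over $\sche{S} \cup \sch{\Sigma}$ whose core is exactly $F$ (the subinstance of $J$ induced by $\adom{C}$). To see this, I would start from a tree decomposition $\delta$ witnessing that $D$ is a $C$-tree and extend it along a chase sequence: for each chase step, which fires a guarded tgd whose single head atom $\alpha$ is produced from a guard matched into a bag $X_v$ of the decomposition built so far, attach a new child bag $w$ of $v$ with $X_w$ the set of arguments of $\alpha$. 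Since the frontier elements of $\alpha$ lie in $X_v$ and its existential positions carry fresh nulls, this preserves the tree-decomposition conditions; the root bag stays $\adom{C}$ and all non-root bags remain guarded (the original ones by atoms of $D$, the new ones by the atom $\alpha$ itself). The root-induced subinstance of the resulting decomposition is then precisely the set of atoms of $J$ over $\adom{C}$, i.e.\ $F$. Here $J$ and the decomposition may be infinite, which is harmless, since $C$-trees are allowed to be instances and Lemmas~\ref{lem:squid} and~\ref{prop:derivtreequery} apply to arbitrary witnessing tree decompositions.

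For the direction $(\Leftarrow)$ I would simply observe that any homomorphism $\eta \colon \mu(q^+) \to J$ composes with $\mu$ into a homomorphism from $q^+$ to $J$, and since $q^+$ is an $(\sche{S} \cup \sch{\Sigma})$-cover of $q$ it contains all atoms of $q$; hence $J \models q$, i.e.\ $D \models Q$. Properties~(1) and~(2) are not needed for this direction; they are recorded because the automaton of Lemma~\ref{lem:automaton-2} never materializes $\eta$, but instead reconstructs it from the data in~(1) (a match of the ``cyclic'' atoms $H$ into $F$, which it can compute since $F$ is the chase of a bounded-size core) together with~(2) (the derivation trees, which it checks by traversing the input tree encoding $D$).

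For $(\Rightarrow)$, assume $D \models Q$, so $J \models q$. I would apply Lemma~\ref{lem:squid} to the $F$-tree $J$ (with the witnessing decomposition constructed above) and the BCQ $q$, both over $\sche{S} \cup \sch{\Sigma}$; this yields a squid decomposition $\tup{q^+,\mu,H,A,V \coloneqq \set{\ve{x}}}$ of $q$ and a homomorphism $\eta \colon \mu(q^+) \to J$ such that (i) $F \models H$ is witnessed by $\eta$, (ii) the union of the subinstances induced by the non-root bags of $J$ satisfies $A^V(\eta(\ve{x}))$ as witnessed by $\eta$, and (iii) $A^V(\ve{x}) \equiv \varphi_1 \wedge \cdots \wedge \varphi_l$ for strictly guarded formulas $\varphi_i$ built from conjunction and existential quantification. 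Item (i) is exactly~(1). Using the equivalence between strictly guarded formulas of this shape and strictly acyclic CQs, each $\varphi_i$ becomes a strictly acyclic CQ $q_i$ with $\free{q_i} = \ve{x}_i \subseteq \ve{x}$, which gives~(2)(a). Finally (ii) implies $J \models A^V(\eta(\ve{x}))$, hence $J \models q_i(\eta(\ve{x}_i))$ for each $i$; note $\eta(\ve{x}_i)$ is a tuple of constants from $\adom{C} \subseteq \adom{D}$, since $\eta$ maps $H$ into $F$. Thus $D, \Sigma \models q_i(\eta(\ve{x}_i))$, and Lemma~\ref{prop:derivtreequery} supplies the required derivation trees, establishing~(2)(b).

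The step I expect to need the most care is the preparatory claim that $J$ is a $C$-tree with core exactly $F$: one must set up the extension of $\delta$ so that the root bag is never enlarged, every non-root bag stays guarded, and the identification of the resulting core with the ``syntactically defined'' $F$ is clean, all while permitting $J$ and the decomposition to be infinite. The other delicate point is purely notational, namely matching the generic ``core $C$'' of Lemma~\ref{lem:squid} with the concrete $F$ here and consistently applying Lemma~\ref{lem:squid} over the enlarged schema $\sche{S} \cup \sch{\Sigma}$. Once these are in place the remainder is routine bookkeeping over the three outputs of Lemma~\ref{lem:squid} and a single application of Lemma~\ref{prop:derivtreequery}.
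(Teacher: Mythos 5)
Your proposal is correct and follows essentially the same route as the paper: the paper's proof likewise observes (by induction on chase steps, which your explicit decomposition-extension construction makes precise) that $\chase{D}{\Sigma}$ is an $F$-tree with $F$ the subinstance induced by $\adom{C}$, and then obtains the statement by combining Lemma~\ref{lem:squid} with Lemma~\ref{prop:derivtreequery}. You merely spell out the bookkeeping (the cover argument for the easy direction, the passage from strictly guarded formulas to strictly acyclic queries, and the universality of the chase to get $D,\Sigma \models q_i(\eta(\ve{x}_i))$) that the paper leaves implicit.
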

\begin{proof}
  We can easily prove by induction on the number of chase steps that
  $\chase{D}{\Sigma}$ is an $F$-tree, where $F$ is the subinstance of
  $\chase{D}{\Sigma}$ induced by $\adom{C}$. Now the lemma at hand is
  immediate by combining this fact with Lemma~\ref{lem:squid} and
  Lemma~\ref{prop:derivtreequery}.
\end{proof}

We are now ready to proceed with the proof of
Lemma~\ref{lem:automaton-2}:

\medskip
\begin{proofcustom}{of Lemma~\ref{lem:automaton-2}}
  Lemma~\ref{lem:automatonguide} will guide the construction of the
  2WAPA we are now going to construct.  Suppose
  $Q = \tup{\sche{S},\Sigma,q}$ is an OMQ from
  $\tup{\class{G},\class{BCQ}}$ and let $l \geq 1$. We are going to
  construct a 2WAPA
  $\fk{A}_{Q,l} = \tup{S,\Gamma_{\sche{S},l},\delta,s_0,\Omega}$ that
  accepts a consistent $\Gamma_{\sche{S},l}$-labeled tree $t$ iff
  $\dec{t} \models Q$. In particular, the number of states of
  $\fk{A}_{Q,l}$ will be at most exponential in the size of $Q$ and at
  most polynomial in $l$, while the construction of $\fk{A}_{Q,l}$ will
  be feasible in \twoexptime.

  \medskip
  \noindent
  \textit{The state set.} Let $\Lambda$ denote the set of all Boolean
  acyclic queries over $\sche{S} \cup \sch{\Sigma}$ that are of size at
  most $3|q|$. Notice that each of these queries is equivalent to a
  strictly guarded formula. Furthermore, assume that $\Lambda$ is closed
  under $V$-reducts, for $V \subseteq \var{q}$, provided that they are
  strictly acyclic as well, i.e., if $p \in \Lambda$ and
  $V \subseteq \var{q}$, then also $p^V \in \Lambda$ provided $p^V$ is
  strictly acyclic. For $\set{\ve{a}} \subseteq U_{\sche{S},l}$, let
  \begin{align*}
    \hat{S}(\ve{a}) \coloneqq \set{p(\ve{x}/\ve{a}) \mid p \in \Lambda, \free{p} = \set{\ve{x}}, |\ve{a}| = |\ve{x}|}
  \end{align*}
  and let $\hat{S}$ be the union of all the sets $\hat{S}(\ve{a})$. Now
  the set of states $S$ consists of an initial state, denoted $s_0$,
  plus the set $\hat{S}$ factorized modulo logical equivalence. We
  denote by $[p]$ the equivalence class of a query $p \in
  \hat{S}$. Furthermore, for a strictly guarded formula $\varphi$, we
  may abuse notation and write $[\varphi]$ for the equivalence class of
  the strictly acyclic query $p \in \hat{S}$ that is equivalent to
  $\varphi$. Notice that the size of $S$ is exponential in the size of
  $Q$, since there are only exponentially many CQs of size at most
  $3|q|$ that are mutually non-equivalent (cf.~\cite{BaGO14}).

  \medskip
  \noindent
  \textit{The parity condition.} We set $\Omega(s) \coloneqq 1$, for all
  $s \in S$. This means that only finite trees are accepted.

  \medskip
  \noindent
  \textit{The transition function.} In the following, for each
  $\rho \in \Gamma_{\sche{S},l}$, we denote by $\hat{\Theta}(\rho)$ the
  set of all pairs that are of the form
  $\tup{\alpha_1 \land \cdots \land \alpha_n, p_1 \land \cdots \land
    p_m}$ for which there is a squid decomposition of the form
  $\tup{q^+,\mu,H,T,\set{\ve{x}}}$ and a function
  $\theta \colon \set{\ve{x}} \ra \names{\rho}$ such that:
\begin{itemize}
\item
  $H^{\set{\ve{x}}}(\theta(\ve{x})) \equiv \alpha_1 \land \cdots \land
  \alpha_n$, where all the $\alpha_i$ are relational ground atoms.
\item
  $T^{\set{\ve{x}}}(\theta(\ve{x})) \equiv p_1 \land \cdots \land p_m$,
  where the $p_i$ are strictly acyclic queries.
\end{itemize}
Call two pairs $\tup{\varphi_1,\psi_1}$ and $\tup{\varphi_2,\psi_2}$ as
above \emph{equivalent} if $\varphi_1 \equiv \varphi_2$ and
$\psi_1 \equiv \psi_2$. Let $\Theta(\rho)$ be the set of equivalence
classes under this relation and denote by $[\varphi,\psi]$ the
equivalence of a pair $\tup{\varphi,\psi}$ under this relation. Now we
fix for each $[p] \in S \setminus \set{s_0}$ a strictly guarded formula
$\chi_{[p]}$ that is equivalent to all queries from $[p]$. Likewise, we
fix a function
$\vartheta_\rho \colon \Theta(\rho) \ra \hat{\Theta}(\rho)$ such that
$\vartheta_\rho([\varphi,\psi]) \in [\varphi,\psi]$, i.e., which picks a
representative for each equivalence class $[\varphi,\psi]$.

Now let $\rho \in \Gamma_{\sche{S},l}$. Specify $\delta(\cdot,\rho)$ as
follows:
\begin{enumerate}
  \item For the initial state $s_0$, set
  \begin{align*}
    \delta(s_0, \rho) \coloneqq \bigvee\set{\bigwedge_{i = 1}^n[\alpha_i] \land \bigwedge_{i = 1}^m [p_i] \mid \tup{\alpha_1 \land \cdots \land \alpha_n, p_1 \land \cdots \land p_m} \in \vartheta_\rho(\Theta(\rho))}.
  \end{align*}
  Intuitively, the automaton selects a squid decomposition where its
  components are instantiated by names occurring in the root node of the
  input tree. The automaton tries to verify the single compartments of
  the squid decomposition, i.e., it tries to match them to the chase
  expansion of the input database under $\Sigma$.

\item Let $[p] \in S \setminus \set{s_0}$. We define $\delta([p],\rho)$
  according to a case distinction:
\begin{enumerate}
\item Suppose that $p \equiv \top$. Then
  $\delta([p],\rho) \coloneqq \ptrue$.
\item Suppose
  $\chi_{[p]} = \exists \ve{y}\,(\alpha(\ve{a},\ve{y}) \land \varphi)$,
  where $\alpha(\ve{a},\ve{y})$ is an atomic formula (including
  equality), $\free{\varphi} \subseteq \set{\ve{y}}$, and $\ve{a}$
  exhausts all names occuring in $\alpha$. If
  $\set{\ve{a}} \not\subseteq \names{\rho}$ then
  $\delta([p],\rho) \coloneqq \pfalse$. Otherwise,
  \begin{align*}
    \delta([p],\rho) \coloneqq &\bigvee\set{[\varphi(\ve{y}/\ve{b})] \mid \rho \models \alpha(\ve{a},\ve{b}), \set{\ve{b}} \subseteq \names{\rho}} \lor \Diamond [p]\ \lor \bigvee \mathrm{impl}(p, \rho),
  \end{align*}
  where
  \setlength{\jot}{2pt}
  \begin{align*}
      \mathrm{impl}(p,\rho) \coloneqq \set{[p_1] \land \cdots \land [p_n] \mid\ &[p_1],\ldots,[p_n] \in S \setminus \set{s_0}, \set{\ve{b}} \subseteq \names{\rho}, \\
& p_1 \land \cdots \land p_n \equiv q, \\
&\Sigma \models \forall \ve{x},\ve{y}\,(q(\ve{a}/\ve{x},\ve{b}/\ve{y}) \limpl p(\ve{a}/\ve{x}))}.
  \end{align*}
\end{enumerate}
We provide some intuitive explanation for this second case.
\begin{enumerate}
\item If $p$ is the
empty query, it can be satisfied at any input node and, hence, the
automaton accepts unconditionally on this computation branch.
\item Otherwise, we first inspect the strictly guarded formula
  $\chi_{[p]}$ at hand. If the names occurring in the guard
  $\alpha(\ve{a},\ve{y})$ are not present at the current node, it
  rejects. Otherwise, it tries to satisfy $\alpha(\ve{a},\ve{y})$ with
  all possible assignments for $\ve{y}$ at the current node and then
  proceed in state $[\varphi(\ve{y}/\ve{b})]$.
  Apart from these possibilities, the automaton can decide to move to
  any neighboring node (i.e., the parent or a child) while remaining in
  state $[p]$. This amounts to an exhaustive search of the input tree
  that tries to satisfy $p$ in the input tree.
Furthermore, the automaton may choose to construct derivation trees for
$p$. There, it uses the information provided by $\Sigma$ in order to
find strictly acyclic queries $p_1,\ldots,p_n$ that imply
$p$. Consequently, it tries to proceed its search with
$[p_1],\ldots,[p_n]$.
\end{enumerate}
\end{enumerate}

\medskip We shall now briefly comment on the running time needed to
construct $\fk{A}_{Q,l}$. The interesting part of the construction
concerns the transition function $\delta$, in particular point 2.(b)
involving $\mathrm{impl}(p,\rho)$. We have seen that in the proofs of
Lemma~\ref{lem:derivtreeatom} and Lemma~\ref{prop:derivtreequery} that
there are double-exponentially many candidates for the query
$q(\ve{a}/\ve{x},\ve{b}/\ve{y})$ that (possibly) implies
$p(\ve{a}/\ve{x})$ under $\Sigma$. Furthermore,
$q(\ve{a}/\ve{x},\ve{b}/\ve{y})$ consists of at most exponentially many
atoms. Each check whether such a query $q$ at hand implies $p$ requires
at most double-exponential time in the size of $p$. This follows from
the well-known fact that checking query implication under a set of
guarded rules is feasible in \textsc{2}\EXP\ with respect to the size of
the right-hand side query, and in polynomial time with respect to the
size of the left-hand side query (cf.~\cite{CaGK08a}), i.e., the \emph{data
complexity} of query answering under guarded tgds is polynomial time.
\end{proofcustom}


\section*{PROOFS OF SECTION~\ref{sec:different-languages}}

\subsection*{Proof of Theorem~\ref{th:lhsguarded}}

A proof sketch is given in the main body of the paper. However, the fact
that $\cont((\class{G},\class{CQ}),(\class{S},\class{CQ}))$ is in
2\EXP~deserves a formal proof. Recall that to establish the latter
result we need a more refined complexity analysis of the problem of
deciding whether a guarded OMQ is contained in a UCQ; this is discussed
in the main body of the paper. In fact, it suffices to show the
following result.
As in the previous section, we focus on constant-free tgds and CQs, but
all the results can be extended to the general case at the price of more
involved definitions and proofs. Moreover, we assume that tgds have only
one atom in the head.
Recall that we write $\mi{var}_{\geq 2}(q)$ for the variables of $q$
that appear in more than one atom, and we also write $\mi{var}_{=1}(q)$
for the variables of $q$ that appear only in one atom. Then:

\begin{proposition}\label{pro:guarded-into-cq}
Consider $Q \in (\class{G},\class{BCQ})$ and a Boolean CQ $q$. The problem of deciding whether $Q \subseteq q$ is feasible in
\begin{enumerate}
\item double-exponential time in $(||Q|| + |\mi{var}_{\geq 2}(q)|)$; and
\item exponential time in $|\mi{var}_{=1}(q)|$.
\end{enumerate}
\end{proposition}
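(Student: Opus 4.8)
The plan is to reuse the tree-automata machinery of Section~\ref{sec:guardedness} with the right-hand side viewed as the (trivially guarded) OMQ $\tup{\insS,\emptyset,q}$, but to replace the right-hand side automaton of Lemma~\ref{lem:automaton-2} by a more frugal one. Write $Q = \tup{\insS,\Sigma,q_0}$ and set $l := \arity{\insS\cup\sch{\Sigma}}\cdot|q_0|$; note that $l$ is polynomial in $||Q||$ and, crucially, independent of $q$. By Lemma~\ref{lem:consistent-labeled-trees}, applied with $Q_1 := Q$ and $Q_2 := \tup{\insS,\emptyset,q}$ (a $\class{BCQ}$ over $\insS$; the degenerate case where $q$ uses predicates outside $\insS$ reduces to unsatisfiability of $Q$ and is handled directly), we have $Q\subseteq q$ iff every consistent $\Gamma_{\insS,l}$-labeled tree $L$ with $Q(\dec{L})\neq\emptyset$ also satisfies $q(\dec{L})\neq\emptyset$. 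Hence, letting $\fk{B}_{q,l}$ be a 2WAPA accepting a consistent $\Gamma_{\insS,l}$-labeled tree $L$ exactly when $q(\dec{L})\neq\emptyset$,
\[
Q\subseteq q \iff \ca{L}\big((\fk{C}_{\insS,l}\cap\fk{A}_{Q,l})\cap\overline{\fk{B}_{q,l}}\big)=\emptyset .
\]
By Lemmas~\ref{lem:automaton-1} and~\ref{lem:automaton-2}, $\fk{C}_{\insS,l}$ and $\fk{A}_{Q,l}$ have exponentially many states in $||Q||$ and are constructible in double-exponential time in $||Q||$, while complementation and intersection of 2WAPA cost only polynomial time and a polynomial number of extra states. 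Thus, once $\fk{B}_{q,l}$ is shown to have $2^{\mathrm{poly}(||Q||+|\mi{var}_{\geq 2}(q)|)}$ states and to be constructible within time double-exponential in $||Q||+|\mi{var}_{\geq 2}(q)|$ and exponential in $|\mi{var}_{=1}(q)|$, the automaton above has $2^{\mathrm{poly}(||Q||+|\mi{var}_{\geq 2}(q)|)}$ states over $\Gamma_{\insS,l}$ (of doubly-exponential size in $||Q||$), and its emptiness test — exponential in the number of states, polynomial in the alphabet~\cite{CGKV88} — yields exactly the two claimed bounds, since $2^{2^{\mathrm{poly}(k)}}\cdot 2^{\mathrm{poly}(m)}$ is simultaneously double-exponential in $k$ and exponential in $m$.

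Everything thus reduces to building $\fk{B}_{q,l}$. Instantiating Lemma~\ref{lem:automaton-2} with $\tup{\insS,\emptyset,q}$ is not enough: its states are equivalence classes of acyclic queries of size at most $3|q|$, i.e.\ $2^{\Theta(|q|)}$ many, which does not separate $\mi{var}_{\geq 2}(q)$ from $\mi{var}_{=1}(q)$. I would instead specialise the construction of $\fk{A}_{Q,l}$ with two changes. First, since the right-hand side ontology is empty, $\chase{\dec{L}}{\emptyset}=\dec{L}$, so all derivation-tree transitions of $\fk{A}_{Q,l}$ (the $\mathrm{impl}(\cdot,\cdot)$ part, which is also the expensive part of its construction) disappear, and Lemma~\ref{lem:automatonguide} collapses to Lemma~\ref{lem:squid}: $\dec{L}\models q$ iff there is a squid decomposition $\tup{q^{+},\mu,H,A,V}$ of $q$ and a homomorphism $\eta$ witnessing $H$ in the core $C$ (encoded in the root label of $L$) and witnessing each strictly acyclic component of $A^{V}$ in the tree part of $\dec{L}$. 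Second — and this is the refinement — I change how residual query parts are represented during the tree traversal.

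The key observation is that a variable occurring in a single atom of $q$ is private to that atom: in every tree decomposition it lies in one bag, so whenever the traversal splits off a residual piece, such a variable stays with its atom and is never a boundary variable shared between two pieces. Accordingly, I would work with the CQ $q^{\sharp}$ obtained from $q$ by replacing, in each atom, the $\mi{var}_{=1}(q)$-variables by wildcards (fresh per-atom existentials): then $q^{\sharp}\equiv q$, $\var{q^{\sharp}}=\mi{var}_{\geq 2}(q)$, and "$q^{\sharp}$ has a match at a node" becomes a purely local test on the $\Gamma_{\insS,l}$-label there (an atom $R^{J}(\ve u)$ is satisfied by any label-atom $R_{\ve a}$ agreeing with $\ve u$ off the wildcard positions $J$). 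Moreover I would drop the explicit bookkeeping of "which atoms are already matched" in favour of \emph{per-atom certification}: each atom is verified once, at the topmost bag containing the images of all its $\mi{var}_{\geq 2}(q)$-variables, which — for a fixed consistent partial homomorphism — is uniquely determined. A state of $\fk{B}_{q,l}$ then only needs to record the names used at the current node, a partial homomorphism of $\mi{var}_{\geq 2}(q)$ into those names, and a bounded amount of mode/guessing data (the squid-decomposition interfaces are bounded by guardedness). The number of such states is at most $2^{|U_{\insS,l}|}\cdot(\arity{\insS}+1)^{|\mi{var}_{\geq 2}(q)|}=2^{\mathrm{poly}(||Q||+|\mi{var}_{\geq 2}(q)|)}$, with no dependence on $|\mi{var}_{=1}(q)|$; $|\mi{var}_{=1}(q)|$ enters only the transition-function sizes — via the atom-by-atom certifications and the description of which label-atoms match an abstracted atom, and the root transition that enumerates squid decompositions — and only at worst single-exponentially, hence within the stated budget.

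The step I expect to be the main obstacle is making this last paragraph precise: proving the correctness of $\fk{B}_{q,l}$ with the frugal state space — that every accepting run of the naive automaton can be re-organised into one carrying only "$\mi{var}_{=1}$-abstracted" residual pieces and certifying each atom at its canonical bag, and conversely that such runs still witness a genuine homomorphism $q\to\dec{L}$. This is a careful but bookkeeping-level refinement of the correctness argument behind Lemma~\ref{lem:automaton-2} / Lemma~\ref{lem:automatonguide}: the squid-decomposition part is reused essentially verbatim, and only the tree-traversal and local-matching part is redone. Once $\fk{B}_{q,l}$ and its size and construction-time bounds are in place, the displayed equivalence in the first paragraph together with the complexity computation there completes the proof.
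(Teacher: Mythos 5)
Your proposal takes essentially the same route as the paper: the paper likewise reduces $Q \subseteq q$ to emptiness of $(\fk{C}_{\insS,l}\cap\fk{A}_{Q,l})\cap\overline{\fk{A}_{q,l}}$ with $l$ depending only on $Q$ (Lemma~\ref{lem:reduction-to-emptiness-for-cqs}), and the entire content is a frugal 2WAPA for $q$ whose state count is exponential only in $|\mi{var}_{\geq 2}(q)|$ (Lemma~\ref{lem:automaton-for-cqs}), built on exactly your observation that single-occurrence variables are private to one atom and can be discharged by local existential tests. The only divergence is internal to that automaton: instead of your wildcard abstraction with per-atom certification at canonical bags (the step you flag as the main obstacle), the paper uses a two-pass design with states $s_{y,\theta}$, where $\theta$ partially maps $\mi{var}_{\geq 2}(q)$ into names and $y$ ranges over $\mi{var}_{=1}(q)\cup\{\sharp\}$ — first guessing $\theta$ while locally verifying the atoms whose join variables get grounded, then searching the tree for a local match of each remaining singleton-variable atom — which achieves the same state bound with simpler bookkeeping.
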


It is easy to verify that the above result, together with the algorithm
devised in the main body of the paper, implies that
$\cont((\class{G},\class{CQ}),(\class{S},\class{CQ}))$ is in 2\EXP. The
rest of this section is devoted to show the above proposition.
Our crucial task is, given a CQ $q$, to devise an automaton that accepts
consistent labeled trees which correspond to databases that make $q$
true.

\begin{lemma}\label{lem:automaton-for-cqs}
Let $q$ be a Boolean CQ over $\sche{S}$. There is a 2WAPA $\fk{A}_{q,l}$, where $l > 0$, that accepts a consistent $\Gamma_{\insS,l}$-labeled tree $t$ iff $\dec{t} \models q$. The number of states of $\fk{A}_{q,l}$ is exponential in $|\mi{var}_{\geq 2}(q)|$ and polynomial in $(|\mi{var}_{=1}(q)| + \arity{\insS} + l)$. Furthermore, $\fk{A}_{q,l}$ can be constructed in exponential time.
\end{lemma}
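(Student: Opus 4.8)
The plan is to reimport the automaton-construction strategy behind Lemma~\ref{lem:automaton-2}, specialised to the degenerate case of an \emph{empty} ontology, and then to squeeze down the state count. Since $\dec{t} = \chase{\dec{t}}{\emptyset}$, we have $\dec{t}\models q$ iff there is a homomorphism from $q$ into $\dec{t}$, so the derivation-tree machinery of Lemma~\ref{prop:derivtreequery} is simply not needed; only the squid-decomposition analysis of Lemma~\ref{lem:squid} survives. Concretely, for a consistent $\Gamma_{\sche{S},l}$-labeled tree $t$ the database $\dec{t}$ is a $C$-tree with $|\adom{C}|\le l$ (Lemma~\ref{lem:consistency}), and $\dec{t}\models q$ iff there is a squid decomposition $\delta=\tup{q^+,\mu,H,T,V\coloneqq\set{\ve{x}}}$ of $q$ and a homomorphism $\eta\colon\mu(q^+)\to\dec{t}$ such that $C\models H$ is witnessed by $\eta$ and each tentacle of $T^{V}(\ve{x})$, written as a conjunction $\varphi_1\wedge\cdots\wedge\varphi_m$ of strictly guarded formulas, is mapped by $\eta$ into the subinstance of $\dec{t}$ induced by the non-root bags. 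The automaton $\fk{A}_{q,l}$ therefore, from its initial state at the root, guesses such a squid decomposition with the variables of $V$ instantiated by the (at most $l$) names present in the root label, verifies on the spot that the ground atoms of $H$ occur there, and for every tentacle $\varphi_i$ spawns a copy that navigates the tree looking for a homomorphic image of $\varphi_i$ with its boundary variables already pinned down.

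For the tentacle-matching sub-automaton I would follow case~2 of the transition function in the proof of Lemma~\ref{lem:automaton-2} essentially verbatim, but \emph{without} the $\mathrm{impl}(\cdot,\cdot)$ branch, which disappears because $\Sigma=\emptyset$: in a state $[\exists\ve{y}(\alpha(\ve{u},\ve{y})\wedge\psi)]$ with $\ve{u}$ bound to names of the current bag, the automaton may instantiate $\ve{y}$ by names present at the node for which the ground atom $\alpha$ occurs in the label and pass to $[\psi(\ve{y}/\ve{b})]$, may branch universally to the conjuncts of $\psi$, or may move (via $\Diamond$) to a neighbour while keeping the same state; the parity condition is constant $1$, so only finite runs are accepting, and acceptance is declared once the residual formula is $\top$. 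Correctness is then a routine unwinding of Lemma~\ref{lem:squid}, exactly as for Lemma~\ref{lem:automaton-2}, now with no chase steps to account for. The construction is feasible in single-exponential time — rather than double-exponential — precisely because no query-implication test under a set of guarded tgds is ever performed.

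The heart of the proof, and the step I expect to be the main obstacle, is to bring the number of states down to the claimed bound, i.e.\ exponential in $|\mi{var}_{\geq 2}(q)|$ and only polynomial in $|\mi{var}_{=1}(q)|+\arity{\sche{S}}+l$, instead of the crude ``all acyclic subqueries of size at most $3|q|$, fully instantiated by names'' used in Lemma~\ref{lem:automaton-2}. Two ideas drive the refinement: (i) a state need only record a \emph{residual connected sub-CQ} of a tentacle together with a \emph{partial assignment of its boundary variables} — the variables shared between the already-matched and the still-to-be-matched part — of which there are polynomially many up to the choice of that partial assignment; (ii) a variable of $q$ occurring in only one atom never lies on any such boundary, hence it is discharged by a purely local existence check (``some name occupies that position of that atom in this label'') and is never stored, so the name-valued component of a state is a partial map whose domain has size $O(|\mi{var}_{\geq 2}(q)|)$ and whose range has size $\max(l,2\arity{\sche{S}})$. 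The delicate point is to check that passing to the $\sche{S}$-cover $q^+$ and applying $\mu$ does not create genuinely new join variables that must be tracked: the extra guard atoms in $q^+$ only certify guardedness of bags and can be matched by following the tree's own branching, so the bookkeeping can be arranged so that the simultaneously-live tracked variables are $O(|\mi{var}_{\geq 2}(q)|)$ many, the remaining ones being transient and locally resolved. Pushing this accounting through the transition function yields a state set of size $\mathrm{poly}(|\mi{var}_{=1}(q)|+\arity{\sche{S}}+l)\cdot(\max(l,2\arity{\sche{S}}))^{O(|\mi{var}_{\geq 2}(q)|)}$, which is of the required form, and an exponential-time construction, completing the proof.
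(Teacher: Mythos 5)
Your overall architecture is correct in spirit (squid decompositions do apply with $\Sigma=\emptyset$, and dropping the $\mathrm{impl}$ branch is right), but the step you yourself flag as the heart of the proof --- bringing the state count down to exponential in $|\mi{var}_{\geq 2}(q)|$ and polynomial in $|\mi{var}_{=1}(q)|+\arity{\insS}+l$ --- is asserted rather than established, and both of the counting claims you base it on fail as stated. First, it is not true that the residual connected sub-CQs arising from the tentacles are ``polynomially many up to the partial assignment'': already for a star-shaped query $R(x,y_1),\ldots,R(x,y_n)$ with a single join variable $x$ and $n$ single-occurrence variables, the inventory of tentacles and of their residual (instantiated) subformulas ranging over all squid decompositions of all $\insS$-covers is exponential in the number of atoms/variables of $q$, not in $|\mi{var}_{\geq 2}(q)|$; note also that the set $V$ of core-mapped variables in a squid decomposition may contain single-occurrence variables, so even the instantiated tentacle formulas referenced by the initial transition already number exponentially in $|\mi{var}_{=1}(q)|$. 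Second, single-occurrence variables are \emph{not} discharged by a ``purely local existence check'': in $\dec{t}$ the bag witnessing an atom $R(a,y)$ need not be the bag where the name $a$ was pinned down (it is merely some $a$-connected bag elsewhere), so each such atom requires its own traversal of the tree; if these pending searches are run in parallel without further care, the automaton must remember which atoms are still outstanding, which again costs exponentially many states in $|\mi{var}_{=1}(q)|$.

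The paper's proof does not refine the automaton of Lemma~\ref{lem:automaton-2} at all; it abandons residual-subquery states entirely. Its states are pairs $s_{y,\theta}$ where $\theta$ is a partial map from $\mi{var}_{\geq 2}(q)$ into the $O(\arity{\insS}+l)$ names and $y$ is either a marker $\sharp$ or a single variable of $\mi{var}_{=1}(q)$ under a fixed total order, giving $O(|\mi{var}_{=1}(q)|\cdot(\arity{\insS}+l)^{|\mi{var}_{\geq 2}(q)|})$ states. In a first pass the automaton walks the tree nondeterministically and extends $\theta$, checking each atom of $q$ whose join variables just became fully grounded directly against the current label, with a visibility condition (reject if a partially grounded atom's names leave the current bag) guaranteeing that the names keep denoting the same elements of $\dec{t}$; in a second pass it verifies the atoms containing single-occurrence variables \emph{one at a time}, in the fixed order, re-searching the tree for each using two-wayness. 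This sequential (or, equivalently, one-state-per-atom universal) treatment of the single-occurrence part is precisely the device your sketch is missing, and without it the claimed polynomial dependence on $|\mi{var}_{=1}(q)|$ does not follow from your construction.
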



\begin{proof}
We are going to construct $\fk{A}_{q,l} = \tup{S, \Gamma_{\sche{S},l},\delta,s_0,\Omega}$.
  Let $x_1,\ldots,x_n$ be the variables of $\mi{var}_{=1}(q)$
  and fix a total order $x_1 \prec x_2 \prec \cdots \prec x_n$ among
  them. Define the state set $S$ to be
  \begin{align*}
    S \coloneqq \set{s_{y,\theta} \mid \theta \colon V \ra U_{\sche{S},l}, V \subseteq \mi{var}_{\geq 2}(q), y \in \mi{var}_{=1}(q) \cup \set{\sharp}}.
  \end{align*}
  Notice that $|S| = O(|\mi{var}_{=1}(q)|\cdot(\arity{\sche{S}} + l)^{|\mi{var}_{\geq 2}(q)|})$. We set $s_0 \coloneqq s_{\sharp,\emptyset}$, where $\emptyset$ denotes the empty substitution. In the following, we treat
  $q$ as a set of relational atoms and let $X = \mi{var}_{\geq 2}(q)$. For $\rho \in \Gamma_{\sche{S},l}$ and $s_{y,\theta} \in S$, define $\delta(s_{y,\theta},\rho)$ as
  follows:
  \begin{itemize}
    \item If $y = \sharp$, distinguish the following cases:
    \begin{enumerate}
    \item If there is an atom $\alpha \in \theta(q)$ such that
      $\var{\alpha} \cap X \neq \emptyset$ and
      $\adom{\alpha} \cap U_{\sche{S},l} \not\subseteq \names{\rho}$,
      then $\delta(s_{\sharp,\theta},\rho) \coloneqq \pfalse$.
    \item Otherwise, let
      \begin{align*}
        \delta(s_{\sharp,\theta},\rho) \coloneqq\begin{cases}
          \bigvee\set{s_{\sharp, \eta} \mid \eta \supseteq \theta, \rho \models \exists \ve{x} \bigwedge (\eta(q) \setminus \theta(q))} \lor \Diamond s_{\sharp,\theta}, & \text{if $X  \cap \var{\theta(q)} \neq \emptyset$,}\\
          s_{x_1,\theta}, & \text{otherwise.}
        \end{cases}
      \end{align*}
    \end{enumerate}
  \item Suppose $y = x_i$, for some $i = 1,\ldots,n$. Let
    $\alpha_{i,\theta}$ denote the unique atom $\alpha \in \theta(q)$ such
    that $x_i \in \var{\alpha}$. Set
    \begin{align*}
      \delta(s_{x_i,\theta},\rho) \coloneqq
      \begin{cases}
        s_{x_{i+1},\theta}, & \text{if $\rho \models \exists\ve{x}\, \alpha_{i,\theta}$ and $i < n$,}\\\
        \ptrue, & \text{if $\rho \models \exists\ve{x}\, \alpha_{i,\theta}$ and $i = n$},\\
        \Diamond s_{x_i,\theta}, & \text{otherwise.}
      \end{cases}
    \end{align*}
  \end{itemize}
  Set the parity condition $\Omega$ to be $\Omega(s) \coloneqq 1$ for
  all $s \in S$. Intuitively, the automaton works in two passes. The
  first pass consists of the runs working on states of the form
  $s_{\sharp,\theta}$. In this pass, the automaton tries to find an
  assignment for the variables in the query that appear in at least two
  distinct atoms. When a candidate assignment $\theta$ is found, the
  automaton changes to state $s_{x_1,\theta}$ which is the beginning of
  the second pass. A state of the form $s_{x_i,\theta}$ means that the
  assignment $\theta$ can be extended to all variables $x \prec x_i$
  and, in this state, the automaton tries to extend $\theta$ to cover
  the variable $x_i$. The automaton accepts if it is able to extend the
  candidate assignment $\theta$ to all $x_1,\ldots,x_n$.
\end{proof}

Having the above result in place, we can now reduce the problem in question to the emptiness problem for 2WAPA.

\begin{lemma}\label{lem:reduction-to-emptiness-for-cqs}
  Consider $Q \in (\class{G},\class{BCQ})$ and a Boolean CQ $q$. We can
  construct in double-exponential time in $||Q||$ and in exponential
  time in $||q||$ a 2WAPA $\fk{A}$, which has exponentially many states
  in $(||Q|| + |\mi{var}_{\geq 2}(q)|)$ and polynomially many states in
  $|\mi{var}_{=1}(q)|$, such that
\[
Q \subseteq q \iff \ca{L}(\fk{A}) = \emptyset.
\]
\end{lemma}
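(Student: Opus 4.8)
The plan is to mirror the automata construction from the proof of Proposition~\ref{pro:cont-to-emptiness}, but to replace, on the right-hand side, the general guarded-OMQ automaton by the refined CQ-automaton $\fk{A}_{q,l}$ of Lemma~\ref{lem:automaton-for-cqs}. Write $Q = (\insS,\dep,q_Q)$ and set $l = \arity{\insS \cup \sch{\dep}} \cdot |q_Q|$; the point is that $l$ depends only on $Q$, and hence so do the encoding alphabet $\Gamma_{\insS,l}$ and the automata $\fk{C}_{\insS,l}$ and $\fk{A}_{Q,l}$. We view the bare Boolean CQ $q$ as the (degenerate) guarded OMQ $(\insS,\emptyset,q)$, so that $q(\dec{L})$ is exactly $(\insS,\emptyset,q)(\dec{L})$; then Proposition~\ref{pro:tree-witness-property} (with $Q_1 := Q$ and $Q_2 := (\insS,\emptyset,q)$) and Lemma~\ref{lem:consistent-labeled-trees} give: $Q \subseteq q$ iff $Q(\dec{L}) \subseteq q(\dec{L})$ for every consistent $\Gamma_{\insS,l}$-labeled tree $L$, equivalently, iff there is no consistent $L$ with $Q(\dec{L}) \neq \emptyset$ and $\dec{L} \not\models q$.

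Accordingly, I would define
\[
\fk{A}\ :=\ \big(\fk{C}_{\insS,l}\ \cap\ \fk{A}_{Q,l}\big)\ \cap\ \overline{\fk{A}_{q,l}}.
\]
Intersecting with $\fk{C}_{\insS,l}$ restricts the language to consistent trees, and on consistent trees $\fk{A}_{Q,l}$ accepts $L$ iff $Q(\dec{L}) \neq \emptyset$ (Lemma~\ref{lem:automaton-2}) and $\overline{\fk{A}_{q,l}}$ accepts $L$ iff $\dec{L} \not\models q$ (Lemma~\ref{lem:automaton-for-cqs}). Hence $\ca{L}(\fk{A})$ is precisely the set of consistent $\Gamma_{\insS,l}$-labeled trees witnessing $Q \not\subseteq q$, so $\ca{L}(\fk{A}) = \emptyset$ iff $Q \subseteq q$.

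It then remains to verify the resource bounds, which is the bookkeeping part. Since $l$ is polynomial in $||Q||$, the alphabet $\Gamma_{\insS,l}$ has size double-exponential in $||Q||$ and, crucially, is independent of $q$; thus $\fk{C}_{\insS,l}$ has a number of states logarithmic in $|\Gamma_{\insS,l}|$ (hence exponential in $||Q||$) and is built in time polynomial in $|\Gamma_{\insS,l}|$ (hence double-exponential in $||Q||$), while $\fk{A}_{Q,l}$ has exponentially many states in $||Q||$ and is built in double-exponential time in $||Q||$. By Lemma~\ref{lem:automaton-for-cqs}, $\fk{A}_{q,l}$ has exponentially many states in $|\mi{var}_{\geq 2}(q)|$ and polynomially many in $|\mi{var}_{=1}(q)|$ (together with a factor polynomial in $\arity{\insS}+l$, which is subsumed by the exponential-in-$||Q||$ term), and is constructed in exponential time in $||q||$. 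Since intersection and complementation of 2WAPA are polynomial-time and the state set of the result is essentially the union of the state sets of the components, $\fk{A}$ has exponentially many states in $(||Q|| + |\mi{var}_{\geq 2}(q)|)$ and polynomially many in $|\mi{var}_{=1}(q)|$, and is built in double-exponential time in $||Q||$ and exponential time in $||q||$, as required.

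The main obstacle is not the automaton algebra, which is routine, but keeping the correctness argument airtight at the seams: one must check that Proposition~\ref{pro:tree-witness-property} genuinely applies with $Q_2 = (\insS,\emptyset,q)$ (a guarded OMQ with empty ontology), that the parameter $l$ driving the encoding alphabet is dictated solely by the left-hand side $Q$ so that $|\Gamma_{\insS,l}|$ remains independent of $q$, and that $\fk{A}_{q,l}$ of Lemma~\ref{lem:automaton-for-cqs} does indeed play the role of $\fk{A}_{(\insS,\emptyset,q),l}$ — only with the sharper state bounds — so that ``$\dec{L}\not\models q$'' and ``$(\insS,\emptyset,q)(\dec{L}) = \emptyset$'' coincide. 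Once these alignments are settled, the equivalence $\ca{L}(\fk{A}) = \emptyset \iff Q \subseteq q$ is immediate from Lemma~\ref{lem:consistent-labeled-trees}.
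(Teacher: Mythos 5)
Your construction is exactly the paper's: the same automaton $(\fk{C}_{\insS,l} \cap \fk{A}_{Q,l}) \cap \overline{\fk{A}_{q,l}}$ with the same parameter $l = \arity{\insS \cup \sch{\dep}} \cdot |q_Q|$, with correctness obtained from Lemmas~\ref{lem:consistent-labeled-trees}, \ref{lem:automaton-1}, \ref{lem:automaton-2}, and~\ref{lem:automaton-for-cqs}. You merely spell out the details (viewing $q$ as the guarded OMQ $(\insS,\emptyset,q)$ and doing the state/time bookkeeping) that the paper leaves as "easy to verify", and these details are correct.
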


\begin{proof}
  Let $Q = (\insS,\dep,q')$ and
  $l = \arity{\insS \cup \sch{\dep}} \cdot |q'|$. Then $\fk{A}$ is
  defined as:
\[
(\fk{C}_{\insS,l} \cap \fk{A}_{Q,l}) \cap \overline{\fk{A}_{q,l}}.
\]
It is an easy task to verify that the claim follows from Lemmas~\ref{lem:consistent-labeled-trees}, \ref{lem:automaton-1}, \ref{lem:automaton-2}, and~\ref{lem:automaton-for-cqs}.
\end{proof}

It is clear that Proposition~\ref{pro:guarded-into-cq} is an easy consequence of Lemma~\ref{lem:reduction-to-emptiness-for-cqs}.

\section*{PROOFS OF SECTION~\ref{sec:applications}}

Recall that we focus on unary and binary predicates.
%
Moreover, we consider constant-free tgds and CQs, and we assume that tgds have only one atom in the head.

\subsection*{Proof of Proposition~\ref{prop:fosemantic}}

\textbf{Basics.} Let $D$ be a $C$-tree of width two. We say that a tree
decomposition $\delta = \tup{\ca{T},\tup{X_t}_{t \in T}}$ witnessing
that $D$ is a $C$-tree is \emph{lean}, if it satisfies the following
conditions:
\begin{itemize}
\item The elements from $\adom{C}$ occur only in the root of $\ca{T}$
  and its immediate successors.
\item If $w$ is a child of $v$ in $\ca{T}$, then there are unique
  $c,d \in \adom{D}$ such that $X_v \cap X_w = \set{c}$ and
  $X_w \setminus X_v = \set{d}$. The element $d$ is called \emph{new at
    $w$}.
\item It follows from the previous item that every node
  $v \neq \varepsilon$ in $\ca{T}$ has a unique new element
  $c \in \adom{D}$. We additionally require that $c$ appears in the bag
  of each child of $v$.
\end{itemize}
Intuitively, $C$-trees $D$ that have lean tree decomposition represent
the \emph{actual} tree structure of $D$. It is fairly straightforward to
see that every $C$-tree has a lean tree decomposition.

Recall that the \emph{Gaifman graph} of $D$ is the graph
$\ca{G}(D) = \tup{V,E}$ with $V \coloneqq \adom{D}$ and $(a,b) \in E$ if
$a$ and $b$ coexist in some atom of $D$.
%
%
Given two nodes $a,b$ from $\ca{G}({D})$, the \emph{distance from $a$ to
  $b$} in $\ca{G}(D)$, denoted $d_{\ca{G}(D)}(a,b)$, is the minimum
length of a path between $a$ and $b$, and $\infty$ if such a path does
not exist. For $a, b \in \adom{D}$, we denote by $d_\delta(a,b)$ the
minimum distance among two nodes of $\ca{T}$ that respectively have $a$
and $b$ in their bags. We call $d_\delta(a,b)$ the \emph{distance from
  $a$ to $b$ in $\delta$}.

Notice that in a tree decomposition $\delta$ witnessing that $D$ is a
$C$-tree, any element $a \in \adom{D}$, if $a$ appears in the bag of
$v$, then it occurs only at $v$, at $v$ and its children, or at $v$ and
its parent. Since furthermore the bag of the root node is uniquely
determined by $C$, each node in the tree has a uniquely determined set
of child nodes whose bags are determined by the structure of
$D$ alone. Therefore, the following two lemmas follow immediately.
\begin{lemma}
  \label{lem:gaifmandist}
  Let $\delta = \tup{\ca{T},\tup{X_t}_{t \in T}}$ be a lean tree
  decomposition witnessing that $D$ is a $C$-tree. Then
  $d_{\delta}(a, b) \leq d_{\ca{G}(D)}(a,b)$ for all
  $a, b \in \adom{D}$.
\end{lemma}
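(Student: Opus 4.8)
The plan is to exploit the very rigid shape that leanness forces on the tree decomposition and then reduce the inequality to a walk‑shortening argument in $\ca{T}$.

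\textbf{Step 1: normal form of a lean decomposition.} First I would record what a lean tree decomposition $\delta = \tup{\ca{T},\tup{X_t}_{t\in T}}$ of $D$ looks like. From the three defining conditions it follows that every non‑root node $v$ has a bag of size exactly two: writing $\nu(v)$ for the element new at $v$ and $u$ for the parent of $v$, one has $X_v = \set{x_v,\nu(v)}$ with $x_v \in X_u$, where $x_v = \nu(u)$ if $u \neq \varepsilon$ and $x_v \in \adom{C}$ if $u = \varepsilon$ (the last point uses that elements of $\adom{C}$ occur only in the root and its children). Using connectivity of each node‑set $\set{t : a \in X_t}$ together with the requirement that $\nu(u)$ occurs in every child bag of $u$, an easy induction on depth shows that $\set{t : \nu(v) \in X_t} = \set{v} \cup \set{w : w\text{ is a child of }v}$ for every non‑root $v$, and $\set{t : a \in X_t} \subseteq \set{\varepsilon} \cup \set{w : w\text{ is a child of }\varepsilon}$ for $a \in \adom{C}$. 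In particular every $a \in \adom{D}$ is either a core element or equals $\nu(v)$ for a unique non‑root $v$, and in either case there is a unique node of minimal depth whose bag contains $a$; call it the \emph{representative} $\mathrm{rep}(a)$ (so $\mathrm{rep}(a) = \varepsilon$ if $a \in \adom{C}$ and $\mathrm{rep}(a) = v$ if $a = \nu(v)$).

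\textbf{Step 2: Gaifman edges map to tree edges.} The heart of the argument is: if $a,b$ coexist in an atom of $D$, then $\mathrm{rep}(a)$ and $\mathrm{rep}(b)$ are equal or adjacent in $\ca{T}$. Indeed, by the tree‑decomposition property there is a $t$ with $\set{a,b} \subseteq X_t$. If $t = \varepsilon$ then $a,b \in \adom{C}$ and $\mathrm{rep}(a) = \mathrm{rep}(b) = \varepsilon$. If $t \neq \varepsilon$ then $X_t = \set{x_t,\nu(t)}$ has size two, so $\set{a,b} = \set{x_t,\nu(t)}$; the element $\nu(t)$ has representative $t$, while $x_t \in X_{\mathrm{parent}(t)}$ is either $\nu(\mathrm{parent}(t))$ or a core element and hence has representative $\mathrm{parent}(t)$. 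Thus $\set{\mathrm{rep}(a),\mathrm{rep}(b)} \subseteq \set{t,\mathrm{parent}(t)}$, as desired.

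\textbf{Step 3: conclusion.} Fix $a,b \in \adom{D}$. If $d_{\ca{G}(D)}(a,b) = \infty$ there is nothing to prove; otherwise take a shortest path $a = p_0,p_1,\dots,p_m = b$ in $\ca{G}(D)$, so $m = d_{\ca{G}(D)}(a,b)$. Applying Step 2 to each edge $\set{p_i,p_{i+1}}$ gives $d_{\ca{T}}(\mathrm{rep}(p_i),\mathrm{rep}(p_{i+1})) \leq 1$ for all $i$, whence by the triangle inequality $d_{\ca{T}}(\mathrm{rep}(a),\mathrm{rep}(b)) \leq \sum_{i=0}^{m-1} d_{\ca{T}}(\mathrm{rep}(p_i),\mathrm{rep}(p_{i+1})) \leq m$. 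Since $a \in X_{\mathrm{rep}(a)}$ and $b \in X_{\mathrm{rep}(b)}$, the definition of $d_\delta$ yields $d_\delta(a,b) \leq d_{\ca{T}}(\mathrm{rep}(a),\mathrm{rep}(b)) \leq m = d_{\ca{G}(D)}(a,b)$.

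I expect the only real work to be in Step 1: verifying that leanness genuinely pins the non‑root bags down to size two of the stated form, and that $\mathrm{rep}$ is well defined — i.e.\ that a ``new'' element cannot reappear in a non‑neighbouring part of the tree, which is exactly where connectivity of the node‑sets and the induction on depth are needed. Once that normal form is available, Steps 2 and 3 are routine. Note that the width‑two hypothesis enters only through leanness (which already forces the size‑two non‑root bags), and the restriction to unary and binary predicates is not used in this lemma at all.
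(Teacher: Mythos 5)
Your proof is correct and takes essentially the same route as the paper, which gives no detailed argument but simply notes that in a lean decomposition each element occurs only at its minimal-depth node and that node's immediate neighbours, declaring the lemma immediate -- exactly the locality structure you formalize via $\mathrm{rep}(\cdot)$ and the edge-by-edge comparison. (A cosmetic point only: in Step 2 ``core element'' should read ``element of the root bag'', but the conclusion $\mathrm{rep}(x_t)=\mathrm{parent}(t)$, and hence the whole argument, is unaffected.)
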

\begin{lemma}
  \label{lem:leanislean}
  Let $\delta$ and $\delta'$ be two lean tree decompositions witnessing
  that $D$ is a $C$-tree. Then $d_\delta(a,b) = d_{{\delta'}}(a,b)$ for
  all $a,b \in \adom{D}$.
\end{lemma}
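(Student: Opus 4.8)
The plan is to show that the bag-labeled rooted tree underlying a lean tree decomposition of $D$ is determined by $D$ (and $C$) up to isomorphism; since the quantity $d_\delta(a,b)$ depends only on this abstract structure (the shape of $\ca{T}$ together with the assignment of elements to bags), the equality $d_\delta(a,b) = d_{\delta'}(a,b)$ then follows at once. Concretely, I would fix two lean tree decompositions $\delta = \tup{\ca{T},\tup{X_t}_{t\in T}}$ and $\delta' = \tup{\ca{T}',\tup{X'_t}_{t\in T'}}$ witnessing that $D$ is a $C$-tree, and construct by induction on depth a bijection $h \colon T \to T'$ that is a tree isomorphism and satisfies $X_{h(v)} = X_v$ for all $v$.

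The key steps are as follows. First, the root bags induce the same subinstance $C$ and contain $\adom{C}$; using this together with the first leanness condition (core elements occur only in the root and its immediate successors) one checks that the root bag is in fact $\adom{C}$ in both decompositions, so we may set $h(\varepsilon) = \varepsilon'$. Second, I would record the rigid local shape forced by the leanness conditions, by $[\{\varepsilon\}]$-guardedness, and by the connectivity axiom of tree decompositions: every non-root bag has size exactly two and is of the form $\{c,d\}$, where $c$ is the new element of the parent (for a child of a non-root node) or some element of $\adom{C}$ (for a child of the root) and $d$ is the new element of the node itself; moreover each element has a unique \emph{origin} node (the $\preceq$-least node whose bag contains it), a non-root one for each element of $\adom{D}\setminus\adom{C}$. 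Third --- the heart of the argument --- I would show that for any node $v$ the children of $v$, together with their bags, are determined by $D$ and by the set of elements occurring on the path from $v$ to the root: the children of $v$ are precisely the origin nodes of those elements $d \in \adom{D}\setminus\adom{C}$ that coexist in an atom of $D$ with the new element of $v$ (resp.\ with some element of $\adom{C}$, when $v$ is the root) and that do not already occur on the path from $v$ to the root, each such child having bag $\{c,d\}$ with $c$ the corresponding adjacent element. Here one uses that no non-core element can coexist in atoms with two distinct core elements, and that no Gaifman cycle among non-core elements can be covered entirely by non-root bags --- both following from connectivity (a common element of two sibling subtrees must appear in their common ancestor, impossible for a non-core element unless that ancestor is the root, and impossible even then since the root bag is $\adom{C}$). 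With these facts the inductive step is immediate: if $h(v)=v'$, $X_v = X'_{v'}$, and the element sets on the respective paths to the root agree (which holds by induction hypothesis), then the children of $v$ and of $v'$ carry the same family of bags, so $h$ extends uniquely and bag-preservingly to them. Finally, since $h$ preserves bags, it carries, for every $a \in \adom{D}$, the set of nodes of $\delta$ whose bag contains $a$ bijectively onto the set of nodes of $\delta'$ with the same property, while preserving tree distance; hence $d_\delta(a,b) = d_{\delta'}(a,b)$, which is the lemma.

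The step I expect to be the main obstacle is the bookkeeping in the third point: pinning down exactly which nodes are children of a given node and which two-element bags they carry, and in particular treating the root separately, since two children of the root may legitimately share the same core element (whereas two children of a non-root node always carry distinct bags). Connectivity is the main lever, but it must be applied with care to exclude the degenerate configurations (Gaifman cycles outside the core, surplus non-core elements in a bag, non-core elements with two core neighbours). A cleaner alternative worth attempting is to prove directly that, for a lean decomposition, $d_\delta(a,b)$ coincides with a purely intrinsic quantity defined from $\ca{G}(D)$ after contracting $\adom{C}$ to a single vertex; combined with Lemma~\ref{lem:gaifmandist} this would reduce the statement to one about $D$ alone and bypass the explicit isomorphism, but the contraction has to be set up so that pairs of core elements, for which $d_\delta = 0$, are handled correctly.
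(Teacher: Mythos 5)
Your proposal is correct and follows essentially the same route as the paper: the paper disposes of this lemma (together with Lemma~\ref{lem:gaifmandist}) with the one-line observation that the root bag is determined by $C$ and that each node has a uniquely determined set of children whose bags are fixed by the structure of $D$ alone, so any two lean decompositions are isomorphic as bag-labeled trees---your inductive construction of the bag-preserving isomorphism simply makes this rigidity explicit, including the delicate points (root handled separately, no non-core element adjacent to two core elements, no cycles covered by non-root bags) that the paper leaves implicit. The only caveat is your claim that the root bag equals $\adom{C}$: this is the paper's intended convention (it asserts the root bag is ``uniquely determined by $C$'') rather than a literal consequence of the stated definitions, so you may take it as given rather than ``check'' it.
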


In the following, we denote by $D_{\leq k}$ the subinstance of $D$
induced by the set of elements whose distance from any $a \in \adom{C}$
in \emph{any} lean tree decomposition $\delta$ is bounded by $k$. The
subinstance $D_{>k}$ is defined analogously.
The \emph{branching degree} of a lean tree decomposition is the maximum
number of child nodes of any node contained in the tree of
$\delta$. Notice that two lean tree decompositions of a $C$-tree $D$
always have the same branching degree; the argument is similar as for
the two lemmas above. Hence, we can simply speak about \emph{the}
branching degree of $D$.

\medskip
\noindent
\textbf{Encodings.} Recall that a consistent
$\Gamma_{\sche{S},l}$-labeled tree $t = \tup{T,\mu}$ encodes information
on an $\sche{S}$-database $D$ \emph{and} an appropriate tree
decomposition $\delta$ of $D$. It is clear that $\dec{t}$ has a lean
tree decomposition, but it is not guaranteed that this is reflected in
$\delta$ as well. We call (the consistent) $t$ \emph{lean}, if the tree
decomposition
$\delta_t = \tup{\ca{T} \coloneqq \tup{T,E},\tup{X_v}_{v \in T}}$ is,
where $xEy$ iff $y = x \cdot i$ for some
$i \in \mbb{N} \setminus \set{0}$ and
$X_v \coloneqq \set{[v]_a \mid a \in \names{v}}$. The following is easy to prove:
\begin{lemma}
  \label{lem:automatonlean}
  There is a 2WAPA on trees $\fk{L}_{\sche{S},l}$ that accepts a
  consistent $\Gamma_{\sche{S},l}$-labeled tree iff it is lean. The
  number of states of $\fk{L}_{\sche{S},l}$ is bounded logarithmically
  in the size of $\Gamma_{\sche{S},l}$ and $\fk{L}_{\sche{S},l}$ can be
  constructed in polynomial time in the size of $\Gamma_{\sche{S},l}$.
\end{lemma}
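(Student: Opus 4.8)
The plan is to build $\fk{L}_{\sche{S},l}$ as an intersection of small 2WAPAs, one for each of the three leanness conditions in the definition above, together with the consistency automaton $\fk{C}_{\sche{S},l}$ from Lemma~\ref{lem:automaton-1}. Since 2WAPAs are closed under intersection with only a polynomial blow-up in the number of states, it suffices to handle each condition separately; the claimed logarithmic state bound then follows because each component automaton will only ever need to remember a single name from $U_{\sche{S},l}$ (or a bounded-size piece of information about the current and neighbouring labels), and $|U_{\sche{S},l}|$ is polynomial in $\arity{\sche{S}}$ and $l$, so storing one such name costs $O(\log|\Gamma_{\sche{S},l}|)$ states.

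First I would handle the condition that core elements occur only in the root and its immediate successors: this is exactly the requirement that $C_a \in \mu(v)$ implies $v = \varepsilon$ or $v$ is a child of $\varepsilon$. An automaton checks this by branching universally to every node, and in a state parametrised by a name $a \in C_l$ it verifies $C_a \notin \mu(v)$ whenever the distance of $v$ from the root exceeds one; recognising ``depth $\geq 2$'' is trivial with two extra states, and this costs $O(|C_l|)$ states in total, i.e.\ $O(\log|\Gamma_{\sche{S},l}|)$. Second, the condition that for every child $w$ of $v$ there are unique $c,d$ with $X_v \cap X_w = \{c\}$ and $X_w \setminus X_v = \{d\}$ translates, via the decoding $X_v = \{[v]_a : a \in \names{v}\}$, into a purely local syntactic check: the automaton sits at $w$, inspects $\mu(w)$ and $\mu(v)$ (it reaches the parent label with a single $\nbox{-1}$/$\ndia{-1}$ move and carries the relevant names in its state), and confirms that exactly one name is shared and exactly one name is new; here ``shared'' must be read via $a$-equivalence, but consistency already guarantees that a name present in both $\mu(v)$ and $\mu(w)$ denotes the same element, so the check is genuinely local. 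Third, the condition that the new element at $v$ appears in the bag of every child of $v$ is handled by branching universally: at each node $v$, determine its new name $c$ (the unique name in $\names{v} \setminus \names{\mathrm{parent}(v)}$, with the root treated separately), then branch with $\nbox{\ast}$ to all children and verify $D_c$ (equivalently $c \in \names{\cdot}$) holds at each of them.

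After constructing these three automata I would take $\fk{L}_{\sche{S},l} := \fk{C}_{\sche{S},l} \cap \fk{A}_1 \cap \fk{A}_2 \cap \fk{A}_3$, so that it accepts precisely the consistent labeled trees whose associated decomposition $\delta_t$ is lean; the state count is the sum of the component counts, still logarithmic in $|\Gamma_{\sche{S},l}|$, and the whole construction runs in polynomial time in $|\Gamma_{\sche{S},l}|$ as claimed. The only mildly delicate point — and the step I would spend the most care on — is ensuring that each ``new element / shared element'' test is correctly phrased in terms of the $a$-equivalence used in the decoding $\dec{t}$ rather than in terms of raw name-equality at a single node, and that the root bag (which is determined by $C$ and has up to $l$ names rather than $\arity{\sche{S}} = 2$) is exempted from the uniqueness-of-new-element requirement in the right way. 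Once those bookkeeping details are pinned down, correctness is a routine induction comparing the structure of $\delta_t$ with the three leanness clauses, and no further machinery is needed.
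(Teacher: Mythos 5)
Your construction is correct and matches what the paper intends: the paper states this lemma without proof precisely because each leanness condition is a local check of the kind you describe, verified by universally branching to all nodes and comparing the current label with the parent's, with the consistency automaton handling the identification of shared names with shared elements. The one point to make explicit is that ``determining the new name of $v$'' cannot be done by moving to the parent and returning (a 2WAPA cannot come back to a specific child); instead, at $v$ guess the new name $c \in \names{v}$ and verify it in parallel alternating branches---one copy moves up via $\ndia{-1}$ to check that $D_c$ is absent from the parent's label, while another branches via $\nbox{\ast}$ to all children to check that $D_c$ is present there---which stays within the claimed state bound.
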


Let $t = \tup{T,\mu}$ be a labeled tree.  The \emph{branching degree} of
a node $x \in T$ is the cardinality of
$\set{i \mid x \cdot i \in T, i \in \mbb{N} \setminus \set{0}}$; the
branching degree of $t$ is the maximum over all branching degrees of its
nodes and $\infty$ is this maximum does not exist. We also say that $t$
is \emph{$m$-ary} if the branching degree of $t$ is bounded by $m$. A
node $x \in T$ is a \emph{leaf node} of $t$ if it has branching degree
zero. The \emph{depth} of $T$ is the maximum length among the lengths of
all branches and $\infty$ if this maximum does not exist.  Let us remark
that the branching degree of the lean $\Gamma_{\sche{S},l}$-labeled tree
$t$ as defined for labeled trees equals the branching degree of
$\dec{t}$ as defined above.

\begin{lemma}
  \label{lem:semanticfoadvanced}
  Let $Q = \tup{\sche{S},\Sigma,q}$ be an OMQ from
  $\tup{\class{G}_2, \class{BCQ}}$. There is an $m \geq 0$ such that the
  following are equivalent:
  \begin{enumerate}
    \item There is an $\sche{S}$-database $D$ such that $D \models Q$.
    \item There is a $C$-tree $\hat{D}$ with $|\adom{C}| \leq 2|q|$ and
      branching degree at most $m$ such that $\hat{D} \models Q$.
  \end{enumerate}
\end{lemma}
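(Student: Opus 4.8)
\begin{proofsk}
The implication $(2) \Rightarrow (1)$ is trivial, since a $C$-tree is in particular an $\sche{S}$-database. So the work is in $(1) \Rightarrow (2)$, and the plan is in two stages: first extract a tree-shaped witness of bounded \emph{core} via the tree-witness machinery, then prune that witness to bounded \emph{branching degree} by a type argument that exploits guardedness. Concretely, starting from an $\sche{S}$-database $D$ with $D \models Q$, I would apply Lemma~\ref{lem:main-lemma} to obtain a finite $\sche{S}$-instance $\hat{I}$ that is a $C$-tree with $|\adom{C}| \leq \arity{\sche{S}\cup\sch{\Sigma}}\cdot|q|$ and $\hat{I}\models Q$; since $\Sigma \in \class{G}_2$ uses only unary and binary predicates, $\arity{\sche{S}\cup\sch{\Sigma}}\leq 2$, so $|\adom{C}|\leq 2|q|$. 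Replacing each null of $\hat{I}$ by a fresh constant yields an isomorphic $\sche{S}$-database $\hat{D}_0$ which is still a $C$-tree with $|\adom{C}|\leq 2|q|$ and $\hat{D}_0\models Q$. Then I fix a \emph{lean} tree decomposition $\delta = \tup{\ca{T},\tup{X_t}_{t\in T}}$ witnessing that $\hat{D}_0$ is a $C$-tree (one exists for every $\class{G}_2$ $C$-tree); by leanness, for every non-root node $v$ the intersection $X_{\parent{v}}\cap X_v$ is a single element, which I call $u_v$, and all children of a non-root node share the same such element.

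The pruning step is the heart of the argument. Using the size analysis underlying Lemma~\ref{lem:automaton-2} (and the constructions in the proofs of Lemma~\ref{lem:derivtreeatom} and Lemma~\ref{prop:derivtreequery}), there is an integer $N$ depending only on $Q$ such that every strictly acyclic query occurring in any derivation tree relevant to matching $q$ has at most $N$ atoms. Let $\Q$ be the finite set of strictly acyclic queries over $\sche{S}\cup\sch{\Sigma}$ with at most $N$ atoms and a single designated free variable, taken modulo logical equivalence; $|\Q|$ depends only on $Q$. For a non-root node $v$, let $\hat{D}_0^{\,v}$ be the subinstance of $\hat{D}_0$ induced by $\bigcup_{v\preceq w}X_w$, and define the \emph{type} of $v$ as the set of $[p]\in\Q$ such that $\chase{\hat{D}_0^{\,v}}{\Sigma}\models p[x\mapsto u_v]$ (equivalently, by Lemma~\ref{prop:derivtreequery}, such that there is a derivation tree for $\tup{u_v,p}$ w.r.t.\ $\hat{D}_0^{\,v}$ and $\Sigma$). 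I then obtain $\hat{D}$ from $\hat{D}_0$ by retaining, for every node of $\ca{T}$, exactly one child per pair $(u_w,\text{type of }w)$ occurring among its children, deleting the subtrees rooted at the rest. The restriction of $\delta$ to the surviving nodes still witnesses that $\hat{D}$ is a $C$-tree with the \emph{same} core $C$ (no core atom is deleted, since $\hat{D}_0(\varepsilon)=C$ and the root survives), so $|\adom{C}|\leq 2|q|$; and the branching degree is bounded by $m := 2|q|\cdot|\Q|$ (at most $|\Q|$ surviving children for non-root nodes, and at most $|\adom{C}|\cdot|\Q|$ for the root, whose children attach through possibly different core elements).

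It remains to check that $\hat{D}\models Q$, for which I would invoke the squid-decomposition/derivation-tree characterization of Lemma~\ref{lem:automatonguide} on $\hat{D}_0$: there is a squid decomposition $\tup{q^+,\mu,H,A,V=\set{\ve{x}}}$ of $q$ and a homomorphism $\eta\colon\mu(q^+)\to\chase{\hat{D}_0}{\Sigma}$ such that $F_0\models H$ (where $F_0$ is the subinstance of $\chase{\hat{D}_0}{\Sigma}$ induced by $\adom{C}$), and $A^V(\ve{x})\equiv q_1\land\cdots\land q_l$ with derivation trees for each $\tup{\eta(\ve{x}_i),q_i}$; note $\eta(\ve{x}_i)\subseteq\adom{C}$ since $\eta(H)\subseteq F_0$. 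The key observation, which is where guardedness is used, is that each maximal sub-derivation that enters a deleted subtree at some child $v'$ does so through $u_{v'}$ and cannot escape $\hat{D}_0^{\,v'}$ except back at $u_{v'}$; hence it witnesses some $\tup{u_{v'},p}$ with $[p]$ in the type of $v'$, and the surviving sibling $v^\ast$ of $v'$ with the same interface element and type provides a derivation tree for $\tup{u_{v'},p}$ inside $\hat{D}\supseteq\hat{D}^{\,v^\ast}$. Substituting all such sub-derivations bottom-up turns the derivation trees for the $q_i$ into derivation trees w.r.t.\ $\hat{D}$; the same observation, applied to chase atoms over core elements that were produced using deleted subtrees, shows $F_0 \subseteq \chase{\hat{D}}{\Sigma}$, and since $\hat{D}\subseteq\hat{D}_0$ yields a homomorphism $\chase{\hat{D}}{\Sigma}\to\chase{\hat{D}_0}{\Sigma}$ that is the identity on $\adom{\hat{D}}\supseteq\adom{C}$ we get equality of the two core subinstances, so $H$ is still satisfied. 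Lemma~\ref{lem:automatonguide} applied to $\hat{D}$ then gives $\hat{D}\models Q$, and $m$ works.

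The main obstacle I anticipate is pinning down the type notion precisely: it must be fine enough that a deleted subtree's entire contribution to satisfying $Q$ — both the pieces of $q$ matched below its interface via derivation trees, and the chase atoms over core elements it helps to derive — is recovered from a same-type sibling, which rests on the fact that guarded chase below an interface element depends only on the corresponding subinstance glued at that element; and it must be coarse enough to take only boundedly many values, which is what the size bound $N$ from the Lemma~\ref{lem:automaton-2} analysis buys. A secondary technical point to get right is the interaction of pruning with the chase embedding $\chase{\hat{D}}{\Sigma}\hookrightarrow\chase{\hat{D}_0}{\Sigma}$ and the invariance of the core subinstance $F$.
\end{proofsk}
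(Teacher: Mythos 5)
Your reduction to a $C$-tree of core size $\leq 2|q|$ (via Lemma~\ref{lem:main-lemma} and the arity bound for $\class{G}_2$) and the overall plan of pruning a lean decomposition by subtree types are reasonable, but the pruning step has a genuine gap: your notion of type is not strong enough to make same-type siblings interchangeable. You define the type of a node $v$ from the \emph{isolated} chase $\chase{\hat{D}_0^{\,v}}{\Sigma}$, and you justify the substitution step by the claim that ``guarded chase below an interface element depends only on the corresponding subinstance glued at that element.'' That claim is false: under guarded tgds, a fact over the interface element $u_v$ alone (a unary atom $A(u_v)$, or a loop $R(u_v,u_v)$) that is derived \emph{outside} the subtree can re-enter it, since a tgd with guard $R(u_v,b)$ inside the subtree and side atom $A(u_v)$ may fire only in the presence of the externally derived fact. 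Consequently a deleted subtree's contribution to the global derivation may be conditional on facts exported by its siblings, and a sibling with the same isolated type need not reproduce it; for the same reason, the sub-derivations in your derivation trees for the $q_i$ do not split cleanly into pieces whose leaves lie entirely inside a single subtree, and the step ``$F_0 \subseteq \chase{\hat{D}}{\Sigma}$'' suffers from the same circularity (the exported facts themselves may change after pruning). To repair this along your lines you would need a \emph{conditional} type (e.g., a function from sets of facts over $\{u_v\}$ to derivable queries at $u_v$), together with a fixpoint argument showing that the exported interface facts are unchanged after pruning — doable, but not what you wrote.

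The paper avoids this combinatorics entirely: it intersects the automaton $\fk{A}_{Q,l}$ of Lemma~\ref{lem:automaton-2} with the consistency automaton of Lemma~\ref{lem:automaton-1} and the leanness automaton of Lemma~\ref{lem:automatonlean}, obtaining a 2WAPA $\fk{B}$ accepting exactly the lean consistent encodings $t$ with $\dec{t} \models Q$, and then invokes the result of~\cite{GrWa99} that a 2WAPA with nonempty language accepts a tree of branching degree bounded by its number of states; $m$ is taken to be that number. The two-way alternating runs implicitly carry precisely the contextual information your isolated types miss (a run may enter a subtree several times in different states), which is why that route goes through where the naive type-pruning does not.
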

\begin{proof}
  Let $l \coloneqq 2|q|$ and, let $\fk{A}_{Q,l}$ be
  the 2WAPA from Lemma~\ref{lem:automaton-2}. Take the intersection
  of $\fk{A}_{Q,l}$ with
  \begin{enumerate*}[label={(\roman*)}]
    \item the 2WAPA $\fk{C}_{\sche{S},l}$ from Lemma~\ref{lem:automaton-1} and
    \item the 2WAPA from Lemma~\ref{lem:automatonlean} that checks
      leanness.
  \end{enumerate*}
  Call the resulting automaton $\fk{B}$. Then $\fk{B}$ accepts a
  $\Gamma_{\sche{S},l}$-labeled tree $t$ iff $t$ is lean and consistent
  and $\dec{t} \models Q$. We let $m$ be the number of states of
  $\fk{B}$ and claim that this is the required bound on the branching
  degree.

  First of all, notice that the first item of the lemma trivially
  implies the second independently from the choice of $m$. For the other
  direction, suppose that $D \models Q$ for some
  $\sche{S}$-database $D$. Then there is a $C$-tree $B$ such that
  $\adom{C} \leq 2|q|$ and $B \models Q$. Being a $C$-tree,
  $B$ has a lean tree decomposition $\delta$ and the encoding of $B$
  together with $\delta$ corresponds to a lean
  $\Gamma_{\sche{S},l}$-labeled tree $t$. It follows that
  $t \in \ca{L}(\fk{B})$. By the results of~\cite{GrWa99}, it follows
  that there is a $t' \in \ca{L}(\fk{B})$ whose branching degree is
  bounded by the number of states of $\fk{B}$, i.e., by $m$. The tree
  $t'$ is lean and consistent, therefore $\dec{t'}$ is a $C'$-tree of
  branching degree at most $m$ for some $C' \subseteq \dec{t}$ such that
  $|\adom{C'}| \leq 2|q|$.  Furthermore, $\dec{t'} \models Q$,
  as required.
\end{proof}

We are now ready to prove Proposition~\ref{prop:fosemantic}:

\medskip

\begin{proofcustom}{of Proposition~\ref{prop:fosemantic}}
  We largely follow~\cite{BLW16} here. Choose $m$ as in
  Lemma~\ref{lem:semanticfoadvanced} above. Suppose first that $Q$ is
  UCQ rewritable. Let $p \coloneqq p_1 \lor \cdots \lor p_n$ be a
  corresponding UCQ rewriting. Since the query $q$ is connected, we can
  assume that $p$ is as well. We choose
  $k > \max\set{|p_i| : i = 1,\ldots,n}$ and suppose that $D \models Q$
  for some $C$-tree $D$. Since $p$ is a UCQ rewriting, $D \models p_i$
  for some $i = 1,\ldots,n$. Fix a homomorphism $\mu$ witnessing that
  $D \models p_i$. We distinguish cases. Suppose first that
  $\mu(\var{p_i}) \cap \adom{C} \neq \emptyset$. Since $p$ is connected,
  it follows $D_{\leq k} \models p_i$ by Lemma~\ref{lem:gaifmandist} and
  so $D_{\leq k} \models p$. On the other hand, if
  $\mu(\var{p}) \cap \adom{C} = \emptyset$, then it is also easy to
  check that $D_{> 0} \models p$.

  For the other direction, suppose that the second item of the
  proposition's statement holds, i.e., there is a $k \geq 0$ such that
  for all $C$-trees $D$ over $\sche{S}$ with $|\adom{C}| \leq 2|q|$ and
  branching degree at most $m$ it holds that $D \models Q$ implies
  $D_{\leq k} \models Q$ or $D_{>0} \models Q$. Let $\Lambda$ be the set
  of all $C$-trees such that $|\adom{C}| \leq 2|q|$ and that have
  branching degree at most $m$ such that $D \models Q$. We regard
  $\Lambda$ as a set of BCQs and regard it as factorized modulo logical
  equivalence. It is clear that $\Lambda$ is finite then and we claim
  that $p \coloneqq \bigvee_{i = 1}^n p_i$ is a UCQ rewriting of
  $Q$. We explicitly include the case where $\Lambda$ is empty, in which
  case $p$ is equivalent to the empty disjunction $\bot$ and there is no
  database $D$ at all such that $D \models Q$. To see that $p$ is indeed
  a UCQ rewriting of $Q$, let $D$ be an $\sche{S}$-database such that
  $D \models p$. Then there is an $i = 1,\ldots,n$ such that
  $D \models p_i$. Furthermore, $[p_i] \models Q$ and so $D \models Q$
  as well, since $Q$ is closed under homomorphisms.
  Suppose now $D \models Q$. We know that there is a $C$-tree $\hat{D}$
  with $|\adom{C}| \leq l \coloneqq 2|q|$ and branching degree at most
  $m$ such that $\hat{D} \models Q$ and---when we regard $\hat{D}$ as an
  instance---there is a homomorphism from $\hat{D}$ to $D$. Let
  $D' \subseteq \hat{D}$ be a minimal connected subset of $\hat{D}$ such
  that $D' \models Q$. $D'$ is again a $C'$-tree for some
  $C' \subseteq D'$. Therefore $D'_{\leq k} \models Q$ or
  $D'_{>0} \models Q$. The latter is impossible by minimality of
  $D'$. Hence, $D'_{\leq k} \models Q$ and so there is a (logically
  equivalent) copy of $D'_{\leq k}$ contained in $\Lambda$. Hence,
  $D'_{\leq k} \models p$, therefore $D' \models p$, and hence
  $\hat{D} \models p$. Recall that, when $\hat{D}$ is regarded as an
  instance, there is a homomorphism from $\hat{D}$ to $D$. Therefore,
  $D \models p$.
\end{proofcustom}

\subsection*{Proof of Proposition~\ref{pro:ucqrew-to-infinity}}

Let $Q = \tup{\sche{S},\Sigma,q}$ be an OMQ from
$\tup{\class{G}_2,\class{BCQ}}$ such that $q$ is connected. We are going
to show that the desired 2WAPA $\fk{A}$ can be constructed in
\twoexptime. Notice that, using similar results as in~\cite{BLW16}, this
gives us a decision procedure for deciding
$\rew\tup{\class{G}_2,\class{CQ}}$ also for non-connected
queries. Let us first introduce some auxiliary notions.


\medskip
\noindent
\textbf{2WAPAs on $m$-ary trees.} A 2WAPA $\fk{B}$ \emph{on $m$-ary}
trees is just defined as a 2WAPA, except that its transitions
$\mathsf{tran}(\fk{B})$ are
$\set{\ndia{k}s,\nbox{k}s \mid -1 \leq k \leq m, s \in S}$, where $S$ is
the state set of $\fk{B}$. The notion of run is then defined on $m$-ary
trees only and its definition is modified in the obvious way so as to
deal with the transitions $\ndia{k}s, \nbox{k}s$. Intuitively, for
$k = 1,\ldots,m$, a transition $\ndia{k}s$ means that the automaton
should move to the $k$-th child of the current node (which is then
required to exist) and assume state $s$. Correspondingly, $\nbox{k}s$
means that the automaton should move to the $k$-th child and assume
state $s$ provided that this $k$-th child exists at all. We remark that
all 2WAPAs constructed in this paper so far can easily be modified to
work on $m$-ary trees as well and we shall assume in the following that
they do so. Furthermore, deciding whether $\ca{L}(\fk{B})$ is feasible
in exponential time in the number of states of $\fk{B}$ and in
polynomial time in the size of the input alphabet of $\fk{B}$
(cf.~\cite{Vardi98}).

Let $m$ be as in Proposition~\ref{prop:fosemantic}. In the following, we
shall regard all trees mentioned in the following as $m$-ary and let
$l \coloneqq 2|q|$. Before proceeding to a proof of
Proposition~\ref{pro:ucqrew-to-infinity}, we must make the notion of
being an ``extension'' of a labeled tree more precise.

\medskip
\noindent
\textbf{Extensions of trees.}  Let $\fk{B}_Q$ be a 2WAPA that accepts a
$\Gamma_{\sche{S},l}$-labeled tree $t$ iff
\begin{enumerate*}[label={(\roman*)}]
\item $t$ is lean and consistent,
\item $\dec{t} \models Q$, and
\item $\dec{t}_{> 0} \not\models Q$.
\end{enumerate*}
Notice that a 2WAPA $\fk{A}_Q^{>0}$ that accepts a lean and consistent
$\Gamma_{\sche{S},l}$-labeled tree iff $\dec{t}_{>0} \not\models Q$ can
be easily constructed using the construction in
Lemma~\ref{lem:automaton-2}. Hence, $\fk{B}_Q$ can be constructed
intersecting several 2WAPAs we have already encountered.

Let $\Pi$ be the set of all tuples of the form $\tup{s,s'}$, where $s$
and $s'$ are states of $\fk{B}_Q$. We define a new alphabet
$\Lambda \coloneqq 2^{\mbb{K}_{\sche{S},l} \cup \Pi}$. Notice that
$\Lambda$ is of double-exponential size in the size of $Q$. For
$\rho \in \Lambda$, we denote by
$\rho \upharpoonright \Gamma_{\sche{S},l}$ the \emph{restriction of
  $\rho$ to $\Gamma_{\sche{S},l}$}, that is,
$\rho \cap \mbb{K}_{\sche{S},l}$.  The \emph{restriction} of a
$\Lambda$-labeled tree $t$ to $\Gamma_{\sche{S},l}$, denoted
$t \upharpoonright \Gamma_{\sche{S},l}$, is the tree that arises from
$t$ when we restrict the label of each node of $t$ to
$\Gamma_{\sche{S},l}$. We say that a $\Lambda$-labeled tree is
\emph{consistent} if
\begin{enumerate*}[label={(\roman*)}]
 \item its restriction to $\Gamma_{\sche{S},l}$ is
consistent and
\item symbols $\rho \in \Lambda$ such that
  $\rho \cap \Pi \neq \emptyset$ appear only in leaf nodes of $t$.
\end{enumerate*}
Likewise, we say that a consistent $t$ is \emph{lean} if
$t \upharpoonright \Gamma_{\sche{S},l}$ is. The decoding $\dec{t}$ of
$t$ is naturally extended to consistent $\Lambda$-labeled trees by
setting $\dec{t} \coloneqq \dec{t \upharpoonright \Gamma_{\sche{S},l}}$.
The following lemma is a straightforward extension of
Lemmas~\ref{lem:automaton-1} and~\ref{lem:automatonlean}.

\begin{lemma}
  \label{lem:conslean}
  There are 2WAPAs $\fk{C}_\Lambda$ and $\fk{L}_\Lambda$ that
  respectively accept a $\Lambda$-labeled tree iff it is consistent and
  lean. Both have logarithmically many states in the size of $\Lambda$
  and can be constructed in polynomial time in the size of $\Lambda$.
\end{lemma}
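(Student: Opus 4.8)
The plan is to obtain both automata by \emph{lifting} the corresponding automata over the alphabet $\Gamma_{\sche{S},l}$ to the larger alphabet $\Lambda$, and, for consistency, by adding a tiny auxiliary automaton that handles the one genuinely new requirement, namely that symbols meeting $\Pi$ occur only at leaves. Concretely, given a 2WAPA $\fk{M} = \tup{S,\Gamma_{\sche{S},l},\delta,s_0,\Omega}$ on $m$-ary trees, I would define its lifting $\fk{M}^{\uparrow} = \tup{S,\Lambda,\delta',s_0,\Omega}$ with the same states, initial state and parity condition, and with $\delta'(s,\rho) := \delta(s,\rho \upharpoonright \Gamma_{\sche{S},l})$ for all $s \in S$ and $\rho \in \Lambda$. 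A routine induction on the structure of runs then shows that $\fk{M}^{\uparrow}$ accepts a $\Lambda$-labeled tree $t$ iff $\fk{M}$ accepts $t \upharpoonright \Gamma_{\sche{S},l}$: since the transitions, and hence the admissible successor configurations, depend on the label only through its restriction to $\Gamma_{\sche{S},l}$, runs of $\fk{M}^{\uparrow}$ on $t$ are in bijection with runs of $\fk{M}$ on $t \upharpoonright \Gamma_{\sche{S},l}$, and this bijection respects the two-way moves $\nbox{\alpha}s, \ndia{\alpha}s$ as well as the parity (acceptance) condition. The lifting leaves the number of states unchanged, and $\delta'$ is computed by one table lookup per pair in $S \times \Lambda$, so $\fk{M}^{\uparrow}$ is constructed in time polynomial in $|S|$ and $|\Lambda|$.

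\textbf{The automaton $\fk{C}_\Lambda$.} First I would lift the consistency automaton $\fk{C}_{\sche{S},l}$ of Lemma~\ref{lem:automaton-1}; by the observation above $\fk{C}_{\sche{S},l}^{\uparrow}$ accepts $t$ iff $t \upharpoonright \Gamma_{\sche{S},l}$ is consistent, and it has logarithmically many states in $|\Gamma_{\sche{S},l}| \leq |\Lambda|$, hence logarithmically many in $|\Lambda|$. Next I would build a two-state automaton $\fk{P}$ enforcing that every node whose label $\rho$ satisfies $\rho \cap \Pi \neq \eset$ is a leaf: with states $s_0$ (``ok'') and $s_\bot$ (``reject''), set $\delta(s_\bot,\rho) := \pfalse$ for all $\rho$, $\delta(s_0,\rho) := \nbox{\ast}s_0$ whenever $\rho \cap \Pi = \eset$, and $\delta(s_0,\rho) := \nbox{\ast}s_\bot$ otherwise, with $\Omega \equiv 1$ so that only finite trees are accepted. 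On a finite $m$-ary tree, $\fk{P}$ simply walks down from the root and reaches $s_\bot$, thereby rejecting, precisely when some $\Pi$-meeting node has a child. Finally I would set $\fk{C}_\Lambda := \fk{C}_{\sche{S},l}^{\uparrow} \cap \fk{P}$, using the polynomial-time intersection construction for 2WAPAs recalled before Proposition~\ref{pro:cont-to-emptiness}; this accepts $t$ iff $t \upharpoonright \Gamma_{\sche{S},l}$ is consistent and all $\Pi$-meeting symbols of $t$ sit at leaves, i.e.\ iff $t$ is consistent, it has logarithmically many states in $|\Lambda|$, and it is built in polynomial time in $|\Lambda|$.

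\textbf{The automaton $\fk{L}_\Lambda$.} By definition a consistent $\Lambda$-labeled tree $t$ is lean iff $t \upharpoonright \Gamma_{\sche{S},l}$ is lean, and leanness of the latter depends on its labels only through the sets $\names{v} = \set{a \mid D_a \in \mu(v)}$, which are unaffected by the $\Pi$-component. Hence it suffices to take the lifting $\fk{L}_{\sche{S},l}^{\uparrow}$ of the automaton from Lemma~\ref{lem:automatonlean} (optionally intersected with $\fk{C}_\Lambda$ if a standalone leanness test on arbitrary inputs is desired): it accepts a consistent $\Lambda$-labeled tree iff it is lean, has logarithmically many states in $|\Lambda|$, and is constructed in polynomial time in $|\Lambda|$.

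\textbf{Main obstacle.} There is no serious difficulty here; the work is the (easy) correctness proof for the lifting, a straightforward induction over run trees that must nonetheless be carried out so as to respect the two-way transitions and the parity condition. The only point requiring a moment's care is to argue that the new leaf condition on $\Pi$ is the \emph{only} genuinely new requirement --- in particular that leanness, and every clause of consistency other than the leaf condition, is insensitive to the $\Pi$-part of the labels --- so that beyond lifting $\fk{C}_{\sche{S},l}$ and $\fk{L}_{\sche{S},l}$ nothing more than the trivial automaton $\fk{P}$ is needed.
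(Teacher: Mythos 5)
Your proposal is correct and follows essentially the same route the paper intends: the paper gives no detailed proof, declaring the lemma a straightforward extension of Lemmas~\ref{lem:automaton-1} and~\ref{lem:automatonlean}, and your lifting of those automata to $\Lambda$ (transitions reading only the $\Gamma_{\sche{S},l}$-restriction of each label) plus the two-state leaf-check for $\Pi$-symbols is exactly that extension, with the right state-count and construction-time bounds.
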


Let $t$ be a lean and consistent $\Lambda$-labeled tree. We say that
$t'$ is an \emph{extension} of $t$ if $t'$ is a
$\Gamma_{\sche{S},l}$-labeled tree that arises from $t$ by attaching
$\Gamma_{\sche{S},l}$-labeled trees to those leaves of $t$ that contain
elements from $\Pi$. Furthermore, for such nodes, the labels of the
corresponding nodes in $t'$ are those of $t$ restricted to
$\Gamma_{\sche{S},l}$.

\begin{definition}
  Let $\ca{L}_Q$ be the set of all lean and consistent $\Lambda$-labeled
  trees $t$ such that $\dec{t} \not\models Q$, yet there is an
  extension $t'$ of $t$ such that $\dec{t'} \models Q$ and
  $\dec{t'}_{>0} \not\models Q$.\hfill\markfull
\end{definition}

\begin{lemma}
  \label{lem:infinitelqiffucqrew}
  $\ca{L}_Q$ is infinite iff $Q$ is not UCQ rewritable.
\end{lemma}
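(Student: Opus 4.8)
\medskip
\noindent
The plan is to use the equivalence of items~(i) and~(ii) established just above --- $Q$ is UCQ rewritable iff {\sc Boundedness} holds, i.e.\ there is a $k\ge 0$ such that, for every consistent $\Gamma_{\sche{S},l}$-labeled tree $L$ of branching degree at most $m$ (with $l=2|q|$ and $m$ the bound fixed as in Proposition~\ref{prop:fosemantic}), $\dec{L}\models Q$ implies $\dec{L}_{\le k}\models Q$ or $\dec{L}_{>0}\models Q$ --- to reduce the lemma to the statement that {\sc Boundedness} holds iff $\ca{L}_Q$ is finite. I would prove this by matching the depths of the trees in $\ca{L}_Q$ with the ``levels'' $k$ at which {\sc Boundedness} can fail, exploiting that such failure is downward closed in $k$: a witness $L$ with $\dec{L}\models Q$, $\dec{L}_{>0}\not\models Q$ and $\dec{L}_{\le k}\not\models Q$ also works for every $k'\le k$, since $\dec{L}_{\le k'}\subseteq\dec{L}_{\le k}$.

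\medskip
\noindent
For the direction ``{\sc Boundedness} fails $\Rightarrow$ $\ca{L}_Q$ infinite'', I would fix $k\ge 0$ and take a consistent $\Gamma_{\sche{S},l}$-labeled tree $L_k$ of branching degree at most $m$ with $\dec{L_k}\models Q$, $\dec{L_k}_{\le k}\not\models Q$ and $\dec{L_k}_{>0}\not\models Q$; re-encoding the $C$-tree $\dec{L_k}$ through a lean tree decomposition (cf.\ Lemma~\ref{lem:automatonlean}) we may assume $L_k$ is lean. Let $t_k$ be the $\Lambda$-labeled tree obtained from $L_k$ by deleting all nodes at distance more than $k$ from the core and attaching a fixed non-empty $\Pi$-label to each surviving node at distance exactly $k$ that had a descendant in $L_k$. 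Then $t_k$ is lean and consistent, $\dec{t_k}=\dec{L_k}_{\le k}\not\models Q$, and one checks that $L_k$ is an extension of $t_k$ in the required sense, witnessing $t_k\in\ca{L}_Q$. Since $\dec{L_k}_{\le k}\not\models Q$ while $\dec{L_k}\models Q$, the tree $L_k$, hence $t_k$, has a branch of length at least $k$, so $\ca{L}_Q$ contains trees of unbounded depth and is infinite.

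\medskip
\noindent
For the converse, suppose {\sc Boundedness} holds with bound $k$; the goal is to bound the depth of every member of $\ca{L}_Q$ by a constant $N$ depending only on $Q$, so that $\ca{L}_Q$ is finite (there are finitely many $m$-ary $\Lambda$-labeled trees of depth at most $N$). Given $t\in\ca{L}_Q$, I would pick an extension $t'$ with $\dec{t'}\models Q$ and $\dec{t'}_{>0}\not\models Q$ for which $\dec{t'}$ is inclusion-minimal (such a $t'$ exists since decodings of extensions are finite sets). {\sc Boundedness} gives $\dec{t'}_{\le k}\models Q$; pruning from $t'$ every added atom at distance more than $k$ from the core yields another valid extension $t''$ of $t$ with $\dec{t'}_{\le k}\subseteq\dec{t''}\subseteq\dec{t'}$, so minimality forces $\dec{t''}=\dec{t'}$, whence every atom introduced by the extension lies within distance $k$ of the core and $\dec{t'}$ agrees with $\dec{t}$ beyond distance $k$ from the core. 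Fixing a match witnessing $\dec{t'}_{\le k}\models Q$ and decomposing it via squid decompositions and derivation trees (Lemma~\ref{lem:automatonguide}, using that the chase of a $C$-tree is an $F$-tree), one sees that --- because $q$ is connected, because $\dec{t'}_{>0}\not\models Q$ forces the match to reach the core, and because only unary and binary relations occur --- the support of the match in $\dec{t'}$ has diameter at most $c\cdot|q|$ for an absolute constant $c$, hence lies inside $\dec{t'}_{\le k'}$ with $k':=k+c\cdot|q|$.

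\medskip
\noindent
The hard part will be to turn this localization into the promised depth bound on $t$ itself, i.e.\ to rule out that a member of $\ca{L}_Q$ carries arbitrarily deep ``inert'' parts, irrelevant both to $\dec{t}\not\models Q$ and to the match of $Q$ in $\dec{t'}$; this is where the precise definition of $\ca{L}_Q$ matters. I would settle it by a pumping argument in the style of~\cite{GrWa99,Vardi92}: on the long path forced by an over-deep $t\in\ca{L}_Q$ one finds a repetition that can be pumped down to a shorter tree which is still in $\ca{L}_Q$ (its decoding shrinks, so it still fails $Q$, and the shallow extension material from $t'$ can be re-attached); showing that this shortening can always be pushed below $k'+\mathrm{const}$ --- equivalently, that an $\ca{L}_Q$-tree of depth exceeding $k'+\mathrm{const}$ yields a contradiction, either forcing $\dec{t}\models Q$ via a match derived from the bounded one above, or violating the minimality of $t'$ --- gives the bound $N$, and hence the equivalence {\sc Boundedness} $\Leftrightarrow$ ``$\ca{L}_Q$ is finite'' and the lemma.
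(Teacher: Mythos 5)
Your reduction to ``{\sc Boundedness} holds iff $\ca{L}_Q$ is finite'' and your argument for ``{\sc Boundedness} fails $\Rightarrow$ $\ca{L}_Q$ infinite'' (truncate a lean witness at depth $k$, put $\Pi$-labels on the cut leaves, use the original tree as the extension) are exactly what the paper has in mind when it dismisses this direction as immediate; that half is fine.

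The problem is the converse, and there your route diverges from the paper's and, by your own admission, is not completed. The paper's proof is a short direct contradiction: assume $\ca{L}_Q$ is infinite and {\sc Boundedness} holds with constant $\ell$; bounded branching degree yields some $t \in \ca{L}_Q$ of depth $> k > \ell$ together with an extension $t'$ satisfying $\dec{t'} \models Q$ and $\dec{t'}_{>0} \not\models Q$; {\sc Boundedness} then forces $\dec{t'}_{\leq \ell} \models Q$, and since the extension only attaches material below the $\Pi$-labelled leaves of $t$ (so, for the deep trees under consideration, nothing within the radius $\ell$), this match already lives in $\dec{t}$, contradicting $\dec{t} \not\models Q$. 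No inclusion-minimality, no squid decompositions, no pumping. The single fact this contradiction needs is the containment $\dec{t'}_{\leq \ell} \subseteq \dec{t}$, i.e.\ that the extension contributes nothing within the {\sc Boundedness} radius, and your proposal never secures it; in fact your third paragraph, which localizes the added atoms and the match of $q$ to within distance $k$ of the core, points in the opposite direction. This is precisely why your plan stalls where you say it does: nothing in the definition of $\ca{L}_Q$ prevents a member from consisting of a shallow part that carries the $\Pi$-leaf, the whole extension and the whole match, glued to an arbitrarily deep inert branch. For such a $t$ neither of your proposed contradictions fires --- $\dec{t} \not\models Q$ holds genuinely because the match uses atoms created by the extension, and $t'$ can already be inclusion-minimal --- and pumping an over-deep member down to a shorter member of $\ca{L}_Q$ proves nothing about finiteness, since finiteness requires \emph{excluding} deep members, not producing shallow ones. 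So the direction ``{\sc Boundedness} $\Rightarrow$ $\ca{L}_Q$ finite'' is not established by your argument; what is needed, and what the paper's contradiction rests on (treating it as immediate from the way extensions are attached below the leaves of a sufficiently deep tree), is exactly that a match inside $\dec{t'}_{\leq \ell}$ is already a match inside $\dec{t}$.
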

\begin{proof}
  Suppose $\ca{L}_Q$ is infinite. Since the trees at hand have bounded
  branching degree, for every $k \geq 0$, there is a $t \in \ca{L}_Q$
  such that $\dec{t}$ is a $C$-tree (for some $C \subseteq \dec{t}$)
  that contains individuals whose distance from any $a \in \adom{C}$ is
  greater than or equal to $k$ and $\dec{t} \not\models Q$, yet for some
  extension $t'$ of $t$, we have $\dec{t'} \models Q$ but
  $\dec{t'}_{>0} \not\models Q$. Suppose now that $Q$ is UCQ
  rewritable. Let $\ell$ be such that for all $C'$-trees $D$ (of the
  appropriate dimensions), $D \models Q$ implies
  $D_{\leq \ell} \models Q$ or $D_{>0} \models Q$. Choose $k > \ell$ and
  $t,t'$ such that
  \begin{enumerate*}[label={(\roman*)}]
    \item $t'$ is an extension of $t$,
    \item $t$ has depth greater than $k$, and
    \item $\dec{t} \not\models Q$ but $\dec{t'} \models Q$ and
      $\dec{t'}_{>0} \not\models Q$.
  \end{enumerate*}
  Since $\dec{t'} \models Q$, we know that
  $\dec{t'}_{\leq \ell} \models Q$ or $\dec{t'}_{> 0} \models Q$. The
  latter is impossible by assumption, the former contradicts the fact
  $\dec{t} \not\models Q$, since $k > \ell$. This proves the direction
  from left to right. The other direction is immediate.
\end{proof}

We are now ready to establish Proposition~\ref{pro:ucqrew-to-infinity}:

\medskip
\begin{proofcustom}{of Proposition~\ref{pro:ucqrew-to-infinity}}
  We are now going to describe the construction of a 2WAPA $\fk{A}$ such
  that $\ca{L}(\fk{A}) = \ca{L}_Q$, which will prove the claim by virtue
  of Lemma~\ref{lem:infinitelqiffucqrew}. This automaton is the
  intersection of several ones. First of all, we ensure that all the
  accepted $\Lambda$-trees are lean and consistent
  (cf.~Lemma~\ref{lem:conslean}). We additionally intersect the
  automaton with the complement of $\fk{A}_{Q,l}$ from
  Lemma~\ref{lem:automaton-1} (more precisely, the version of it running
  on $\Lambda$-labeled trees) and another automaton
  $\fk{D}_Q = \tup{S,\Lambda, \delta, s_0, \Omega}$ whose construction we
  shall describe in more detail here.
On a high level, $\fk{D}_Q$ will be constructed so as to accept a lean
and consistent $\Lambda$-labeled tree if and only if there is an
extension $t'$ of $t$ such that $\fk{B}_Q$ accepts $t'$. Let $\hat{S}$
be the set of states of $\fk{B}_Q$, $\hat{\delta}$ its transition
function, and $\hat{\Omega}$ its parity function. For
$\sigma \in \Gamma_{\sche{S},l}$, let $\hat{B}(\sigma)$ be the set of
tuples
$\tup{s,s'} \in \hat{S} \times (\hat{S} \cup \set{\ptrue})$ such that the following holds:
\begin{itemize}
\item There is a $\Gamma_{\sche{S},l}$-labeled tree $t = \tup{T,\eta}$
  such that $\eta(\varepsilon) = \sigma$ and a run
  $t_r = \tup{T_r,\eta_r}$ of $\fk{B}_Q$ on $t$ such that
\begin{enumerate}
\item $\eta_r(\varepsilon) = \tup{\varepsilon,s}$, i.e., $t_r$
  \emph{starts from $s$};\footnote{Strictly speaking, $t_r$ is, of
    course, not a run since it does not start in the initial state.}
  \item $s' = \ptrue$ and $t_r$ is accepting on $\fk{B}_Q$, or there is
    a node $v \in T_r$ such that $\eta_r(v) = \tup{\varepsilon,s'}$.
\end{enumerate}
\end{itemize}
Now the set of states of $\fk{D}_Q$ is the same as of $\fk{B}_Q$, i.e.,
$S \coloneqq \hat{S}$. Accordingly, the initial state of $\fk{D}_Q$ is
that of $\fk{B}_Q$. Furthermore ${\Omega}(s) \coloneqq \hat{\Omega}(s)$,
for every $s \in S$. Given $s \in S$ and $\rho \in \Lambda$, we let
\begin{align*}
  \delta(s,\rho) \coloneqq \bigvee \set{s' \mid \tup{s, s'} \in \rho \cap \hat{B}(\Gamma_{\sche{S},l} \upharpoonright \rho)} \lor \hat{\delta}(s,\Gamma_{\sche{S},l} \upharpoonright \rho).
\end{align*}
We are going to give an intuitive explanation of this construction in
the following. Roughly, a pair $\tup{s,s'} \in \hat{B}(\sigma)$
indicates that there is a $\Gamma_{\sche{S},l}$-labeled tree $t$ and run
of $\fk{B}_Q$ on $t$ such that the root of $t$ is labeled with $\sigma$,
the run starts in state $s$, and either $\fk{B}_Q$ accepts $t$, or it
traverses the root again at some point, then being in state $s'$. The
set $\hat{B}(\sigma)$ can be computed a priori in \twoexptime;
considering that $\Gamma_{\sche{S},l}$ is of double-exponential size in
the size of $Q$, it follows that the collection
$\set{\hat{B}(\sigma)}_{\sigma \in \Gamma_{\sche{S},l}}$ can be computed
in \twoexptime. Now the input tree for $\fk{D}_Q$ comes with labels from
$\Pi$ of the form $\tup{s,s'}$ in its leaves. These ``types'' amount to
guesses of possible extensions of the input tree. Utilizing the sets
$\hat{B}(\sigma)$, $\fk{D}_Q$ thus explores the possible ways how the
given input tree can be extended to a $\Gamma_{\sche{S},l}$-labeled tree
$t'$ that is accepted by $\fk{B}_Q$.
\end{proofcustom}


\balance

\bibliographystyle{abbrv}

\end{document}